\let\saved@bibitem\@bibitem\makeatother
\let\@bibitem\saved@bibitem\makeatother
\title{Scaling Distributed Ledgers and Privacy-Preserving Applications}
\author{Alberto Sonnino}
\def\cameraReady{} 
\let\cameraReady\undefined 
\newcommand\alberto[1]{\todo[color=yellow,inline]{\textbf{Alberto:} #1}}
\newcommand\mustafa[1]{\todo[color=green,inline]{\textbf{Mustafa:} #1}}
\newcommand\bano[1]{\todo[color=orange,inline]{\textbf{Bano:} #1}}
\newcommand\george[1]{\todo[color=brown,inline]{\textbf{George:} #1}}
\renewcommand\alberto[1]{}
\renewcommand\mustafa[1]{}
\renewcommand\bano[1]{}
\renewcommand\george[1]{}
\newcommand\byzcuitreplay{\texttt{byzcuit}\xspace}
\newcommand\byzcuitbaseline{\texttt{byzcuit-baseline}\xspace}
\newcommand\chainspace{Chainspace\xspace}
\newcommand\ethereum{Ethereum\xspace}
\newcommand\hyperledger{Hyperledger\xspace}
\newcommand\omniledger{Omniledger\xspace}
\newcommand\rapidchain{RapidChain\xspace}
\newcommand\elgamal{El-Gamal\xspace}
\newcommand\coconut{Coconut\xspace}
\newcommand\sbac{S-BAC\xspace}
\newcommand\bft{BFT\xspace}
\newcommand\atomix{Atomix\xspace}
\newcommand\cscoin{CSCoin\xspace}
\newcommand\rscoin{RSCoin\xspace}
\newcommand\bftsmart{\textsc{bft-SMaRt}\xspace}
\newcommand{\byzcuit}{Byzcuit\xspace}
\newcommand\bitcoin{Bitcoin\xspace}
\newcommand\bitcoinng{Bitcoin-NG\xspace}
\newcommand\byzcoin{ByzCoin\xspace}
\newcommand\elastico{Elastico\xspace}
\newcommand\algorand{Algorand\xspace}
\newcommand\bigchaindb{BigchainDB\xspace}
\newcommand\hyperledgerfabric{Hyperledger Fabric\xspace}
\newcommand\pbft{PBFT\xspace} 
\newcommand\bftlong{byzantine fault-tolerant\xspace}
\newcommand\solidus{Solidus\xspace} 
\newcommand\randhound{RandHerd\xspace} 
\newcommand\ouroboros{Ouroboros\xspace} 
\newcommand\ouroborospraos{Ouroboros Praos\xspace} 
\newcommand\snowwhite{Snow-White\xspace} 
\newcommand\avalanche{Avalanche\xspace} 
\newcommand\tendermint{Tendermint\xspace} 
\newcommand\stellar{Stellar\xspace} 
\newcommand\fastpay{FastPay\xspace} 
\newcommand\sysmain{Primary\xspace}
\newcommand\blockchain{blockchain\xspace}
\newcommand\blockchains{blockchains\xspace}
\newcommand\algorithm[1]{\textsf{#1}}
\newcommand\hashtopoint{H^*\xspace}
\newcommand\preacceptt{\textsf{pre-accept}($T$)\xspace}
\newcommand\preabortt{\textsf{pre-abort}($T$)\xspace}
\newcommand\preaccepttt{\textsf{pre-accept}($T'$)\xspace}
\newcommand\preacceptts{\textsf{pre-accept}($T,s_T$)\xspace}
\newcommand\preabortts{\textsf{pre-abort}($T,s_T$)\xspace}
\newcommand\acceptt{\textsf{accept}($T$)\xspace}
\newcommand\abortt{\textsf{abort}($T$)\xspace}
\newcommand\aborttt{\textsf{abort}($T'$)\xspace}
\newcommand\acceptts{\textsf{accept}($T,s_T$)\xspace}
\newcommand\abortts{\textsf{abort}($T,s_T$)\xspace}
\newcommand\myrow[1]{row\xspace\textsf{#1}\xspace}
\newcommand\mycolumn[1]{column\xspace\textsf{#1}\xspace}
\newcommand\shardled{shard-led\xspace}
\newcommand\Shardled{Shard-led\xspace}
\newcommand\clientled{client-led\xspace}
\newcommand\Clientled{Client-led\xspace}
\newcommand\locked{\emph{locked}\xspace}
\newcommand\attacker{attacker\xspace}
\newcommand\prerecorded{prerecorded\xspace}
\newcommand\prerecords{prerecords\xspace}
\newcommand\activeObj{`active'\xspace}
\newcommand\inactiveObj{`inactive'\xspace}
\newcommand\smr{SMR\xspace}
\newcommand\pow{PoW\xspace}
\newcommand\pox{PoX\xspace}
\newcommand\pos{proof-of-Stake\xspace}
\newcommand\dos{DoS\xspace}
\newcommand\sybil{sybil\xspace}
\newcommand\poxStake{proof-of-stake\xspace}
\newcommand\vrf{VRF\xspace}
\newcommand\preimage{pre-image\xspace}
\newcommand\stakeholders{stakeholders\xspace}
\newcommand{\keyword}[1]{\normalfont \texttt{#1}}
\newcommand{\accounts}{\keyword{accounts}}
\newcommand{\balance}{\keyword{balance}}
\newcommand{\totalbalance}{\keyword{total\_balance}}
\newcommand{\val}{\keyword{value}}
\newcommand{\sequencenumber}{\keyword{sequence}}
\newcommand{\lasttransactionindex}{\keyword{last\_transaction}}
\newcommand{\nextsequencenumber}{\keyword{next\_sequence}}
\newcommand{\redeemlog}{\keyword{redeemed}}
\newcommand{\synchronizationlog}{\keyword{synchronized}}
\newcommand{\funding}{\keyword{funding}}
\newcommand{\confirmedlog}{\keyword{confirmed}}
\newcommand{\pendingconfirmation}{\keyword{pending}}
\newcommand{\amount}{\keyword{amount}}
\newcommand{\sender}{\keyword{sender}}
\newcommand{\recipient}{\keyword{recipient}}
\newcommand{\received}{\keyword{received}}
\newcommand{\transactions}{\keyword{transactions}}
\newcommand{\fastpayaddresses}{\keyword{FastPay}}
\newcommand{\sysmainaddresses}{\keyword{Primary}}
\newcommand{\transfer}{O}
\newcommand{\cert}{C}
\newcommand{\sync}{S}
\newcommand{\authority}{\alpha}
\newcommand{\para}[1]{\vspace{2mm}\noindent\textbf{#1}.\xspace}
\newcommand\mypara[1]{\vspace{0.05in} \noindent \textbf{#1.}}
\newcommand\definition[2]{\ding{118}\xspace \textsf{#1}\xspace$\bm{\rightarrow}$\xspace(#2):\xspace}
\newcommand{\vs}{vs.\@\xspace}
\newcommand{\etc}{etc.\@\xspace}
\newcommand{\etal}{\textit{et al.}\@\xspace}
\newcommand{\eg}{\textit{e.g.}\@\xspace}
\newcommand{\ie}{\textit{i.e.}\@\xspace}
\newcommand{\via}{\textit{via}\@\xspace}
\newtheorem{SecAssumption}{Security Assumption}
\newtheorem{theorem}{Security Theorem}
\newtheorem{lemma}{Lemma}
\newtheorem{proposition}{Proposition}
\def\first{({i})\xspace}
\def\second{({ii})\xspace}
\def\third{({iii})\xspace}
\def\fourth{({iv})\xspace}
\def\fifth{({v})\xspace}
\def\sixth{({vi})\xspace}
\definecolor{verylightgray}{gray}{0.9}
\newcolumntype{L}{l<{\hspace{1cm}}}
\newcolumntype{C}{c<{\hspace{1cm}}}
\newcolumntype{D}{c<{\hspace{0.3cm}}}
\newcommand{\yes}{\ding{51}}
\newcommand{\no}{\ding{55}}
\DeclareRobustCommand\pie[1]{
\tikz[every node/.style={inner sep=0,outer sep=0, scale=1.5}]{
\node[minimum size=1.5ex] at (0,-1.5ex) {}; 
 \draw[fill=white] (0,-1.5ex) circle (0.75ex); \draw[fill=black] (0.75ex,-1.5ex) arc (0:#1:0.75ex); 
}
}
\def\L{\pie{0}} 
\def\M{\pie{-180}} 
\def\H{\pie{360}} 
\newcommand\type{\mathsf{type}}
\newcommand\types{\mathsf{types}}
\newcommand\id{\mathsf{id}}
\newcommand\procedures{\mathsf{proc}}
\newcommand\csobject{o}
\newcommand\contract{c}
\newcommand\csprocedure{p}
\newcommand\pinputs{\vec{w}}
\newcommand\preferences{\vec{r}}
\newcommand\poutputs{\vec{x}}
\newcommand\lparams{\mathsf{lpar}}
\newcommand\lreturns{\mathsf{lret}}
\newcommand\sparams{\mathsf{spar}}
\newcommand\sreturns{\mathsf{sret}}
\newcommand\dependencies{\mathsf{dep}}
\newcommand\faulty{f}
\newcommand\checker{v}
\newcommand\transaction{T\xspace}
\newcommand\shard{\phi}
\newcommand\bi{\begin{itemize}}
\newcommand\ei{\end{itemize}}
\newcommand\ben{\begin{enumerate}}
\newcommand\een{\end{enumerate}}
\lstdefinelanguage{alg}{
morekeywords={fn, let, mut, ensure, match, bail, if, None, Some, return, Result, Ok},
morecomment=[l]///,
}
\begin{document}

\nobibliography*



\maketitle
\makedeclaration

\begin{abstract} 
This thesis proposes techniques aiming to make blockchain technologies and smart contract platforms practical by improving their scalability, latency, and privacy.
This thesis starts by presenting the design and implementation of \chainspace, a distributed ledger that supports user defined smart contracts and execute user-supplied transactions on their objects. The correct execution of smart contract transactions is publicly verifiable. \chainspace is scalable by sharding state; it is secure against subsets of nodes trying to compromise its integrity or availability properties through Byzantine Fault Tolerance (BFT).
This thesis also introduces a family of replay attacks against sharded distributed ledgers targeting cross-shard consensus protocols; they allow an attacker, with network access only, to double-spend resources with minimal efforts. We then build \byzcuit, a new cross-shard consensus protocol that is immune to those attacks and that is tailored to run at the heart of \chainspace. 
Next, we propose \fastpay, a high-integrity settlement system for pre-funded payments that can be used as a financial side-infrastructure for \chainspace to support low-latency retail payments. This settlement system is based on Byzantine Consistent Broadcast as its core primitive, foregoing the expenses of full atomic commit channels (consensus). The resulting system has extremely low-latency for both confirmation and payment finality. 
Finally, this thesis proposes \coconut, a selective disclosure credential scheme supporting distributed threshold issuance, public and private attributes, re-randomization, and multiple unlinkable selective attribute revelations. It ensures authenticity and availability even when a subset of credential issuing authorities are malicious or offline, and natively integrates with \chainspace to enable a number of scalable privacy-preserving applications.
\end{abstract}

\chapter*{Impact Statement}
The work in this thesis can inform the design of new and existing projects that implement distributed ledgers, smart contract platforms or applications, in order to increase their scalability, security, and privacy.
\chainspace (\Cref{chainspace}) and \coconut (\Cref{coconut}) are used as part of DECODE (DEcentralized Citizen-owned Data Ecosystems)~\cite{rojo2018}, a European project with a digital democracy pilot in Barcelona implementing a decentralized petitions platform. \chainspace was also commercialized in a company (\texttt{chainspace.io}) co-founded by the author of this thesis, and the team was acquired by Facebook. 
\coconut was also integrated into Cosmos SDK~\cite{coconut-cosmos} and commercialized in a company (Nym~\cite{nym}) aiming to provide an open-ended anonymous overlay network that disguise patterns in internet traffic; it uses the scheme described in \Cref{coconut} as anonymous authentication credentials to enable privacy-enhanced data transfer and decentralized identity.
\byzcuit and the work done on replay attacks against sharded distributed ledgers (\Cref{byzcuit}) have profound impact on the security of recently proposed systems such as \omniledger~\cite{omniledger} and \rapidchain~\cite{rapidchain}; these systems were presented at top security conferences and form the basis of numerous start-ups and open-source projects such as Harmony~\cite{harmony}.
Finally, the content of this thesis has been presented on multiple occasions, both at academic venues and industry conferences; some of its chapters have been published at top-tier security conferences. It produced multiple tools and open-source software, and is freely available online.

\begin{acknowledgements}
This work would not have been possible without my primary supervisor George Danezis, who helped me throughout the past few years of research, and to whom I wish to express my profound gratitude.
Special thanks to my secondary supervisors Jens Groth and Ioannis Psaras for their unwavering support and generous encouragement, and to my close collaborators, Mustafa Al-Bassam and Shehar Bano, who have been the source of many fruitful discussions.
I have been privileged to have had the opportunity to work with many brilliant and helpful people around the world. Specifically, I thank all my co-authors (in alphabetic order) Christos Andrikos, Sarah Azouvi, Lejla Batina, Mathieu Baudet, Vitalik Buterin, Avery Ching, Lukasz Chmielewski, Andrey Chursin, Dave Hrycyszyn, Ismail Khoffi, Micha\l{} Kr\'{o}l,  Liran Lerman, Zekun Li, Dahlia Malkhi, Vasilis Mavroudis, Sarah Meiklejohn, Patrick McCorry, Kostas Papagiannopoulos, Dmitri Perelman, Guilherme Perin, Giorgos Rassias, Etienne Rivi\`ere, Argyrios Tasiopoulos, Lixia Zhang, and Zhiyi Zhang.
I would also like to thank Ramsey Khoury, my other co-founder at \texttt{chainspace.io} not mentioned above, and all my other former colleagues at \texttt{chainspace.io}: Penny Andrews, Andy Bennett, Stuart Chinery, J\'er\'emy Letang, and Lola Oyelayo-Pearson. I also thank everyone at Facebook Novi, specifically David Marcus, James Everingham, Christian Catalini, Kevin Weil, Ben Maurer, Morgan Beller, and Evan Cheng for providing a supportive working environment.
Finally, I would like to thank the European Commission for funding my research with a PhD scholarship, and all my friends and family for their continuous support and encouragement.
\end{acknowledgements}

\setcounter{tocdepth}{2} 

\tableofcontents
\listoffigures
\listoftables

\chapter{Introduction} \label{introduction}
\setlength\epigraphwidth{.82\textwidth}
\epigraph{The root problem with conventional currency is all the trust that's required to make it work. The central bank must be trusted not to debase the currency, but the history of fiat currencies is full of breaches of that trust. Banks must be trusted to hold our money and transfer it electronically, but they lend it out in waves of credit bubbles with barely a fraction in reserve. We have to trust them with our privacy, trust them not to let identity thieves drain our accounts.}{\textit{Satoshi Nakamoto}}

\section{Problem Statement}
Blockchains lie at the foundation of \bitcoin and other cryptocurrencies, which have a total global market capital of over \$450B as of November 2020~\cite{marketcap}. The blockchain is a decentralized, replicated, immutable and tamper-evident log: data on the blockchain cannot be deleted, and anyone can read data from the blockchain and verify its correctness.  
The blockchain is maintained by a set of authorities (called \emph{nodes}) that form a distributed network. An important implication of this architecture is \emph{disintermediation}: multiple untrusted or semi-trusted parties can \emph{directly} and \emph{transparently} interact with each other without the presence of a trusted intermediary.  This makes blockchains immediately relevant to banks and financial institutions which incur huge middleman costs in settlements and other back office operations.  A number of big players are actively exploring the feasibility of blockchains, including the Bank of England~\cite{BoE}, the Bank of America~\cite{BoA} and the IMF~\cite{IMF}. 
In addition to the financial industry, blockchains have been employed in a diverse array of use cases, ranging from voting~\cite{follow-my-vote}, through data storage~\cite{filecoin}, to the sharing economy~\cite{pastrami, asterisk}. Despite their useful properties and applications, adoption of blockchains is nowhere near as ubiquitous as their traditional counterparts due to their performance limitations. These properties are deeply related to the \emph{consensus} protocol---the core component of the blockchain.   

Consensus protocols are defined by two key properties. The first is related to performance, and requires that requests from correct clients are eventually processed (\emph{liveness}).  The second property is related to security, and states that if an honest node accepts (or rejects) a value then all other honest nodes make the same decision (\emph{safety/consistency}). 
A plethora of consensus protocols exist that offer different trade-offs between the two key properties of consensus liveness and consistency. The distributed systems community has extensively studied consensus for over two decades, and developed robust and practical protocols that can tolerate faulty and malicious nodes~\cite{pbft,lamport1998part}.  
However, these protocols were designed for closed groups, and cannot be readily adapted to blockchains. 

\bitcoin's fundamental innovation was to enable consensus among nodes forming a peer-to-peer network~\cite{bitcoin}.  This was achieved via a leader election based on proof-of-work (\pow): all nodes attempt to find the solution to a hash puzzle and the node that wins adds the next block to the blockchain.   Due to its probabilistic leader election process combined with performance fluctuations in decentralized networks, \bitcoin offers only weak consistency: different nodes might end up having different views of the blockchain leading to \emph{forks}.
Additionally, \bitcoin suffers from poor performance and its \pow consumes a huge amount of energy~\cite{iceland}.  
\bitcoin's underlying blockchain technology suffers from scalability issues: with a current block size of 1MB and 10 minute inter-block interval, throughput is capped at about 7 transactions per second, and a client that creates a transaction has to wait for about 10 minutes to confirm. In contrast, mainstream payment processing companies like Visa confirm transactions within a few seconds, and have high throughput of over 24,000 transactions per second~\cite{visa-performance}. Re-parametrization of \bitcoin---such as \bitcoinng~\cite{bitcoinng}---can improve this to a limited extent up to 27 transactions per second and 12 second latency, respectively~\cite{croman2016scaling}. More significant improvement requires a fundamental redesign of the blockchain paradigm. 
This has led to an array of proposals for new systems and new consensus protocols~\cite{sok-consensus}.  

Unlike application specific blockchain technologies, such as \bitcoin for a currency, or certificate transparency~\cite{laurie2014certificate} for certificate verification, smart contract platforms like \ethereum~\cite{ethereum} introduce a design that offers extensibility allowing nodes to execute user-defined programs on transactions. Traditionally, users submit transactions to the blockchain, which are sequenced through consensus, executed by the nodes, and permanently stored on the blockchain; all transactions and smart contracts are public to allow anyone to verify the correctness of the blockchain.
This paradigm makes it hard to build privacy-preserving applications on top of blockchains as neither the smart contracts logic or the transactions can contain secret values. The restriction of not being able to include secret values inside transactions and smart contracts is extremely limiting. Even a simple smart contract as `digitally sign a document' is problematic since the signing key must be kept secret, and therefore cannot be part of the transaction or smart contract. 

\section{Overview} \label{sec:introduction:overview}

\begin{figure}[t]
\centering
\includegraphics[width=.7\textwidth]{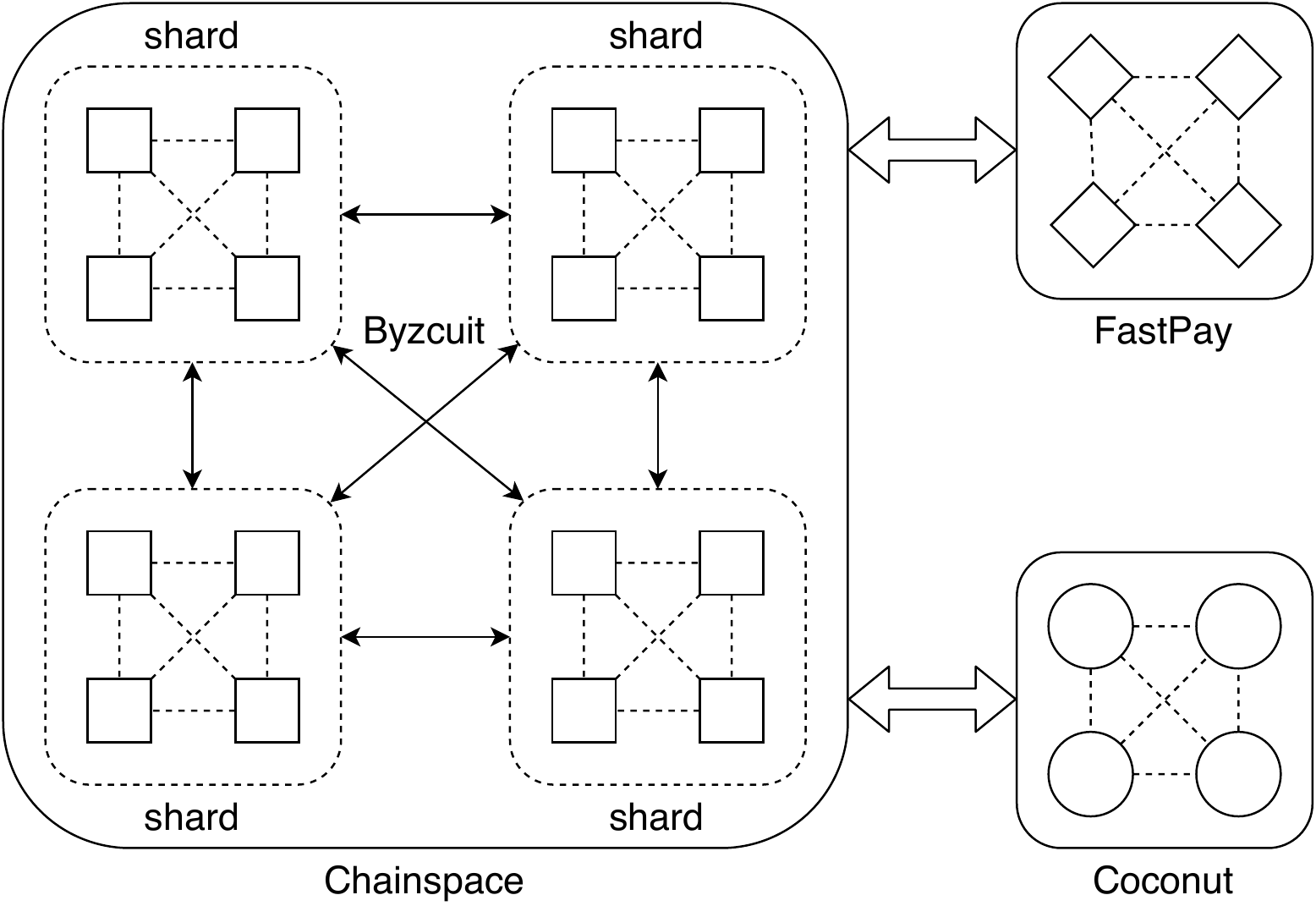}
\caption[Global overview.]{Overview of the system presented in this thesis. \chainspace is the main component and provides a scalable backbone with integrated privacy support, running the \byzcuit consensus protocol at its core; \fastpay is a distributed settlement system to support low-latency retail payments; and \coconut is a selective disclosure credential scheme that integrates with blockchains by distributing the issuance phase across a set of authorities.}
\label{fig:introduction:overview}
\end{figure}

This thesis proposes technologies to make blockchains practical. It specifically aims to overcome the following limitations: \first poor scalability, \second high latency, and \third difficulty to operate on secret values (privacy). \Cref{fig:introduction:overview} presents a high level overview of the system presented in this thesis, that is built from the following main ingredients: \chainspace, \byzcuit, \fastpay, and \coconut.

\para{\chainspace, a scalable backbone with integrated privacy support}
The main component of \Cref{fig:introduction:overview} is \chainspace that constitutes the backbone of the system and allows it to scale to accommodate high throughput. \chainspace is a sharded distributed ledger; sharding is one of the main approaches to address blockchain scalability issues. The key idea is to create groups (called \emph{shards}) of nodes that handle only a subset of all transactions and system state, relying on classical Byzantine Fault Tolerance (BFT) protocols for reaching consensus amongst each shard. Sharded systems achieve great throughput and linear scalability because: \first non-conflicting transactions can be processed in parallel by multiple shards; and \second the system can scale up by adding new shards. That is, the throughput of the system is theoretically unbounded as it can be arbitrarily increased by adding new shards.
However, this separation of transaction handling across shards is not perfectly `clean'---a transaction might rely on data managed by multiple shards, requiring an additional step of \emph{cross-shard consensus} across the concerned shards. To that purpose \chainspace runs \byzcuit at its core, a new cross-shard consensus protocol.

Orthogonally to its sharded design, \chainspace is a smart contract platform supporting privacy-preserving applications natively; it revisits the execution of smart contracts on blockchains, and proposes a system where transactions are executed by the client, and the smart contract only verifies the correctness of the execution.

\para{\fastpay, a low-latency payment system}
Sharded blockchains can be a solid backbone for financial systems, but their latency bottleneck remains their underlying BFT consensus protocol which makes them unpractical for retail payment at physical points of sale.
To overcome this issue, we present a side-infrastructure called \fastpay. \fastpay is a distributed settlement system for pre-funded payments that can be used as a financial side-infrastructure to support low-latency retail payments of a primary system such as \chainspace.
This side-infrastructure complements scalable (high-throughput) blockchains by achieving extremely low latency by foregoing the expenses of consensus, making the system applicable to point of sale payments.
   
\para{\coconut, privacy-preserving credentials for smart contract applications}
\coconut is a novel selective disclosure credential scheme that natively integrates with blockchains by distributing the issuance phase across a set of authorities. \coconut allows to issue privacy-preserving digital identities to users without relying on a trusted third party. We then leverage those credentials along with the new privacy-preserving execution model of \chainspace to build a number of decentralized and scalable privacy-preserving applications as \chainspace smart contracts.

\section{Dissertation Organization and Contributions} \label{sec:introduction:contributions}
This dissertation is organized as follows. \Cref{literature-review} lays the foundations on which the core chapters rely.
\Cref{chainspace} presents \chainspace, a novel sharded smart contract platform that supports user defined smart contract, and that scales by sharding state and the execution of transactions through a new cross-shard atomic commit protocol. \Cref{byzcuit} presents the first replay attacks on cross-shard consensus in sharded blockchains, describes the issues that lead to these vulnerabilities, and presents \byzcuit, a novel cross-shard consensus protocol that is immune to those attacks. \Cref{coconut} presents \coconut, a credential system allowing permissioned and semi-permissioned blockchains to issue credentials through smart contracts.

This dissertation has resulted in the following publications (in chronological order, relevant chapter indicated in bold):
\begin{mdframed}
\bibentry{chainspace}
\end{mdframed}
\textbf{(\Cref{chainspace})} \chainspace is a decentralized infrastructure, known as a distributed ledger, that supports user defined smart contracts and executes user-supplied transactions on their objects. The correct execution of smart contract transactions is verifiable by all. The system is scalable, by sharding state and the execution of transactions. \chainspace is secure against subsets of nodes trying to compromise its integrity or availability properties through Byzantine Fault Tolerance (BFT), and extremely high-auditability, non-repudiation and `blockchain' techniques. Even when BFT fails, auditing mechanisms are in place to trace malicious participants. We present the design, rationale, and details of \chainspace; we argue about its scaling and other features; we illustrate a number of privacy-friendly smart contracts for smart metering, polling and banking and measure their performance. 

\vskip 1em
\begin{mdframed}
\bibentry{coconut}
\end{mdframed}
\textbf{(\Cref{coconut})} \coconut is a novel selective disclosure credential scheme supporting distributed threshold issuance, public and private attributes, re-randomization, and multiple unlinkable selective attribute revelations. \coconut integrates with \blockchains to ensure confidentiality, authenticity and availability even when a subset of credential issuing authorities are malicious or offline. We implement and evaluate a generic \coconut smart contract library for \chainspace and \ethereum; and present three applications related to anonymous payments, electronic petitions, and distribution of proxies for censorship resistance.
\coconut uses short and computationally efficient credentials, and our evaluation shows that all \coconut cryptographic primitives can be executed in less than 10 milliseconds on a commodity laptop.

\vskip 1em
\begin{mdframed}
\bibentry{sok-consensus}
\end{mdframed}
\textbf{(\Cref{literature-review})} The core technical component of blockchains is \emph{consensus}: how to reach agreement among a distributed network of nodes.  
A plethora of blockchain consensus protocols have been proposed---ranging from new designs, to novel modifications and extensions of consensus protocols from the classical distributed systems literature.  The inherent complexity of consensus protocols and their rapid and dramatic evolution makes it hard to contextualize the design landscape.
We address this challenge by conducting a systematization of knowledge of blockchain consensus protocols. After first discussing key themes in classical consensus protocols, we describe: \first protocols based on proof-of-work; \second proof-of-X protocols that replace proof-of-work with more energy-efficient alternatives; and \third hybrid protocols that are compositions or variations of classical consensus protocols.  
This survey is guided by a systematization framework we develop, to highlight the various building blocks of blockchain consensus design, along with a discussion on their security and performance properties.
We identify research gaps and insights for the community to consider in future research endeavors. 

\vskip 1em
\begin{mdframed}
\bibentry{byzcuit}
\end{mdframed}
\textbf{(\Cref{byzcuit})} We present a family of replay attacks against various sharded distributed ledgers targeting cross-shard consensus protocols. They allow an attacker, with network access only, to double-spend or lock resources with minimal effort. The attacker can act independently without colluding with any nodes, and succeed even if all nodes are honest; most of the attacks can also exhibit themselves as faults under periods of asynchrony. These attacks are effective against both \shardled and \clientled cross-shard consensus approaches. We present \byzcuit---a new cross-shard consensus protocol that is immune to those attacks. We implement a prototype of \byzcuit and evaluate it on a real cloud-based testbed, showing that our defenses impact performance minimally, and overall performance surpasses previous works. 

\vskip 1em
\begin{mdframed}
\bibentry{fastpay}
\end{mdframed}
\textbf{(\Cref{fastpay})} \fastpay allows a set of distributed authorities, some of which are Byzantine, to maintain a high-integrity and availability settlement system for pre-funded payments. It can be used to settle payments in a native unit of value (crypto-currency), or as a financial side-infrastructure to support retail payments in fiat currencies. \fastpay is based on Byzantine Consistent Broadcast as its core primitive, foregoing the expenses of full atomic commit channels (consensus). The resulting system has low-latency for both confirmation and payment finality. Remarkably, each authority can be sharded across many machines to allow unbounded horizontal scalability. Our experiments demonstrate intra-continental confirmation latency of less than 100ms, making \fastpay applicable to point of sale payments. In laboratory environments, we achieve a throughput of over 80,000 transactions per second with 20 authorities---surpassing the requirements of current retail card payment networks, while significantly increasing their robustness. 

\section{Additional Work} \label{sec:introduction:additional-work}
Below are other publications outside scope did while doing a PhD, listed for inclusion in chronological order:
\begin{enumerate}[leftmargin=*]
\item \bibentry{airtnt}
\item \bibentry{proof-of-prestige}
\item \bibentry{sybilquorum}
\item \bibentry{asterisk}
\item \bibentry{fmpc}
\item \bibentry{andrikos2019location}
\item \bibentry{elpasso}
\item \bibentry{pastrami}
\item \bibentry{twins}
\item \bibentry{fraudproofs}
\end{enumerate}

\section{Work Done in Collaboration} \label{sec:introduction:collaboration}
A large part of this work has been conducted in collaboration with other researchers. All the coauthors contributed to the development of the work listed in \Cref{sec:introduction:contributions}. 

The data model (\Cref{sec:chainspace:data-model}) and application interface (\Cref{sec:chainspace:interface}) of \chainspace is a joint work with all the co-authors of the paper; and I led the design (\Cref{sec:chainspace:applications}), implementation and evaluation (\Cref{sec:chainspace:evaluation}) of the \chainspace's system and application smart contracts. 
I led the design of the replay attacks against cross-shard consensus protocols presented in Sections~\ref{sec:byzcuit:shard-led-consensus} and~\ref{sec:byzcuit:client-led-consensus}, including the techniques to elicit messages to replay (\Cref{sec:byzcuit:prerecoring}); I also led the design of \byzcuit (\Cref{sec:byzcuit:byzcuit}), and the security proofs showing that \byzcuit is immune to those attacks (\Cref{sec:byzcuit:security}). 
The design of \fastpay (\Cref{sec:fastpay:design}) is a joint work with all the co-authors of the paper, and I performed its evaluation (\Cref{sec:fastpay:evaluation}). 
I led the cryptographic construction of \coconut (\Cref{sec:coconut:construction}), as well as its implementation (\Cref{sec:coconut:implementation}) and benchmark (\Cref{sec:coconut:evaluation}); I also led the design and implementation of its smart contract library (\Cref{sec:coconut:smart_contract_library}) and privacy-preserving smart contract applications (\Cref{sec:coconut:applications}) on \chainspace.

This work would not have been possible without my co-authors.
Specifically, 
Mustafa Al-Bassam led the implementation of the \chainspace smart contract framework (\Cref{sec:chainspace:evaluation}) and the \coconut smart contract library on \ethereum (\Cref{sec:coconut:tumbler}). 
Mustafa Al-Bassam and Shehar Bano led the implementation and evaluation of \byzcuit (\Cref{sec:byzcuit:implementation}).
Mathieu Baudet led the security proofs (\Cref{sec:fastpay:security}) and the implementation (\Cref{sec:fastpay:implementation}) of \fastpay; and Sarah Meiklejohn wrote the security proofs of the \coconut threshold credentials scheme (\Cref{sec:coconut:security_proofs}).
\chapter{Background and Related Work} \label{literature-review}
This chapter introduces the terminology and assumptions used throughout this thesis, and provides context to explain the reasons that led to the design choices of \chainspace (\Cref{chainspace}) and \byzcuit (\Cref{byzcuit}). It also provides background on the algorithm at the core of \fastpay (\Cref{fastpay}), and on the cryptographic building blocks used by \coconut (\Cref{coconut}). More niche related works are presented in the appropriate chapters.

\section{Terminology and Assumptions} \label{sec:literature-review:terminology}
This section presents basic concepts, terminology and assumptions related to consensus and blockchains. We refer readers interested in detailed, formal consensus definitions to the work by Garay and Kiayias~\cite{Garay:2018}.

\para{Consensus} The consensus protocol enables a distributed network of nodes to agree on the total order of some input values~\cite{cachinBook}. In the blockchain context, consensus helps reach agreement on whether \emph{transactions} should be accepted or rejected, and in which order.
A transaction specifies some transformation on the blockchain state. If a transaction passes validity and verification checks (\emph{transaction validation}), it is included in a candidate \emph{block} (a set of transactions) to be added to the blockchain.

\para{Permissioned \vs permissionless blockchains}
In \emph{permissioned blockchains}~\cite{hyperledger, libra}, identities of all the nodes that run consensus are known (trusted or semi-trusted), and their admission is controlled by a single entity or federation. In \emph{permissionless blockchains}~\cite{bitcoin}, anyone can run a node and join the network. Permissioned blockchains sometimes imply limited write access; in this work, we only refer to its meaning within the context of consensus, as defined earlier.

\para{Consistency} 
The fact that a network of $n$ nodes reaches consensus on a proposed value; it can be either strong~\cite{pbft, chainspace, omniledger, libraBFT} or weak~\cite{bitcoin, ethereum, zerocoin, zcash}. In strong consistency, the shared state across honest nodes does not diverge.
In weak consistency, the shared state across nodes might diverge temporarily leading to \emph{forks}, and additional mechanisms are needed for reconciling forks. 
This is related to \emph{eventual consistency}---\ie the blockchain becomes consistent eventually. \emph{Finality} refers to the guarantee that a block will be permanently added to the blockchain.


\para{Properties} 
We consider \emph{liveness} and \emph{safety} as enumerated by Cachin~\etal~\cite{Cachin:2017}. For liveness, \emph{validity} ensures that if a node broadcasts a message, eventually this message will be ordered within the consensus, and \emph{agreement} ensures that if a message is delivered to one honest
node, it will eventually be delivered to all honest nodes. For safety, \emph{integrity} guarantees that only broadcast messages are
delivered, and they are delivered only once, and \emph{total order} ensures that all honest nodes extract the same order for all delivered messages.

\para{Synchrony assumptions} 
Networks may be \emph{synchronous} or \emph{asynchronous}, or offer \emph{eventual synchrony}~\cite{dwork1988consensus}. In a \emph{synchronous} network the delays messages may suffer can be bound by some time $\Delta$. On the other hand, in \emph{asynchronous} networks messages may be delayed arbitrarily, and there exists no reliable bound $\Delta$ for their delay. Networks with partial synchrony (or eventual synchrony, or semi-synchronous networks) assume that the network will become and remain synchronous at Global Stabilization Time (GST) despite potentially a long period of asynchrony.

\para{Network propagation}
Consensus protocols make certain assumptions about how messages propagate across nodes within the network. In \emph{point-to-point channels}~\cite{cachinBook}, there is a pairwise connection between all nodes which is both reliable and authenticated.
In the peer-to-peer (p2p) messaging model, a node `diffuses' a message into the network, which is expected to eventually reach all honest nodes with some probability~\cite{fraudproofs}. Every node knows a set of other nodes (\emph{peers})---when a message is received, nodes diffuse it by passing it on to their peers. A node may not be aware of the identities or number of other nodes in the network. \emph{Gossip-based protocols}~\cite{avalanche} rely on this assumption by considering that each node has a point-to-point connection with at least a subset of the network; the size of that subset is a security parameter.

\para{Communication complexity} 
The communication complexity of a consensus protocol refers to the maximum number of messages exchanged between the nodes in a single run of the consensus protocol. Note that a single run might involve multiple rounds of message exchanges before it completes (\ie consensus is reached).  

\para{Performance} The performance of consensus protocols is usually defined in terms of \emph{throughput} (\ie the maximum rate at which values can be agreed upon by the consensus protocol), \emph{scalability} (\ie the system's ability to achieve greater throughput when consensus involves a larger number of nodes) and \emph{latency} (\ie the time it takes from when a value is proposed, until when it is totally ordered).

\para{Adversary model for consensus} 
The adversary model is the fraction of malicious or faulty nodes that the consensus protocol can tolerate (\ie it will operate correctly despite the presence of such nodes).
This is usually referred to as the \emph{failure model} in the distributed systems literature.  In the \emph{crash failure} model, nodes may fail at any time---but they fail by stopping to process, emit or receive messages.  Usually failed nodes remain silent forever, although a number of distributed protocols consider recovery. On the other hand, in the \emph{byzantine failures} model, failed nodes may take arbitrary actions---including sending and receiving sequences of messages that are specially crafted to defeat properties of the consensus protocol. 
Another failure model in the context of consensus protocols relates to \emph{network partition}: when network devices fail (or are attacked) such that the network splits into two or more relatively independent subnets.
 
\para{Adversary model for blockchain consensus} 
Blockchain consensus has extended the adversarial model to include several new threats.  In consensus protocols with weak consistency guarantees, nodes might end up having different views of the blockchain (\emph{forks}) because of latency in propagation of transactions, and faulty or malicious nodes. 
A related concept is that of \emph{double-spending} where a transaction consumes an asset which has already been consumed by a previous transaction.  \emph{\dos resistance} defines resilience of the node(s) involved in consensus to denial-of-service (\dos) attacks.  
In the context of permissionless blockchains, \emph{Sybil attacks} refer to an attacker's ability to create fake identities or subvert existing nodes, and take over majority of the network~\cite{douceur2002sybil}. 
Sybil attacks assume two types of adversary models: static and adaptive.  A \emph{static adversary} has corrupted a fixed number of nodes in advance---it cannot corrupt new nodes or create new identities over time. An \emph{adaptive adversary} has flexibility in the nodes it can corrupt and the new identities it can create over time, to improve its probability of controlling majority of the network.

\para{Decentralization} 
This is a key property of the blockchain that enables a number of other properties such as censorship resistance, attack resistance and fault-tolerance.  
Decentralization has no formal definition, but generally~\cite{Buterin:2017} refers to a system that: \first is run by multiple machines and has no architectural choke point (\emph{architectural decentralization});
\second is run by multiple independent individuals or organizations (\emph{political decentralization}); and \third comprises multiple interfaces and data structures that can fully operate independently, instead of acting as a single whole (\emph{logical decentralization}).  
\Cref{sec:literature-review:consensus} discusses the impact of different consensus design choices on decentralization; however, a detailed discussion is beyond the scope of this work and we refer interested readers to the work by Troncoso~\etal~\cite{troncoso2017systematizing}.

\section{Consensus in the Age of Blockchains} \label{sec:literature-review:consensus}
This section summarizes the evolution of blockchain consensus starting from Bitcoin~\cite{bitcoin} up to state-of-the-art sharded systems. This provides context and explains the reasons that led to the design choices of \chainspace (\Cref{chainspace}) and \byzcuit (\Cref{byzcuit}), and provides background on the core of \fastpay (\Cref{fastpay}).

Systems like Bitcoin~\cite{bitcoin, ethereum, zcash} probabilistically elect a single node which can extend the blockchain; they assume synchrony for safety, have probabilistic finality (\ie, forks can exist and be eventually accepted) and low performance (\ie, high latency and low throughput). 
As explained in \Cref{introduction}, \bitcoin suffers from scalability issues: with a current block size of 1MB and 10 minute inter-block interval, throughput is capped at about 7 transactions per second, and a client that creates a transaction has to wait for about 10 minutes for confirmation.
For those reasons, the community shifted to committee-based designs~\cite{sok-consensus} where a group of nodes collectively extends the blockchain typically \via classical Byzantine fault tolerance (BFT) consensus protocols such as PBFT~\cite{bft}. While these systems offer better performance, single-committee consensus is not scalable---as every node handles every transaction, adding more nodes to the committee decreases throughput due to the increased communication overhead. 
This motivated the design of \emph{sharded} systems, where multiple committees handle a subset of all the transactions---allowing parallel execution of transactions. 

\subsection{Classical Consensus} \label{sec:literature-review:classical-consensus}
Consensus protocols have been studied in the distributed systems community since the 1970s~\cite{lamport1978time}. These protocols were intended for closed, small groups of nodes.
We provide an overview of key themes in classical consensus literature, with the goal to contextualize the rest of the section. We will revisit some of these concepts when discussing committee-based consensus (Sections~\ref{sec:literature-review:hybrid-single}~and~\ref{sec:literature-review:hybrid-multi}).

\para{Two-Phase commit} 
Jim Gray, in 1978, proposed the two-phase commit protocol~\cite{gray1978notes}, allowing a \emph{transaction manager} to atomically commit a transaction, depending on different resources held by a distributed set of servers called \emph{resource managers}. Transaction commit protocols enable distributed processing, and thus scalability---but do not provide resilience against faulty resource managers, or more generally nodes. 
In fact, two-phase commit suffers a deadlock in case a resource manager fails to complete the protocol, requiring the introduction of more complex three-round protocols allowing recovery~\cite{skeen1981nonblocking}---\ie the distributed resource managers being able to release the locks held on resources.  Since potentially a crucial resource may only be available on a single resource manager, any failures inhibit progress towards accepting dependent transactions. \Cref{byzcuit} combines this primitive with Byzantine agreement in a novel way to design a scalable consensus protocol.

\para{Consensus, atomic broadcast and state machine replication} 
The need for consensus, or atomic broadcast protocols, in distributed systems originates from the need to provide resilience against failures across multiple nodes holding \emph{replicas} of a database. \emph{Atomic broadcast}~\cite{atomic} allows a set of servers to agree on a value associated with an instance of the protocol; and \emph{consensus protocols} extend this to agreeing on a sequence of values. 
This primitive is closely associated with the state machine replication paradigm~\cite{schneider1990implementing} for building reliable distributed computations: any computation is expressed as a state machine, accepting messages to mutate its state.  Given that a set of replicas start at the same initial state, and can agree on a common sequence of messages, then they may all locally evolve the state of the computation and correctly maintain consistency across the replicated databases they hold, despite failures or network variations. 
The underlying consensus protocols are characterized by the communication model, as well as the failure model, assumed (\Cref{sec:literature-review:terminology}).
Fischer~\etal~\cite{fischer1985impossibility} show that deterministic protocols for consensus are impossible in the fully asynchronous case, and have known solutions in the synchronous case (also known as the ``Byzantines General's Problem''~\cite{Lamport:1982}). 

\para{Key protocols}
In the network security literature Byzantine nodes would be considered malicious or collectively controlled by an adversary. Thus the Byzantine setting is of relevance to security-critical settings, and traditional consensus protocols tolerating only crash failures such as Paxos~\cite{lamport1998part}, viewstamped replication~\cite{oki1988viewstamped}, Raft~\cite{ongaro2014search}, or Zab~\cite{junqueira2011zab} cannot be used, unmodified, in adversarial settings. Practical Byzantine Fault Tolerance (PBFT) by Castro and Liskov~\cite{pbft} is the canonical protocol implementing consensus in the Byzantine and eventually synchronous setting. 

\para{Byzantine consistent broadcast}
Byzantine consistent broadcast is a protocol allowing a sender to broadcast a message to nodes while guaranteeing the following properties in the presence of Byzantine faults: \emph{validity}, \emph{no duplication}, \emph{integrity}, and \emph{consistency}~\cite{cachinBook}. Validity means that if the sender is honest, all honest nodes eventually deliver the message; no duplication ensures that every honest node delivers at most one message; integrity means that nodes deliver a message only if it originates from the sender (\ie the sender is not impersonated); and consistency ensures that if two honest nodes deliver a message, it is the same message (no honest nodes deliver different messages).
The properties of the protocol are guaranteed if $n = 3f+1$, where $n$ is the total number of nodes and $f$ is the number of Byzantine nodes.
Byzantine consistent broadcast can be implemented in a number of ways, the most notable is called \emph{Signed Echo Broadcast} which works in three steps~\cite{cachinBook}. In the first step, the sender disseminates a digitally signed message to all nodes (best effort broadcast), then all honest nodes witness the message by replying with a signed acknowledgement. Finally, the sender collects these signed acknowledgements and relays them in a third communication step to all nodes. Byzantine consistent broadcast is the primitive at the core of \fastpay (\Cref{fastpay}).

\para{Practical Byzantine Fault Tolerance (PBFT)} 
PBFT operates in a sequence of views, each coordinated by a leader---a pattern also used in Paxos~\cite{lamport1998part}. Within each view the leader orders messages, and propagates them through a three-step reliable broadcast to the replicas. Replicas monitor the leader for safety, as well as for liveness, and can propose a \emph{view change} in case the leader is unavailable or malicious. 
Safety is guaranteed within the asynchronous network setting; liveness on the other hand is only guaranteed within a partially synchronous setting, since replicas rely on time-outs to detect a faulty leader. The key complexity of PBFT lies in the view-change sub-protocol, that needs to ensure agreement on the new leader and view, as well as guarantee safety of messages agreed in previous views. 
The basic protocol requires $\mathcal{O}(n^2)$ messages for $n$ replicas to achieve consensus, where $n$ is the number of nodes. The properties of the protocol are guaranteed if $n = 3f+1$, where $f$ is the number of Byzantine nodes. Hotstuff~\cite{hotstuff-2019} is one of the latest successors of PBFT. It builds on a body of works~\cite{tendermint, buterin2017casper} to achieve a number of improvements over PBFT; notably, it operates with a communication complexity of $\mathcal{O}(n)$ messages and has a simpler view change protocol.

\para{Limitations of classical consensus}
PBFT and other consensus protocols employ replication to achieve resilience against failures, not scalability. In fact the traditional literature on Byzantine consensus does not discuss distribution of resources, in the context of a distributed or sharded database, with the exception of a less known joint work by Gray and Lamport on combining atomic broadcast with atomic commit~\cite{gray2006consensus}.
As a result, one expects systems employing Byzantine consensus to see this protocol become a bottleneck, since its trivial application would require all transactions to be sequenced by the quorum of $n$ nodes---using protocols that are slower than asking a single processor to sequence them. \Cref{byzcuit} discusses sharded consensus, where multiple quorums only handle a subset of the transactions.


\subsection{Elected Leader Consensus}
The need to achieve consensus in open, decentralized networks motivated the design of protocols based on \emph{elected leaders} that write to the blockchain.  This may
involve a combination of steps, usually applied sequentially: \first \emph{selection resource} refers to selecting a set of nodes based on some resource they own, for example \via mining power in proof-of-work (\eg \bitcoin~\cite{bitcoin}), stakes in proof-of-stake (\eg Cardano~\cite{cardano, ouroboros, ouroborospraos}), trusted hardware~\etc; and \second \emph{selection mechanism} refers to a technique that is used to non-deterministically elect the leader. 
This typically takes the form of a cryptographic \emph{lottery}---\eg a random beacon, a periodically generated pseudo-random number, which allows the nodes to determine if they have been elected as the leader. 
We briefly describe \pow and \pos consensus, and refer to Bano~\etal~\cite{sok-consensus} for further details.

\para{Proof-of-Work consensus}
Proof-of-Work consensus protocols rely on a computational puzzle to elect a leader that writes to the blockchain. As finding a solution to the puzzle requires a significant amount of computational work, so a valid solution is considered to be a proof-of-work (\pow). 
\pow was first presented by Dwork and Naor in 1993~\cite{dwork1992pricing} as a technique for combatting spam mail, by requiring the email sender to compute the solution to a mathematical puzzle to prove that some computational work was performed~\cite{dwork1993crypto}. \pow was independently proposed in 1997 for Hashcash by Back, another system for fighting spam~\cite{hashcash}. 
In 2008, \bitcoin~\cite{bitcoin} was published by a pseudonymous author Satoshi Nakamoto. Their key innovation is the use of \pow as a \sybil-resistance mechanism, combined with a rule to choose between different versions of the blockchain (fork-choice rule), to achieve consensus---originator---in an open, permissionless network. 
It was not until 2015---7 years after Bitcoin was first released---that it was formally proved that Bitcoin \pow is a consensus protocol~\cite{garay2015bitcoin}.
While the technical components of \bitcoin originate in previous literature~\cite{Narayanan:2017}, their composition in \bitcoin to achieve consensus is novel.

Nakamoto consensus probabilistically elects a single node which can extend the blockchain: it is based on a \pow puzzle derived from Hashcash~\cite{hashcash}, which requires finding a hash of a block that is less than a target integer value $t$. As the hashing algorithm is \preimage resistant, the puzzle can be solved only by including random nonces in the block until the resulting hash is valid (\ie less than $t$). The difficulty of the puzzle is therefore adjustable: decreasing $t$ increases the number of guesses (and thus work) required to generate a valid hash. 
The nodes that generate hashes are called \emph{miners} and the process is referred to as \emph{mining}. Miners calculate hashes of candidate blocks of transactions to be added to the blockchain.  
Nakamoto consensus relies on the cryptographic paradigm of provers and verifiers. Miners take on the role of provers who mint blocks, and every other node is a verifier who validates (and potentially rejects) blocks according to a list of globally agreed consensus rules. This is the `trust, but verify' paradigm.

Nakamoto consensus is a fork-tolerant protocol as all nodes reach eventual consistency about the blockchain's content, whereas classical consensus focuses on fork-avoidance protocols as nodes must have a consistent view after every epoch. 
A fork occurs if two miners find two different blocks that build on the same previous block.
An attacker must have sufficient computing power to be able to create a fork of the blockchain that has more accumulated work than the chain that is to be overridden. Thus the threat model assumes an adversary that has the majority of the computing power on the network (referred to as a \emph{51\% attack}).
The \emph{security threshold} of the network is the percentage of computing power required to conduct a 51\% attack.
Nakamoto consensus resolves forks by accepting the `longest chain, which has the greatest \pow effort invested in it' as the correct one. 
However, such systems assume synchrony, have probabilistic finality (\ie, forks can exist and be eventually accepted) and low performance (\ie, high latency and low throughput).

\para{Proof-of-Stake consensus}
One of the biggest criticisms of \bitcoin is that it is based on power-intensive \pow that has no external utility. 
In \poxStake, participants vote on new blocks weighted by their in-band investment such as the amount of currency held in the blockchain (\emph{stake}). A number of systems have provably secure \poxStake protocols~\cite{ouroborospraos,ouroboros,snow-white}.
A common theme in these systems is to randomly elect a leader from among the \stakeholders (participants) \via lottery, which then appends a block to the blockchain.  Leader election may be public, that is the outcome is visible to all the participants~\cite{ouroboros,snow-white}.  Alternatively, in a private election the participants use private information to check if they have been selected as the leader, which can be verified by all other participants using public information~\cite{ouroborospraos}.
Leader election based on private lottery is resilient to \dos attacks because participants privately check if they are elected before revealing it publicly in their blocks, at which point it is too late to attack them.

The nature of the lottery varies across different systems, but broadly it is either collaborative (\ie requires coordination between the participants) or independent.  In \ouroboros~\cite{ouroboros}, the participants (a random subset of all \stakeholders) run a multiparty coin-tossing protocol to agree on a random seed.  
The participants then feed this seed to a pseudo-random function defined by the protocol, that elects the leader from among the participants in proportion to their stake.  The same random seed is used to elect the next set of participants for the next epoch. 
In \ouroborospraos~\cite{ouroborospraos} and \snowwhite~\cite{snow-white} participants independently determine if they have been elected. \snowwhite selects participants for each epoch based on the previous state of the blockchain, who independently check if they have been elected as the leader.  \snowwhite uses similar criteria for leader election as \bitcoin, that is finding a \preimage that produces a hash below some target.  
However, participants are limited to compute only one hash per time step (assuming access to a weakly synchronized clock) and the target takes into account each participant's amount of stake. In \ouroborospraos, participants generate a random number using a verifiable random function (\vrf).  If the random number is below a threshold, it indicates that the participant has been elected as the leader, who then broadcasts the block along with the associated proof
generated by the \vrf to the network. 
\ouroboros and \ouroborospraos distribute rewards among all the participants regardless of whether or not they win the election.
\pow's leader election eligibility is out-of-band, and all nodes verify the leader election's result only so far as to find the longest and heaviest chain. Whereas in \poxStake the entire
leader-election protocol transcript is recorded in-band which increases the nodes' storage, bandwidth and validation overhead for every block.

A challenge for \poxStake systems is to keep track of the changing stakes of the \stakeholders.  \ouroboros requires that shift in stakes is bounded, meaning the statistical
distance is limited over a certain number of epochs.  Additionally, \snowwhite looks at stakes sufficiently far back in time to ensure that everyone has agreed on the stake distribution.
Outside academia, some deployed cryptocurrencies incorporate \poxStake (\eg Peercoin), but their designs have not been rigorously studied. \ethereum Foundation has been considering using \poxStake for some time~\cite{casper}, and some systems like EOS~\cite{eos} use delegated \poxStake, where participants elect delegates of their choice for mining. 
\poxStake systems are subject to a number of attacks. In long-range attacks~\cite{deirmentzoglou2019survey}, old  nodes' signature keys are compromised and used to re-write the blockchain history; in stake bleeding attacks~\cite{gavzi2018stake}, an adversary accumulates the rewards associated with creation of new blocks in a forking chain in order to inflate its stake and eventually accumulate enough to confirm an inconsistent fork; in nothing-at-stake attacks~\cite{buterin2017casper} nodes mine on every fork of the chain to benefit from the rewards of whichever fork wins.

\subsection{Hybrid Consensus: Single Committee} \label{sec:literature-review:hybrid-single}
The elected leader approach suffers from poor performance as well as weak consistency and slow finality.
This has resulted in a shift towards consensus protocols where a \emph{committee}---rather than a single node---collectively drives the consensus.

\para{Intra-Committee consensus protocol}
The intra-committee consensus protocol ensures that the committee members reach agreement on state of the blockchain. A number of committee-based blockchains~\cite{omniledger,chainspace,elastico} use \pbft. The messaging complexity of \pbft's MAC-authenticated all-to-all communication is $O(n^2)$. This is problematic for permissionless blockchains where a committee can potentially have thousands of nodes.
Another approach to reach intra-committee consensus is based on gossip protocols. These protocols are suited to permissionless blockchains as point-to-point connections between the $n$ nodes of the committee are no longer needed. Upon reception of a new transaction, nodes query a subset of $k$ randomly selected others nodes; each of those nodes replies with its view of the state of the system, and initiates a similar query.  The requesting node weights the replies and potentially updates its own view of the state; this process is repeated until global consensus is reached (with high probability). 
\avalanche~\cite{avalanche} proposes a gossip-based family of \bft protocols that have a communication complexity of $O(k \times n)$, where $k << n$ is a security parameter.  These protocols are leaderless and claim strong Denial of Service (DoS) and censorship resistance.

\para{Committee leadership}
Traditional \bft protocols proceed in rounds, where consensus in each round is led by a committee leader. The concept of a committee leader is not compatible with permissionless blockchains that aspire to achieve the design goal of decentralization.  An adversary can concentrate its DoS attack on committee leaders which are easy to discover by joining the committee.  As the leader is responsible for proposing transactions, a malicious leader can prioritize transactions from which it can benefit. While committee members can potentially detect a malicious leader and trigger leader re-election (\ie view change), this severely degrades performance~\cite{Amir:2011}.  
In \solidus, the leader is external to the committee and can propose transactions and \pow to nominate itself as a committee member only once to the committee. If the committee agrees, they approve the proposed transactions and allow the miner to join the committee in the next round. The proposal, that has now become a decision, also serves as the next puzzle and is propagated to all miners. 
\solidus highlights a safety problem in \pbft's `stable' leader which can potentially manipulate reconfiguration by waiting for a malicious miner to solve the puzzle, and later nominating it on to the committee---allowing the committee to gradually become dominated by corrupt members. 
The concept of a leader in committee-based blockchains introduces a number of challenges with respect to scalability, and security (\dos attack, transaction censorship, and centralization). This has motivated the design of leaderless consensus protocols~\cite{hashgraph, blockmania}.

\para{Optimizations} 
A number of optimizations have been proposed to improve the performance of \bft consensus protocols.
\emph{Scheduling optimizations} involve techniques to identify and execute non-conflicting transactions in parallel (and thus achieve high throughput) by leveraging application-specific information~\cite{Kotla:2004}. \emph{Execution optimizations} reduce latency by allowing clients~\cite{WesterCNCFL09} or replicas~\cite{KotlaADCW09} to speculatively execute transactions based on predicted results---if a fault is detected (\ie the speculation turns out to be incorrect), the client/replica rolls back its state to the last checkpoint and re-executes the transactions based on the correct results. \emph{Protocol optimizations} refer to the committee's ability to switch between suitable \bft protocols according to varying network conditions and performance requirements~\cite{Guerraoui:2010}. 
\hyperledger uses \emph{pluggable and modular} consensus in which the consensus protocol can be specified by the smart contract policy. \emph{Cryptographic optimizations} leverage advances in cryptography to optimize the communication complexity of \bft. \byzcoin organizes the consensus committee into a communication tree that uses a primitive called scalable collective signing~\cite{cosi} which reduces \pbft's messaging complexity to $O(n)$.
The XFT~\cite{liu2016xft} protocol, on the other hand, improves the efficiency of consensus by relaxing the threat model. It considers that Byzantine nodes may act arbitrarily, however links between honest nodes are reliable and eventually synchronous. This leads to a simplification of the view change and steady state \bft protocol.
\emph{Hardware optimizations} enable consensus protocols to achieve high performance by exploiting advances in hardware. The Intel Sawtooth lake system uses the Intel SGX and related trusted execution environments to perform the duties related to ordering transactions, while ensuring safety and liveness~\cite{prisco2016intel}. 
Finally, \emph{architectural optimizations} improve performance by distributing different consensus duties across independent subsets of replicas.  A useful paradigm (employed by \hyperledger) is to separate ordering from execution~\cite{Yin:2003}, which allows for a modular design where transaction validation is performed by the fully trusted nodes (or endorsers) while the semi-trusted nodes (ordering nodes) order the transactions and add these to the blockchain.  
In \hyperledger, clients first submit their transactions to the endorsers who execute the smart contract. A transaction is only submitted to a subset of endorsers according to the policy of the respective smart contract. As different smart contracts can designate different endorsers, execution can take place in parallel.  Clients collect matching signed results and smart contract state updates from sufficient number of endorsers, and submit these to the ordering nodes which append it to the blockchain using a consensus protocol.
Others~\cite{Vukolic:2017:RPB} argue that distributed ledgers can
decouple the ordering---performed in public on cryptographic
commitments of transactions---from the validation containing private
information, that is only checked by a trusted cabal. 
Separating ordering from execution~\cite{Yin:2003} allows
committee-based blockchains to scale at the same rate as the core
ordering protocol, but providing universal end-to-end
verifiability and decentralization in this setting remains an open challenge.

\subsection{Hybrid Consensus: Multiple Committees} \label{sec:literature-review:hybrid-multi}
Single-committee consensus is not scalable and adding more nodes to the committee decreases throughput---leading to the design of consensus based on multiple committees. Transactions are split among multiple committees (called \emph{shards}) which then process these transactions in parallel. Every committee has its own blockchain and set of objects (or unspent transaction outputs, UTXO) that they manage. Committees run an \emph{intra-shard consensus protocol} (\eg, PBFT) within themselves, and extend their blockchain in parallel. 
This incurs additional coordination between the committees \via \emph{inter-committee consensus} to reach agreement on a value among nodes across multiple committees. 
The inter-committee consensus protocol may be run entirely by the committees (\emph{non-mediated}), or may be mediated by an external party (\emph{mediated}). \emph{Inter-committee configuration} defines how nodes are assigned to the committees in a multiple committees setting; it can be static or dynamic.
When multiple committees are involved in consensus, an important consideration is how they will be organized in terms of \emph{topology}.

\para{Committee topology}
\chainspace (presented in \Cref{chainspace}) and \omniledger~\cite{omniledger} have flat topologies, that is all committees are at the same level.
\elastico~\cite{elastico} has a hierarchical topology in which a number of `normal' committees validate transactions, and a leader committee orders these transactions and extends the blockchain. 
In \rscoin~\cite{rscoin}, a permissioned blockchain, the central bank controls all monetary supply while committees (called mintettes) authorized by the bank validate a subset (shard) of transactions.  The transactions that pass validation are submitted to the central bank which adds them to the blockchain.
Hierarchical topology facilitates configuration and
management of committees in multi-committee blockchains, but
undermines decentralization.

\para{Inter-committee consensus}
In a multi-committee system, some transactions might manipulate state that is handled by different committees. The inter-committee consensus ensures that such transactions are processed consistently and atomically across all the concerned committees.  One approach is to mediate the inter-committee consensus protocol \via the client. \omniledger uses an atomic commit protocol to process transactions across committees (see \Cref{sec:byzcuit:atomix-summary}). 
Client-driven inter-committee consensus protocols make the assumption that clients are incentivized to proceed to the unlock phase. Such incentives may exist in a cryptocurrency application where an unresponsive client will lose its own coins if the inputs are permanently locked, but do not hold for a general-purpose platform where transaction inputs may have shared ownership.
Client-driven inter-committee consensus protocols are vulnerable to \dos attack if the client stops participating midway, resulting in the transaction inputs being locked forever.
Another approach is to run an atomic commit protocol collaboratively between all the concerned committees.  This is achieved by making the entire committees act as resource managers for the transactions they manage (see \Cref{sec:byzcuit:sbac-summary}).
Inter-committee consensus protocols are relatively immature, and their security has not been rigorously evaluated.
For example, \Cref{byzcuit} shows the susceptibility of these protocols to replay attacks that allow an attacker to double-spend resources with minimal effort, and without colluding with any nodes.
The attacker records a target committee's responses to the consensus protocol, and replays them during another instance of the protocol.


\section{Cross-Shard Consensus Protocols} \label{sec:literature-review:xshard}
\Cref{sec:literature-review:hybrid-multi} presented a brief overview of inter-committee consensus; this section provides a deeper background on cross-shard consensus protocols as it is a core component of \chainspace (presented in \Cref{chainspace}) and the subject of \Cref{byzcuit}.

Multi-committee systems are often called \emph{sharded systems}, and each committee is called \emph{shard}.
In sharded systems, some transactions may operate on objects handled by different shards, effectively requiring the relevant shards to additionally run a \emph{cross-shard consensus protocol} to enable agreement across the shards. Specifically, if any of the shards relevant to the transaction rejects it, all the other shards should likewise reject the transaction to ensure atomicity.

\subsection{Two-Phases Atomic Commit Protocols}
A typical choice for implementing cross-shard consensus is the two-phase atomic commit protocol~\cite{gray1978notes} (see \Cref{sec:literature-review:classical-consensus}). This protocol has two phases which are run by a \emph{coordinator}. In the first \emph{voting} phase, the nodes tentatively write changes locally, lock resources, and report their status to the coordinator. If the coordinator does not receive status message from a node (\eg, because the node crashed or the status message was lost), it assumes that the node's local write failed and sends a rollback message to all the nodes to ensure any local changes are reversed, and locks released. If the coordinator receives status messages from all the nodes, it initiates the second \emph{commit} phase and sends a commit message to all the nodes so they can permanently write the changes and unlock resources. In the context of sharded blockchains, the atomic commit protocol operates on shards (which make the local changes associated with the voting phase \via an intra-shard consensus protocol like PBFT), rather than individual nodes. A further consideration is who assumes the role of the coordinator; \emph{\clientled protocols} rely on the client to take the role of coordinator, and \emph{\shardled protocols} rely on entire shard.

There are currently two key approaches to cross-shard atomic commit protocols. The first approach involves \clientled protocols (such as \atomix~\cite{omniledger} and \rscoin~\cite{rscoin}), where the client acts as a coordinator. These protocols assume that clients are incentivized to proceed to the unlock phase. 
The second approach involves \shardled protocols (such as \sbac~\cite{chainspace} and Elastico~\cite{elastico}), where shards collectively assume the role of a coordinator.
All the concerned shards collaboratively run the protocol between them. This is achieved by making the entire shard act as a `resource manager' for the transactions it handles.
\Cref{byzcuit} describes a family of replay attacks in the context of two representative systems: \sbac~\cite{chainspace} as an example of \shardled protocols (\Cref{sec:byzcuit:shard-led-consensus}); and \atomix as an example of \clientled protocols (\Cref{sec:byzcuit:client-led-consensus}); these attacks can compromise both system liveness and safety. 

\subsection{Mutex-Based Consensus Protocols}
Contrarily to \sbac and \atomix that achieve cross-shard consensus through a two-phase atomic commit protocol,
mutex-based schemes for cross-shard transactions, such as \rapidchain~\cite{rapidchain} and \ethereum's cross-shard `yanking' proposal~\cite{yanking}, adopt mutex-based schemes for transactions that involve objects managed by different shards.
The key idea is to require all objects that a transaction reads or writes  to be in the same shard (\ie, all locks for a transaction are local to the shard). Cross-shard transactions are enabled by transferring the concerned objects between shards, effectively giving shards a lock on those objects.  When shard 1 transfers an object to shard 2, shard 1 includes a transfer `receipt' in its blockchain. A client can then send to shard 2 a Merkle proof of this receipt being included in shard 1's blockchain, which makes the object active in shard 2.

Mutex-based schemes also need to consider replay attacks. Clients can claim the same receipt multiple times, unless shards store information about previously claimed receipts.  Na\"{\i}vely, shards have to store information about all previously claimed receipts permanently.
However, two intermediate options with trade-offs have been proposed~\cite{yanking}:
\begin{itemize}
    \item Shards only store information about receipts for $l$ blocks, and clients can only claim receipts within $l$ blocks; objects are permanently lost if not claimed in time (this introduces a synchrony assumption).
    \item Shards only store information about receipts for $l$ blocks, and include the root of a Merkle tree of claimed receipts in their blockchain every $l$ blocks. If a receipt is not claimed within $l$ blocks, the client must provide one Merkle proof every $l$ blocks that have passed to show that the receipt has not been previously claimed, in order to claim it. The longer the receipt was not claimed, the greater the number of proofs that are needed to claim a receipt. These proofs need to be also stored on-chain to allow other nodes to validate them.
\end{itemize}
\Cref{byzcuit} presents a system, called \byzcuit, that forgoes the need for shards to store information about old state (such as inactive objects or old receipts) as shards only need to know the set of active objects they manage, and does not impose a trade-off between the amount of information about old state that needs to be stored and the cost of recovering old state that was held up in an incomplete cross-shard transaction (\ie, an unclaimed receipt).

\section{Sybil Resistance and Committee Management} \label{sec:literature-review:sybil}
A limitation of using \pow or \pox for \sybil resistance in permissionless committees is that the biggest miners will have a greater likelihood of dominating the committee, though at the cost of significantly more hashing power than required for single-leader \pow systems. Other \pox alternatives, relying on space, memory, or
space-time, have been proposed but these suffer from similar issues.
Protocols have also been proposed for \sybil detection based on the analysis of social networks and trust graphs~\cite{Alvisi:2013}, but those have not been adapted to blockchains, besides the definitional framework for Federated Byzantine Agreement Systems proposed by \stellar~\cite{mazieres2015stellar}. Sybilquorum~\cite{sybilquorum} makes a first attempt to bridge this gap by proposing mechanism to detect sybils through a stake-weighted social networks analysis.

\subsection{Sybil Resistance}
Committee formation refers to the criteria used to allow nodes to join a committee. \emph{Permissioned} blockchains like~\hyperledger~\cite{hyperledger} operate in a trusted environment where nodes are granted committee membership based on the organizational policy. 
In permissionless blockchains, the committee is formed so as to thwart \sybil attacks. Nodes are usually allowed to join the committee based on a \emph{selection resource} such as \emph{\pow}. In \byzcoin~\cite{byzcoin}, the consensus committee is dynamically formed by a window of recent miners. Each miner has voting power proportional to its number of mining blocks in the current window, which is proportional to its hash power. 
When a miner finds a solution to the puzzle, it becomes a member of the committee and receives a share in the consensus. In addition to \pow, \omniledger~\cite{omniledger} also supports \emph{\poxStake} to allocate committee membership based on directly invested stake.
Some permissionless blockchains employ a further \emph{selection mechanism} such as a \emph{lottery} to form the committee. In \algorand~\cite{algorand}, all the nodes that have \pox run a verifiable random function---they are promoted to the committee if the output is below a certain value.  

\para{Coercion resistance}
Another consideration for bootstrapping committees is to achieve coercion resistance, in the form of requiring enormous effort for an adversary to suppress the overall operation of the system. Coercion resistance properties are also key to the success of other decentralized systems, such as BitTorrent~\cite{cohen2003incentives},
that are subject to take-down pressures by publishers. Bitcoin itself was coincidentally proposed in 2008~\cite{nakamoto-post}, the same year when E-gold~\cite{egold} was declared illegal by the US Department of Justice and taken offline---illustrating that value exchange systems, and monetary systems that are transnational and unregulated, will come under fire by national monetary and law enforcement authorities.  
Systems such as Tor~\cite{tor} have survived in a highly adversarial environment despite parts of its infrastructure, namely directory authorities, being a closed consensus group. 
These authorities are distributed geographically, and are under different jurisdictions and managed by different organizations to preclude both collusion and single jurisdiction attacks.
Single-committee blockchains may, through careful selection of nodes, achieve coercion resistance~\cite{sok-consensus}.

\para{Multiple committees}
Multi-committee blockchains raise the additional issue of how to map nodes to committees. In permissioned systems, the process of assigning nodes to committees is usually done \emph{statically} according to the policy of the federation.  Another approach is to \emph{dynamically} allocate nodes to committees. 
Permissioned systems like \rscoin can use a trusted source of randomness for committee reconfiguration, but this can be problematic in a permissionless setting which would require a shared random coin~\cite{Corbett:2013,Glendenning:2011}.
Generating good randomness in a distributed way is a known hard problem: current best solutions tolerate up to $1/6$ fraction of
Byzantine nodes, while incurring a high message complexity~\cite{Awerbuch:2006}. Among the more recent solutions, \randhound~\cite{randherd} provides a more scalable, secure multi-party computation protocol that offers unbiasable, decentralized randomness while tolerating a third of Byzantine faults.  It brings down the communication complexity to $O(c^2 \textrm{log}(n))$, where $c$ is the size the subgroups it uses.  
In multi-committee blockchains, nodes should be assigned to committees in a non-deterministic way to stop an adversary from concentrating its presence in one committee and exceeding the Byzantine-tolerance threshold.

\para{Securing committees}
The idea of scaling services built on state machine replication (\smr) by splitting state (or sharding) among multiple committees (also called partitions or shards) has been well-studied in the context of traditional distributed systems~\cite{Corbett:2013,Glendenning:2011,Le:2016:SSMR}.
The key challenge in these systems is to ensure linearizability by atomically executing operations that span multiple committees. 
These systems employ fault-tolerant \bft protocols at their core as the nodes are controlled by a single entity or a group of entities that collectively govern the system. Due to similar governance assumptions, these techniques can be extended to permissioned blockchains.
Sharding permissionless blockchains with Byzantine adversaries is challenging and tackled by only a few recent systems~\cite{chainspace,omniledger,elastico,rapidchain}. Individual committees can tolerate up to 33\% of malicious members, otherwise the malicious committee can compromise all the transactions that touch the bad committee. 

\para{Committee governance}
Randomly mapping nodes to committees improves security, but prohibits finer governance.  General-purpose platforms like \chainspace might have different policies within committees; for example some committees can be permissioned while others can be permissionless. In this case it might be useful to enforce node-to-shard mapping \via smart contracts that allow a node to join a committee trusted by the smart contract provider.  


\subsection{Committee Reconfiguration}
Intra-committee configuration means how nodes are assigned to the committee. 
In \emph{static} configuration, nodes are statically assigned to the committee, and are allowed to stay on indefinitely.
Static configuration is typically employed in permissioned blockchains like \hyperledger and \rscoin. In \emph{dynamic} configuration, committee members are changed periodically. This model is typically used in permissionless blockchains as this helps thwart \sybil attacks. Dynamic committee membership can take three forms. \first In \emph{rolling (single)} membership, the committee is updated in a sliding window fashion, \ie a new node replaces the oldest committee member periodically. 
In \byzcoin, when a miner finds a solution to the puzzle, it becomes a member of the committee and receives a share in the current consensus window which moves one step forward (ejecting the oldest miner). \second \emph{Rolling (multiple)} committee membership is a similar concept, where multiple committee members are replaced periodically.  \omniledger uses cryptographic sortition to select a subset of the committee to be swapped out and replaced with new members. \third Some systems replace the \emph{full} committee, \eg \algorand and \snowwhite select the committee members for each epoch using randomness generated based on previous blocks.

\para{Liveness in dynamic committee configuration}
Dynamic committee configuration improves security by raising the bar for sybil attacks, but introduces a new challenge: how is liveness maintained during reconfiguration? One approach is to only do rolling configuration, which has the benefit that the committee is operational during reconfiguration as the operational members can continue to process transactions while a fraction of the committee is being reconfigured and bootstrapped.  
\omniledger uses cryptographic sortition to select a subset of the committee to be swapped out and replaced with new members. This is done in such a way that the ratio between honest and Byzantine members in a committee is maintained. In \solidus~\cite{solidus}, a new miner joining the committee can propose transactions only once.  This binds transaction proposals to reconfiguration, so it is no longer possible for an old committee to approve transactions concurrent to a reconfiguration event.  


\para{Reconfiguration of multiple committees systems}
Inter-committee configuration means whether node assignment to committees in a multi-committee blockchain remains \emph{static} or is periodically changed (\emph{dynamic}).
\omniledger periodically reconfigures committees to ensure that a committee is never compromised. This is achieved by a secure shard reconfiguration protocol, based on \randhound, that committee members run periodically and  autonomously.  In every epoch, a random subset of members is replaced with new set of members that registered their interest in the previous epoch. 
The swap operation is done such that liveness is maintained during reconfiguration events because a subset of committee members continues to be operational.
Dynamic inter-committee configuration prevents an adversary from subverting existing nodes in a committee and exceeding the Byzantine-tolerance threshold.
%
\chainspace has abstracted details of committee reconfiguration and it is up to policy enforced \via a smart contract to decide how nodes are allocated to committees. Nodes can be added (and removed) to committees by their members through majority voting.

\section{Selective Disclosure Credentials} \label{sec:literature-review:crypto}
This section provides some background on selective disclosure credentials and on some useful cryptographic building blocks. \Cref{coconut} presents \coconut, a novel selective disclosure credential scheme that is used to build a number of decentralized and scalable privacy-preserving applications on top of \chainspace. 
Selective disclosure credentials (or anonymous credentials)~\cite{cl, amac, pointcheval} allow the issuance of credentials to users, and the subsequent unlinkable revelation to a verifier. They allow users to be known in different contexts by different pseudonyms; a user might be known to one service under one pseudonym and to another service under a different pseudonym, and yet be unlinkable accross services. Users can selectively disclose some of the attributes embedded in the credential or specific functions of these attributes. At the high level, anonymous credentials scheme have two main protocols: \first the \emph{issuing} protocol where the issuing authorities provide the user with a credential embedding a number of attributes, and \second the \emph{showing} protocol where the user shows its credential (and potentially some of its attributes) to a verifier.

\subsection{Cryptographic Building Blocks} \label{sec:literature-review:crypto-blocks}
We describe the security properties of anonymous credentials and provide background on zero-knowledge proofs that are extensively used at the core of any anonymous credential scheme. We then present the cryptographic assumptions on which \coconut as well as many of its predecessors (see \Cref{sec:literature-review:coconut-related}) rely.

\para{Security properties}
Anonymous credentials scheme satisfy \emph{unforgeability}, \emph{blindness}, and \emph{unlinkability} (or \emph{zero-knowledge}). Unforgeability means it is unfeasible for an adversarial user to convince an honest verifier that they are in possession of a credential if they are in fact not (\ie, if they have not received a valid credential from the issuing authority). Blindness ensures it is unfeasible for an adversarial authority to learn any information about the user attributes during the credentials issuing protocol, except for what is explicitly revealed by the user. Finally, unlinkability (or Zero-knowledge) ensures it is unfeasible for an adversarial verifier (potentially working with an adversarial authority) to learn anything about the user attributes (except for what is explicitly revealed by the user) during the execution of the showing protocol, or to link the execution of the showing protocol with either another execution of that protocol or with the credentials issuing protocol.

\para{Zero-knowledge proofs}
Zero-knowledge proofs are protocols allowing a \emph{prover} to convince a \emph{verifier} that it knows a secret value $x$, without revealing any information about that value. The prover can also convince the verifier that they know a secret value $x$ satisfying some statements $\phi$.
Anonymous credentials extensively employ zero-knowledge proofs to provide users with certified secret values; users are successively able to prove to third party verifiers that they hold secret values certified by specific credentials issuers, and prove statements about those values without disclosing them.
This enables, for instance, the property of \emph{provable personal properties}.
A credential issuer may provide a user with a secret value $x=20$ representing their age; the user can then prove in zero-knowledge to a verifier that a specific credential issuer certified that their age is larger than 18, without revealing their real age $x$.

\coconut uses non-interactive zero-knowledge proofs (NIZK) to assert knowledge and relations over discrete logarithm values.
These proofs can be efficiently implemented without trusted setups using sigma protocols~\cite{schnorr}, which can be made non-interactive using the Fiat-Shamir heuristic~\cite{fiat1986prove} in the random oracle model.

\para{Cryptographic assumptions}
\coconut requires groups $(\mathbb{G}_1,\mathbb{G}_2,\mathbb{G}_T)$ of prime order $p$ with a bilinear map $e:\mathbb{G}_1 \times \mathbb{G}_2 \rightarrow \mathbb{G}_T$ and satisfying the following properties: \first\emph{Bilinearity} means that for all $g_1 \in \mathbb{G}_1$, $g_2 \in \mathbb{G}_2$ and $(a,b) \in \mathbb{F}_p^2$, \; $e(g_1^a,g_2^b) = e(g_1,g_2)^{ab}$; \second\emph{Non-degeneracy} means that for all $g_1 \in \mathbb{G}_1$, $g_2 \in \mathbb{G}_2$, $e(g_1,g_2) \neq 1$; \third\emph{Efficiency} implies the map $e$ is efficiently computable; \fourth furthermore, $\mathbb{G}_1 \neq \mathbb{G}_2$, and there is no efficient homomorphism between $\mathbb{G}_1$ and $\mathbb{G}_2$.
The type-3 pairings are efficient~\cite{galbraith2008pairings}. They support the XDH assumption which implies the difficulty of the Computational co-Diffie-Hellman (co-CDH) problem in $\mathbb{G}_1$ and $\mathbb{G}_2$, and the difficulty of the Decisional Diffie-Hellman (DDH) problem in $\mathbb{G}_1$~\cite{bls}.
\coconut also relies on a cryptographically secure hash function $\hashtopoint$, hashing an element $\mathbb{G}_1$ into an other element of $\mathbb{G}_1$, namely $\hashtopoint: \mathbb{G}_1\rightarrow\mathbb{G}_1 $. We implement this function by serializing the $(x,y)$ coordinates of the input point and applying a full-domain hash function to hash this string into an element of $\mathbb{G}_1$ (as Boneh~\etal~\cite{bls}).

\subsection{The Predecessors of \coconut} \label{sec:literature-review:coconut-related}
This section describes the body of works on top of which \coconut is built, starting with some relevant signature schemes (which provide unforgeability but neither blindness nor unlinkability, see \Cref{sec:literature-review:crypto-blocks}).

\para{Short and aggregable signatures} 
The Waters signature scheme~\cite{waters} provides the bone structure of our primitive, and introduces a clever solution to aggregate multiple attributes into short signatures. 
However, the original Waters signatures do not allow blind issuance or unlinkability, and are not aggregable as they have not been built for use in a multi-authority setting. 
Lu~\etal scheme, commonly known as LOSSW signature scheme~\cite{lossw}, is also based on Waters scheme and comes with the improvement of being sequentially aggregable. 
In a sequential aggregate signature scheme, the aggregate signature is built in turns by each signing authority; this requires the authorities to communicate with each other resulting in increased latency and cost. The BGLS signature~\cite{bgls} scheme is built upon BLS signatures and is remarkable because of its short signature size---signatures are composed of only one group element. The BGLS scheme has a number of desirable properties as it is aggregable without needing coordination between the signing authorities, and can be extended to work in a threshold setting~\cite{boldyreva2002efficient}. Moreover, Boneh~\etal show how to build verifiably encrypted signatures~\cite{bgls} which is close to our requirements, but not suitable for anonymous credentials. 

\para{Anonymous credentials} 
CL Signatures~\cite{cl,lee2013aggregating} and Idemix~\cite{idemix} are amongst the most well-known building blocks that inspired applications going from direct anonymous attestations~\cite{chen2010design, bernhard2013anonymous} to electronic cash~\cite{canard2015divisible}. They provide blind issuance and unlinkability through randomization; but come with significant computational overhead and credentials are not short as their size grows linearly with the number of signed attributes, and are not aggregable. U-Prove~\cite{uprove} and Anonymous Credentials Light (ACL)~\cite{acl} are computationally efficient credentials that can be used once unlinkably; therefore the size of the credentials is linear in the number of unlinkable uses. Pointcheval and Sanders~\cite{pointcheval} present a construction which is the missing piece of the BGLS signature scheme; it achieves blindness by allowing signatures on committed values and unlinkability through signature randomization. However, it only supports sequential aggregation and does not provide threshold aggregation. For anonymous credentials in a setting where the signing authorities are also verifiers (i.e., without public verifiability), Chase~\etal~\cite{amac} develop an efficient protocol. Its `GGM' variant has a similar structure to \coconut, but forgoes the pairing operation by using message authentication codes (MACs). None of the above schemes support threshold issuance.

While the scheme of Garman~\etal~\cite{dac} does not specifically focus on threshold issuance of credentials or on general purpose credentials, it provides the ability to issue credentials without central issuers supporting private attributes, blind issuance, and unlinkable multi-show selective disclosure. To obtain a credential, users build a vector commitment to their secret key and a set of attributes; and append it to a ledger along with a pseudonym built from the same secret key, and a zk-proof asserting the correctness of the vector commitment and of the pseudonym. To show a credential under a different pseudonym, users scan the ledger for all credentials and build an RSA accumulator; they provide a zk-proof that they know a credential embedded in the accumulator. Similarly to Zerocoin~\cite{zerocoin}, showing credentials requires an expensive double  discrete-logarithm proof (about 50KB~\cite{dac}); and the security of the credentials scheme relies on the security of the ledger. \coconut addresses the two open questions left as future work by Garman~\etal~\cite{dac}; \first the security of \coconut credentials do not depend on the security of a transaction ledger as they are general purpose credentials, and \second \coconut enjoys short and efficient proofs as it builds from blind signatures and does not require cryptographic accumulators.

\para{Short and threshold issuance anonymous credentials} 
\coconut (\Cref{coconut}) extends these previous works by presenting a short, aggregable, and randomizable credential scheme; allowing threshold and blind issuance, and a multi-authority anonymous credentials scheme. \coconut primitives do not require  sequential aggregation, meaning the aggregate operation does not have to be performed by each signer in turn. Any independent party can aggregate any threshold number of partial signatures into a single aggregate credential, and verify its validity. 

\section{Blockchains Data Structures} \label{sec:literature-review:data-structures}
There are a number of ways in which blockchain systems organize their state and the records of processed transactions; the most common strategies are linear chains, block DAGs, and Merkle trees.

\para{Linear chain}
The most common strategy to store blockchain records is by grouping several transactions into sets called \emph{blocks}~\cite{bitcoin, ethereum}. Each block contains a cryptographic hash to the previous block, thus forming a chain of blocks. This chain provides high-integrity as it is unfeasible to alter a block without invalidating the rest of the chain.
Adding records to a linear chain is efficient but searching the chain for specific transactions can be expensive.

\para{Block DAG}
\chainspace (\Cref{chainspace}) organizes records as a Directed Acyclic Graph (DAG). A DAG  is a directed graph with no directed cycles. It consists of vertices and edges with each edge directed from one vertex to another, such that following those directions never forms a closed loop. \chainspace proposes a new data model where objects naturally forms a DAG (see \Cref{sec:chainspace:design}). This model has no disadvantages with respect to traditional linear chains, and allows to easily verify the correct creation of each object.

\para{Merkle tree}
A Merkle tree is a binary tree where every non-leaf node is  the cryptographic hash of the concatenation of its children nodes~\cite{szydlo2004merkle}. As a result, the root of the tree is a commitment to all of its leaf nodes. A prover can convince a verifier knowing only the root of tree that a specific leaf is included in the Merkle tree. The size and verification time of this proof is logarithmic in the number of leaves.
A number of blockchains, such as Libra~\cite{libra}, store their records using Merkle trees to allow nodes to efficiently convince clients that a specific transaction has been committed. On the downside, expanding a blockchain stored as a Merkle tree is more expensive than growing a linear chain.
\section{Chapter Summary}
This chapter laid the foundations of the next chapters, and inserted the contributions of the core chapters into their respective line of work. 
It first introduced the main terminology and assumptions used throughout this thesis. 
It then motivated the design of multi-committee systems such as \chainspace (\Cref{chainspace}) by summarizing the evolution of blockchain consensus through history, and explained why they are considered today's state-of-the-art.
This chapter provided background on 2-phase atomic commit protocols and Byzantine consensus, the primitives at the core of \byzcuit (\Cref{byzcuit}); Byzantine consistent broadcast, the primitive at the core of \fastpay (\Cref{fastpay}); and described multiple cryptographic building blocks used by \coconut (\Cref{coconut}) to build the first threshold-issuance selective disclosure credentials scheme.

\chapter{\chainspace: A Sharded Smart Contracts Platform} \label{chainspace}


%
\begin{figure}[t]
\centering
\includegraphics[width=.7\textwidth]{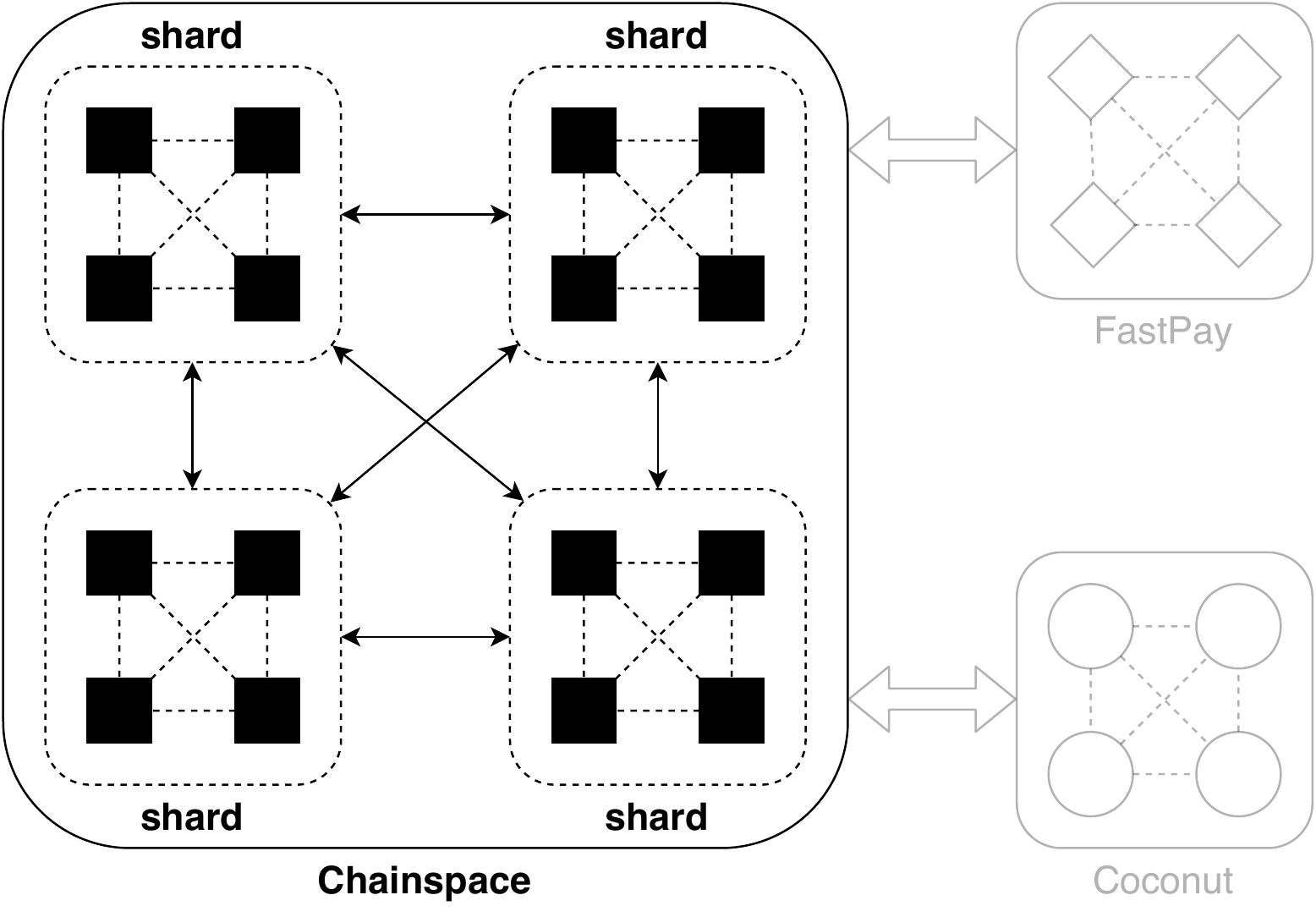}
\caption[Global overview: \chainspace.]{Example of instantiation of \chainspace with four shards containing four nodes each. Nodes are represented by black squares and the dashed lines connecting them to each other indicate they run a BFT protocol. The solid arrows connecting each shards are a \emph{cross-shard consensus protocol} allowing to coordinate shards operations.}
\label{fig:chainspace:global-overview}
\end{figure}
%

This chapter presents \chainspace, a sharded distributed ledger with integrated privacy support. \chainspace is the main component of \Cref{fig:introduction:overview} and is highlighted in \Cref{fig:chainspace:global-overview}. 
The key idea of \chainspace is to create groups (called \emph{shards}) of nodes that handle only a subset of all transactions and system state, relying on classical Byzantine Fault Tolerance (BFT) protocols for reaching consensus amongst each shard. 
\Cref{fig:chainspace:global-overview} illustrates an instantiation of \chainspace with four shards containing four nodes each. Nodes are represented by black squares and the dashed lines connecting them to each other indicate they run a BFT protocol. The solid arrows connecting each shards are a \emph{cross-shard consensus protocol} allowing to coordinate shards operations; this protocol is discussed in \Cref{sec:chainspace:sbac} and is the subject of \Cref{byzcuit}.
%
\chainspace offers extensibility though smart contracts, like Ethereum~\cite{ethereum}. 
Furthermore, it is agnostic as to the smart contract language, or identity infrastructure, and supports privacy features through modern zero-knowledge techniques~\cite{bootle2016efficient,danezis2014square}. \chainspace supports privacy-preserving applications by design; it revisits the execution of smart contracts on blockchains, and proposes a system where the transaction is executed by the client, and the smart contract platform only verifies the correctness of the execution.

Unlike other scalable but `permissioned' smart contract platforms, such as \hyperledgerfabric~\cite{hyperledger} or \bigchaindb~\cite{mcconaghy2016bigchaindb}, \chainspace aims to be an `open' system: it allows anyone to author a smart contract, anyone to provide infrastructure on which smart contract code and state runs, and any user to access calls to smart contracts. Further, it provides ecosystem features, by allowing composition of smart contracts from different authors. We integrate a value system, named \cscoin, as a system smart contract to allow for accounting between those parties.
However, the security model of \chainspace, is different from traditional permissionless blockchains, that rely on proof-of-work and global replication of state, such as \ethereum. In \chainspace smart contract authors designate the parts of the infrastructure that are trusted to maintain the integrity of their contract---and only depend on their correctness, as well as the correctness of contract sub-calls. This provides fine grained control of which part of the infrastructure need to be trusted on a per-contract basis, and also allows for horizontal scalability.

\subsection*{Contributions} 
This chapter makes the following key contributions:
\begin{itemize}
\item It presents \chainspace, a system that can scale arbitrarily as the number of nodes increase, tolerates Byzantine failures, and can be publicly audited.
    

\item It introduces a distinction between parts of the smart contract that execute a computation, and those that check the computation and discusses how that distinction is key to supporting privacy-friendly smart-contracts.


\item It presents a number of key system and application smart contracts and evaluates their performance. The contracts for privacy-friendly smart-metering and privacy-friendly polls illustrate and validate support for high-integrity and high-privacy applications.
\end{itemize}

\subsection*{Outline}
\Cref{sec:chainspace:overview} presents an overview of \chainspace; \Cref{sec:chainspace:interface} presents the client-facing application interface; \Cref{sec:chainspace:design} presents the design of internal data structures guaranteeing integrity, the distributed architecture, and smart contract definition and composition. \Cref{sec:chainspace:theorems} argues the correctness and security; specific smart contracts and their evaluations are presented in \Cref{sec:chainspace:applications}; \Cref{sec:chainspace:evaluation} presents an evaluation of the smart contract performance; \Cref{sec:chainspace:limits} presents limitation and \Cref{sec:chainspace:conclusions} concludes the chapter.

\section{Overview}
\label{sec:chainspace:overview}

\chainspace allows applications developers to implement distributed ledger applications by defining and calling procedures of smart contracts operating on controlled objects, and abstracts the details of how the ledger works and scales. In this section, we first describe data model of \chainspace, followed by an overview of the system design, its threat model and security properties. 

\subsection{Data Model: Objects, Contracts, Transactions.}
\label{sec:chainspace:data-model}

\chainspace applies aggressively the end-to-end principle~\cite{saltzer1984end} in relying on untrusted end-user applications to build transactions to be checked and executed. We describe below key concepts within the \chainspace data model. 

\emph{Objects} are atoms that hold state in the \chainspace system. We usually refer to an object through the letter $o$, and a set of objects as $\csobject \in O$. All objects have a cryptographically derived unique identifier used to unambiguously refer to the object, that we denote $\id(\csobject)$. Objects also have a type, denoted as $\type(\csobject)$, that determines the unique identifier of the smart contract that defines them, and a type name. In \chainspace object state is immutable. Objects may be in two meta-states, either \emph{active} or \emph{inactive}. Active objects are available to be operated on through smart contract procedures, while inactive ones are retained for the purposes of audit only.
\emph{Contracts} are special types of objects, that contain executable information on how other objects of types defined by the contract may be manipulated. They define a set of initial objects that are created when the contract is first created within \chainspace. A contract $\contract$ defines a \emph{namespace} within which \emph{types} (denoted as $\types(\contract)$) and a \emph{checker} $\checker$ for \emph{procedures} (denoted as $\procedures(\contract)$) are defined.
A \emph{procedure}, $\csprocedure$, defines the logic by which a number of objects, that may be \emph{inputs} or \emph{references}, are processed by some logic and \emph{local parameters} and \emph{local return values} (denoted as $\lparams$ and $\lreturns$), to generate a number of object \emph{outputs}. Notionally, input objects, denoted as a vector $\pinputs$, represent state that is invalidated by the procedure; references, denoted as $\preferences$ represent state that is only read; and outputs are objects, or $\poutputs$ are created by the procedure. Some of the local parameters or local returns may be secrets, and require confidentiality. We denote those as $\sparams$ and $\sreturns$ respectively.
We denote the execution of such a procedure as: 
\begin{equation}\label{eq:procedure}
\contract.\csprocedure(\pinputs, \preferences, \lparams, \sparams) \rightarrow \poutputs, \lreturns, \sreturns
\end{equation}
for $\pinputs, \preferences, \poutputs \in O$ and $\csprocedure \in \procedures(\contract)$. We restrict the type of all objects (inputs $\pinputs$, outputs $\poutputs$ and references $\preferences$) to have types defined by the same contract $\contract$ as the procedure $\csprocedure$ (formally: $\forall \csobject \in \pinputs \cup \poutputs \cup \preferences . \type(\csobject) \in \types(\contract)$). However, public locals (both $\lparams$ and $\lreturns$) may refer to objects that are from different contracts through their identifiers. We further require a procedure that outputs an non empty set of objects $\poutputs$, to also take as parameters a non-empty set of input objects $\pinputs$. Transactions that create no outputs are allowed to just take locals and references $\preferences$.

Associated with each smart contract $\contract$, we define a \emph{checker} denoted as $\checker$. Those checkers are pure functions (ie.\ deterministic, and have no side-effects), and return a Boolean value. A checker $\checker$ is defined by a contract, and takes as parameters a procedure $\csprocedure$, as well as inputs, outputs, references and locals.
\begin{equation}\label{eq:chainspace:checker}
\contract.\checker(\csprocedure, \pinputs, \preferences, \lparams, \poutputs, \lreturns, \dependencies) \rightarrow \{\mathsf{true}, \mathsf{false}\}
\end{equation}
Note that checkers do not take any secret local parameters ($\sparams$ or $\sreturns$). A checker for a smart contract returns $\mathsf{true}$ only if there exist some secret parameters $\sparams$ or $\sreturns$, such that an execution of the contract procedure $\csprocedure$, with the parameters passed to the checker alongside $\sparams$ or $\sreturns$, is possible as defined in \Cref{eq:procedure}.
The variable $\dependencies$ represent the context in which the procedure is called: namely information about other procedure executions. This supports composition, as we discuss in detail in the next section.
We note that procedures, unlike checkers, do not have to be pure functions, and may be randomized, keep state or have side effects. A smart contract defines explicitly the checker $\contract.\checker$, but does not have to define procedures \emph{per se}. The \chainspace system is oblivious to procedures, and relies merely on checkers. Yet, applications may use procedures to create valid transactions. The distinction between procedures and checkers---that do not take secrets---is key to implementing privacy-friendly contracts.

\emph{Transactions} represent the atomic application of one or more valid procedures to active input objects, and possibly some referenced objects, to create a number of new active output objects. The design of \chainspace is user-centric, in that a user client executes all the computations necessary to determine the outputs of one or more procedures forming a transaction, and provides enough evidence to the system to check the validity of the execution and the new objects.
Once a transaction is accepted in the system it `consumes' the input objects, that become inactive, and brings to life all new output objects that start their life by being active. References on the other hand must be active for the transaction to succeed, and remain active once a transaction has been successfully committed.
A client packages enough information about the execution of those procedures to allow \chainspace to safely \emph{serialize} its execution, and \emph{atomically} commit it only if all transactions are valid according to relevant smart contract checkers.

\subsection{System Design, Threat Model and Security Properties}
\label{sec:chainspace:threat-security}

\begin{figure}[!t]
\centering
\includegraphics[width=.8\textwidth]{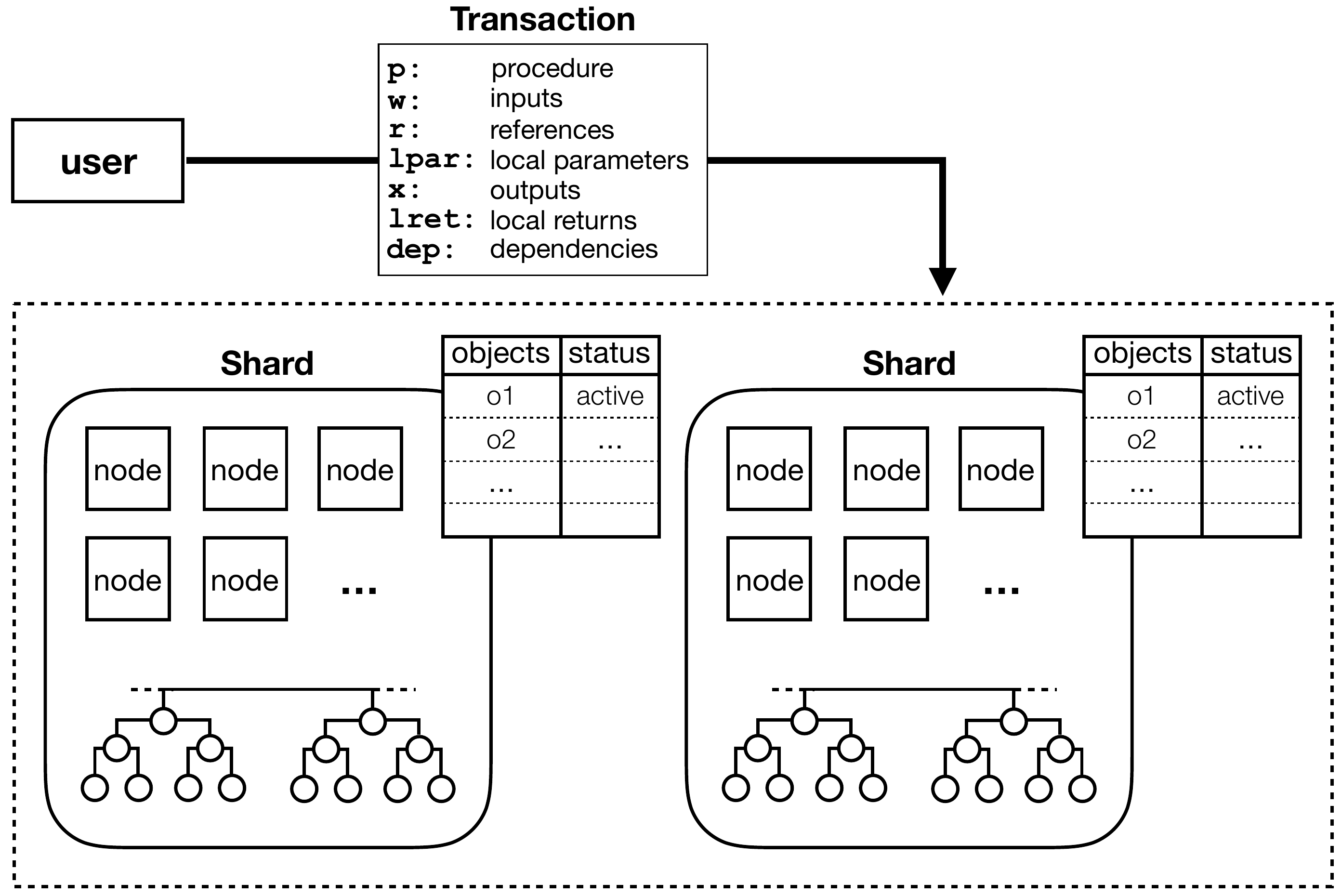}
\caption[Design overview of \chainspace.]{Design overview of \chainspace system, showing the interaction between users, transactions, objects and nodes in shards. \chainspace is made of a network of infrastructure \emph{nodes} that manages valid objects, and ensures that only valid transactions are committed.}
\label{fig:chainspace:design}
\end{figure}

We provide an overview of the system design, illustrated in \Cref{fig:chainspace:design}. \chainspace is comprised of a network of infrastructure \emph{nodes} that manages valid objects, and ensures that only valid transactions are committed. A key design goal is to achieve scalability in terms of high transaction throughput and low latency. To this end, nodes are organized into shards that manage the state of objects, keep track of their validity, and record transactions aborted or committed. Within each shard all honest nodes ensure they consistently agree whether to accept or reject a transaction: whether an object is active or inactive at any point, and whether traces from contracts they know check. Across shards, nodes must ensure that transactions are \emph{committed} if all shards are willing to commit the transaction, and rejected (or \emph{aborted}) if any shards decide to abort the transaction---due to checkers returning \textsf{false} or objects being inactive. To satisfy these requirements, \chainspace relies on a \emph{cross-shard consensus protocol}.
Consensus on committing (or aborting) transactions takes place in parallel across different shards. For transparency and auditability, nodes in each shard periodically publish a signed hash chain of \emph{checkpoints}: shards add a block (Merkle tree) of evidence including transactions processed in the current epoch, and signed promises from other nodes, to the hash chain.     
\chainspace supports security properties against two distinct types of adversaries, both polynomial time bounded:

\begin{itemize}
\item  \textbf{Honest Shards (HS).} The first adversary may create arbitrary contracts, and input arbitrary transactions into \chainspace, however they are bound to only control up to $\faulty$ faulty nodes in any shard. As a result, and to ensure the correctness and liveness properties of Byzantine consensus, each shard must have a size of at least $3\faulty + 1$ nodes.

\item \textbf{Dishonest Shards (DS).} The second adversary has, additionally to HS, managed to gain control of one or more shards, meaning that they control over $\faulty$ nodes in those shards. Thus, its correctness or liveness may not be guaranteed.
\end{itemize}

Faulty nodes in shards may behave arbitrarily, and collude to violate any of the security, safely or liveness properties of the system. They may emit incorrect or contradictory messages, as well as not respond to any or some requests. 
Given this threat model, \chainspace supports the following security properties: 

\begin{itemize}
\item \textbf{Transparency.} \chainspace ensures that anyone in possession of the identity of a valid object may authenticate the full history of transactions and objects that led to the creation of the object. No transactions may be inserted, modified or deleted from that causal chain or tree. Objects may be used to self-authenticate its full history---this holds under both the HS and DS threat models.

\item \textbf{Integrity.} Subject to the HS threat model, when one or more transactions are submitted only a set of valid non-conflicting transactions will be committed within the system. This includes resolving conflicts---in terms of multiple transactions using the same objects---ensuring the validity of the transactions, and also making sure that all new objects are registered as active. Ultimately, \chainspace transactions are accepted, and the set of active objects changes, as if executed sequentially---however, unlike other systems such as Ethereum~\cite{ethereum}, this is merely an abstraction and high levels of concurrency are supported.

\item \textbf{Encapsulation.} The smart contract checking system of \chainspace enforces strict isolation between smart contracts and their state---thus prohibiting one smart contract from directly interfering with objects from other contracts. Under both the HS and DS threat models. However, cross-contract calls are supported but mediated by well defined interfaces providing encapsulation.

\item \textbf{Non-repudiation.} In case conflicting or otherwise invalid transactions were to be accepted in honest shards (in the case of the DS threat model), then evidence exists to pinpoint the parties or shards in the system that allowed the inconsistency to occur. Thus, failures outside the HS threat model, are detectable; the guilty parties may be banned; and appropriate off-line recovery mechanisms could be deployed.
\end{itemize}

\section{The \chainspace Application Interface} \label{sec:chainspace:interface}
Smart contract developers in \chainspace register a smart contract $c$ into the distributed system managing \chainspace, by defining a  checker for the contract and some initial objects. Users may then submit transactions to operate on those objects in ways allowed by the checkers. Transactions represent the execution of one or more procedures from one or more smart contracts. It is necessary for all inputs to all procedures within the transaction to be active for a transaction to be executed and produce any output objects. 
Transactions are \emph{atomic}: either all their procedures run, and produce outputs, or none of them do. Transactions are also \emph{consistent}: in case two transactions are submitted to the system using the same active object inputs, at most one of them will eventually be executed to produce outputs. Other transactions, called \emph{conflicting}, will be aborted.

\para{Representation of transactions} A transaction within \chainspace is represented by sequence of \emph{traces} of the executions of the procedures that compose it, and their interdependencies. These are computed and packaged by end-user clients, and contain all the information  a checker needs to establish its correctness. A Transaction is a data structure such that:
\begin{align*}
\text{type}\ &\textit{Transaction}: \textit{Trace}\ \text{list}\\
\text{type}\ &\textit{Trace}: \text{Record}\ \{ \\
&\contract: \id(\csobject), \quad
\csprocedure: \text{string},\\
&\pinputs, \preferences, \poutputs: \id(\csobject)\ \text{list},\\
&\lparams, \lreturns: \text{arbitrary data}, \\
&\dependencies: \textit{Trace}\ \text{list} \}
\end{align*}
To generate a set of traces composing the transaction, a \emph{user executes on the client side all the smart contract procedures} required on the input objects, references and local parameters, and generates the output objects and local returns for every procedure---potentially also using secret parameters and returns. Thus the actual computation behind the transactions is performed by the user, and the traces forming the transaction already contain the output objects and return parameters, and sufficient information to check their validity through smart contract checkers. This design pattern is related to traditional \emph{optimistic concurrency control}.

\begin{figure*}[t]
\centering
\includegraphics[width=\textwidth]{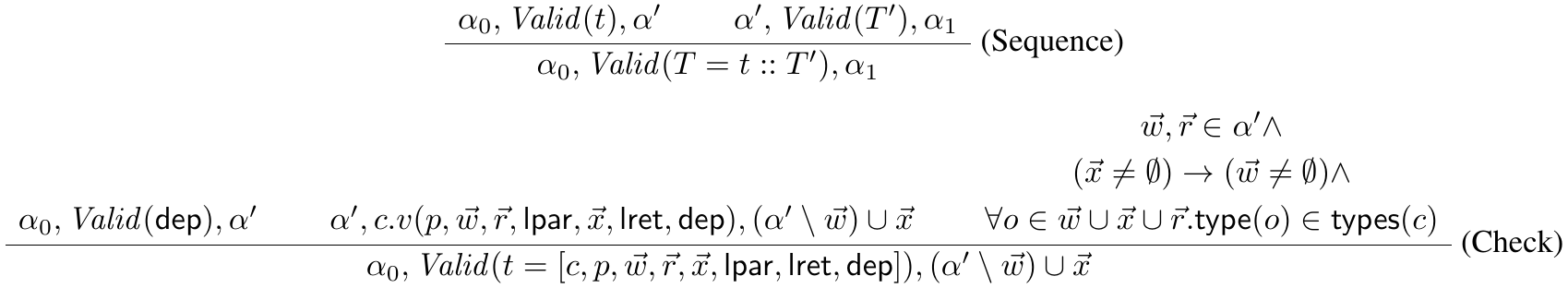}
\caption[\chainspace's sequencing and validity rules for transactions.]{The sequencing and checking validity rules for transactions.The \emph{Sequence} rule checks that each trace is valid in sequence. The \emph{Check} rule verifies that the dependencies of the trace are valid, and that the checker returns $\mathsf{true}$. It further checks that \first the inputs and references are active, \second that if the trace produces any output objects it also contains some input objects, and \third that all objects passed to the checker are of types defined by the smart contract of this checker.}
\label{fig:chainspace:rules}
\end{figure*}

Only valid transactions are eventually committed into the \chainspace system, as specified by two validity rules \emph{sequencing} and \emph{checking} presented in \Cref{fig:chainspace:rules}. Transactions are considered valid within a context of a set of active objects maintained by \chainspace, denoted with $\alpha$. Valid transactions lead to a new context of active objects (\eg $\alpha'$). We denote this through the triplet ($\alpha, \textit{Valid}(T), \alpha'$), which is true if the execution of transaction $T$ is valid within the context of active objects $\alpha$ and generates a new context of active objects $\alpha'$. The two rules are as follows:

\begin{itemize}
\item (Sequence rule). A `\textit{Trace} list' (within a `\textit{Transaction}' or list of dependencies) is valid if each of the traces are valid in sequence (see \Cref{fig:chainspace:rules} rule for sequencing). Further, the active objects set is updated in sequence before considering the validity of each trace.

\item (Check rule).  A particular `\textit{Trace}' is valid, if the sequence of its dependencies are valid, and then in the resulting active object context, the checker for it returns $\mathsf{true}$. A further three side conditions must hold: \first inputs and references must be active; \second if the trace produces any output objects it must also contain some input objects; and \third all objects passed to the checker must be of types defined by the smart contract of this checker (see \Cref{fig:chainspace:rules} rule for checking).
\end{itemize}

The ordering of active object sets in the validation rules result in a depth-first validation of all traces, which represents a depth-first execution and data flow dependency between them. It is also noteworthy that only the active set of objects needs to be tracked to determine the validity of new transactions, which is in the order of magnitude of active objects in the system. The much longer list of inactive objects, which grows to encompass the full history of every object in the system is not needed---which we leverage to enable scaling when validating transactions. It also results in a smaller amount of working memory to perform incremental audits.
A valid transaction is executed in a serialized manner, and committed or aborted atomically. If it is committed, the new set of active objects replaces the previous set; if not the set of active objects does not change. Determining whether a transaction may commit involves ensuring all the input objects are active, and all are consumed as a result of the transaction executing, as well as all new objects becoming available for processing (references however remain active). \chainspace ensures this through a distributed atomic commit protocol (\Cref{sec:chainspace:sbac}).

\para{Smart contract composition} A contract procedure may call a transaction of another smart contract, with specific parameters and rely upon returned values. This is achieved through passing the $\dependencies$ variable to a smart contract checker, a validated list of traces of all the sub-calls performed. The checker can ensure that the parameters and return values are as expected, and those dependencies are checked for validity by \chainspace.
Composition of smart contracts is a key feature of a transparent and auditable computation platform. It allows the creation of a library of smart contracts that act as utilities for other higher-level contracts: for example, a simple contract can implement a cryptographic currency, and other contracts---for e-commerce for example---can use this currency as part of their logic. 
\Cref{sec:coconut:smart_contract_library} shows how to take advantage of this feature to build a smart contract library allowing to issue and verify selective disclosure credentials.
Furthermore, we compose smart contracts, in order to build some of the functionality of \chainspace itself as a set of `system' smart contracts, including management of shards mapping to nodes, key management of shard nodes, and governance.

\chainspace also supports the atomic batch execution of multiple procedures for efficiency, that are not dependent on each other.

\para{Reads} Besides executing transactions, \chainspace clients, need to read the state of objects, if anything, to correctly form transactions. Reads, by themselves, cannot lead to inconsistent state being accepted into the system, even if they are used as inputs or references to transactions. This is a result of the system checking the validity rules before accepting a transaction, which will reject any stale state. 
Thus, any mechanism may be used to expose the state of objects to clients, including traditional relational databases, or `no-SQL' alternatives. Additionally, any indexing mechanism may be used to allow clients to retrieve objects with specific characteristics faster. Decentralized, read-only stores have been extensively studied, so we do not address the question of reads further in this work.

\para{Privacy by design} Defining smart contract logic as checkers allows \chainspace to support privacy friendly-contracts by design. In such contracts some information in objects is not in the clear, but instead either encrypted using a public key, or committed using a secure commitment scheme as~\cite{pedersen}. The transaction only contains a valid proof that the logic or invariants of the smart contract procedure were applied correctly or hold respectively, and can take the form of a zero-knowledge proof, or a Succinct Argument of Knowledge (SNARK). Then, generalizing the approach of Miers~\etal~\cite{zerocoin}, the checker runs the verifier part of the proof or SNARK that validates the invariants of the transactions, without revealing the secrets within the objects to the verifiers. 
\section{The \chainspace System Design} \label{sec:chainspace:design}
In \chainspace a network of infrastructure \emph{nodes} manages valid objects, and ensure key invariants: namely that only valid transactions are committed. We discuss the data structures nodes use collectively and locally to ensure high integrity; and the distributed protocols they employ to reach consensus on the accepted transactions.

\subsection{High-Integrity Data Structures}

\chainspace employs a number of high-integrity data structures. They enable those in possession of a valid object or its identifier
to verify all operations that lead to its creation; they are also used to support \emph{non-equivocation}---preventing \chainspace nodes from providing a split view of the state they hold without detection.

\para{Hash-DAG structure} Objects and transactions naturally form a directed acyclic graph (DAG): given an initial state of active objects a number of transactions render their inputs invalid, and create a new set of outputs as active objects. These may be represented as a directed graph between objects, transactions and new objects and so on. Each object may only be created by a single transaction trace, thus cycles between future transactions and previous objects never occur. We prove that output object identifiers resulting from valid transactions are fresh (see \Cref{th:chainspace:thm1}). Hence, the graph of objects inputs, transactions and objects outputs form a DAG, that may be indexed by their identifiers.
We leverage this DAG structure, and augment it to provide a high-integrity data structure. Our principal aim is to ensure that given an object, and its identifier, it is possible to unambiguously and unequivocally check all transactions and previous (now inactive) objects and references that contribute to the existence of the object. To achieve this we define as an identifier for all objects and transactions a cryptographic hash that directly or indirectly depends on the identifiers of all state that contributed to the creation of the object.

Specifically, we define a function $\id(\mathit{Trace})$ as the identifier of a trace contained in transaction $T$. The identifier of a trace is a cryptographic hash function over the name of contract and the procedure producing the trace; as well as serialization of the input object identifiers, the reference object identifiers, and all local state of the transaction (but not the secret state of the procedures); the identifiers of the trace's dependencies are also included. Thus all information contributing to defining the Trace is included in the identifier, except the output object identifiers.
We also define the $\id(\csobject)$ as the identifier of an object $\csobject$. We derive this identifier through the application of a cryptographic hash function, to the identifier of the trace that created the object $\csobject$, as well as a unique name assigned by the procedures creating the trace, to this output object. (Unique in the context of the outputs of this procedure call, not globally, such as a local counter.)
An object identifier $\id(\csobject)$ is a high-integrity handle that may be used to authenticate the full history that led to the existence of the object $\csobject$. Due to the collision resistance properties of secure cryptographic hash functions an adversary is not able to forge a past set of objects or transactions that leads to an object with the same identifier. Thus, given  $\id(\csobject)$ anyone can verify the authenticity of a trace that led to the existence of $\csobject$.
A very important property of object identifiers is that future transactions cannot re-create an object that has already become inactive. Thus checking object validity only requires maintaining a list of active objects, and not a list of past inactive objects.

\begin{theorem} \label{th:chainspace:thm1}
No sequence of valid transactions, by a polynomial time constrained adversary, may re-create an object with the same identifier with an object that has already been active in the system.
\end{theorem}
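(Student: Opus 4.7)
My approach combines a reduction to collision resistance of the hash function underlying $\id(\cdot)$ with an induction on the position of traces in the (serialized) transaction sequence. Assume for contradiction that a polynomial-time adversary produces a valid sequence of transactions which, starting from the genesis set of active objects, contains a trace creating a new output $o'$ with $\id(o') = \id(o^*)$ for some object $o^*$ that was previously active and is now inactive. I will show this either violates one of the validity rules of \Cref{fig:chainspace:rules} or yields a hash collision, which by assumption occurs only with negligible probability.

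First I would pick the \emph{earliest} trace $\tau'$ in the sequence that re-creates an already-observed object identifier, using the depth-first total order on traces induced by the Sequence rule together with the commit order across transactions. Let $\tau^*$ be the trace that originally created $o^*$. Since $\id(o')$ is defined as the hash of $\id(\tau')$ together with the locally unique output name, and symmetrically for $\id(o^*)$, the equality $\id(o')=\id(o^*)$ forces either a direct hash collision or $\id(\tau') = \id(\tau^*)$ with matching output names. Applying collision resistance once more at the trace level, $\id(\tau') = \id(\tau^*)$ implies that $\tau'$ and $\tau^*$ agree on contract, procedure, public local parameters, local returns, dependency identifiers, and in particular on the vector of \emph{input} object identifiers $\pinputs$ and reference identifiers $\preferences$.

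Next I would invoke the Check rule: for $\tau'$ to be part of a valid transaction, every object in its declared input set must be active at the moment $\tau'$ is applied. Since those inputs have identifiers identical to the inputs of $\tau^*$, and the original objects bearing those identifiers were consumed (and rendered inactive) when $\tau^*$ committed---and objects only transition from active to inactive, never back---validity can only be satisfied if some earlier trace in the sequence had already re-created those input objects under their old identifiers. But this would contradict the choice of $\tau'$ as the earliest identifier-reusing trace. Therefore no such $\tau'$ can exist, except with negligible probability over the adversary's coins.

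The main obstacle I anticipate is handling two subtleties cleanly. First, the base case for the induction: initial objects declared at contract registration have no creating trace in the usual sense, so they need a separate argument reducing their uniqueness to collision resistance of the contract identifier and of the unique naming inside the contract's initial-object list. Second, I must ensure that ``earliest re-creating trace'' is well-defined across the DAG of dependencies embedded within a single transaction; the Sequence rule's left-to-right, depth-first validation of $\dependencies$ gives a canonical linearization which I would use to make the minimality argument rigorous, and which also justifies treating all prior dependency traces as having been already committed against the same notional active-object context.
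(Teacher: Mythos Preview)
Your proposal is correct and follows essentially the same approach as the paper: both reduce identifier equality to trace-level equality via collision resistance, then argue that the matching input set was already consumed by the earlier trace, so the later trace fails the Check rule unless those inputs were themselves re-created earlier, which is ruled out by induction (the paper) or minimality of the chosen counterexample (you). The paper makes one point a little more explicit than you do, namely that this argument only bites because the Check rule's side condition forces any output-producing trace to have at least one input; you rely on inputs being present without calling out that side condition, so you may want to state it when you write the proof in full.
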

\begin{proof}
\small
We argue this property by induction on the serialized application of valid transactions, and for each transaction by structural induction on the two validity rules. Assuming a history of $n-1$ transactions for which this property holds we consider transaction $n$. Within transaction $n$ we sequence all traces and their dependencies, and follow the data flow of the creation of new objects by the `check' rule. For two objects to have the same $\id(\csobject)$ there need to be two invocations of the check rule with the same contract, procedure, inputs and references. However, this leads to a contradiction: once the first trace is checked and considered valid the active input objects are removed from the active set, and the second invocation becomes invalid. Thus, as long as object creation procedures have at least one input (which is ensured by the side condition) the theorem holds, unless an adversary can produce a hash collision. The inductive base case involves assuming that no initial objects start with the same identifier---which we can ensure axiomatically.
\end{proof}

We call this directed acyclic graph with identifiers derived using cryptographic functions a Hash-DAG, and we make extensive use of the identifiers of objects and their properties in \chainspace.

\para{Node hash-chains} Each node in \chainspace, that is entrusted with preserving integrity, associates with its shard a hash chain. Periodically, peers within a shard consistently agree to seal a \emph{checkpoint}, as a block of transactions into their hash chains. They each form a Merkle tree containing all transactions that have been accepted or rejected in sequence by the shard since the last checkpoint was sealed. Then, they extend their hash chain by hashing the root of this Merkle tree and a block sequence number, with the head hash of the chain so far, to create the new head of the hash chain. Each peer signs the new head of their chain, and shares it with all other peers in the shard, and anyone who requests it. For strong auditability additional information, besides committed or aborted transactions, has to be included in the Merkle tree: node should log any promise to either commit or abort a transaction from any other peer in any shard (see \Cref{sec:chainspace:sbac}). 
All honest nodes within a shard independently create the same chain for a checkpoint, and a signature on it---as long as the consensus protocols within the shards are correct. We say that a checkpoint represents the decision of a shard, for a specific sequence number, if at least $\faulty + 1$ signatures of shard nodes sign it. On the basis of these hash chains we define a \emph{partial audit} and a \emph{full audit} of the \chainspace system.

In a \emph{partial audit} a client is provided evidence that a transaction has been either committed or aborted by a shard. A client performing the partial audit may request from any node of the shard evidence for a transaction \transaction. The shard peer will present a block representing the decision of the shard, with $\faulty + 1$ signatures, and a proof of inclusion of a commit or abort for the transaction, or a signed statement the transaction is unknown. A partial audit provides evidence to a client of the fate of their transaction, and may be used to detect past of future violations of integrity. A partial audit is an efficient operation since the evidence has size $O(s + \log N)$ in $N$ the number of transactions in the checkpoint and $s$ the size of the shard---thanks to the efficiency of proving inclusion in a Merkle tree, and checking signatures.

A \emph{full audit} involves replaying all transactions processed by the shard, and ensuring that \first all transactions were valid according to the checkers the shard executed; \second the objects input or references of all committed transactions were all active (see rules in \Cref{fig:chainspace:rules}); and \third the evidence received from other shards supports committing or aborting the transactions. To do so an auditor downloads the full hash-chain representing the decisions of the shard from the beginning of time, and re-executes all the transactions in sequence. This is possible, since---besides their secret signing keys---peers in shards have no secrets, and their execution is deterministic once the sequence of transactions is defined. Thus, an auditor can re-execute all transactions in sequence, and check that their decision to commit or abort them is consistent with the decision of the shard. Doing this, requires any inter-shard communication (namely the promises from other shards to commit or abort transactions) to be logged in the hash-chain, and used by the auditor to guide the re-execution of the transactions. A full audit needs to re-execute all transactions and requires evidence of size $O(N)$ in the number $N$ of transactions. This is costly, but may be done incrementally as new blocks of shard decisions are created. 

\subsection{Distributed Architecture \& Consensus}
A network of \emph{nodes} manages the state of \chainspace objects, keeps track of their validity, and record transactions that are seen or that are accepted as being committed. 

\chainspace uses sharding strategies to ensure scalability: a public function $\shard(\csobject)$ maps each object $\csobject$ to a set of nodes, we call a \emph{shard}. These nodes collectively are entrusted to manage the state of the object, keep track of its validity, record transactions that involve the object, and eventually commit at most one transaction consuming the object as input and rendering it inactive. However, nodes must only record such a transaction as committed if they have certainty that all other nodes have, or will in the future, record the same transaction as consuming the object. We call this distributed algorithm the \emph{consensus} algorithm within the shard.
For a transaction $\transaction$ we define a set of \emph{concerned nodes}, $\Phi(\transaction)$ for a transaction structure $\transaction$. We first denote as $\zeta$ the set of all objects identifiers that are input into or referenced by any trace contained in $\transaction$. We also denote as $\xi$ the set of all objects that are output by any trace in $\transaction$. The function $\Phi(\transaction)$ represents the set of nodes that are managing objects that should exist, and be active, in the system for $\transaction$ to succeed. More mathematically, $\Phi(\transaction) = \bigcup \{ \phi(\csobject_i) | \csobject_i \in \zeta \setminus \xi \} $, where $\zeta \setminus \xi$ represents the set of objects input but not output by the transaction itself (its free variables). The set of concerned peers thus includes all shard nodes managing objects that already exist in \chainspace that the transaction uses as references or inputs. 

An important property of this set of nodes holds, that ensures that all smart contracts involved in a transaction will be mapped to some concerned nodes that manage state from this contract:

\begin{theorem}
\label{thm:chainspace:partition}
If a contract $\contract$ appears in any trace within a transaction $\transaction$, then the concerned nodes set $\Phi(\transaction)$ will contain nodes in a shard managing an object $\csobject$ of a type from contract $\contract$. I.e.\ $\exists o. \type(o) \in \types(\contract) \land \shard(o) \cap \Phi(\transaction) \neq \emptyset$.
\end{theorem}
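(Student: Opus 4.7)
The plan is to exhibit an object $o$ of a type in $\types(\contract)$ that lies in $\zeta \setminus \xi$ (the free variables of $\transaction$), since by definition of $\Phi(\transaction)$ its managing shard is then included. Let $t$ be a trace in $\transaction$ whose procedure is defined by $\contract$. By the type restriction recalled after Equation~\ref{eq:procedure}, every input, reference and output object appearing in $t$ has a type in $\types(\contract)$. So it suffices to locate \emph{any} input or reference of a contract-$\contract$ trace inside $\transaction$ that is not itself produced as an output of some trace in $\transaction$.

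I would split on the structure of $t$. First, if $t$ contains a reference $r$, then by construction references are never outputs of any trace (they remain active), so $r \in \zeta \setminus \xi$ and we are done. Second, if $t$ has no references, then by the side condition of the \emph{Check} rule in Figure~\ref{fig:chainspace:rules} (outputs require inputs, and traces that create no outputs may use only locals/references), $t$ must have at least one input $x$. If $x$ is not an output of any other trace in $\transaction$, then $x \in \zeta \setminus \xi$ and again we are done. Otherwise, $x$ is the output of some other trace $t'$ in $\transaction$, and because outputs carry a type in the contract of the producing procedure, $t'$ is also a trace of contract $\contract$. We may then repeat the argument with $t'$ in place of $t$.

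To finish, I would argue that this backward walk through traces of $\contract$ must terminate at a trace exhibiting a free input or reference. This follows from the Hash-DAG property formalized in Theorem~\ref{th:chainspace:thm1}: object identifiers are cryptographic hashes of the trace that produced them (and transitively of that trace's inputs/references), so the induced graph on traces of $\transaction$ is acyclic, and the transaction is finite. Hence iterating the step above strictly decreases the depth in this DAG and must reach a trace $t^*$ of contract $\contract$ at least one of whose inputs (or references) is not an output of any trace in $\transaction$, giving the required $o$.

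The main obstacle is making the backward step rigorous: one must rule out the degenerate possibility that every trace of $\contract$ in $\transaction$ consumes only inputs that are outputs of other traces of $\contract$ in $\transaction$, with no references and no free inputs. I would dispatch this by explicitly invoking acyclicity of the Hash-DAG and finiteness of $\transaction$ to derive a contradiction, and by appealing once more to the type-restriction clause and the outputs-require-inputs side condition to ensure that the backward walk never leaks out of $\contract$ or stalls at a trace with no inputs and no references.
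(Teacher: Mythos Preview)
Your overall approach coincides with the paper's: start from a trace of contract $\contract$, and whenever its inputs/references are themselves outputs produced inside $\transaction$, follow one back to the producing trace (which is also of contract $\contract$ by the type restriction), and recurse; termination is by acyclicity of the Hash-DAG and finiteness of~$\transaction$.

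The case split you introduce for references, however, contains an error. You claim that ``references are never outputs of any trace (they remain active), so $r \in \zeta \setminus \xi$''. Nothing in the model supports this. That a reference remains active \emph{after} the transaction says nothing about where it came from \emph{within} the transaction: an output of trace $t_1$ is simply a newly active object, and a later trace $t_2$ in the same transaction may perfectly well use it as a reference. In that situation $r \in \xi$ and your first branch fails. The paper's proof avoids this by not distinguishing inputs from references at all: it just checks whether the input-or-reference lies in $\xi$ and recurses if so. Your argument is easily fixed by dropping the special treatment of references and folding them into the same recursive case as inputs---at which point it is exactly the paper's proof.
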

\begin{proof}
\small
Consider any trace $t$ within $\transaction$, from contract $c$. If the inputs or references to this trace are not in $\xi$---the set of objects that were created within $\transaction$---then their shards will be included within $\Phi(\transaction)$. Since those are of types within $\contract$ the theorem holds. If on the other hand the inputs or references are in $\xi$, it means that there exists another trace within $\transaction$ from the same contract $\contract$ that generated those outputs. We then recursively apply the case above to this trace from the same $\contract$. The process will terminate with some objects of types in $\contract$ and shard managing them within the concerned nodes set---and this is guarantee to terminate due to the Hash-DAG structure of the transactions (that may have no loops).
\end{proof}

Security Theorem~\ref{thm:chainspace:partition} ensures that the set of concerned nodes, includes nodes that manage objects from all contracts represented in a transaction. \chainspace leverages this to distribute the process of rule validation across peers in two ways:

\begin{itemize}
\item For any existing object $\csobject$ in the system, used as a reference or input within a transaction $\transaction$, only the shard nodes managing it, namely in $\shard(\csobject)$, need to check that it is active (as part of the `check' rule in \Cref{fig:chainspace:rules}).

\item For any trace $t$ from contract $\contract$ within a transaction $\transaction$, only shards of concerned nodes that manage objects of types within $\contract$ need to run the checker of that contract to validate the trace (again as part of the `check' rule), and that all input, output and reference objects are of types within $\contract$.
\end{itemize}

However, all shards containing concerned nodes for $\transaction$ need to ensure that all others have performed the necessary checks before committing the transaction, and creating new objects.
There are many options for ensuring that concerned nodes in each shards do not reach an inconsistent state for the accepted transactions, such as Nakamoto consensus through proof-of-work~\cite{bitcoin}, two-phase commit protocols, and classical consensus protocols like Paxos~\cite{lamport2001paxos}, PBFT~\cite{pbft}, or xPaxos~\cite{liu2016xft}. However, these approaches lack in performance, scalability, and/or security. 
\chainspace requires a scalable and decentralized mechanism to achieve consensus across shards.

\subsection{Cross-Shard Consensus Protocol} \label{sec:chainspace:sbac}
\chainspace entrusts each object to a shard of nodes, that keep track of whether it exists, it is active or inactive. Within each shard all honest nodes must ensure they consistently agree whether to accept or reject a transaction: whether an object is active or inactive at any point, and whether traces from contracts they know check. Across shards, nodes must ensure that transactions are \emph{committed} if all shards are willing to commit the transaction, and rejected (or \emph{aborted}) if any shards decide to abort the transaction---due to it being invalid or objects involved being inactive. 
\chainspace requires a sharded consensus protocol for transaction processing in the \emph{Byzantine} and \emph{asynchronous} setting. Such protocols are delicate to build, and are the subject of \Cref{byzcuit}. For the rest of this chapter, we assume \chainspace implements a cross-shard consensus protocol providing liveness, consistency and validity (see \Cref{sec:chainspace:theorems}), and treats it as a black box.

Each transaction $\transaction$ involves a fixed number of \emph{concerned nodes} $\Phi(\transaction)$ within \chainspace, corresponding to the shards managing its inputs and references. If two transactions $\transaction_0$ and $\transaction_1$ have disjoint sets of concerned nodes ($\Phi(T_0) \cap \Phi(T_1) = \emptyset$) they cannot conflict, and are executed in parallel or in any arbitrary order. Even if this set is not empty, but instead the input objects of those two transactions are disjoint, this property holds.
If however, two transactions have common input objects, only one of them is accepted by all nodes. This is achieved through the cross-shard consensus protocol. It is local, in that it concerns only nodes managing the conflicting transactions, and does not require a global consensus.
From the point of view of scalability, this protocol allows \chainspace's capacity grows linearly as more shards are added, subject to transactions having on average a constant, or sub-linear, number of inputs and references. Furthermore, those inputs must be managed by different nodes within the system to ensure that load of accepting transactions is distributed across them. 
How such distribution is achieved depends on the threat model of the application. Some may opt for a totally peer to peer model, where an ad-hoc random quorum of peers manages each object. Other application may opt for a small set of well-known authorities managing peers, with each object being managed by at least one representative peer from each authority.

\section{Security and Correctness}
\label{sec:chainspace:theorems}
\chainspace relies on a black box cross-shard consensus protocol (see \Cref{byzcuit}), on which rest the security of \chainspace, namely \emph{liveness}, \emph{consistency}, and \emph{validity}. Those properties hold under the `honest shards' threat model (see \Cref{sec:chainspace:threat-security}). Liveness ensures that transactions make progress, and no locks are held indefinitely on objects, preventing other transactions from making progress. Consistency ensures that the execution of valid transactions could be serialized, and thus is correct. Validity ensures a transaction may only be committed if it is valid according to the smart contract checkers matching the traces of the procedures it executes. 

\subsection{Auditability}
In the previous sections we show that if each shard contains at most $\faulty$ faulty nodes (honest shard model), the cross-shard consensus protocol guarantees consistency and validity. In this section we argue that if this assumption is violated, \ie one or more shards contain more than $\faulty$ byzantine nodes each, then honest shards can detect faulty shards. Namely, enough auditing information is maintained by honest nodes in \chainspace to detect inconsistencies and attribute them to specific shards (or nodes within them).
The rules for transaction validity are summarized in \Cref{fig:chainspace:rules}. Those rules are checked in a distributed manner: each shard keeps and checks the active or inactive state of objects assigned to it; and also only the contract checkers corresponding to the type of those objects. An honest shard emits a \preacceptt for a transaction \transaction only if those checks pass, and \preabortt otherwise or if there is a lock on a relevant object. A dishonest shard may emit any of those messages arbitrarily without checking the validity rules. By definition, an invalid transaction is one that does not pass one or more of the checks defined in \Cref{fig:chainspace:rules} at a shard, for which the shard has erroneously emitted a \preacceptt or \preabortt message. 

\begin{theorem}[Auditability]
A malicious shard (with more than \faulty\ faulty nodes) that attempts to introduce an invalid transaction or object into the state of one or more honest shards, can be detected by an auditor performing a full audit of the \chainspace system.
\end{theorem}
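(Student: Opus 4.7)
The plan is to show that any invalid commit attributable to a malicious shard leaves a non-repudiable signed trail in the evidence an auditor downloads during a full audit, so the inconsistency is both detectable and attributable. First I would unpack what it means for a transaction \transaction to be invalid with respect to an honest shard: by the rules in \Cref{fig:chainspace:rules}, the only failure modes are that (a) some checker $\contract.\checker$ returns $\mathsf{false}$ on the inputs/references/locals visible to the shard, (b) some input or reference object claimed to be active is in fact inactive according to the shard's local state, or (c) a side condition (fresh output identifiers, type-consistency, non-empty inputs when outputs exist) fails. For \transaction to nonetheless have been committed by an honest shard, the cross-shard protocol must have caused every concerned shard to produce a \preacceptt signed by at least $\faulty+1$ of its nodes; otherwise the honest shard would have aborted.

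Next I would invoke the hash-chain construction from the previous section: every honest shard periodically seals a checkpoint whose Merkle tree includes not only its own committed/aborted transactions but also every \preacceptt / \preabortt it received from other shards. Since these inclusions are signed by $\faulty+1$ shard members, an auditor can, for any transaction \transaction that ended up committed at an honest shard, extract the signed \preaccepttss~issued by the allegedly malicious shard. This step is essentially bookkeeping once the hash-chain integrity property from the partial-audit argument is in hand.

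Then I would argue the detection step. A full audit replays all transactions in the deterministic order recorded in the chain, maintaining the active-object set $\alpha$ exactly as the shard should have. When the auditor reaches \transaction, it re-runs the relevant checker $\contract.\checker$ (which is pure) and checks the activity of the inputs/references as maintained in its reconstructed $\alpha$; by Security Theorem~\ref{thm:chainspace:partition}, the objects the malicious shard is responsible for fall within types it was obliged to check, so the auditor's re-evaluation is authoritative over the same predicate the shard was supposed to evaluate. If the shard's signed \preacceptt disagrees with the auditor's re-computed $\{\mathsf{true}, \mathsf{false}\}$ outcome, we have a signed statement that cannot be explained by any honest execution, pinpointing the offending shard (and, via the $\faulty+1$ signatures, a concrete set of nodes at least one of whom is dishonest).

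The main obstacle, and the part I would be most careful with, is ruling out a benign disagreement that is not itself evidence of misbehavior: in particular, an honest shard could emit \preabortt simply because of a lock held on an object, and an auditor replaying linearly might not have the same lock state. I would handle this by observing that the attack scenario concerns a \preacceptt (not \preabortt) issued for a transaction that is actually invalid under the rules of \Cref{fig:chainspace:rules}; locks only justify aborting, never accepting, so a signed \preacceptt on an invalid \transaction admits no honest explanation. Together with the Hash-DAG uniqueness property (Security Theorem~\ref{th:chainspace:thm1}), which prevents a malicious shard from forging a different prior history that would make \transaction look valid, this closes the argument and yields attributability.
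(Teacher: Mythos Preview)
Your proposal is correct and follows essentially the same strategy as the paper: both arguments rest on the honest shard's hash-chain recording the signed \preacceptt messages it received from other shards, and on a deterministic replay by the auditor revealing that the malicious shard signed off on a transaction that in fact fails the validity rules, with non-repudiation of signatures yielding attribution. The paper frames this more abstractly as checking pairwise ``compatibility'' between two shards' hash-chains, whereas you spell out the replay mechanism, enumerate the concrete invalidity modes, and handle the lock edge case explicitly; these are refinements of presentation rather than a different route.
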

\begin{proof}
\small
We consider two hash-chains from two distinct shards. We define the pair of them as being valid if \first they are each valid under full audit, meaning that a re-execution of all their transactions under the messages received yields the same decisions to commit or abort all transactions; and \second if all \preacceptt and \preabortt messages in one chain are compatible with all messages seen in the other chain. In this context `compatible' means that all \preacceptt and \preabortt statements received in one shard from the other represent the `correct' decision to commit or abort the transaction \transaction in the other shard. An example of incompatible message would result in observing a \preacceptt message being emitted from the first shard to the second, when in fact the first shard should have aborted the transaction, due to the checker showing it is invalid or an input being inactive.

Due to the property of digital signatures (unforgeability and non-repudiation), if two hash-chains are found to be `incompatible', one belonging to an honest shard and one belonging to a dishonest shard, it is possible for everyone to determine which shard is the dishonest one. To do so it suffices to isolate all statements that are signed by each shard (or a peer in the shard)---all of which should be self-consistent. It is then possible to show that within those statements there is an inconsistency---unambiguously implicating one of the two shards in the cheating. Thus, given two hash-chains it is possible to either establish their consistency, under a full audit, or determine which belongs to a malicious shard. 
\end{proof}

Note that the mechanism underlying tracing dishonest shards is an instance of the age-old double-entry book keeping\footnote{The first reported use is 1340AD~\cite{lauwers1994five}.}: shards keep records of their operations as a non-repudiable signed hash-chain of checkpoints---with a view to prove the correctness of their operations. They also provide non-repudiable statements about their decisions in the form of signed \preacceptt and \preabortt statements to other shards. The two forms of evidence must be both correct and consistent---otherwise their misbehavior is detected.
\section{System and Applications Smart Contracts}
\label{sec:chainspace:applications}
We present a number of key system and application smart contracts; the contracts for privacy-friendly smart-metering and privacy-friendly polls illustrate and validate support for high-integrity and high-privacy applications.

\subsection{System Contracts} \label{sec:chainspace:system-contracts}
The operation of a \chainspace distributed ledger itself requires the maintenance of a number of high-integrity high-availability data structures. Instead of employing an ad-hoc mechanism, \chainspace employs a number of \emph{system smart contracts} to implement those.  Effectively, an instantiation of \chainspace is the combination of nodes running a consensus protocol, as well as a set of system smart contracts providing flexible policies about managing shards, smart contract creation, auditing and accounting. This section provides an overview of system smart contracts.

\para{Shard management} The discussion of \chainspace so far, has assumed a function $\shard(\csobject)$ mapping an object $\csobject$ to nodes forming a shard. However, how those shards are constituted has been abstracted. A smart contract \textsf{ManageShards} is responsible for mapping nodes to shards. \textsf{ManageShards} initializes a singleton object of type \textsf{MS.Token} and provides three procedures: \textsf{MS.create} takes as input a singleton object, and a list of node descriptors (names, network addresses and public verification keys), and creates a new singleton object and a \textsf{MS.Shard} object representing a new shard; \textsf{MS.update} takes an existing shard object, a new list of nodes, and $2\faulty+1$ signatures from nodes in the shard, and creates a new shard object representing the updated shard. Finally, the \textsf{MS.object} procedure takes a shard object, and a non-repudiable record of malpractice from one of the nodes in the shard, and creates a new shard object omitting the malicious shard node---after validating the misbehavior. Note that \chainspace is `open' in the sense that any nodes may form a shard; and anyone may object to a malicious node and exclude it from a shard.

\para{Smart-contract management} \chainspace is also `open' in the sense that anyone may create a new smart contract, and this process is implemented using the \textsf{ManageContracts} smart contract. \textsf{ManageContracts} implements three types: \textsf{MC.Token}, \textsf{MC.Mapping} and \textsf{MC.Contract}. It also implements at least one procedure, \textsf{MC.create} that takes a binary representing a checker for the contract, an initialization procedure name that creates initial objects for the contract, and the singleton token object. It then creates a number of outputs: one object of type \textsf{MC.Token} for use to create further contracts; an object of type \textsf{MC.Contract} representing the contract, and containing the checker code, and a mapping object \textsf{MC.mapping} encoding the mapping between objects of the contract and shards within the system. Furthermore, the procedure \textsf{MC.create} calls the initialization function of the contract, with the contract itself as reference, and the singleton token, and creates the initial objects for the contract.
Note that this simple implementation for \textsf{ManageContracts} does not allow for updating contracts. The semantics of such an update are delicate, particularly in relation to governance and backwards compatibility with existing objects. The definitions of more complex, but correct, contracts for managing contracts are left for future work.

\para{Payments for processing transactions} \chainspace is an open system, and requires protection against abuse resulting from overuse. To achieve this we implement a method for tracking value through a contract called \textsf{CSCoin}.
The \textsf{CSCoin} contract creates a fixed initial supply of coins---a set of objects of type The \textsf{CSCoin.Account} that may only be accessed by a user producing a signature verified by a public key denoted in the object. A \textsf{CSCoin.transfer} procedure allows a user to input a number of accounts, and transfer value between them, by producing the appropriate signature from incoming accounts. It produces a new version of each account object with updated balances. This contract has been implemented in Python with approximately 200 lines of code.
The \textsf{CSCoin} contract is designed to be composed with other procedures, to enable payments for processing transactions. The transfer procedure outputs a number of local returns with information about the value flows, that may be used in calling contracts to perform actions conditionally on those flows. Shards may advertise that they will only consider actions valid if some value of \textsf{CSCoin} is transferred to their constituent nodes. This may apply to system contracts and application contracts.

\subsection{Application Level Smart Contracts}
This section describes some examples of privacy-friendly smart contracts and showcases how smart contract creators may use \chainspace to implement advanced privacy mechanisms.

\para{Sensor---`Hello World' contract}
To illustrate \chainspace's applications, we implement a simple 150 lines contract aggregating data from different sensors, called \textsf{Sensor}. This contract defines the types \textsf{Sensor.Token} and \textsf{Sensor.Data}, and two procedures  \textsf{Sensor.createSensor} and \textsf{Sensor.addData}. The procedure \textsf{Sensor.createSensor} takes as input the singleton token (that is created upon contract creation), and outputs a fresh \textsf{Sensor.Data} object with initially no date. The \textsf{Sensor.addData} procedure is applied on a \textsf{Sensor.Data} object with some new sensor's data as parameter; it then creates a new object \textsf{Sensor.Data} appending the list of new data to the previous ones.

\para{Smart-Meter private billing} 
A body of work~\cite{DBLP:conf/isse/RialD12} examines how to achieve privacy-friendly time of use billing for smart meter deployments---a use-case requiring both high-integrity, and also privacy. Thus it showcases how smart contract creators may use \chainspace to implement advanced privacy mechanisms.

We implement a basic private smart-meter billing mechanism~\cite{DBLP:conf/pet/JawurekJK11,DBLP:conf/isse/RialD12} using the contract \textsf{SMet}: it implements three types \textsf{SMet.Token}, \textsf{SMet.Meter} and \textsf{SMet.Bill}; and three procedures, \textsf{SMet.createMeter}, \textsf{SMet.AddReading}, and \textsf{SMet.computeBill}. The procedure \textsf{SMet.createMeter} takes as input the singleton token and a public key and signature as local parameters, and it outputs a \textsf{SMet.Meter} object tied to this meter public key if the signature matches. \textsf{SMet.Meter} objects represent a collection of readings and some meta-data about the meter. Subsequently, the meter may invoke \textsf{SMet.addReading} on a \textsf{SMet.Meter} with a set of cryptographic commitments readings and a period identifier as local parameters, and a valid signature on them. A signature is also included and checked to ensure authenticity from the meter. A new object \textsf{SMet.Meter} is output appending the list of new readings to the previous ones. Finally, a procedure \textsf{SMet.computeBill} is invoked with a \textsf{SMet.Meter} and local parameters a period identifier, a set of tariffs for each reading in the period, and a zero-knowledge proof of correctness of the bill computation. The procedure outputs a \textsf{SMet.Bill} object, representing the final bill in plain text and the meter and period information.
This proof of correctness is provided to the checker---rather than the secret readings---which proves that the readings matching the available commitments and the tariffs provided yield the bill object. The role of the checker, which checks public data, in both those cases is very different from the role of the procedure that is passed secrets not available to the checkers to protect privacy. 
This contract is implemented in about 200 lines of Python and is evaluated in section \Cref{sec:chainspace:evaluation}.

\para{A Platform for decision making}
An additional example of \chainspace's privacy-friendly application is a smart voting system. We implement the contract \textsf{SVote} with three types, \textsf{SVote.Token}, \textsf{SVote.Vote} and \textsf{SVote.Tally}; and three procedures. 
\textsf{SVote.createElection}, consumes a singleton token and takes as local parameters the options, a list of all voter's public key, the tally's public key, and a signature on them from the tally. It outputs a fresh \textsf{SVote.Vote} object, representing the initial stage of the election (all candidates having a score of zero) along with a zero-knowledge proof asserting the correctness of the initial stage.
\textsf{SVote.addVote}, is called on a \textsf{SVote.Vote} object and takes as local parameters a new vote to add, homomorphically encrypted and signed by the voter. In addition, the voter provides a zero-knowledge proof certifying that her vote is a binary value and that she voted for exactly one option. The voter's public key is then removed from the list of participants to ensure that she cannot vote more than once. If all proofs are verified by the checker and the voter's public key appears in the list, a new \textsf{SVote.Vote} object is created as the homomorphic addition of the previous votes with the new one. Note that the checker does not need to know the clear value of the votes to assert their correctness since it only has to verify the associated signatures and zero-knowledge proofs.
Finally, the procedure \textsf{SVote.tally} is called to threshold decrypt the aggregated votes and provide a \textsf{SVote.Tally} object representing the final election's result in plain text, along with a proof of correct decryption from the tally. The \textsf{SVote} contract's size is approximately 400 lines of Python and is also evaluated in \Cref{sec:chainspace:evaluation}.

\section{Smart Contract Evaluation}
\label{sec:chainspace:evaluation}
We build a Python contracts environment allowing developers to write, deploy and test smart contracts. These are deployed on each node by running the Python script for the contract, which starts a local web service for the contract's checker. The contract's checker is then called though the web service. The environment provides a framework to allow developers to write smart contracts with little worry about the underlying implementation, and provides an auto-generated checker for simple contracts. We are releasing the code as an open-source project\footnote{\url{https://github.com/chainspace/chainspace}}.
We evaluate the cost and performance of some smart contracts described in \Cref{sec:chainspace:system-contracts}. We compute the mean and standard deviation of the execution of each procedure (denoted as [g]) and checker (denoted as [c]) in the contracts. Each figure is the result of 10,000 runs measured on a dual-core Apple MacBook Pro 4.1, 2.7GHz Intel Core i7. The last column indicates the transaction's size resulting from executing the procedure. All cryptographic operations as digital signatures and zero-knowledge proofs are implemented using the Python library petlib~\cite{petlib}, wrapping OpenSSL.

\begin{table}[t]
\centering
\begin{tabular}{lcccc}
\toprule
\multicolumn{3}{l}{ \textsf{Sensor}---Contract size: $\sim$150 lines}\\
\textbf{Operation} && \textbf{Mean (ms)} & \textbf{Std. (ms)} & \textbf{Size (B)}\\

\midrule
\textsf{createSensor} & [g] & 0.139 & $\pm$ 0.039 & 416\\ 
& [c] & 0.021 & $\pm$ 0.008 & - \\ 
\textsf{addData} & [g] & 0.105 & $\pm$ 0.085 & 417\\  
& [c] & 0.036 & $\pm$ 0.015 & - \\  
\bottomrule
\end{tabular}
\caption[\chainspace \textsf{Sensor} contract.]{\chainspace \textsf{Sensor} contract. The notation [g] denotes the execution the procedure and [c] denotes the execution of the checker. Each measure is the result of 10,000 runs.}
\label{tab:chainspace:sensor}
\end{table}
The \textsf{Sensor} contract (\Cref{tab:chainspace:sensor}) is very cheap since it does not contain any cryptographic operation; the checker only has to verify the format of the transaction. The resulting micro benchmarks for \textsf{createSensor} is therefore a good indication of just the overhead of the \chainspace system---to execute a procedure or a checker.
The user needs to generate a signing key pair to create an account in the \textsf{CSCoin} contract, which takes about 5~ms. However, verifying the account creation only requires to check the transaction's format, and it is therefore very fast. Transferring money is a little more expensive due to the need to sign the amount transferred and the beneficiary, and verifying the signature in the checker.

\begin{table}[t]
\centering
\begin{tabular}{lcccc}
\toprule
\multicolumn{3}{l}{ \textsf{CSCoin}---Contract size: $\sim$200 lines}\\
\textbf{Operation} && \textbf{Mean (ms)} & \textbf{Std. (ms)} & \textbf{Size (B)}\\
\midrule
\textsf{createAccount} & [g] & 4.845 & $\pm$ 0.683 & 512\\ 
& [c] & 0.022 & $\pm$ 0.005 & -\\ 
\textsf{authTransfer} & [g] & 4.986 & $\pm$ 0.684 & 1114\\  
&[c] & 5.750 & $\pm$ 0.474 & -\\  
\bottomrule
\end{tabular}
\caption[\chainspace \textsf{CSCoin} contract]{\chainspace \textsf{CSCoin} contract. The notation [g] denotes the execution the procedure and [c] denotes the execution of the checker. Each measure is the result of 10,000 runs.}
\label{tab:chainspace:cscoin}
\end{table}
\begin{table}[t]
\centering
\begin{tabular}{lcccc}
\toprule
\multicolumn{3}{l}{ \textsf{SMet}---Contract size: $\sim$200 lines}\\
\textbf{Operation} && \textbf{Mean (ms)} & \textbf{Std. (ms)} & \textbf{Size (B)}\\
\midrule
\textsf{createMeter} & [g] & 4.786 & $\pm$ 0.480 & $\sim$600\\ 
& [c] & 0.060 & $\pm$ 0.003 & -\\ 
\textsf{addReading} & [g] & 5.286 & $\pm$ 0.506 & $\sim$1100\\  
& [c] & 5.965 & $\pm$ 0.697 & -\\  
\textsf{computeBill} & [g] & 5.043 & $\pm$ 0.513 & $\sim$1100\\  
& [c] & 5.870 & $\pm$ 0.603 & -\\  
\bottomrule
\end{tabular}
\caption[\chainspace \textsf{SMet} contract.]{\chainspace \textsf{SMet} contract. The notation [g] denotes the execution the procedure and [c] denotes the execution of the checker. Each measure is the result of 10,000 runs.}
\label{tab:chainspace:smart-meter}
\end{table}

Similarly to \textsf{CSCoin} (\Cref{tab:chainspace:cscoin}), creating a meter (\Cref{tab:chainspace:smart-meter}) requires generating a cryptographic key pair which takes about 5~ms, while verifying the meter's creation is faster and only requires checking the transaction's format. Adding new readings takes about 5~ms, as the user needs to create a signed commitment of the readings which requires elliptic curve operations and an ECDSA signature. Computing the bill takes slightly longer (5.8~ms), and involves homomorphic additions, and verifying the bill involves checking a zero-knowledge proof of the billing calculation.

\begin{table}[t]
\centering
\begin{tabular}{lcccc}
\toprule
\multicolumn{3}{l}{ \textsf{SVote}---Contract size: $\sim$400 lines}\\
\textbf{Operation} && \textbf{Mean (ms)} & \textbf{Std. (ms)} & \textbf{Size (B)}\\
\midrule
\textsf{createElection} & [g] & 11.733 & $\pm$ 1.028 & $\sim$1227\\ 
& [c] & 11.327 & $\pm$ 0.782 & -\\ 
\textsf{addVote} & [g] & 14.086 & $\pm$ 1.043 & $\sim$2758\\  
& [c] & 28.178 & $\pm$ 1.433 & -\\ 
\textsf{tally} & [g] & 253.286 & $\pm$ 7.793 & $\sim$1264\\  
& [c] & 11.589 & $\pm$ 0.937 & -\\ 
\bottomrule
\end{tabular}
\caption[\chainspace \textsf{SVote} contract.]{\chainspace \textsf{SVote} contract. The notation [g] denotes the execution the procedure and [c] denotes the execution of the checker. Each measure is the result of 10,000 runs.}
\label{tab:chainspace:vote}
\end{table}

The \textsf{SVote} contract (\Cref{tab:chainspace:vote}) is more expensive than the others since it extensively uses zero-knowledge proofs and more advanced cryptography. For simplicity, this smart contract is tested with three voters and two options. First, creating a new election event requires building a signed homomorphic encryption of the initial value for each option, and a zero-knowledge proof asserting that the encrypted value is zero; this takes roughly 11~ms to generate the transaction and to run the checker. Next, each time a vote is added, the user proves two zero-knowledge statements---one asserting that she votes for exactly one option and one proving that her vote is a binary value---and computes an ECDSA signature on her vote, which takes about 11~ms and generates a transaction of about 2.7~kB. Verifying the signature and the two zero-knowledge proofs are slower and takes about 30~ms. Finally, tallying is the slowest operation since it requires to threshold decrypt the homomorphic encryption of the votes' sum.

\section{Limitations}
\label{sec:chainspace:limits}
\chainspace has a number of limitations, that are beyond the scope of this work to  tackle, and deferred to future work.
The integrity properties of \chainspace rely on all shards managing objects being honest, namely containing at most $\faulty$  fault nodes each. We allow any set of nodes to create a shard. However, this means that the function $\phi(o)$ mapping objects to shards must avoid dishonest shards. Our isolation properties ensure that a dishonest shard can at worse affect state from contracts that have objects mapped to it. Thus, in \chainspace, we opt to allow the contract creator to designate which shards manage objects from their contract. This embodies specific trust assumptions where users have to trust the contract creator both for the code (which is auditable) and also for the choice of shards to involve in transactions---which is also public.
In case one or more shards are malicious, we provide an auditing mechanism for honest nodes in honest shards to detect 
the inconsistency and to trace the malicious shard. Through the Hash-DAG structure it is also possible to fully audit the histories of two objects, and to ensure that the validity rules hold jointly---in particular the double-use rules. However, it is not clear how to automatically recover from detecting such an inconsistency. Options include: forcing a fork into one or many consistent worlds; applying a rule to collectively agree the canonical version; patching past transactions to recover consistency; or agree on a minimal common consistent state. Which of those options is viable or best is left as future work.
Checkers involved in validating transactions can be costly. For this reason we allow peers in a shard to accept transactions subject to a \textsf{SCCoin} payment to the peers. However, this `flat' fee is not dependent on the cost or complexity of running the checker which might be more or less expensive. \ethereum~\cite{ethereum} instead charges `gas' according to the cost of executing the contract procedure---at the cost of implementing their own virtual machine and language.
%
%
Finally, we present in \Cref{sec:chainspace:system-contracts} `payments for processing transactions' to address DoS through transaction fees. However, this mechanism only offers a partial solutions, since operations with higher costs to the system would need to cost more, and all infrastructure nodes would have to be proper incentives to continue participating in the protocols. A through study of such mechanisms is left for future work.
\section{Comparison with Related Work} \label{sec:chainspace:related}
We present some recent systems that provide a transparent platform based on blockchains for smart contracts, and compare it with \chainspace.  

\para{\omniledger}
The most comparable system to \chainspace is \omniledger~\cite{omniledger}---that was developed concurrently---and provides a scalable distributed ledger for a cryptocurrency, and cannot support generic smart contracts (in contrast with \chainspace which is a smart contract platform). \omniledger assigns nodes (selected using a Sybil-attack resistant mechanism) into shards among which state, representing coins, is split. The node-to-shard assignment is done every epoch using a bias-resistant decentralized randomness protocol~\cite{randherd} to prevent an adversary from compromising individual shards. Similarly to \chainspace, a block-DAG (Directed Acyclic Graph) structure is maintained in each shard rather than a single blockchain, effectively creating multiple blockchains in which consensus of transactions can take place in parallel. Nodes within shards reach consensus through the PBFT protocol~\cite{pbft} with \byzcoin~\cite{byzcoin}'s modifications that enable $O(n)$ messaging complexity. In contrast, \chainspace relies on a black box consensus protocols that can be implemented with any PBFT variant without breaking any security assumptions.


\para{\ethereum} 
\ethereum~\cite{ethereum} provides a decentralized virtual machine, called EVM, for executing smart-contracts. Its main scalability limitation results from every node having to process every transaction, as Bitcoin. On the other hand, \chainspace's sharded architecture allows for a ledger linearly scalable since only the nodes concerned by the transaction---that is, managing the transaction's inputs or references---have to process it. Ethereum plans to improve scalability through sharding techniques~\cite{buterin2015notes}, but their work is still theoretical and is not implemented or measured. One major difference with \chainspace is that \ethereum's smart contract are executed by the node, contrarily to the user providing the outputs of each transaction. \chainspace also supports smart contracts written in any kind of language as long as checkers are pure functions, and there are no limitations for the code creating transactions. Some industrial systems~\cite{corda, rootstock} implement similar functionalities as \chainspace, but with little empirical performance evaluation.
In terms of security policy, \chainspace system implements a platform that enforces high-integrity by embodying a variant of the Clark-Wilson~\cite{clark1987comparison}, proposed before smart contracts were heard of. 

\para{Tezos} 
Tezos~\cite{tezos} has a strong type checking system and implements its cryptocurrency as an `account smart contract. However, Tezos's smart contracts are statefull and updating a balance requires to rewrite the contract's storage space. This introduces many complications to prevent replay attacks and transaction's validity's check. \chainspace avoid this situation by producing new version of account objects as specified in the Security Theorem 1 (see \Cref{sec:chainspace:theorems}). Moreover, Tezos implements an \ethereum-like gas system to pay nodes to execute the smart contract.

\section{Chapter Summary}
\label{sec:chainspace:conclusions}
\chainspace is an open, distributed ledger platform for high-integrity and transparent processing of transactions. \chainspace offers extensibility though privacy-friendly smart contracts. We presented an instantiation of \chainspace by parameterizing it with a number of `system' and `application' contracts, along with their evaluation. However, unlike existing smart-contract based systems such as \ethereum~\cite{ethereum}, it offers high scalability through sharding across nodes, while offering high auditability. As such it offers a competitive alternative to both centralized and permissioned systems, as well as fully peer-to-peer, but unscalable systems like \ethereum.
\chapter{Replay Attacks and Defenses Against Cross-shard Consensus}
\label{byzcuit}


%
\begin{figure}[t]
\centering
\includegraphics[width=.7\textwidth]{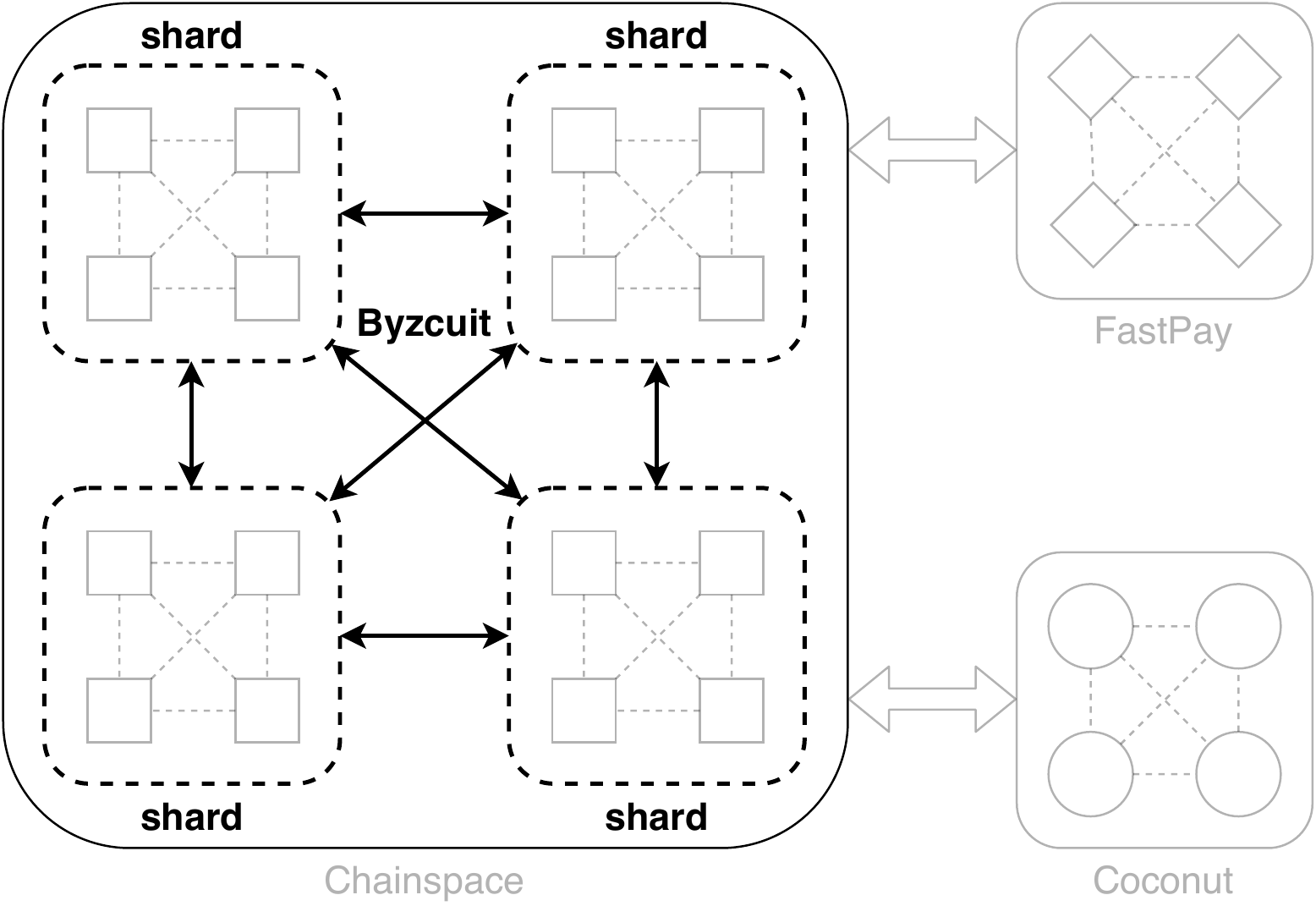}
\caption[Global overview: \byzcuit.]{Example of instantiation of \byzcuit with four shards. The solid arrows connecting each shards are the \byzcuit cross-shard consensus protocol allowing to coordinate shards.}
\label{fig:byzcuit:global-overview}
\end{figure}

\Cref{chainspace} describes \chainspace assuming it implements a black box cross-shard consensus protocol; this chapter opens that black box.
We first present a family of replay attacks against existing cross-shard consensus protocols, illustrating that designing such protocols is a delicate task. We then describe the issues that lead to these vulnerabilities, and present \byzcuit, a novel cross-shard consensus protocol that is immune to those attacks. \Cref{fig:byzcuit:global-overview} highlights \byzcuit, the cross-shard consensus protocol run at the heart of \chainspace.

As explained in the previous chapters, the key idea is to create groups (or shards) of nodes that handle only a subset of all transactions and system state, relying on classical Byzantine Fault Tolerance (BFT) protocols for reaching \emph{intra-shard consensus}. These systems achieve optimal performance and scalability because: \first non-conflicting transactions can be processed in parallel by multiple shards; and \second the system can scale up by adding new shards. 
However, this separation of transaction handling across shards is not perfectly `clean'---a transaction might rely on data managed by multiple shards, requiring an additional step of \emph{cross-shard consensus} across the concerned shards (see arrows on \Cref{fig:byzcuit:global-overview}).
An atomic commit protocol (such as the two-phase commit protocol~\cite{gray1978notes}) typically runs across all the concerned shards to ensure the transaction is accepted by all or none of those shards.

\para{Vulnerabilities in previous systems}
This chapter presents the first replay attacks on cross-shard consensus in sharded protocols. 
An \attacker can launch these attacks with minimal effort, without subverting any nodes, and assuming a weakly synchronous network (and in some cases, without relying on any network assumption)---even when the byzantine safety assumptions are satisfied.
These attacks compromise key system properties of safety and liveness, effectively enabling the \attacker to double-spend coins (or any other objects managed by the blockchain) and create coins out of thin air.
%
We concretely sketch the replay attacks in the context of two representative systems: \sbac~\cite{chainspace} 
and \atomix~\cite{omniledger} .
\Cref{sec:byzcuit:prerecoring} describes how an \attacker can actively stage the attack by eliciting from the system the messages to replay (in contrast to passively observing the network traffic, and waiting to detect and record the target messages).
We also discuss the feasibility of these attacks and their real-world impact.
The replay attacks we present are generic and apply to other systems that are based on similar models, like \rapidchain~\cite{rapidchain}. 
Based on our detailed analysis of replay attacks, we develop a defense strategy (\Cref{sec:byzcuit:defenses}).


\para{\byzcuit}
Drawing insights from our analysis of performance trade-offs and replay attack vulnerabilities in existing cross-shard consensus protocols, we present a hybrid system, \byzcuit (\Cref{sec:byzcuit:byzcuit}). It combines useful features from previous designs to achieve high performance and scalability, and leverages our proposed defense to achieve resilience against replay attacks. \byzcuit employs a Transaction Manager to coordinate cross-shard communication, reducing its cost to $O(n)$ communication, amongst $n$ shards, in the absence of faults.
We implement a prototype of \byzcuit 
, and release it as an open-source project\footnote{\url{https://github.com/sheharbano/byzcuit}}. We evaluate \byzcuit on a real cloud-based testbed under varying transaction loads and show that \byzcuit has a client-perceived latency of less than a second, even for a system load of 1,000 transactions per second (tps). \byzcuit's transaction throughput scales linearly with the number of shards by 250--300 tps for each shard added, handling up to 1,550 tps with 10 shards. We quantify the overhead of our replay defenses and find that as expected those reduce the throughput by 20--250 tps.

\subsection*{Contributions}
This chapter makes the following key contributions: 
\begin{itemize}
\item It develops the first replay attacks against cross-shard consensus protocols, and illustrate their impact on important academic and implemented designs
    
\item It presents defenses against those replay attacks, and discusses the issues that lead to these vulnerabilities. 

\item It designs a hybrid, new system \byzcuit with improved performance trade-offs, and which integrates our proposed defense to achieve resilience against the replay attacks; 

\item It implements a prototype of \byzcuit and evaluate its performance and scalability on a real distributed set of nodes and under varying transaction loads, and illustrate how it is superior to previous proposals.
\end{itemize}

\subsection*{Outline}
\Cref{sec:byzcuit:attack-overview} presents an overview of the replay attacks; \Cref{sec:byzcuit:shard-led-consensus} describes replay attacks on \shardled cross-shard consensus protocols; and \Cref{sec:byzcuit:client-led-consensus} describe replay attacks on \clientled cross-shard consensus protocols. \Cref{sec:byzcuit:defenses} discusses the issues that lead to the replay attacks describes how to fix them; \cref{sec:byzcuit:byzcuit} presents \byzcuit, a system achieving resilience against the replay attacks; \Cref{sec:byzcuit:implementation} presents an evaluation of \byzcuit; and \Cref{sec:byzcuit:conclusion} concludes the chapter.

\section{Attack Overview} \label{sec:byzcuit:attack-overview}
Sections~\ref{sec:byzcuit:shard-led-consensus}~and~\ref{sec:byzcuit:client-led-consensus} discuss replay attacks on both \shardled and \clientled cross-shard consensus protocols, respectively. We provide a high-level description of these attacks and the threat model, and describe the notation we use.

\para{Replay attacks on cross-shard consensus}
The \attacker records a target shard's responses to the atomic commit protocol, and replays them during another instance of the protocol. We present two families of replay attacks: 
\first attacks against the first phase (\emph{voting}), and \second attacks against the second phase (\emph{commit}) of the atomic commit protocol.
To attack the first phase (\emph{voting}) of the atomic commit protocol, the \attacker replaces messages generated by the target shard by replaying \prerecorded messages. In practice, the \attacker does not \emph{replace} those messages---it achieves a similar result by making its replayed messages arrive at the coordinator faster (racing the target shard's original message), exploiting the fact that the coordinator makes progress based on the first message it receives. Replaying messages in this fashion enables the \attacker to compromise the system safety (by creating inconsistent state on the shards) and/or liveness (by causing valid transactions to be rejected). 
To attack the second phase (\emph{commit}) of the atomic commit protocol, the \attacker simply replays prerecorded messages to target shards, and compromises consistency. The \attacker can replay those messages at any time of its choice, and does not rely on any racing condition as in the previous case.

\para{Threat model} 
The \attacker can successfully launch the described attacks without colluding with any shard nodes, and under the BFT honest majority safety assumption for nodes within shards (\ie, the attacks are effective even if \emph{all} nodes are honest). We assume an \attacker that can observe and record messages generated by shards; this can be achieved by \first monitoring the network, or \second reading the blockchain (which is more practical).
The \attacker can be an external observer that passively collects the target messages at the level of the network, or it can act as a client and actively interact with the system to elicit the target messages.
The attacks against the first phase of the atomic commit protocol (Sections~\ref{sec:byzcuit:chainspace-attack-1} and \ref{sec:byzcuit:omniledger-attack-1}) assume a weakly synchronous network  in which an \attacker may delay messages and race target shards by replaying pre-recorded messages. 
The attacks against the second phase of the atomic commit protocol (Section~\ref{sec:byzcuit:chainspace-attack-2}~and~\ref{sec:byzcuit:omniledger-attack-2}) do not make any such assumptions on the underlying network.  

\para{Attacks Implementation}
We implemented a demo of the replay attacks against \sbac (attacks against \atomix can be similarly implemented) in Java. 
The demo shows, in the context of a simple payment application that supports account creation and coin transfer, how the replay attacks described in this chapter can be used to create coins out of thin air. 
We show that the attacks do not rely on any strict timing assumptions---the same entity could control the accounts of both payer and payee, as well as the client, and generate coins out of thin air.

\para{Notation}
Operations on the blockchain are specified as \emph{transactions}. A transaction defines some transformation on the blockchain state, and has input and output \emph{objects} (such as UTXO entries). An object is some data managed by the blockchain, such as a bank account, a specific coin, or a hotel room. 
For example, $T(x_1,x_2)\rightarrow (y_1,y_2,y_3)$ represents a transaction with two inputs, $x_1$ managed by shard 1 and $x_2$ managed by shard 2; and three outputs, $y_1$ managed by shard 1, $y_2$ managed by shard 2, and $y_3$ managed by shard 3.
We call the shards that manage the input objects \emph{input shards}, and the shards that manage the output objects \emph{output shards}. It is possible for a shard to be both the input and output shard.
Objects can be in two states: \emph{active} (on unspent) objects are available for being processed by a transaction; and \emph{inactive} (or spent) objects cannot be processed by any transaction.
Additionally, some systems also associate \emph{locked} state with objects that are currently being processed by a transaction to protect against manipulation by other concurrent transactions involving those objects. 
The attacks we describe in this chapter generalize to transactions with $k$ inputs and $k'$ outputs managed by an arbitrary number of shards.
\section{\Shardled Cross-Shard Consensus Protocol} \label{sec:byzcuit:shard-led-consensus}
In \shardled cross-shard consensus protocols, the shards collectively take on the role of the coordinator in the atomic commit protocol. We describe replay attacks on \shardled cross-shard consensus protocols.  To make the discussion concrete, we illustrate these attacks in the context of \sbac~\cite{chainspace}, though we note that these attacks can be generalized to other similar systems. We discuss how the \attacker can record shard messages to replay in future attacks (\Cref{sec:byzcuit:chainspace-message-recording}). In Sections~\ref{sec:byzcuit:chainspace-attack-1}~and~\ref{sec:byzcuit:chainspace-attack-2}, we describe replay attacks on the first and second phase of the cross-shard consensus protocol, and discuss the real-world impact of these attacks (\Cref{sec:byzcuit:chainspace-discussion}).  

\subsection{\sbac Overview} \label{sec:byzcuit:sbac-summary}
\Cref{chainspace} presents \chainspace by treating the cross-shard consensus protocol as a black box.
A preliminary version of \chainspace~\cite{chainspace} relied on a \shardled cross-shard consensus protocol called \sbac, which was insecure; since then, \chainspace has been updated to use the protocol described in \Cref{sec:byzcuit:byzcuit}.

In \sbac, the client submits a transaction to the input shards.
Each shard internally runs a BFT protocol to tentatively decide whether to accept or abort the transaction locally, and broadcasts its local decision (\preacceptt or \preabortt) to other relevant shards.
\Cref{fig:byzcuit:sbac-original-fsm} shows the state machine representing the life cycle of objects. A shard generates \preabortt if the transaction fails local checks (\eg, if any of the input objects are \inactiveObj or \locked).
If a shard generates \preacceptt, it changes the state of the input objects to \locked.
This is the first step of \sbac, and is equivalent to the voting phase in the two-phase atomic commit protocol (\Cref{sec:literature-review:xshard}).

Each shard collects responses from other relevant shards, and commits the transaction if all shards respond with \preacceptt, or aborts the transaction otherwise.
This is the second step of \sbac, and is equivalent to the commit phase in the two-phase atomic commit protocol (\Cref{sec:literature-review:xshard}).
The shards communicate this decision to the client as well as the output shards by sending them the \acceptt or \abortt messages.
If the shard's decision is \acceptt, it changes the input object state to \inactiveObj.
If the shard's decision is \abortt, it changes the input object state to \activeObj (effectively unlocking it).
Upon receiving \acceptt, the client concludes that the transaction was committed, and the output shards create the output objects (with the state \activeObj) of the transaction.

\begin{figure}[t]
\centering
\includegraphics[width=\textwidth]{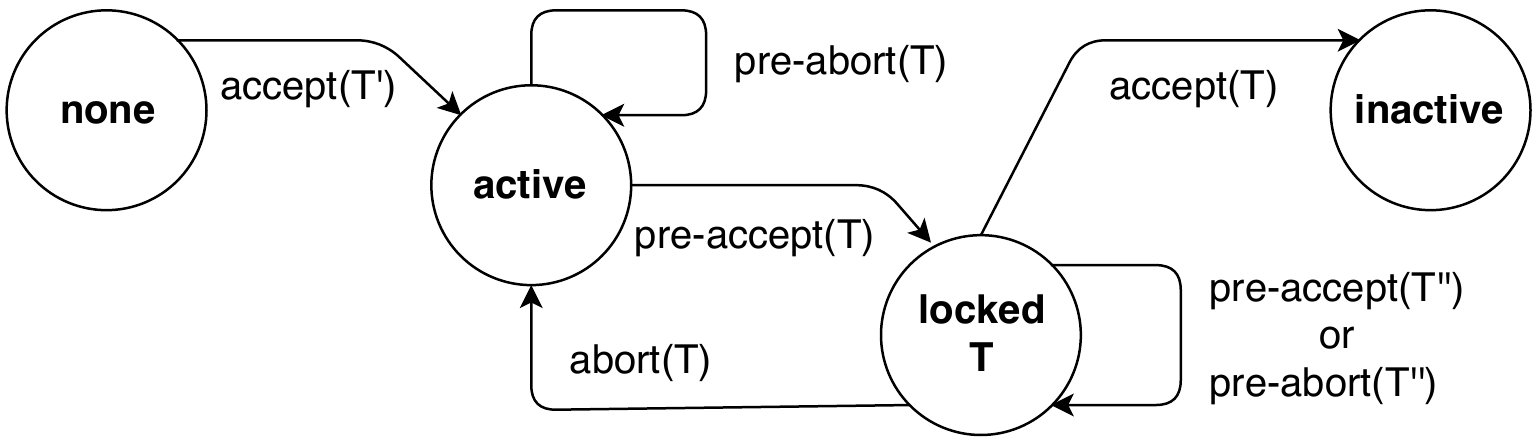}
\caption[State machine representing the life cycle of objects handled by \sbac.]{\footnotesize State machine representing the life cycle of objects handled by \sbac. An object becomes \activeObj as a result of a previous successful transaction. The object state changes to \locked if a shard locally emits \preacceptt in the first phase of the cross-shard consensus protocol for a transaction $T$. A locked object cannot be processed by other transactions $T{''}$. If the second phase of the protocol results in \acceptt, the object becomes \inactiveObj; alternatively, if the result is \abortt the object becomes \activeObj again and is available for being processed by other transactions.}
\label{fig:byzcuit:sbac-original-fsm}
\end{figure}
\begin{figure}[t]
\centering
\includegraphics[width=\textwidth]{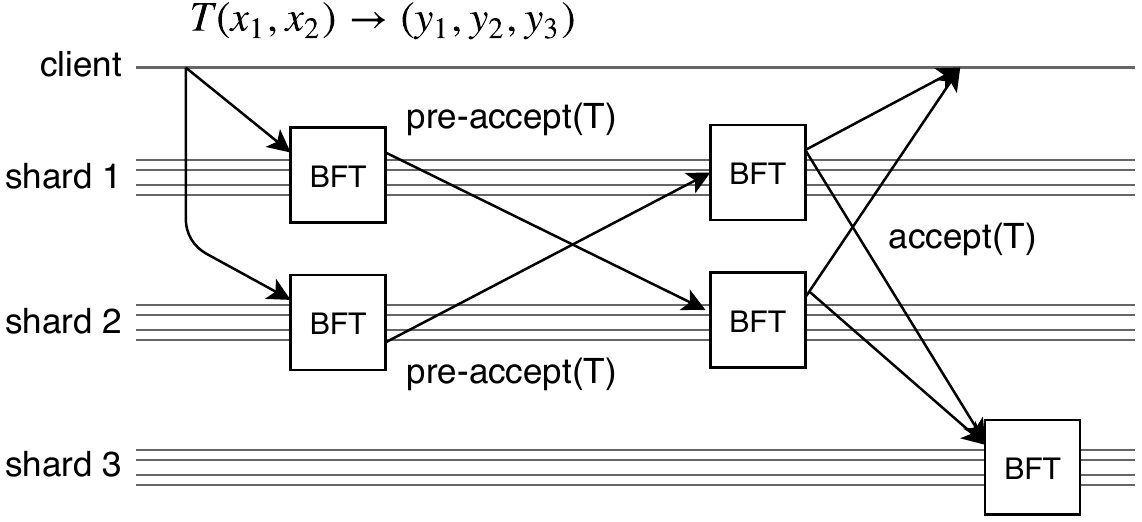}
\caption[An example execution of \sbac]{\footnotesize An example execution of \sbac for a valid transaction $T(x_1,x_2)\rightarrow (y_1,y_2,y_3)$ with two inputs ($x_1$ and $x_2$, both are active) and three outputs ($y_1,y_2,y_3$), where the final decision is \acceptt. All cross-shard arrows represent a multicast of all nodes in one shard to all nodes in another.}
\label{fig:chainspace:sbac}
\end{figure}
%

\Cref{fig:chainspace:sbac} shows an example execution of \sbac for a valid transaction $T(x_1,x_2)\rightarrow (y_1,y_2,y_3)$ with two inputs ($x_1$ and $x_2$, both are active) and three outputs ($y_1,y_2,y_3$), where the final decision is \acceptt. 
The client submits $T$ to shard 1 and shard 2. 
Upon receiving $T$, both shard 1 and shard 2 confirm that the transaction is to commit, and emit \preacceptt at the end of the first phase of \sbac. 
Each shard receives \preacceptt from the other shard, and emits \acceptt at the end of the second phase of \sbac. 
As a result, the input objects $x_1$ and $x_2$ become inactive, and the output shards respectively create objects $y_1$, $y_2$, and $y_3$.

\subsection{Message Recording} \label{sec:byzcuit:chainspace-message-recording}
Prior to the replay attacks, the \attacker records responses generated by shards. The \attacker can record shard responses in the first phase of \sbac (\ie, \preacceptt or \preabortt), enabling the family of attacks described in \Cref{sec:byzcuit:chainspace-attack-1}. The \attacker can also record shard responses in the second phase of \sbac (\ie, \acceptt or \abortt), enabling the family of attacks described in \Cref{sec:byzcuit:chainspace-attack-2}. 
In the general case, the attacker passively collects the messages either by sniffing the network on protocol executions, or by downloading the blockchain and selecting the messages to replay\footnote{Since those messages need to be recorded on chain for verification, just using transport layer encryption between nodes is not effective.}. \Cref{sec:byzcuit:prerecoring-chainspace} shows how the \attacker can act as client to actively elicit the messages necessary for the attacks, to record and later replay---this empowers the \attacker to actively orchestrate the attacks.

\subsection{Attacks on the First Phase of \sbac} \label{sec:byzcuit:chainspace-attack-1}

\begin{table*}[t]
\centering
\resizebox{\textwidth}{!}{\begin{tabular}{c c C  c c c}
\toprule
\multicolumn{3}{c}{\small Phase 1 of \sbac} & \multicolumn{3}{c}{\small Phase 2 of \sbac}\\
\noalign{\smallskip}
&\Centerstack{\textbf{Shard 1} \\ (potential victim)} & \Centerstack{\textbf{Shard 2} \\ (potential victim)} & \Centerstack{\textbf{Shard 1} \\ (potential victim)} & \Centerstack{\textbf{Shard 2} \\ (potential victim)} & \Centerstack{\textbf{Shard 3} \\ (potential victim)}\\
\noalign{\smallskip}
\toprule

\rowcolor{verylightgray}
1 & \Centerstack{\preacceptt \\ lock $x_1$} & \Centerstack{\preacceptt \\ lock $x_2$} &
\Centerstack{\acceptt \\ create $y_1$; inactivate $x_1$} & \Centerstack{\acceptt \\ create $y_2$; inactivate $x_2$} & \Centerstack{- \\ create $y_3$} \\
\midrule
2 & $\rhd$\preabortt & & \Centerstack{\acceptt \\ create $y_1$; inactivate $x_1$} & \Centerstack{\abortt \\ unlock $x_2$} & \Centerstack{- \\ create $y_3$} \\
\midrule
3 & & $\rhd$\preabortt & \Centerstack{\abortt \\ unlock $x_1$} & \Centerstack{\acceptt \\ create $y_2$; inactivate $x_2$} & \Centerstack{- \\ create $y_3$} \\
\midrule
4 & $\rhd$\preabortt & $\rhd$\preabortt & \Centerstack{\abortt \\ unlock $x_1$} & \Centerstack{\abortt \\ unlock $x_2$} & - \\
\noalign{\smallskip}
\toprule

\rowcolor{verylightgray}
5 & \Centerstack{\preabortt \\ -} & \Centerstack{\preacceptt \\ lock $x_2$} & \Centerstack{\abortt \\ -} & \Centerstack{\abortt \\ unlock $x_2$} & - \\
\midrule
6 & $\rhd$\preacceptt & & \Centerstack{\abortt \\ -} & \Centerstack{\acceptt \\ create $y_2$; inactivate $x_2$} &  \Centerstack{- \\ create $y_3$} \\
\noalign{\smallskip}
\toprule

\rowcolor{verylightgray}
7 & \Centerstack{\preacceptt \\ lock $x_1$} & \Centerstack{\preabortt \\ -} & \Centerstack{\abortt \\ unlock $x_1$} & \Centerstack{\abortt \\ -} & - \\
\midrule
8 & & $\rhd$ \preacceptt& \Centerstack{\acceptt \\ create $y_1$; inactivate $x_1$} & \Centerstack{\abortt \\ -} & \Centerstack{- \\ create $y_3$} \\
\noalign{\smallskip}
\toprule

\rowcolor{verylightgray}
9 & \Centerstack{\preabortt \\ -} & \Centerstack{\preabortt \\ -} & \Centerstack{\abortt \\ -} & \Centerstack{\abortt \\ -} & - \\
\bottomrule
\end{tabular}}
\caption[List of replay attacks against the first phase of \sbac.]{\footnotesize List of replay attacks against the first phase of \sbac for all possible executions of the transaction $T(x_1,x_2)\rightarrow (y_1,y_2,y_3)$ as described in \Cref{sec:byzcuit:attack-overview}. 
The highlighted rows indicate correct executions of \sbac (\ie, without the \attacker), and the other rows indicate incorrect executions due to the replay attacks.
In multirows, the top sub-rows show the protocol messages emitted by shards, and the bottom sub-rows indicate local shard actions as a result of emitting those messages.
For example, (\mycolumn{3}, \myrow{2}) means that shard 1 emits \acceptt (top sub-row), and creates a new object $y_1$ and inactivates $x_1$ (bottom sub-row).  
The first two columns indicate the messages emitted by each shard at the end of the first phase of \sbac.
The \attacker races shards at the end of the first phase of \sbac by replaying \prerecorded messages, marked with the symbol $\rhd$ in the first two columns of \Cref{tab:byzcuit:sbac-attack}.
For example $\rhd$\preabortt at (\mycolumn{1}, \myrow{2}) means that the \attacker sends to other relevant shards (in this case shard 2) a \prerecorded \preabortt\xspace message impersonating shard 1 that races the original \preacceptt (\mycolumn{1}, \myrow{1}) emitted by shard 1. 
The last three columns indicate the messages emitted at the end of the second phase of \sbac.}
\label{tab:byzcuit:sbac-attack}
\end{table*}

\begin{figure}[t]
\centering
\includegraphics[width=\textwidth]{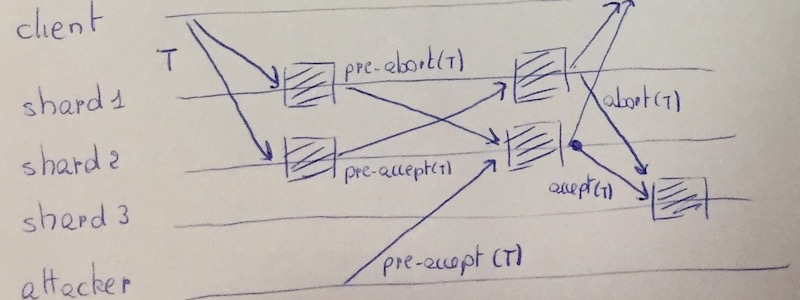}
\caption[An example of replay attack against \sbac.]{\footnotesize 
Illustration of the replay attack depicted in row \textsf{6} of \Cref{tab:byzcuit:sbac-attack}. The \attacker replays to shard 2 a \prerecorded \preacceptt message (shown as a bold line) from shard 1, which precludes shard 1's \preabortt message (shown as a dotted line). 
}
\label{fig:byzcuit:sbac-attack-example-2}
\end{figure}

We present replay attacks on the first phase of \sbac by taking the example of a transaction $T(x_1,x_2)\rightarrow (y_1,y_2,y_3)$ as described in \Cref{sec:byzcuit:attack-overview}. These attacks easily generalize to transactions with $k$ inputs and $k'$ outputs managed by an arbitrary number of shards. The replay attacks work in two steps; \first the \attacker records \preacceptt or \preabortt messages (as described in \Cref{sec:byzcuit:chainspace-message-recording} and \Cref{sec:byzcuit:prerecoring-chainspace}); and \second then replays those messages.
\Cref{tab:byzcuit:sbac-attack} shows the replay attacks that the \attacker can launch, for all possible combinations of messages emitted by shard 1 and shard 2 in the first phase of \sbac. The caption includes details about how to interpret this table. All attacks exploit the parallel composition of multiple \sbac instances, and insufficient binding of messages to its \sbac instance.
We describe \myrow{6} of \Cref{tab:byzcuit:sbac-attack}, to help readers interpret rest of the table on their own.
In the correct execution (\myrow{5}), shard 1 and shard 2 emit \preabortt (because $x_1$ is not active) and \preacceptt in the first phase, respectively. In the second phase, both shards emit \abortt and the protocol terminates.   
\Cref{fig:byzcuit:sbac-attack-example-2} illustrates the replay attack corresponding to row \textsf{6} of \Cref{tab:byzcuit:sbac-attack}.
The attacker races shard 1 by sending to shard 2 the \prerecorded \preacceptt message from shard 1. As a result, shard 2 emits \acceptt, inactivates object $x_2$ and creates object $y_2$. This leads to inconsistent state across the shards. In a correct execution: \first if $T$ is accepted all its inputs ($x_1$ and $x_2$) should become inactive, and all the outputs ($y_1$, $y_2$, $y_3$) should be created; and \second if $T$ is aborted, all its inputs ($x_1$ and $x_2$) should become active again, and none of the outputs ($y_1$, $y_2$, $y_3$) should be created. However, here we have an incorrect termination of \sbac: at the end of the protocol $x_1$ could be active and $x_2$ is inactive; $y_1$ is not created, $y_2$ and $y_3$ are created.

\Cref{tab:byzcuit:sbac-attack} shows that through careful selection of the messages to replay from different \sbac instances, the attacks can be effective against any shard.
All the attacks (except \myrow{4}) compromise consistency; the \attacker can trick the input shards to inactivate arbitrary objects, and trick the output shards into creating new objects in violation of the protocol. 
The attack depicted in \myrow{4} only affects availability.

\subsection{Attacks on the Second Phase of \sbac} \label{sec:byzcuit:chainspace-attack-2}

\begin{table*}[t]
\centering
\resizebox{\textwidth}{!}{\begin{tabular}{c C C C}
\toprule
\multicolumn{4}{c}{\small Phase 2 of \sbac}\\
& \textbf{Shard 1} & \textbf{Shard 2} & \shortstack{\textbf{Shard 3} \\ (potential victim)}\\
\midrule

\rowcolor{verylightgray}
1 & \Centerstack{\acceptt \\ create $y_1$; inactivate $x_1$} & \Centerstack{\acceptt \\ create $y_2$; inactivate $x_2$} & \Centerstack{- \\ create $y_3$} \\
2 & $\rhd$\acceptt & & create $y_3$ \\
3 & & $\rhd$\acceptt & create $y_3$ \\
4 & $\rhd$\acceptt & $\rhd$\acceptt & create $y_3$ \\
\midrule
\rowcolor{verylightgray}
5 & \Centerstack{\abortt \\ (unlock $x_1$)} & \Centerstack{\abortt \\ (unlock $x_2$)} & \Centerstack{- \\ -} \\
6 & $\rhd$\acceptt & & create $y_3$ \\
7 & & $\rhd$\acceptt & create $y_3$ \\
8 & $\rhd$\acceptt & $\rhd$\acceptt & create $y_3$ \\

\bottomrule
\end{tabular}}
\caption[List of replay attacks against the second phase of \sbac.]{\footnotesize  List of replay attacks against the second phase of \sbac for all possible executions of the transaction $T(x_1,x_2)\rightarrow (y_1,y_2,y_3)$ as described in \Cref{sec:byzcuit:attack-overview}. 
The highlighted rows indicate correct executions of \sbac (\ie, without the \attacker), and the other rows indicate incorrect executions due to the replay attacks.
In multirows, the top sub-rows show the protocol messages emitted by shards, and the bottom sub-rows indicate local shard actions as a result of emitting those messages.
 For example, (\mycolumn{1}, \myrow{1}) means that shard 1 emits \acceptt (top sub-row), and creates a new object $y_1$ and inactivates $x_1$ (bottom sub-row).
The first two columns indicate the messages emitted by each shard at the end of the second phase of \sbac, and the last column shows the effect of these messages on the output shard 3.
Replayed messages are marked with the symbol $\rhd$.
For example $\rhd$\acceptt at (\mycolumn{1}, \myrow{2}) means that the \attacker sends to other relevant shards (in this case shard 3) a \prerecorded \acceptt message impersonating shard 1.} 
\label{tab:byzcuit:sbac-attack:2}
\end{table*}

We present replay attacks on the second phase of \sbac. The \attacker \prerecords \acceptt messages as described in \Cref{sec:byzcuit:chainspace-message-recording} and \Cref{sec:byzcuit:prerecoring-chainspace}. 
\Cref{tab:byzcuit:sbac-attack:2} shows replay attacks for all possible combinations of messages emitted by shard 1 and shard 2 in the second phase. Since the attacks we describe in this section assume that the first phase of \sbac concluded correctly (\ie, all the relevant shards unanimously decide to accept or reject a transaction), both the shards generate \abortt (\myrow{1}) or \acceptt (\myrow{5}).
The caption includes details about how to interpret this table.
We describe \myrow{6} of \Cref{tab:byzcuit:sbac-attack:2}, to help readers interpret rest of the table on their own.
In the correct execution (\myrow{5}), both shards emit \abortt and no output objects are created.
In the attack in \myrow{6}, the \attacker replays a \prerecorded \acceptt from shard 1 to all the relevant shards (in this case shard 3).
Upon receiving this message, shard 3 (incorrectly) creates $y_3$.

The potential victims of replay attacks corresponding to the second phase of \sbac are the shards that \emph{only} act as output shards (\ie, do not simultaneously act as input shards). 
The \attacker can replay \acceptt multiple times tricking shard 3 into creating $y_3$ multiple times. 
These attacks are possible because shards do not keep records of inactive objects (following the UTXO model) for scalability reasons\footnote{Requiring shards to remember the full history of inactive objects would increase their memory requirements monotonically over time, reaching at some point memory limits preventing further operations. Thus this is a poor mitigation for the attacks presented.}, and because shard 3 takes part in only the second phase of \sbac.
The \attacker can double-spend $y_3$ repeatedly by replaying a single \prerecorded message multiple times, and spending the object (and effectively purging it from shard 3's UTXO) before each replay.
Contrarily to the attacks against the first phase of \sbac (\Cref{sec:byzcuit:chainspace-attack-1}), these attacks do not rely on any racing conditions;  there is no need to race any honest messages.

\subsection{Real-world Impact} \label{sec:byzcuit:chainspace-discussion}
The real-world impact and attacker incentives to conduct these attacks depends on the nature and implementation of the smart contract handling the target objects. We discuss the impact of these attacks in the context of two common smart contract applications, which are also described in the \chainspace paper~\cite{chainspace}. To take a concrete example, we illustrate the attack depicted in \myrow{3} of \Cref{tab:byzcuit:sbac-attack}, but similar results can be obtained with the other attacks described in \Cref{tab:byzcuit:sbac-attack} and \Cref{tab:byzcuit:sbac-attack:2}.

One of the most common blockchain application is to manage cryptocurrency (or coins) and enable payments for processing transactions, implemented by the \textsf{CSCoin} smart contract in \chainspace.
Lets suppose object $x_1$ (handled by shard 1) represents Alice's account, and object $x_2$ (handled by shard 2) represents Bob's account. To transfer $v$ coins to Bob, Alice submits a transaction $T(x_1,x_2)\rightarrow(y_1,y_2)$, where $y_1$ and $y_2$ respectively represent the new account objects of Alice and Bob, with updated account balances. By executing the attack described in \myrow{3} of \Cref{tab:byzcuit:sbac-attack}, an \attacker can trick shard 1 to abort the transaction and unlock $x_1$ (thus reestablishing Alice's account balance as it was prior to the coin transfer), and shard 2 to accept the transaction and create $y_2$ (thus adding $v$ coins to Bob's account). This attack effectively allows any attacker to double-spend coins on the ledger; and shows how to create $v$ coins out of thin air.
Another common blockchain use case is a platform for decision making (or electronic petitions), implemented by the \textsf{SVote} smart contract in \chainspace. Upon initialization, the \textsf{SVote} contract creates two objects: \first $x_1$ representing the tally's public key, a list of all voters' public keys, and the tally's signature on these; and \second $x_2$ representing a vote object at the initial stage of the election (all candidates having a score of zero) along with a zero-knowledge proof asserting the correctness of the initial stage. To vote, clients submit a transaction $T(x_1,x_2)\rightarrow(y_1,y_2)$, where $y_1$ and $y_2$ are respectively the updated voting list (\ie, the voting list without the client's public key), and the election stage updated with the client's vote. By executing the attack described by \myrow{3} of \Cref{tab:byzcuit:sbac-attack}, an \attacker can trick shard 1 to abort the transaction and thus not update the voting list, and shard 2 to accept the transaction and thus update the election stage. This effectively allows any client to vote multiple times during an election while remaining undetected (due to the privacy-preserving properties of the \textsf{SVote} smart contract).

\section{\Clientled Cross-shard Consensus Protocol} \label{sec:byzcuit:client-led-consensus}
We describe replay attacks on \clientled cross-shard consensus protocols. To make the discussion concrete, we illustrate these attacks in the context of \atomix, the cross-shard protocol at the heart of \omniledger~\cite{omniledger}. However, we note that these attacks can be generalized to other similar systems. 
We discuss how the \attacker can record shard messages to replay in future attacks (\Cref{sec:byzcuit:omniledger-message-recording}). In Sections~\ref{sec:byzcuit:omniledger-attack-1}~and~\ref{sec:byzcuit:omniledger-attack-2}, we describe replay attacks on the first and second phase of the cross-shard consensus protocol. Finally, we discuss the real-world impact of these attacks (\Cref{sec:byzcuit:omniledger-discussion}). 

\subsection{\atomix Overview} \label{sec:byzcuit:atomix-summary}

\begin{figure}[t]
\centering
\includegraphics[width=\textwidth]{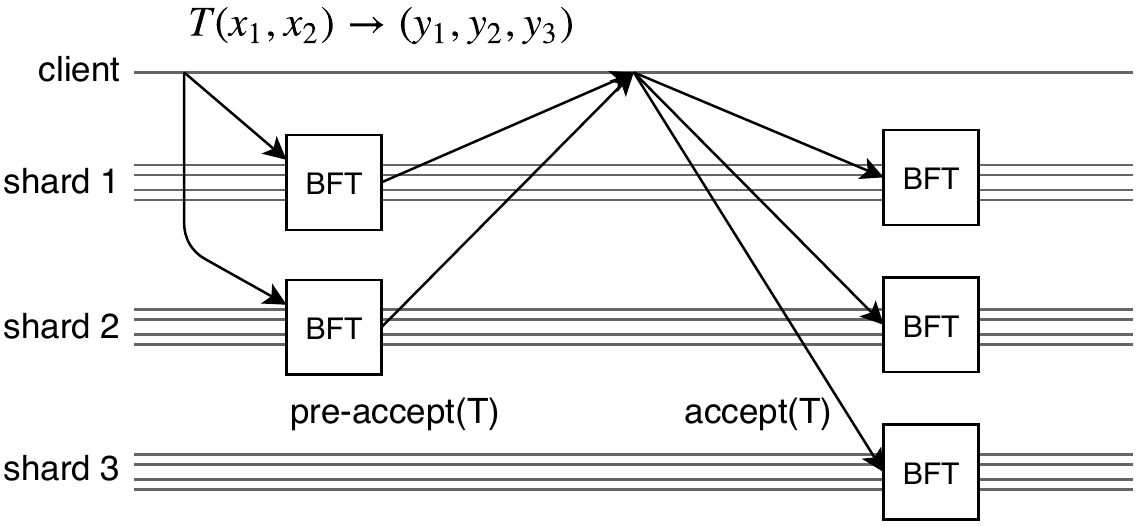}
\caption[An example execution of \atomix.]{\footnotesize 
An example execution of \atomix for a valid transaction $T(x_1,x_2)\rightarrow (y_1,y_2,y_3)$ with two inputs ($x_1$ and $x_2$, both are active) and three outputs $(y_1,y_2,y_3)$, where the final decision is \acceptt.}
\label{fig:atomix}
\end{figure}
Similar to \sbac, \atomix uses an atomic commit protocol to process transactions across shards. However, it uses a different, client-driven approach to achieve it.
The client submits the transaction $T$ to the input shards. Each shard runs a BFT protocol locally to decide whether to accept or reject the transaction, and communicates its response (\preacceptt or \preabortt) to the client.\footnote{For clarity and consistency, we use the terminology used in \Cref{chainspace}. In \atomix, \preacceptt is actually a \emph{proof-of-accept} and \preabortt is a \emph{proof-of-abort}~\cite{omniledger}.}  
A shard emits \preabortt if the transaction fails local checks (\eg, if any of the input objects is inactive).
Alternatively, if a shard emits \preacceptt, it inactivates the input objects it manages. This is the first phase of \atomix, and is similar to the voting phase in the two-phase atomic commit protocol, but differs in that the protocol proceeds optimistically.
The write changes made by the input shards in the first phase of \atomix are considered permanent (\ie, there is no \locked object state), unless the client requests the input shards to revert their changes in the second phase.  
After the client has collected \preacceptt from all input shards, it submits \acceptt message (containing proof of the \preacceptt messages) to the output shards which create the output objects.
Alternatively, if any of the input shards emits \preabortt, the client sends \abortt (containing proof of \preabortt) to the relevant input shards which make the input objects active again. This is the second phase of \atomix, and is similar to the commit phase in the two-phase atomic commit protocol.
\Cref{fig:atomix} shows the execution of \atomix for a valid transaction $T(x_1,x_2)\rightarrow (y_1,y_2,y_3)$, with two active inputs ($x_1$ managed by shard 1, and $x_2$ managed by shard 2) and producing three outputs $(y_1,y_2,y_3)$ managed by shard 1, shard 2 and shard 3, respectively. The client sends $T$ to the input shards, both of which reply with \preacceptt and make the input objects $x_1$ and $x_2$ inactive. The client sends \acceptt to the output shards which respectively create objects $y_1$, $y_2$, and $y_3$.

\subsection{Message Recording} \label{sec:byzcuit:omniledger-message-recording}
Before launching the replay attacks, the \attacker first records the target shard responses. The \attacker can record shard responses in the first phase of \atomix (\ie, \preacceptt or \preabortt), enabling the attacks described in \Cref{sec:byzcuit:omniledger-attack-1}. The \attacker can also record shard responses in the second phase of \atomix (\ie, \acceptt or \abortt), enabling the attacks described in \Cref{sec:byzcuit:omniledger-attack-2}. 
In the general case, the \attacker passively collects the messages to replay, for example by protocol executions on the network, or by downloading the blockchain and selecting the appropriate messages. \Cref{sec:byzcuit:prerecoring-omniledger} shows how the \attacker can act as a client to actively elicit and record the target messages to later use in the replay attacks.

\subsection{Attacks on the First Phase of \atomix} \label{sec:byzcuit:omniledger-attack-1}

\begin{table*}[t]
\centering
\resizebox{\textwidth}{!}{\begin{tabular}{c c c  c  c c c}
\toprule
\multicolumn{3}{c}{\small Phase 1 of \atomix} & & \multicolumn{3}{c}{\small Phase 2 of \atomix}\\
\noalign{\smallskip}
&\Centerstack{\textbf{Shard 1} \\ (potential victim)} & \Centerstack{\textbf{Shard 2} \\ (potential victim)} &
\Centerstack{\textbf{Client} \\ (victim)} &
\Centerstack{\textbf{Shard 1} \\ (potential victim)} & \Centerstack{\textbf{Shard 2} \\ (potential victim)} & \Centerstack{\textbf{Shard 3} \\ (potential victim)}\\
\noalign{\smallskip}
\toprule

\rowcolor{verylightgray}
1 & \Centerstack{\preacceptt \\ inactivate $x_1$} & \Centerstack{\preacceptt \\ inactivate $x_2$} & \acceptt & \Centerstack{- \\ create $y_1$} & \Centerstack{- \\ create $y_2$} & \Centerstack{- \\ create $y_3$} \\
\midrule
2 & $\rhd$ \preabortt & & \abortt & \Centerstack{- \\ re-activate $x_1$} & \Centerstack{- \\ re-activate $x_2$} & - \\
\midrule
3 & & $\rhd$ \preabortt & \abortt & \Centerstack{- \\ re-activate $x_1$} & \Centerstack{- \\ re-activate $x_2$} & - \\
\midrule
4 & $\rhd$\preabortt & $\rhd$\preabortt & \abortt & \Centerstack{- \\ re-activate $x_1$} & \Centerstack{- \\ re-activate $x_2$} & - \\
\noalign{\smallskip}
\toprule

\rowcolor{verylightgray}
5 & \Centerstack{\preabortt \\ -} & \Centerstack{\preacceptt \\ inactivate $x_2$} & \abortt & \Centerstack{- \\ -} & \Centerstack{- \\ re-activate $x_2$} & - \\
\midrule
6 & $\rhd$\preacceptt & & \acceptt & \Centerstack{- \\ create $y_1$} & \Centerstack{- \\ create $y_2$} & \Centerstack{- \\ create $y_3$} \\
\noalign{\smallskip}
\toprule

\rowcolor{verylightgray}
7 & \Centerstack{\preacceptt \\ inactivate $x_1$} & \Centerstack{\preabortt \\ -} & \abortt & \Centerstack{- \\ re-activate $x_1$} & \Centerstack{- \\ -} & - \\
\midrule
8 & & $\rhd$ \textsf{pre-accept}($T$) & \acceptt & \Centerstack{- \\ create $y_1$} & \Centerstack{- \\ create $y_2$} & \Centerstack{- \\ create $y_3$} \\
\noalign{\smallskip}
\toprule

\rowcolor{verylightgray}
9 & \Centerstack{\preabortt \\ -} & \Centerstack{\preabortt \\ -} & \abortt & \Centerstack{- \\ -} & \Centerstack{- \\ -} & - \\
\midrule
10 & $\rhd$ \textsf{pre-accept}($T$) & $\rhd$ \textsf{pre-accept}($T$) & \acceptt & \Centerstack{- \\ create $y_1$} & \Centerstack{- \\ create $y_2$} & \Centerstack{- \\ create $y_3$} \\
\bottomrule
\end{tabular}}
\caption[List of replay attacks against the first phase of \atomix.]{\footnotesize List of replay attacks against the first phase of \atomix for all possible executions of the transaction $T(x_1,x_2)\rightarrow (y_1,y_2,y_3)$ as described in \Cref{sec:byzcuit:attack-overview}. 
The highlighted rows indicate correct executions of \atomix (\ie, without the \attacker), and the other rows indicate incorrect executions due to the replay attacks. 
 In multirows, the top sub-rows show the protocol messages emitted by shards, and the bottom sub-rows indicate local shard actions as a result of emitting those messages.
  For example, (\mycolumn{1}, \myrow{1}) means that shard 1 emits \preacceptt (top sub-row), and inactivates $x_1$ (bottom sub-row). 
The first two columns indicate the messages emitted by each shard at the end of the first phase of \atomix. 
Replayed messages are marked with the symbol $\rhd$, for example $\rhd$\preabortt at (\mycolumn{1}, \myrow{2}) means that the \attacker sends to the client a \prerecorded \preabortt message impersonating shard 1 that races the original \preacceptt (\mycolumn{1}, \myrow{1}) emitted by shard 1. 
 The third column indicates the messages sent by the client to the relevant shards, and the last three columns indicate the local actions performed by shards at the end of the second phase of \atomix.
 }
\label{tab:byzcuit:atomix-attack}
\end{table*}

We present replay attacks on the first phase of \atomix by taking the example of a transaction $T(x_1,x_2)\rightarrow (y_1,y_2,y_3)$ as described in \Cref{sec:byzcuit:attack-overview}. These attacks easily generalize to transactions with $k$ inputs and $k'$ outputs managed by an arbitrary number of shards. The replay attacks work in two steps: \first the \attacker observes the traffic and records \preacceptt or \preabortt messages as described in \Cref{sec:byzcuit:omniledger-message-recording}; and \second then replay those messages.
\Cref{tab:byzcuit:atomix-attack} shows the replay attacks that the \attacker can launch, for all possible combinations of responses generated by shard 1 and shard 2 in the first phase of \atomix. 
The caption includes details about how to interpret this table.
We describe \myrow{6} of \Cref{tab:byzcuit:atomix-attack}, to help readers interpret rest of the table on their own.
In the correct execution (\myrow{5}), shard 1 emits \preabortt, and shard 2 emits \preacceptt and inactivates the input objects $x_2$. 
Upon receiving these messages, the client sends \abortt to the output shards shard 1, shard 2 and shard 3, and shard 2 re-activates $x_2$; and the protocol terminates. 
In the attack illustrated in \myrow{6} of \Cref{tab:byzcuit:atomix-attack},
the \attacker races shard 1 by sending to the client the \prerecorded \preacceptt message from shard 1. As a result, the client sends \acceptt message to the output shards shard 1, shard 2 and shard 3, which respectively create the output objects $y_1$, $y_2$, and $y_3$.
As a result, the system ends up in an inconsistent state because the output objects ($y_1$, $y_2$, $y_3$) have been created, while the input object ($x_1$) was not active---this results in a double-spend of the input object $x_1$.
\Cref{tab:byzcuit:atomix-attack} shows that through careful selection of the messages to replay, the attacks can be effective against any shard. The attacks illustrated in \myrow{2}, \myrow{3}, and \myrow{4} only affect availability, while the other attacks compromise consistency (\ie, the \attacker can trick the input shards to reactivate arbitrary objects, and trick the output shards into creating new objects in violation of the protocol). The potential victims of these attacks include the client (\eg, when the \attacker replays the shard messages to it in the first phase of \atomix) and any input or output shards.

\subsection{Attacks on the Second Phase of \atomix} \label{sec:byzcuit:omniledger-attack-2}

\begin{table*}[t]
\centering
\resizebox{\textwidth}{!}{\begin{tabular}{c C C C C}
\toprule
\multicolumn{5}{c}{\small Phase 2 of \atomix}\\
& \textbf{Client} & \Centerstack{\textbf{Shard 1} \\ (potential victim)} & \Centerstack{\textbf{Shard 2} \\ (potential victim)} & \Centerstack{\textbf{Shard 3} \\ (potential victim)}\\
\midrule

\rowcolor{verylightgray}
1 & \acceptt & \Centerstack{- \\ create $y_1$} & \Centerstack{- \\ create $y_2$} & \Centerstack{- \\ create $y_3$} \\
2 &  $\rhd$\abortt & \Centerstack{- \\ re-activate $x_1$} & \Centerstack{- \\ re-activate $x_2$} & - \\
\midrule

\rowcolor{verylightgray}
3 & \abortt  & \Centerstack{- \\ re-activate $x_1$} & \Centerstack{- \\ re-activate $x_2$} & - \\
4 & $\rhd$\acceptt & \Centerstack{- \\ create $y_1$} & \Centerstack{- \\ create $y_2$} & \Centerstack{- \\ create $y_3$} \\
\bottomrule
\end{tabular}}
\caption[List of replay attacks against the second phase of \atomix.]{\footnotesize  List of replay attacks against the second phase of \atomix for all possible executions of the transaction $T(x_1,x_2)\rightarrow (y_1,y_2,y_3)$ as described in \Cref{sec:byzcuit:attack-overview}. 
The highlighted rows indicate correct executions of \atomix (\ie, without the \attacker), and the other rows indicate incorrect executions due to the replay attacks.
In multirows, the top sub-rows show the protocol messages emitted by shards, and the bottom sub-rows indicate local shard actions.
Note that we use the multirow format for consistency reasons; in this table the first column indicates the messages emitted by the client  at the beginning of the second phase of \atomix, and the last two column shows the effect of these messages on the relevant shards.
Replayed messages are marked with the symbol $\rhd$.
For example, $\rhd$\abortt at (\mycolumn{1}, \myrow{2}) means that the \attacker sends a \prerecorded \abortt message to the input shards impersonating the client.}
\label{tab:byzcuit:atomix-attack:2}
\end{table*}

We present replay attacks on the second phase of \atomix. The \attacker \prerecords \acceptt and \abortt messages as described in \Cref{sec:byzcuit:omniledger-message-recording} and \Cref{sec:byzcuit:prerecoring-omniledger}. 
\Cref{tab:byzcuit:atomix-attack:2} shows replay attacks corresponding to the messages emitted by the client in the second phase---\ie, \acceptt in \myrow{1}, or \abortt in \myrow{3}.
The caption includes details about how to interpret this table.
The \abortt message at (\mycolumn{1}, \myrow{2}) means that the \attacker sends a \prerecorded \abortt message to the input shards (shard 1 and shard 2) impersonating the client. Upon receiving this message, shard 1 and shard 2 (incorrectly) re-activate $x_1$ and $x_2$, respectively. 
Furthermore, all output shards create the output objects when the correct \acceptt message emitted by the client (\myrow{1}, \mycolumn{1}) reaches them.
This results in inconsistent state, because the output objects have been created, but the input objects have not been consumed and have been reactivated by the \abortt message replayed by the adversary. 
The potential victims of \abortt replay attack are the input shards.
Similarly, \acceptt at (\myrow{4}, \mycolumn{1}) means that the \attacker sends a \prerecorded \acceptt message to the output shards (shard 1, shard 2 and shard 3) impersonating the client. Upon receiving this message, the output shards (incorrectly) create $y_1$, $y_2$ and $y_3$. 
Furthermore, the input shards (shard 1 and shard 2) reactivate $x_1$ and $x_2$ upon receiving the correct \abortt message emitted by the client (\myrow{3}, \mycolumn{1}).
This creates inconsistent state: the input objects have not been consumed and have been reactivated by the \abortt message emitted by the client, but the output objects have been created due to the \acceptt message replayed by the \attacker. 
The potential victims of \acceptt replay attack are the output shards.

These attacks are possible because output shards create objects directly upon receiving \acceptt; they do not check if the objects have been previously invalidated because shards do not keep records of inactive objects (per the UTXO model) for scalability reasons.\footnote{Verifying that objects have not been previously invalided implies either keep a forever-growing list of invalidated objects, or download and check the shard's entire blockchain.}
The \attacker can double-spend the output objects repeatedly from a single \prerecorded message by replaying it multiple times, and spending the object (and effectively purging it from the output shards' UTXO) before each replay. 
Similar to the attacks against the second phase of \sbac (\Cref{sec:byzcuit:chainspace-attack-2}), these attacks do not exploit any racing condition and can be mounted by an adversary at a leisurely pace.

\subsection{Real-world Impact} \label{sec:byzcuit:omniledger-discussion}
Contrarily to \chainspace, \omniledger does not support smart contracts and only handles a cryptocurrency. The attacks described in Sections~\ref{sec:byzcuit:omniledger-attack-1}~and~\ref{sec:byzcuit:omniledger-attack-2} allow an \attacker to: \first double-spend the coins of any user, by reactivating spent coins (\eg, the \attacker may execute the attack depicted by \myrow{2} of \Cref{tab:byzcuit:atomix-attack:2} to re-activate the objects $x_1$ and $x_2$ after the transfer is complete); and \second create coins out of thin air by replaying the message to create coins (\eg, an \attacker may execute the attack depicted by \myrow{4} of \Cref{tab:byzcuit:atomix-attack:2} to create multiple times object $y_3$, by purging it from the UTXO list of shard 3 prior to each instance of the attack). 
If the \attacker colludes with the client, it can trigger the \prerecorded messages needed for the attacks as described in \Cref{sec:byzcuit:omniledger-message-recording}.
Alternatively, the \attacker can passively observe the network and collect the target messages to replay.
Similar results can be obtained using the attacks described in \Cref{tab:byzcuit:atomix-attack}.
Note that since transaction are recorded on the blockchain, these attacks can be detected retrospectively.  
This can lead to the \attacker being exposed, or the \attacker can inculpate innocent users (the \attacker can replay messages of any user).
\section{Eliciting Messages to Replay} \label{sec:byzcuit:prerecoring}
We show how the \attacker can act as (or collude with) a client to actively elicit and record the target messages to later use in the replay attacks. This empowers the \attacker to actively orchestrate the attacks.
We describe how the \attacker can trigger target messages in the context of an example, without loss of generality.
Lets assume that shard 1 manages objects $x_{1}$ (\activeObj) and object $\widetilde{x_1}$ (\inactiveObj or non-existent), and shard 2 manages object $x_{2}$ (\activeObj);
$\widetilde{x*}$ means any inactive object on the shard, and $y*$ means any output object (\ie, their details do not matter).

\subsection{\Shardled Cross-Shard Consensus} \label{sec:byzcuit:prerecoring-chainspace}
We show how the \attacker can act as (or collude with) a client to actively elicit and record the target messages, in the context of \shardled cross-shard consensus protocols as illustrated by \Cref{sec:byzcuit:shard-led-consensus}.
To elicit \preacceptt for a transaction $T(x_{1},x_{2})\rightarrow (y*)$ (the output $y*$ is not relevant here) from shard 1, the key consideration is to closely precede the transaction with another transaction $T'$ that: \first locks the inputs managed by at least one other shard (in this case $x_2$ on shard 2); and \second to ensure that the preceding transaction $T'$ gets ultimately aborted, and $x_2$ becomes active again. The steps look as follows:  

\begin{itemize}
\item The \attacker submits $T'(x_{2},\widetilde{x*})\rightarrow (y*)$ to shard 2. This locks $x_2$.
\item The attacker quickly follows up by submitting $T(x_{1},x_{2})\rightarrow (y*)$ to shard 1 and shard 2. Shard 1 generates \preacceptt, which is the target message that the attacker records. Shard 2 generates \preabortt because $x_2$ is locked by $T'$. Consequently, in the second phase of \sbac, both shard 1 and shard 2 end up aborting $T$.
\item $T'$ is eventually aborted, making $x_{2}$ active again.
\end{itemize}

To elicit \preabortt for a transaction $T(x_{1},x_{2})\rightarrow (y*)$ (the output $y*$ is not relevant here) from shard 1, the key consideration is to closely precede the transaction with another transaction $T'$ that locks the input managed by the shard (in this case $x_1$ on shard 1). The steps look as follows:
\begin{itemize}
\item The attacker submits  $T'(x_{1},\widetilde{x*})\rightarrow (y*)$ to shard 1. This locks $x_1$.
\item The attacker quickly follows up by submitting $T(x_{1},x_{2})\rightarrow (y*)$ to shard 1 and shard 2. Shard 1 generates \preabortt because $x_1$ is locked by $T'$, which is the target message that the attacker records. Shard 2 generates \preacceptt. Consequently, in the second phase of \sbac, both shard 1 and shard 2 end up aborting $T$.
\item $T'$ is eventually aborted, making $x_{1}$ active again.
\end{itemize}

To elicit \acceptt used by the attacks described in \Cref{sec:byzcuit:chainspace-attack-2}, the \attacker simply submits transaction $T$ and observes and records its successful execution. The \attacker has no incentive to record \abortt messages as these are ignored by shards (see \Cref{tab:byzcuit:sbac-attack:2}).

\subsection{\Clientled Cross-Shard Consensus} \label{sec:byzcuit:prerecoring-omniledger}
We show how the \attacker can act as (or collude with) a client to actively elicit and record the target messages, in the context of \clientled cross-shard consensus protocols as illustrated by \Cref{sec:byzcuit:client-led-consensus}.
To elicit \preacceptt from shard 1 for a transaction $T(x_{1},x_{2})\rightarrow (y*)$ (the output $y*$ is not relevant here) from shard 1, the key consideration is to closely precede the transaction with another transaction that: \first temporarily spends the inputs managed by at least one other shard (in this case $x_2$ on shard 2); and \second to ensure that the preceding transaction is ultimately aborted so that $x_2$ becomes active again. The steps look as follows:

\begin{itemize}
\item The \attacker submits $T'(x_2,\widetilde{x*})\rightarrow (y*)$ to shard 2, where $\widetilde{x*}$ is managed by a different shard. shard 2 emits \preaccepttt and marks $x_2$ as inactive.
\item The \attacker follows up by submitting $T(x_{1},x_{2})\rightarrow (y*)$ to shard 1 and shard 2. Shard 1 generates \preacceptt, which is the target message that the attacker records. Shard 2 generates \preabortt because $x_2$ is inactive.
\item The \attacker submits \abortt to shard 1 to reactivate $x_1$, and sends \aborttt to shard 2 to reactivate $x_2$.
\end{itemize}

For the attacks described in \Cref{sec:byzcuit:omniledger-attack-2}, the \attacker needs to elicit \abortt and \acceptt from the target shards. For the former, the \attacker can follow the steps described previously to elicit \preacceptt and \preabortt. To elicit \acceptt, the \attacker simply submits transaction $T$ and observes and records its successful execution. 

\section{Defenses Against Replay Attacks}\label{sec:byzcuit:defenses}
We identify two issues that lead to the replay attacks described in \Cref{sec:byzcuit:shard-led-consensus} and \Cref{sec:byzcuit:client-led-consensus}, and discuss how to fix those:

\begin{itemize}
\item First, the input shards do not have a way to know that particular protocol messages received correspond to a specific instance (or session) of the protocol. This gap in the input shards' knowledge enables an \attacker to replay, mix and match, old messages leading to attacks.
To address this limitation, we associate a session identifier with each transaction, which has to be crafted carefully to not degrade the performance of the protocols significantly---such as by requiring nodes to store state linearly in the number of past transactions.

\item Second, in some cases the output shards are only involved in the second phase of the protocol, and therefore have no knowledge of the transaction context (to determine freshness) that is available to the input shards.
This limitation can be addressed by ensuring that all shards---input and output---witnesses the entire protocol execution, rather than just one of the protocol phases. 
\end{itemize}

Note that the two mitigation techniques described above must be used together, as part of a single defense strategy against replay attacks. 
\section{The \byzcuit Atomic Commit Protocol} \label{sec:byzcuit:byzcuit}
We showed that both \sbac (Sections~\ref{sec:byzcuit:chainspace-attack-1}~and~\ref{sec:byzcuit:chainspace-attack-2}) and \atomix (Sections~\ref{sec:byzcuit:omniledger-attack-1}~and~\ref{sec:byzcuit:omniledger-attack-2}) are vulnerable to replay attacks that can compromise system liveness and safety.
\atomix is the simpler protocol of the two, and using the client to coordinate cross-shard communication can reduce the cost to $O(n)$ in the number of shards (by aggregating shard messages).
However, an unresponsive or malicious client can permanently lock input objects by never initiating the second phase of the protocol, requiring additional design considerations (\eg, a new entity that periodically unlocks input objects for transactions on which no progress has been made). 
On the other hand, \sbac runs the protocol among the shards, without relying on client coordination.
But this comes at the cost of increased cross-shard communication: all input shards communicate with all other input shards, which leads to communication complexity of $O(n^2)$ where $n$ is the number of input shards.

Motivated by these insights, we present \byzcuit---a cross-shard atomic commit protocol (based on \sbac) that integrates design features from \atomix---and offers better performance and security against replay attacks. 
\byzcuit allocates a Transaction Manager (TM) to coordinate cross-shard communication, reducing its cost to $O(n)$ in the happy case\footnote{The communication complexity can be reduced to $O(n)$ in the number of shards by aggregating shard messages as described by \omniledger.}; alternatively \byzcuit also has a fall-back mode in case the TM fails, similar to \atomix and traditional two phase commit protocols. 
\byzcuit achieves resilience against the replay attacks described in \Cref{sec:byzcuit:shard-led-consensus} and \Cref{sec:byzcuit:client-led-consensus}, by leveraging the defense proposed in \Cref{sec:byzcuit:defenses}.

\subsection{\byzcuit Protocol Design} 
We describe how \byzcuit integrates the defense presented in \Cref{sec:byzcuit:client-led-consensus}. To map particular protocol messages to a specific protocol instance (or session), \byzcuit associates a session identifier with each transaction. To ensure that all the relevant (input and output) shards witness all phases of the protocol execution, \byzcuit leverages the notion of \emph{dummy objects}.
Each shard creates a fixed number of dummy objects upon configuration;
if a shard only serves as an output shard for a transaction, \byzcuit forces it to be involved in the first phase of the protocol by implicitly including a dummy object managed by the output shard in the transaction inputs, which will create a new dummy object upon completion.
Thus, the output shard also becomes an input shard (because of the dummy object in the transaction inputs) and witnesses the entire protocol, rather than just the second phase. 

\para{\byzcuit protocol execution} We illustrate \byzcuit taking the example of a transaction $T(x_1,x_2)\rightarrow (y_1,y_2,y_3)$ with two input objects, $x_1$ managed by shard 1 and $x_2$ managed by shard 2; and three outputs, $y_1$ managed by shard 1, $y_2$ managed by shard 2, and $y_3$ managed by shard 3.
\Cref{fig:byzcuit:sysname} illustrates the \byzcuit protocol; the client first sends the transaction to all input and output shards.
Note that this is different than other protocols like \sbac and \atomix, where the transaction is only sent to the input shards.
As mentioned previously, to achieve resilience against replay attacks, \byzcuit forces a shard that is \emph{only} involved in creating the output objects to also become an input shard (and witness the transnational context by participating in the first phase of the protocol) by implicitly consuming one of its dummy inputs (which creates a new dummy object upon completion). \byzcuit associates a sequence number $s_{x_i}$ to each object and dummy object (when the object is created $s_{x_i}=0$). The sequence number is intrinsically linked to the object: when clients query shards to obtain an object $x_i$, they also receive the associated sequence number $s_{x_i}$. 

\begin{figure}[t]
\centering
\includegraphics[width=\textwidth]{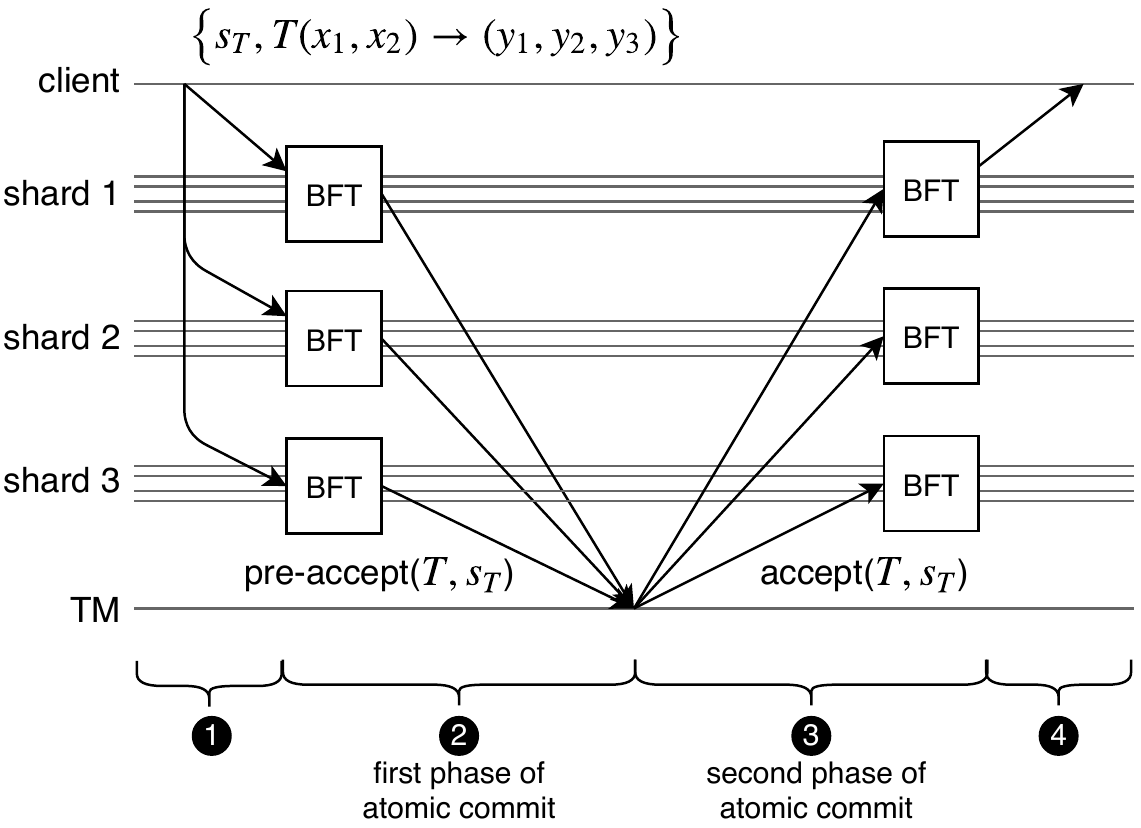}
\caption[Design overview of \byzcuit.]{\footnotesize An example execution of \byzcuit for a valid transaction $T(x_1,x_2)\rightarrow (y_1,y_2,y_3)$ with two input objects ($x_1$ and $x_2$, both are active), and three outputs $(y_1,y_2,y_3)$, where the final decision is \acceptts.}
\label{fig:byzcuit:sysname}
\end{figure}

When submitting the transaction $T$, the client also sends along a transaction sequence number $s_{T}=max\{ s_{x_1}, s_{x_2}, s_{d_3} \}$,
where the transaction sequence number $s_{T}$ is the maximum of the sequence numbers $s_{x_i}$ of each input object $x_i$ and dummy objects $d_i$~(\ding{202}). 
Upon receiving a new pair $(T,s_T)$, each shard saves $(T,s_T)$ in a local cache memory---the transaction sequence number $s_T$ acts as session identifier associated with the transaction $T$. Each shard internally verifies that the transaction passes local checks, and that $s_T$ is equal to (or bigger than) the sequence numbers of the objects they manage (\ie, shard 1 checks $s_T \geq s_{x_1}$, shard 2 checks $s_T \geq s_{x_2}$, shard 3 checks $s_T \geq s_{d_3}$). The shards send their local decision to the TM: \preacceptts for local accept (and the shard locks the objects it manages), or \preabortts for local abort.
After receiving all the messages corresponding to the first phase of \byzcuit from the concerned shards, the TM sends a suitable message to the shards (\acceptts if all the shards respond with \preacceptts, or \abortts otherwise). 
Upon receiving \acceptts or \abortts from the TM, shards first verify that they previously cached the pair $(T, s_T)$ associated with the message; otherwise they ignore it~(\ding{203}).

The \acceptts or \abortts messages sent by the TM provide enough evidence to the shards to verify whether $s_T$ is correctly computed; \ie shards verify that $s_T$ is at least the maximum of the sequence numbers of each input and dummy object by inspecting the transaction $T$ signed by each shard. If \acceptts has a correct $s_T$, the shards inactivate the input objects and create the output objects $(y_1,y_2,y_3)$, and shard 3 creates a new dummy object $\widetilde{d}_{3}$;  otherwise, they update the sequence numbers of each input object $(s_{x_1},s_{x_2})$ and dummy object $d_{3}$ to $(s_T+1)$, \ie shards locally update $s_{x_1} \leftarrow (s_T+1)$ and $s_{x_2} \leftarrow (s_T+1)$, and $s_{d_{3}}\leftarrow (s_T+1)$. Shards delete $(T,s_T)$ from their local cache~(\ding{204}).
Since we assume that shards are honest---inline with the threat model of the systems discussed---it suffices if only one shard notifies the client of the protocol outcome; we may set any arbitrary rule to decide which shard notifies the client (\eg, the shard handling the first input object)~(\ding{205}).
\Cref{fig:byzcuit:byzcuit-fsm} shows the finite state machine describing the life cycle of \byzcuit objects. 

\begin{figure}[t]
\centering
\includegraphics[width=\textwidth]{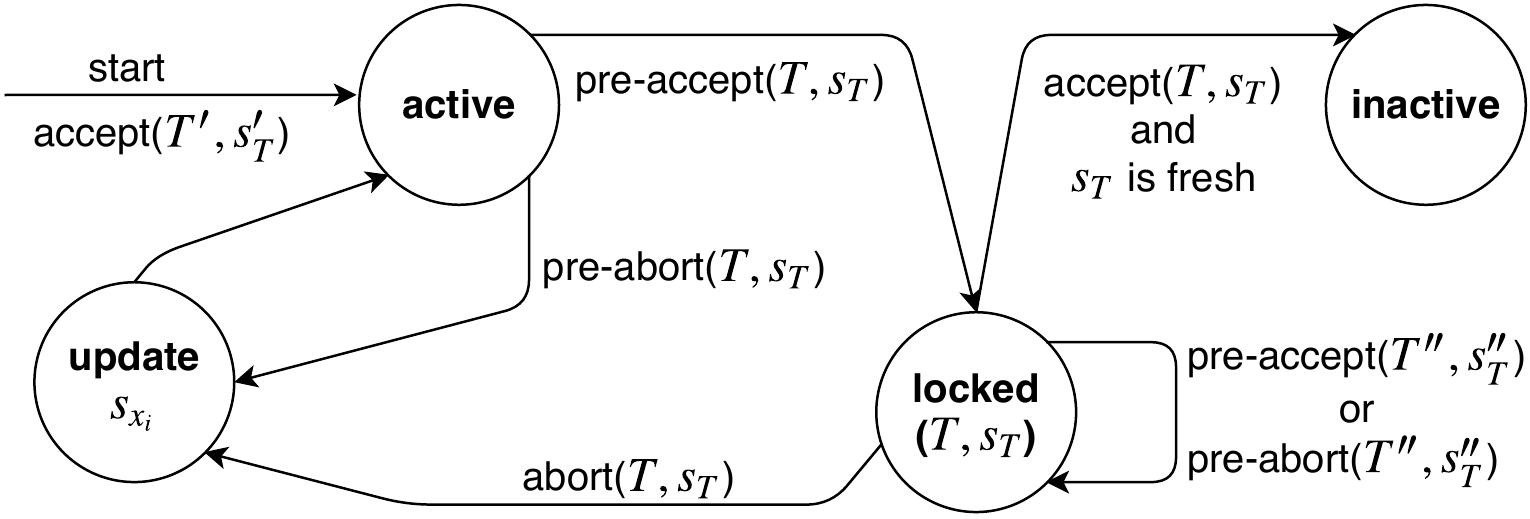}
\caption[State machine representation of objects within \byzcuit.]{\footnotesize State machine representing the life cycle of objects in \byzcuit. Objects are initially \activeObj. Upon receiving a transaction that passes local checks, a shard changes its input objects' state to \locked  (objects are locked for a given transaction $T$ and transaction sequence number $s_T$) and emits \preacceptts; otherwise it updates the sequence number of every object it manages and emits \abortts. 
Once a shard locks input objects for a given $(T,s_T)$, any \acceptts and \abortts with malformed transaction sequence numbers are ignored, and do not cause any transition (not included in the figure). Any incoming transaction $T'$ that requires processing \locked input object(s) is aborted.
Upon receiving \acceptts with a well formed $s_T$, a shard makes its input objects \inactiveObj and creates the output objects.
Alternatively, upon receiving \abortts shards unlock their inputs and update the corresponding sequence numbers.
}
\label{fig:byzcuit:byzcuit-fsm}
\end{figure}

\para{Transaction manager} 
The Transaction Manager (TM) coordinates cross-shard communication in \byzcuit. We now discuss who might play the role of the TM, and argue that \byzcuit guarantees liveness even if the TM is faulty (Byzantine) or crashes.
Keeping with the overall design goal of decentralization, we envision that a designated shard will act as the TM. If the shard is honest, the TM is live---and therefore progress is always made. The input shards contact in turn each node of the TM shard until they reach one honest node. The TM shard may have up to $f$ dishonest nodes; therefore, the client or the input shards need to send messages to at least $f+1$ nodes of the TM shard to ensure that it is received by at least one honest node\footnote{Clients may take a statistical view of availability. Given that fewer than $2/3$ of nodes in a shard are dishonest, sending the transaction to $\rho$ nodes will fail to reach an honest node with probability only $(1/3)^{\rho}$. Clients may also send messages sequentially to nodes, and only continue if they do not observe progress within some timeout to further reduce costs.}. Thus, as soon as the first honest node receives the message, the protocol is guaranteed to make progress.
If the TM is the client or any centralized party, it may act arbitrarily---but this does not stall the protocol because anyone can make the protocol progress by taking over at any time the role of the TM. This is possible because the TM does not act on the basis of any secrets, therefore anyone else can take over and complete the protocols. This `anyone' may be an honest node in a shard that wants to finally unlock an object (\eg, upon a timeout); or other clients that wish to use a locked object; or it may be an external service that has a job to periodically close open \byzcuit instances. \byzcuit ensures such parties may attempt to make progress asynchronously and concurrently safely.
Therefore, \byzcuit guarantees liveness as long as there is at least one honest entity in the system.

\para{Handling sequence number overflow}
An \attacker can try to exhaust the possible sequence numbers to make them overflow. The \attacker submits a pair $(T,s_T)$ such that the sequence number $s_T$ is just below the system overflow value; the sequence numbers associate with the inputs overflow upon the next updates, and the system would be again prone to the attacks described in \Cref{sec:byzcuit:chainspace-attack-1}\footnote{Note that this overflow vulnerability is common to every system relying on nonces chosen by the users, like Byzantine Quorum Systems~\cite{malkhi1998byzantine}.}. To mitigate this issue, shards define a \emph{clone} procedure allowing to update any of their objects to an unchanged version of themselves (\ie it creates a fresh copy of the object). This clone procedure effectively creates a new object with serial number $s_x'=0$. When shards detect that the serial number of one of their objects approaches the overflow value, they execute internally this clone procedure. The \attacker may exploit this mechanism to DoS the system, forcing shards to constantly update their objects; as a result, the target objects are not available to users. DoS countermeasures are out of scope, and are typically addressed by introducing transaction fees.

\subsection{Security against Replay Attacks}\label{sec:byzcuit:security}
We argue that \byzcuit is resilient to replay attacks. We recall the Honest Shard assumption from \chainspace (\Cref{sec:chainspace:threat-security}) under which \byzcuit operates, and assume that messages are authenticated as in traditional BFT protocols. 

\begin{SecAssumption}\label{def:byzcuit:honest-shard}(Honest Shard)
The adversary may create arbitrary smart contracts, and input arbitrary transactions into \byzcuit, however they are bound to only control up to $f$ faulty nodes in any shard. As a result, and to ensure the correctness and liveness properties of Byzantine consensus, each shard must have a size of at least $3f + 1$ nodes.
\end{SecAssumption}

Any message emitted by shards comes with at least $f+1$ signatures from nodes. Assuming honest shards, the \attacker can forge at most $f$ signatures, which is not enough to impersonate a shard. We use the Lemma below to prove the security of \byzcuit through this section.

\begin{lemma} \label{le:byzcuit:secure-phase-1}
Under Honest Shard assumption, no \attacker can obtain \prerecorded messages containing a fresh transaction sequence number $s_T$.
\end{lemma}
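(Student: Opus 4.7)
The plan is to prove the lemma by contradiction, combining two invariants of \byzcuit: the unforgeability of shard messages under the Honest Shard assumption, and the monotonic update of object sequence numbers dictated by the state machine in \Cref{fig:byzcuit:byzcuit-fsm}. I would first fix what ``fresh'' must mean in this context: a sequence number $s_T$ is fresh if it is at least as large as the current sequence number of every object involved in some transaction $T$ that honest shards have not yet processed, so that a replayed message could still be mistaken for a legitimate vote in a future run of the protocol.

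The first step is to rule out direct forgery. By Security Assumption~\ref{def:byzcuit:honest-shard}, each shard contains at most $f$ faulty nodes, and every valid shard message carries at least $f+1$ signatures from distinct nodes of that shard. Since an adversary controlling at most $f$ keys per shard cannot assemble $f+1$ signatures, any message that nodes accept as originating from a shard must have been endorsed by at least one honest node executing the \byzcuit protocol. Hence the attacker can only obtain prerecorded messages that correspond to some actual, completed protocol invocation with a concrete pair $(T, s_T)$.

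The second step is to show that any such prerecorded message is immediately stale with respect to the objects it mentions. Tracing the transitions in \Cref{fig:byzcuit:byzcuit-fsm}, an honest shard that emits \preacceptts or \preabortts for $(T, s_T)$ will, upon receiving the TM's \acceptts, inactivate its input objects and create fresh outputs (whose sequence numbers are initialised to $0$ and are bound to new object identifiers), or, upon \abortts, update the sequence number of each involved input to $s_T + 1$. Because the update happens inside the shard's BFT consensus, all honest nodes agree on the new sequence number, so any subsequent legitimate transaction $T'$ that touches one of these objects (or a dummy object inherited from this shard) must carry $s_{T'} \geq s_T + 1$. Consequently, the prerecorded message can never be spliced into a future valid protocol instance on the same objects: honest shards will reject it, since their local $s_{x_i}$ already exceeds $s_T$.

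The third step is to handle the remaining case where the attacker tries to obtain a message with an $s_T$ that is ``fresh'' in an absolute sense, i.e., one that no honest shard has yet emitted a response for. To elicit such a message, the attacker must submit $(T, s_T)$ to at least one honest shard; but then by the previous paragraph the attacker has simultaneously burned that value---the shard will have already incremented its sequence numbers past $s_T$ by the time the attacker could replay the message, so it is not fresh. The main obstacle in formalising this argument is being rigorous about the concurrent case: an attacker may try to race multiple instances of the protocol against the same objects to capture a signed message before the shard commits the sequence-number update. I would address this by appealing to the consensus inside the shard, which serialises all local actions so that the emission of a signed $(T, s_T)$ response and the corresponding bump of the object's sequence number are part of the same atomic decision, ruling out any window in which a message with an $s_T$ that is simultaneously signed and still fresh could be obtained.
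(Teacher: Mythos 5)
Your proof is correct and follows essentially the same route as the paper's: the core argument in both is that honest shards either monotonically bump the sequence numbers of all input and dummy objects past $s_T$ (on abort) or inactivate them (on accept), so any message the \attacker can record necessarily carries a stale $s_T$, with the unforgeability of $f+1$-signed shard messages ruling out fabrication. One small caveat on your final step: the sequence-number update is not atomic with the emission of \preacceptts --- it only happens upon processing the second-phase \abortts --- and the window in between is instead closed by the object locks, which prevent the shard from signing any other transaction on those objects until the instance concludes; this does not change your conclusion.
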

\begin{proof}
\small
The key idea protecting \byzcuit from these replay attacks is that the \attacker can only obtain \prerecorded messages associated with old transaction sequence numbers $s_T$. The transaction sequence number $s_T$ is fresh only if it is at least equal the maximum of the sequence number of all input and dummy objects of the transaction $T$. Shards update every input and dummy object sequence number upon aborting transactions in such a way that sequence numbers only increase. That is, after emitting \preabortts or \preacceptts , either the sequence number of all input and dummy objects of $T$ are updated to a value bigger than $s_T$ (in case of \preabortts), or the objects are inactivated which prevents any successive transaction to use them as input (in case of \preacceptts).
It is thus impossible for the adversary to hold a \prerecorded message for a fresh $s_T$; the only \prerecorded messages that the adversary can obtain contain sequence numbers smaller than $s_T$.
\end{proof}

\para{Security of the first phase of \byzcuit}
An \attacker may try to replay \preacceptts and \preabortts during the first phase of the protocol, similarly to the attacks described in Sections~\ref{sec:byzcuit:chainspace-attack-1}~and~\ref{sec:byzcuit:omniledger-attack-1}; the TM then aggregates these messages into either \acceptts or \abortts, and forwards them to the shards during the second phase of the protocol. 
Security Theorem~\ref{th:byzcuit:secure-phase-1} shows that \byzcuit detects that they originate from replayed messages and ignores them. Intuitively, the transaction sequence number $s_T$ acts as a monotonically increasing session identifier associated with the transaction $T$; the \attacker cannot obtain \prerecorded messages containing a fresh $s_T$. \byzcuit shards can then distinguish replayed messages (\ie, messages with old $s_T$) from the messages coming from the instance of the protocol that they are executing (\ie, messages with fresh $s_T$).

\begin{theorem} \label{th:byzcuit:secure-phase-1}
Under Honest Shard assumption, \byzcuit ignores \acceptts and \abortts messages issued from replayed \preacceptts and \preabortts.
\end{theorem}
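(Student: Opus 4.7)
The plan is to reduce Security Theorem~\ref{th:byzcuit:secure-phase-1} to Lemma~\ref{le:byzcuit:secure-phase-1} by tracing what happens inside an honest shard when the TM forwards an aggregated message derived from replayed \preacceptts or \preabortts. First, I would observe that the transaction sequence number $s_T$ carried by any \acceptts or \abortts is inherited verbatim from the underlying \preacceptts or \preabortts that the TM aggregated (the TM performs no rewriting, and message authentication prevents the adversary from tampering with signed fields). Consequently, if at least one of the aggregated pre-messages is a replay, then by Lemma~\ref{le:byzcuit:secure-phase-1} the $s_T$ attached to the resulting \acceptts or \abortts is necessarily stale, i.e.\ strictly smaller than the maximum of the current sequence numbers $s_{x_i}$ and $s_{d_j}$ of the input and dummy objects of $T$ maintained by any honest input shard.

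Next, I would analyse the two verification checks an honest shard performs upon receiving \acceptts or \abortts, as described in step \ding{203} of the protocol: (i) a cache lookup for the pair $(T, s_T)$, and (ii) a well-formedness check ensuring $s_T$ equals the maximum of the sequence numbers of the transaction's input and dummy objects at the time the shard locked them. The core argument is a case split on whether the shard has an entry for $(T, s_T)$ in its local cache. In the case where $(T, s_T)$ is not cached, check (i) fails and the message is ignored by construction. In the case where $(T, s_T)$ was cached at some earlier point, I would use the monotonicity argument from the proof of Lemma~\ref{le:byzcuit:secure-phase-1}: once the shard completed the earlier instance for $(T, s_T)$, it either inactivated the inputs (so any replay is rejected since the inputs no longer exist) or updated every input and dummy sequence number to at least $s_T + 1$, and deleted $(T, s_T)$ from the cache. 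Either the deletion makes check (i) fail, or, if another instance happened to re-cache the same $(T, s_T')$ with $s_T' > s_T$, check (ii) fails because the stale $s_T$ no longer matches the maximum of the updated object sequence numbers.

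Finally, I would tie the two checks together to conclude that in every branch an honest shard discards the message, matching the statement of the theorem. I would also note the reliance on the Honest Shard assumption: it is precisely this assumption that guarantees, via authenticated messaging, that the $s_T$ field cannot be forged, so the adversary's only lever is to reuse old signed pre-messages---exactly the regime controlled by Lemma~\ref{le:byzcuit:secure-phase-1}.

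The main obstacle I anticipate is the concurrency subtlety in the second case: an honest shard may, between deleting $(T, s_T)$ and receiving the replayed aggregate, accept a fresh instance $(T, s_T')$ for the very same transaction identifier $T$ but with $s_T' \neq s_T$. The argument must rule out that the replayed aggregate is mistaken for a message of this new instance; the handle for this is that the cache is keyed on the pair $(T, s_T)$ rather than on $T$ alone, and $s_T' > s_T$ by monotonicity, so the stale $s_T$ from the replay cannot collide with any live cache entry. Making this precise---ideally by an invariant stating that, at any shard, all cached entries for a given transaction identifier use sequence numbers strictly larger than every previously completed instance---is the step I expect to require the most care.
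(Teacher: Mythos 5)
Your proposal is correct in substance and leans on the same key ingredient as the paper---Lemma~\ref{le:byzcuit:secure-phase-1}, i.e.\ that replayed pre-messages can only carry stale transaction sequence numbers---but it organizes the argument differently. The paper fixes the legitimate outcome of the live instance $(T,s_T)$ and shows the attacker cannot \emph{assemble} the opposite aggregate: to flip an abort into an \acceptts the attacker would need $n$ \preacceptts carrying the fresh $s_T$ but holds at most $n-1$ of them (replays being stale), and to flip an accept into an \abortts they would need at least one \preabortts carrying the locked fresh $s_T$ but hold none. Your proof instead argues from the receiver's side: any aggregate containing a replayed constituent is labelled with a stale $s_T$, and a stale $s_T$ can never match a live cache entry keyed on the pair $(T,s_T)$, so the shard discards it. The receiver-side view buys a cleaner statement of the monotonicity invariant you identify at the end; the paper's counting view makes explicit why mixing fresh and replayed constituents does not help the attacker.

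That mixing case is the one step where your write-up needs more care. Your claim that the aggregate's $s_T$ is ``inherited verbatim'' and hence ``necessarily stale'' whenever one constituent is a replay silently assumes that all constituents of a valid aggregate carry the same $s_T$. An attacker could instead label the aggregate with the fresh $s_T$ of the live instance and smuggle in a stale replayed \preabortts as evidence; what defeats this is that shards verify $s_T$ by inspecting the messages signed by each shard, so a constituent signed for an old $(T,s_T)$ is not valid evidence for the fresh one---which collapses exactly to the paper's counting argument (the only valid constituents for the fresh $s_T$ are the current run's messages, and they are insufficient in number or in kind). Relatedly, note that the protocol enforces freshness of $s_T$ only when accepting, not when aborting, so for the \abortts case your ``check (ii)'' is unavailable and the argument must rest entirely on the cache being keyed on the pair $(T,s_T)$---which the invariant in your closing paragraph does supply.
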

\begin{proof}
\small
\Cref{fig:byzcuit:byzcuit-fsm} shows that once \byzcuit locks objects for a particular pair $(T,s_T)$, the protocol can only progress toward \acceptts or \abortts; \ie shards can either accept or abort the transaction $T$. The \attacker aims to trick one or more shards to incorrectly accept or abort $T$ by injecting \prerecorded messages during the first phase of \byzcuit; we show that the \attacker fails in every possible scenario. 

Suppose transaction $T$ should abort (the TM outputs \abortts), but the \attacker tries to trick some shards to accept the transaction. \Cref{fig:byzcuit:byzcuit-fsm} shows that the \attacker can only succeed the attack if they gather \acceptts containing a fresh transaction sequence number $s_T$. Lemma~\ref{le:byzcuit:secure-phase-1} states that no \attacker can obtain \prerecorded messages over a fresh transaction sequence number $s_T$; therefore the only messages available to the adversary at this point of the protocol are (at most) $n-1$ \preacceptts and (at most) $n$ \abortts, where $n$ is the number of concerned shards. This is not enough to form an \acceptts message with a fresh transaction sequence number $s_T$ (which is composed of $n$ \preacceptts); therefore the \attacker cannot trick any shard to accept the transaction.

Suppose transaction $T$ should be accepted (the TM outputs \acceptts with a fresh $S_T$), but the \attacker tries to trick some shards to abort the transaction. \Cref{fig:byzcuit:byzcuit-fsm} show that \byzcuit does not require a fresh transaction sequence number $s_T$ to abort transactions (the freshness of $s_T$ is only enforced upon accepting a transaction); but shards locked the input and dummy objects of the transaction for the pair $(T,s_T)$ (with fresh $s_T$), so the attacker needs to gather \abortts containing the same transaction sequence number $s_T$ locked by shards.  Lemma~\ref{le:byzcuit:secure-phase-1} shows that the \attacker cannot obtain \prerecorded messages over fresh $s_T$; therefore the only messages available to the adversary containing the (fresh) $s_T$ locked by shards at this point of the protocol are $n$ \preacceptts. This is not enough to form an \abortts message (which is composed of at least one \preabortts); therefore the \attacker cannot trick any shard to abort the transaction.
\end{proof}

\para{Security of the second phase of \byzcuit}
An \attacker may try to replay \acceptts and \abortts messages during the second phase of the protocol, similarly to the attacks described in Sections~\ref{sec:byzcuit:chainspace-attack-2}~and~\ref{sec:byzcuit:omniledger-attack-2}.
Security Theorem~\ref{th:byzcuit:secure-phase-2} shows that \byzcuit ignores those replayed messages. Intuitively, these attacks target shards acting only as output shards (and not also as input shards) and exploit the fact that they are only involved in the second phase of the protocol, and therefore have no knowledge of the transaction context (to determine freshness) that is available to the input shards. \byzcuit is resilient to these replay attacks as it is designed in such a way that there are no shards that act only as output shards; all output shards are forced to also become input shards, by introducing dummy objects if they do not manage any input objects; this prevents the attacks by removing the attack target.

\begin{theorem} \label{th:byzcuit:secure-phase-2}
Under Honest Shard assumption, \byzcuit ignores replayed \acceptts and \abortts messages.
\end{theorem}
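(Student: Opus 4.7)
The plan is to exploit two structural features of \byzcuit established earlier in this chapter: every concerned shard participates in \emph{both} phases of the protocol (via the dummy object mechanism), and the transaction sequence number $s_T$ serves as a per-session identifier that shards track in local state. First, I would observe that the attacks from Sections~\ref{sec:byzcuit:chainspace-attack-2} and~\ref{sec:byzcuit:omniledger-attack-2} succeed only against shards that act purely as output shards, because such shards see only the second-phase message and thus lack any context to detect a replay. In \byzcuit, by construction, no such shard exists: any shard that creates an output object also consumes either a real input or a dummy object, so it witnesses the first phase, performs the local check on $(T,s_T)$, and caches this pair before transitioning to \locked.

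Next, I would trace the life cycle of a cached entry through the state machine of \Cref{fig:byzcuit:byzcuit-fsm}. When a shard processes a well-formed \acceptts it inactivates its inputs and creates outputs; when it processes a well-formed \abortts it unlocks inputs and raises the sequence numbers of each input and dummy object to $s_T+1$. In either case it deletes $(T,s_T)$ from its local cache, and the updated state of the objects (either inactive, or with a strictly larger sequence number) prevents the same $(T,s_T)$ from ever reappearing legitimately on that shard.

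The core step is then a simple case analysis on the replayed message. By the cache-check rule of \byzcuit, a shard ignores any \acceptts or \abortts whose $(T,s_T)$ is not present in its cache. Either the replayed message targets a protocol instance that has already terminated on that shard, in which case the cache entry has been deleted and the message is discarded; or it targets a still-pending instance, in which case the replay is a duplicate of a message the shard is legitimately processing and has no additional effect (each transition in \Cref{fig:byzcuit:byzcuit-fsm} fires at most once per cached $(T,s_T)$).

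The hard part will be ruling out the adversarial attempt to \emph{forge} (rather than straightforwardly replay) a second-phase message bearing a currently-cached $(T,s_T)$. Here I would invoke Lemma~\ref{le:byzcuit:secure-phase-1}: producing a valid \acceptts or \abortts requires aggregating $n$ authentic \preacceptts or at least one authentic \preabortts over that exact $s_T$, and Lemma~\ref{le:byzcuit:secure-phase-1} guarantees that no pre-recorded first-phase messages over a fresh $s_T$ can be obtained. Under the Honest Shard assumption signatures cannot be forged, so the only way for the adversary to assemble such aggregated messages is to observe the honest first-phase execution already in progress for $(T,s_T)$—which yields at most the legitimate message and therefore no effective replay against a live cache entry.
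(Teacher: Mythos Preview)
Your proposal is correct and follows essentially the same approach as the paper: the cache-check rule ensures that second-phase messages are acted upon only while $(T,s_T)$ is live, and during that window the situation reduces to the first-phase security argument already handled by Lemma~\ref{le:byzcuit:secure-phase-1} and Security Theorem~\ref{th:byzcuit:secure-phase-1}. The paper's proof is more terse---it simply observes the cache lifecycle and then defers wholesale to Security Theorem~\ref{th:byzcuit:secure-phase-1}---whereas you spell out the dummy-object mechanism, the two-case analysis (terminated vs.\ pending instance), and the explicit appeal to Lemma~\ref{le:byzcuit:secure-phase-1} for the forgery case; these are exactly the ingredients the paper's reduction relies on implicitly.
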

\begin{proof}
\small
\Cref{fig:byzcuit:byzcuit-fsm} shows that shards only act upon \acceptts and \abortts messages if they have the pair $(T,s_T)$ saved in their local cache\footnote{Contrarily to \sbac and \atomix, all \byzcuit shards have the pair $(T,s_T)$ in their local cache after as they all participate to the first phase of the protocol.}. Shards save a pair $(T,s_T)$ in their local cache upon emitting \preacceptts or \preabortts, and delete it at the end of the protocol; therefore the only attack windows where the adversary can replay \acceptts and \abortts messages is while the transaction $T$ (associated with $s_T$) is being processed by the second phase of \byzcuit. This forces the attacker to operates under the same conditions as Security Theorem~\ref{th:byzcuit:secure-phase-1}, which is shown secure above.
\end{proof}
%


\subsection{\byzcuit Security \& Correctness}
We show that \byzcuit guarantees liveness, consistency, and validity.

\begin{theorem}
(Liveness) Under Honest Shards assumption, a transaction $T$ that is proposed to at least one honest concerned node, eventually results in either being committed or aborted, namely all parties deciding \acceptts or \abortts.
\end{theorem}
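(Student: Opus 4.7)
The plan is to trace the progress of a transaction $T$ through each of the four steps of \byzcuit (submission, first phase, TM aggregation, second phase) and argue that each step terminates in bounded time under the Honest Shards assumption, so that eventually every honest party reaches an \acceptts or \abortts decision. First, I would argue that if $T$ reaches at least one honest concerned node, then it reaches every honest node of every concerned shard: within a shard, intra-shard BFT (with at most $f$ faulty out of $3f+1$) guarantees agreement on the transaction set, so any honest node that receives $T$ triggers the whole honest quorum of that shard to process it; across shards, the client (or any honest forwarder) disseminates $T$ to $f+1$ nodes per concerned shard, which under Security Assumption~\ref{def:byzcuit:honest-shard} is guaranteed to hit at least one honest node in each such shard.

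Second, I would show that the first phase always terminates at each honest shard with the emission of a signed \preacceptts or \preabortts message over the pair $(T, s_T)$. This follows because the local checks (well-formedness, activity of the input and dummy objects it manages, $s_T$ being at least the local object sequence numbers, checker validity) are pure deterministic predicates on which the $\ge 2f+1$ honest nodes agree via intra-shard consensus. A shard that cannot accept simply emits \preabortts, so no honest shard stalls at this point, and its message carries $f+1$ signatures and hence cannot be forged.

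Third, and this is the step I expect to be the main obstacle, I need to show that the TM phase terminates even when the designated TM is Byzantine, unresponsive, or crashes. Here I would invoke the observation in the TM discussion that the TM operates on no secrets and merely deterministically aggregates first-phase messages into either \acceptts or \abortts. In the happy case the TM is a designated shard, and by Security Assumption~\ref{def:byzcuit:honest-shard} at least one of its $3f+1$ nodes is honest; by contacting $f+1$ of them the concerned shards are guaranteed to reach an honest aggregator. In the unhappy case, I would invoke the protocol's fallback: any honest party---an honest node of a shard whose object is still locked (triggered by a timeout), or any external observer---can take over the TM role, re-collect the signed first-phase messages that are already present on the participating shards' hash-chains, and broadcast the aggregated decision. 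The subtle part is ruling out a pathological race between two concurrent TM takeovers, which I would discharge by appealing to determinism of aggregation: any two aggregators that collect the same set of signed first-phase messages emit the same second-phase decision, and honest shards ignore second-phase messages whose $s_T$ does not match their cached value.

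Finally, I would argue that once \acceptts or \abortts is produced, honest shards react deterministically per the state machine of \Cref{fig:byzcuit:byzcuit-fsm}: they verify the message against the cached $(T, s_T)$, check that $s_T$ is correctly formed, and either inactivate inputs and create outputs, or unlock inputs and bump their sequence numbers; both transitions are local and complete in bounded time, after which the shard has decided. Composing the four bounded-time steps yields global termination with every honest party deciding \acceptts or \abortts, establishing liveness.
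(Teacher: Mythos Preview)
Your proposal is correct and follows essentially the same approach as the paper's proof: both arguments trace the transaction through the protocol phases, relying on the liveness of the underlying intra-shard BFT to guarantee that each shard eventually emits a first-phase message, and on the TM (or its fallback) to aggregate and disseminate the decision. Your treatment is considerably more detailed than the paper's brief sketch---in particular your explicit handling of a Byzantine or unresponsive TM and the determinism argument ruling out races between concurrent takeovers goes beyond what the paper spells out---but the underlying decomposition and the key ingredients (BFT liveness, messages carrying $T$ so that any honest party can drive progress, contacting $f{+}1$ nodes to guarantee one honest recipient) are the same.
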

\begin{proof}
\small
We rely on the liveness properties of the Byzantine agreement (shards with only $f$ nodes eventually reach consensus on a sequence), and the broadcast from nodes of shards to all other nodes of shards, channelled through the Transaction Manager. Assuming $T$ has been given to an honest node, it will be sequenced withing an honest shard BFT sequence, and thus a \preacceptts or \preabortts will be sent from the $2f+1$ honest nodes of this shard, aggregated into \acceptts or \abortts, and sent to the $f+1$ nodes of the other concerned shards. Upon receiving these messages the honest nodes from other shards will process the transaction within their shards, and the BFT will eventually sequence it. Thus the user will eventually receive a decision from at least $f+1$ nodes of a shard.
\end{proof}
\begin{theorem}
(Consistency) Under Honest Shards assumption, no two conflicting transactions, namely transactions sharing the same input will be committed. Furthermore, a sequential executions for all transactions exists.
\end{theorem}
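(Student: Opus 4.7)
The plan is to combine the intra-shard BFT ordering with the replay resilience established in Security Theorems~\ref{th:byzcuit:secure-phase-1} and~\ref{th:byzcuit:secure-phase-2}. First I would handle the conflict-freedom part: consider two transactions $T$ and $T'$ sharing an input or dummy object $x$ managed by shard $\phi(x)$. By the Honest Shards assumption, $\phi(x)$ has at most $f$ Byzantine nodes, so its local BFT totally orders all transactions touching $x$; without loss of generality $T$ is sequenced first. According to \Cref{fig:byzcuit:byzcuit-fsm}, after $\phi(x)$ processes $T$ the object $x$ is either \locked for the pair $(T,s_T)$ or active but with a strictly increased sequence number.

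I would then split on these two sub-cases. If $x$ becomes \locked, the state machine forbids $\phi(x)$ from emitting a fresh \preaccepttt for $T'$ until $T$ is resolved; if $T$ eventually commits, $x$ becomes \inactiveObj and no pre-accept for $T'$ on this input can ever be issued; if $T$ aborts, $s_x$ is bumped strictly above $s_T$, so any earlier \preaccepttt that the adversary might have harvested for $T'$ carries a stale sequence number and fails the freshness checks performed by the TM and by the other shards. Because the TM requires a \preacceptts from every concerned shard in order to broadcast \acceptts, the absence of a fresh \preaccepttt from $\phi(x)$ implies that $T'$ cannot yield a valid \accepttt in the second phase; Security Theorem~\ref{th:byzcuit:secure-phase-1} rules out synthesising that message through replay, and Security Theorem~\ref{th:byzcuit:secure-phase-2} rules out injecting \accepttt directly at the output shards. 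Hence at most one of $T$ and $T'$ is committed.

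For the existence of a sequential execution I would merge the per-shard BFT orders into a single global schedule. Each honest shard produces a total order on the transactions touching its objects, and the Hash-DAG property (\Cref{th:chainspace:thm1}) guarantees that output-object identifiers are globally fresh, so the directed graph whose edges go from each transaction to the transactions consuming its outputs is acyclic. A topological sort of this graph that is consistent with every shard's local order gives the desired serial schedule, and re-executing the committed transactions in that order against the initial active set reproduces the commit/abort decisions observed under concurrent execution. The main obstacle will be showing that such a consistent sort exists, i.e., that no two shards in $\Phi(T) \cap \Phi(T')$ disagree on the relative positioning of commonly-touched transactions; I would resolve this by appealing to the signed transaction sequence numbers that every concerned shard caches in the first phase, and by using their monotonicity together with the per-shard BFT order to rule out cycles across shards.
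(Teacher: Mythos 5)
Your proposal is correct and rests on the same core argument as the paper's proof: two conflicting transactions must share the honest shard managing the common input, that shard's intra-shard BFT totally orders them, the first one locks (and, on commit, inactivates) the object, and any subsequent \preacceptts attempt for the conflicting transaction yields \preabortts, so the TM can never assemble a full set of \preacceptts for both. Your additional appeals to Security Theorems~\ref{th:byzcuit:secure-phase-1} and~\ref{th:byzcuit:secure-phase-2} to exclude replayed messages, and your topological-sort construction for the serial schedule (which the paper's proof only gestures at), are sound elaborations rather than a different route --- and indeed the sort exists trivially once conflict-freedom is established, since committed transactions share no inputs and the remaining output-to-input dependencies form a DAG by \Cref{th:chainspace:thm1}.
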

\begin{proof}
\small
A transaction is accepted only if some nodes receive \acceptts, which presupposes all shards have provided enough evidence to conclude \preacceptts for each of them. Two conflicting transaction, sharing an input, must share a shard of at least $3f+1$ concerned nodes for the common object---with at most $f$ of them being malicious. Without loss of generality upon receiving the \preacceptts message for the first transaction, this shard will sequence it, and the honest nodes will emit messages for all---and will lock this object until the two phase protocol concludes. Any subsequent attempt to \preacceptts for a conflicting $T'$ will result in a \preabortts and cannot yield a accept, if all other shards are honest majority too. After completion of the first \acceptts the shard removes the object from the active set, and thus subsequent $T'$ would also lead to \preabortts. Thus there is no path in the chain of possible interleavings of the executions of two conflicting transactions that leads to them both being committed.
\end{proof}

\begin{theorem}
(Validity) Under Honest Shards assumption, a transaction may only be accepted if it is valid according to the smart contract (or application) logic.
\end{theorem}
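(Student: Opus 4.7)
The plan is to show validity by tracing the path a transaction must take to be accepted and arguing that at each step an honest majority within every concerned shard enforces the smart contract checker rules of \Cref{fig:chainspace:rules}. First, I would unfold the definition of ``accepted'': by the \byzcuit protocol (\Cref{fig:byzcuit:byzcuit-fsm}), a transaction $T$ is only accepted by a shard upon delivery of a well-formed \acceptts message from the TM, and such a message is only produced when the TM has aggregated \preacceptts messages from \emph{every} concerned shard for the same pair $(T, s_T)$. Hence it suffices to argue that no concerned shard emits \preacceptts for an invalid $T$.

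Second, I would invoke the Honest Shard assumption (\Cref{def:byzcuit:honest-shard}): each shard contains at most $f$ Byzantine nodes out of $3f+1$, so the intra-shard BFT protocol can only output a \preacceptts message if at least $f+1$ honest nodes agree to it. Each such honest node applies the smart contract checker $c.v$ locally before voting, following the \textit{Check} rule of \Cref{fig:chainspace:rules}; it also verifies that the inputs and referenced objects it manages are active and not locked, and that $s_T$ is at least the local sequence number. Therefore an invalid trace pertaining to an object that lives in a shard cannot receive the shard's \preacceptts.

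Third, I would appeal to Security Theorem~\ref{thm:chainspace:partition} (the partition theorem) to close the gap between \emph{local} and \emph{global} validity: for every contract $c$ appearing in any trace of $T$, at least one concerned shard manages an object of a type defined by $c$, and thus runs the corresponding checker. Combined with the requirement that all concerned shards send \preacceptts (including the output-only shards, which \byzcuit forces to participate via dummy objects), this means that every checker invoked by $T$ has been evaluated to $\mathsf{true}$ by the honest majority of some shard. The composition of these local verdicts yields the global judgement that $T$ is valid.

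The main obstacle I expect is the composition step between local and global validity, since the \textit{Check} rule is recursively defined over dependencies and over cross-contract traces; care is needed to ensure the dummy-object mechanism introduced by \byzcuit does not widen the set of acceptable transactions beyond what the checkers sanction. Specifically, I would need to verify that the implicit dummy input added on behalf of an output-only shard does not bypass any checker obligation (the dummy object's checker must itself accept its consumption and re-creation), and that the transaction sequence number $s_T$ machinery is orthogonal to validity, only governing freshness. Once these points are discharged, the contrapositive argument---an invalid $T$ forces at least one honest shard to emit \preabortts, which the TM aggregates into \abortts---completes the proof.
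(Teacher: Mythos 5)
Your proposal is correct and follows essentially the same route as the paper's own proof: acceptance requires \preacceptts from every concerned shard, honest majorities within shards enforce the checker and active-object rules, and the partition property (Security Theorem~\ref{thm:chainspace:partition}) guarantees every contract appearing in the transaction is checked by at least one honest shard. Your added care about dummy objects and the orthogonality of $s_T$ to validity is a reasonable refinement but does not change the argument's structure.
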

\begin{proof}
\small
A transaction is committed only if some nodes conclude that \acceptts, which presupposes all shards have provided enough evidence to conclude \preacceptts for each of them. The concerned nodes include at least one shard per input object for the transaction; for any contract logic represented in the transaction, at least one of those shards will be managing object from that contract. Each shard checks the validity rules for the objects they manage (ensuring they are active) and the contracts those objects are part of (ensuring the transaction is valid with respect to the contract logic) in order to \preacceptts. Thus if all shards say \preacceptts to conclude that \acceptts, all object have been checked as active, and all the contract calls within the transaction have been checked by at least one shard---whose decision is honest due to at most $f$ faulty nodes. If even a single object is inactive or locked, or a single trace for a contract fails to check, then the honest nodes in the shard will emit \preabortts, and the final decision will be \abortts.
\end{proof}
\section{Implementation \& Evaluation} \label{sec:byzcuit:implementation}
We implement a prototype of \byzcuit (Section~\ref{sec:byzcuit:byzcuit}) in Java and evaluate its performance and scalability. To analyze the overhead introduced by our replay attack defenses (\ie, with message sequence numbers and dummy objects), we compare \byzcuit with replay defenses (\byzcuitreplay) with the baseline of \byzcuit without any replay attack defenses (\byzcuitbaseline). 
Our implementation of \byzcuit is a fork of an early \chainspace code~\cite{chainspace}, and is released as an open-source project\footnote{\url{https://github.com/sheharbano/byzcuit}}.
For BFT consensus, we use the \bftsmart~\cite{bftsmart} Java library (based on PBFT~\cite{pbft}), which is one of the very few maintained open source BFT libraries. End users run a client to communicate with \byzcuit nodes, which sends transactions according to the \bftsmart protocol. The \byzcuit client also acts as the Transaction Manager (TM) and is responsible for driving the cross-shard consensus.
We evaluate the performance and scalability of our \byzcuit implementation through deployments on Amazon EC2 containers. 
We launch up to 96 instances for shard nodes and 96 instances for clients on \emph{t2.medium} virtual machines, each containing 8 GB of RAM on 2 virtual CPUs and running GNU/Linux Debian 8.1. We use 4 nodes per shard. 
Each measured data point corresponds to 10 runs represented by error bars. The error bars in \Cref{fig:byzcuit:tpsVSshards} and \Cref{fig:byzcuit:tpsVSdummy} show the average and standard deviation, and the error bars in \Cref{fig:byzcuit:latencyVStps} show the median and the 75th and 25th percentiles.

\begin{figure}[t]
\centering
\includegraphics[width=.7\textwidth]{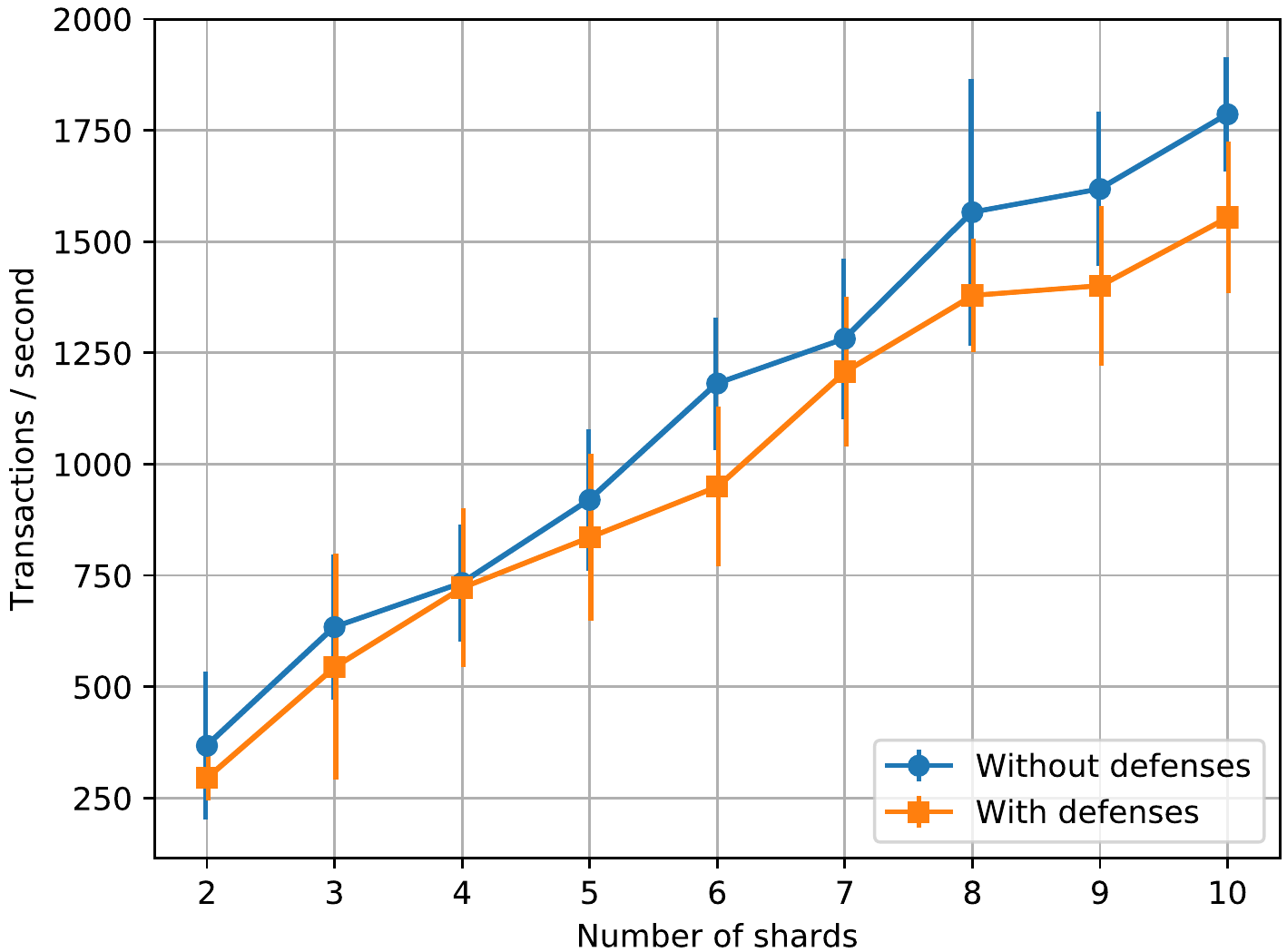}
\caption[\byzcuit throughput \vs number of shards.]{\footnotesize The effect of the number of shards on throughput. Each transaction has 2 input objects and 5 output objects, both chosen randomly from shards.}
\label{fig:byzcuit:tpsVSshards}
\end{figure}
\begin{figure}[t]
\centering
\includegraphics[width=.7\textwidth]{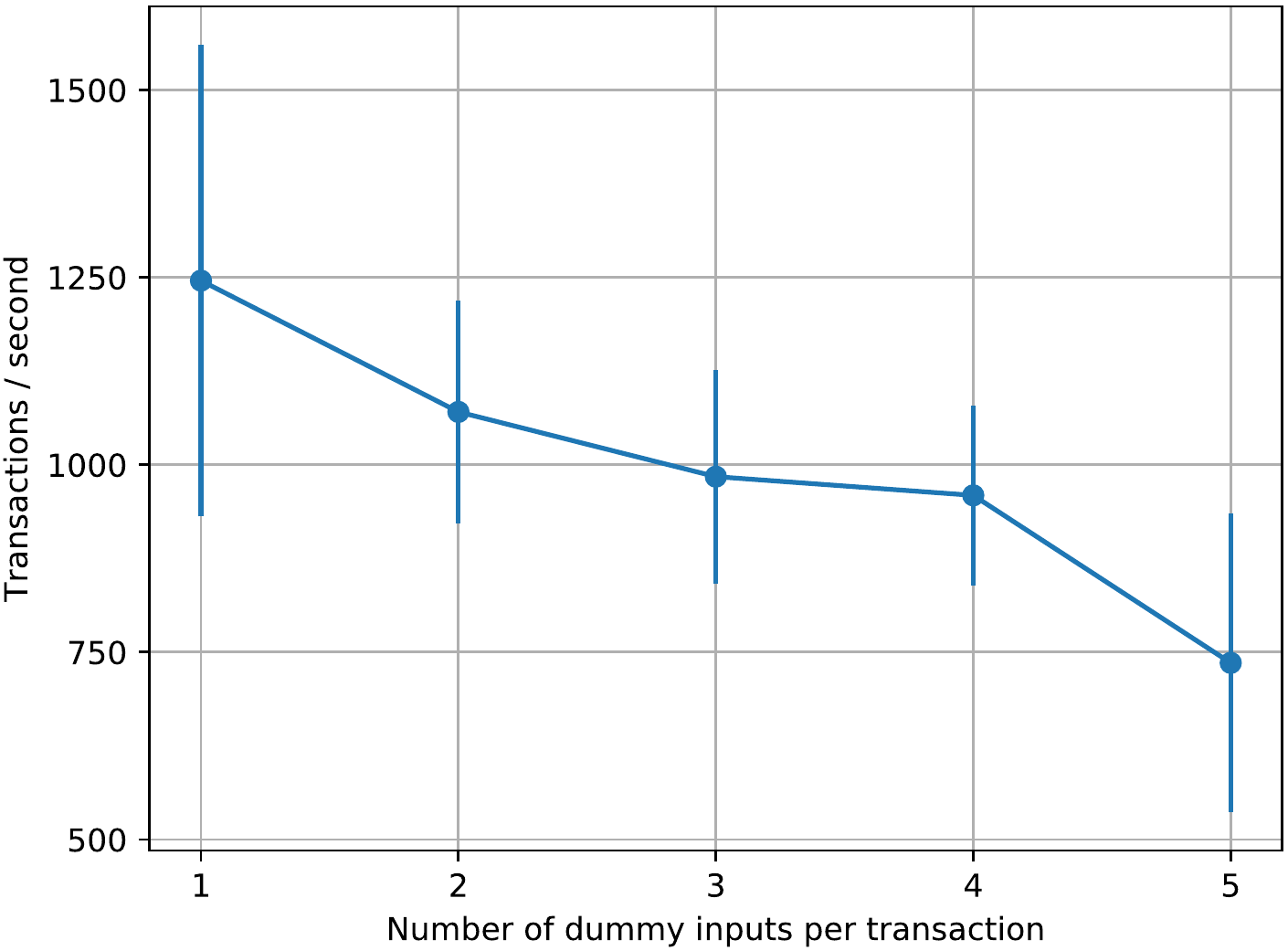}
\caption[\byzcuit throughput \vs number of dummy objects.]{\footnotesize Decrease of \byzcuit throughput with the number of dummy objects. Each transaction has 1 input object, and up to 5 dummy objects randomly selected from unique non-input shards. 6 shards are used.}
\label{fig:byzcuit:tpsVSdummy}
\end{figure}
\begin{figure}[t]
\centering
\includegraphics[width=.7\textwidth]{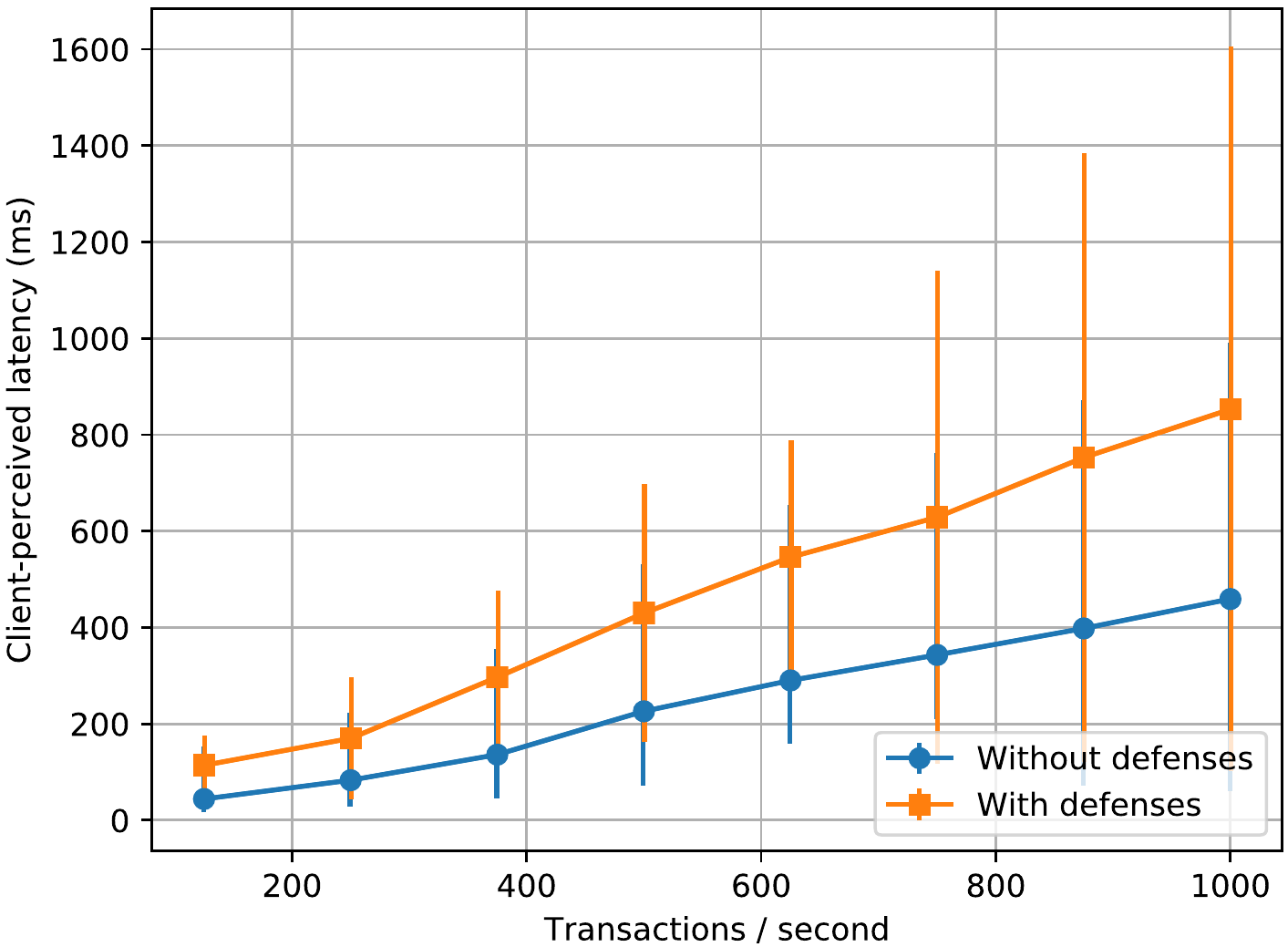}
\caption[\byzcuit client-perceived latency.]{\footnotesize Client-perceived latency vs. system load (\ie transactions received per second), for 6 shards with 2 inputs and 5 outputs per transaction (both chosen randomly from shards).}
\label{fig:byzcuit:latencyVStps}
\end{figure}

\mypara{Throughput and scalability}
\Cref{fig:byzcuit:tpsVSshards} shows the throughput of \byzcuit (the number of transactions processed per second, tps) corresponding to an increasing number of shards. 
Each transaction has 2 input objects and 5 output objects, chosen randomly from shards. We test transactions with 5 output objects for a fair evaluation of \byzcuit's replay defenses by triggering the creation of dummy objects (\ie, a large number of output objects and a small number of input objects implies a higher probability of output-only shards getting selected, triggering the creation of dummy objects). We find that \byzcuitreplay has a throughput of 260 tps for 2 shards, and linearly scales with the addition of more shards achieving up to 1,550 tps for 10 shards\footnote{Expanding the testbed to evaluate \byzcuit with more than 10 shards is unpractical for prototyping: the number of machines required for the evaluation grows fast (10 shards with 4 nodes each already requires 40 different AWS instances), and quickly becomes costly.}.
As expected, the throughput of \byzcuitreplay is lower than \byzcuitbaseline by a somewhat constant factor ranging from 20--200 tps, but still increases linearly. This is expected because the creation of dummy objects in \byzcuitreplay leads to a higher number of shards processing the same transaction compared to \byzcuitbaseline, leading to lower concurrency and lower throughput. 


\mypara{The effect of dummy objects on throughput}
We previously observed that dummy objects reduce the throughput of \byzcuitreplay with respect to \byzcuitbaseline. \Cref{fig:byzcuit:tpsVSdummy} shows the extent of throughput degradation due to dummy objects. We submit specially crafted transactions to 6 shards, such that each transaction has 1 input object, and we vary the number of dummy objects from 1--5 selected from unique shards, resulting in a corresponding decrease in concurrency because as many shards end up processing the transaction. For example, 2 dummy objects means that 3 shards process the transaction (1 input shard, and 2 more shards corresponding to the dummy objects).
As expected, the throughput decreases by 20--250 tps with the addition of each dummy object, and reaches 750 tps when all 6 shards handle all transactions (which is the worst-case scenario).

\mypara{Client-perceived latency}
\Cref{fig:byzcuit:latencyVStps} shows the client-perceived latency---the time from when a client submits a transaction, until it receives a decision from \byzcuit about whether the transaction has been committed---under varying system loads (expressed as transactions submitted to \byzcuit per second). We submit a total of 1,200 transactions at 200--1,000 transactions per second to \byzcuit with 6 shards. Each transaction has 2 inputs objects and 5 output objects, both chosen randomly from shards. When the system is experiencing a load of up to 1,000 tps, clients hear back about their transactions in less than a second on average, even with our replay defenses in place.

\section{Chapter Summary} \label{sec:byzcuit:conclusion}
This chapter presented the first replay attacks against cross-shard consensus protocols in sharded distributed ledgers. These attacks affect both shard-driven and client-driven consensus protocols, and allow attackers to double-spend objects with minimal effort. The attacker can act independently without colluding with any nodes, and succeed even if all nodes are honest. 
While addressing these attacks seems like an implementation detail, their many variants illustrate that a fundamental re-think of cross-shard commit protocols is required to protect against them.
We developed \byzcuit, a new cross-shard consensus protocol merging features from \shardled and \clientled consensus protocols, and withstanding replay attacks. \byzcuit can be seen as unifying \atomix and \sbac, into a protocol that is efficient and secure. We implemented and evaluated it on a real cloud-based testbed, showing that it can process over 1,550 tps for 10 shards, and that its capacity scales linearly with the number of shards.

\chapter{\fastpay: High-Performance Byzantine Fault Tolerant Settlement} 
\label{fastpay}


%
\begin{figure}[t]
\centering
\includegraphics[width=.7\textwidth]{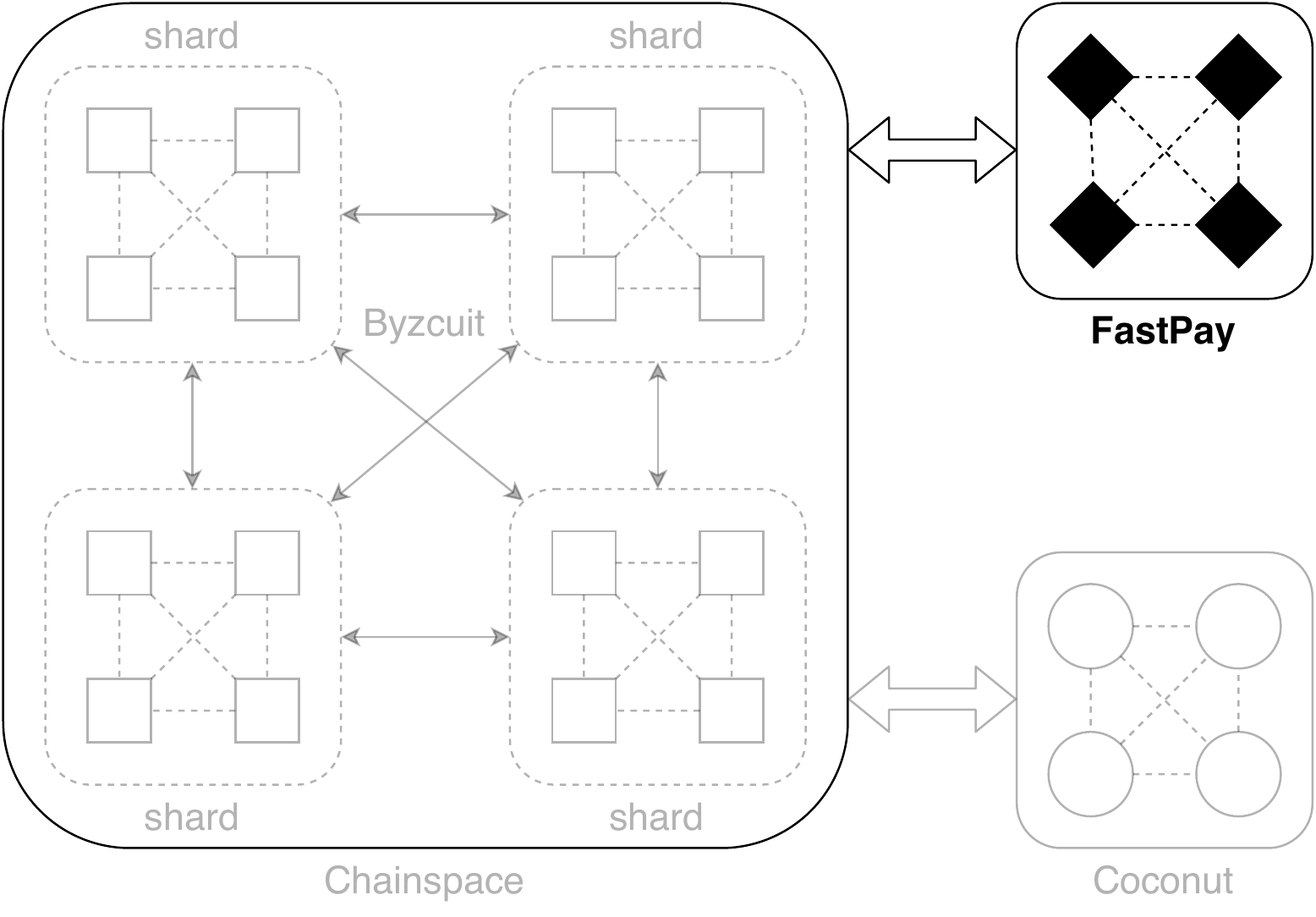}
\caption[Global overview: \fastpay.]{Example of instantiation of \fastpay with four authorities. Authorities are represented by black diamonds and the dashed lines connecting them to each other indicate that they are part of the same committee.}
\label{fig:fastpay:global-overview}
\end{figure}

Chapters~\ref{chainspace} and~\ref{byzcuit} demonstrate that sharded blockchains can be solid backbone systems that can scale to accommodate high throughput. However, their latency bottleneck remains their underlying consensus protocol which is too high for practical retail payments at physical points of sale. \Cref{byzcuit} shows that \byzcuit can process transactions in a few seconds but retail payments often require sub-second latency.
This chapter overcomes this limitation through a side-infrastructure called \fastpay; \Cref{fig:fastpay:global-overview} places \fastpay in the big picture of this thesis. \fastpay is a distributed settlement system for pre-funded payments that can be used as a financial side-infrastructure to support low-latency retail payments of a primary system such as \chainspace; it achieves extremely low latency by foregoing the expenses of consensus. The black diamonds in \Cref{fig:fastpay:global-overview} represent \fastpay nodes and the white bidirectional arrow illustrates the communication between \fastpay and \chainspace allowing users to move funds across those systems.

Settlement systems and Real-Time Gross Settlement Systems (RTGS)~\cite{rtgs-2} constitute the most common approach to financial payments in closed banking networks, that is, between reputable institutions. In contrast, blockchain platforms have proposed a radically different paradigm, allowing account holders to interact directly with an online, yet highly secure, distributed ledger. Blockchain approaches aim to enable new use cases such as personal e-wallets or private transactions, and generally provide ecosystems more favorable to users. However, until now, such open, distributed settlement solutions have come at a high performance cost and questionable scalability compared to traditional, closed RTGS systems.

\fastpay is a Byzantine Fault Tolerant (BFT) real-time gross settlement (RTGS) system. It enables authorities to jointly maintain account balances and settle pre-funded retail payments between accounts. It supports extremely low-latency confirmation (sub-second) of eventual transaction finality, appropriate for physical point-of-sale payments. It also provides extremely high capacity, comparable with peak retail card network volumes, while ensuring gross settlement in real-time. \fastpay eliminates counter-party and credit risks of net settlement and removes the need for intermediate banks, and complex financial contracts between them, to absorb these risks. \fastpay can accommodate arbitrary capacities through efficient sharding architectures at each authority. Unlike any traditional RTGS, and more like permissioned blockchains, \fastpay can tolerate up to $f$ Byzantine failures out of a total of $3f+1$ authorities, and retain both safety, liveness, and high-performance.
\fastpay can be deployed in a number of settings. First, it may be used as a settlement layer for a native token and cryptocurrency, in a standalone fashion. Second, it may be deployed as a side-chain of another cryptocurrency, or as a high performance settlement layer on the side of an established RTGS to settle fiat retail payments. In this chapter we present this second functionality in detail, since it exercises all features of the system, both payments between \fastpay accounts, as well as payments into and out of the system. 

\section*{Contributions} This chapter makes the following contributions:
\begin{itemize}
\setlength\itemsep{0em}
    \item The \fastpay design is novel in that if forgoes full consensus; it leverages the semantics of payments to minimize shared state between accounts and to increase the concurrency of asynchronous operations; and it supports sharded authorities.
    \item We provide proofs of safety and liveness in a Byzantine and fully asynchronous network setting. We show that \fastpay keeps all its properties despite the lack of total ordering, or asynchrony of updates to recipient accounts. 
    \item We experimentally demonstrate comparatively very high throughput and low latency, as well as the robustness of the system under conditions of extremely high concurrency and load. We show that performance is maintained even when some (Byzantine) authorities fail.
\end{itemize}

\section*{Outline} This chapter is organized as follows:
\Cref{sec:fastpay:overview} introduces the entities within \fastpay, their interactions, and the security properties and threat model.
\Cref{sec:fastpay:design} details the design of \fastpay both as a standalone system, and operated in conjunction with a \sysmain. 
\Cref{sec:fastpay:security} discusses safety and liveness.
\Cref{sec:fastpay:implementation} briefly describes the implementation of the \fastpay prototype.
\Cref{sec:fastpay:evaluation} provides a full performance evaluation of \fastpay as we modulate its security parameters and load. 
\Cref{sec:fastpay:discussion} discusses key open issues such as privacy, governance mechanisms and economics of the platform.
\Cref{sec:fastpay:related} covers the related work, both in terms of traditional financial systems and crypto-currencies.
\Cref{sec:fastpay:conclusion} concludes.

\section{Background} \label{sec:fastpay:background}
Real-time gross settlement systems (RTGS)~\cite{rtgs-2} are the backbone of modern financial systems. Commercial banks use them to maintain an account with central banks and settle large value payments. 
RTGS systems are limited in their capacity\footnote{For example, the relatively recent European Central Bank TARGET2 system has a maximum capacity of 500 transactions per second~\cite{target2}}, making them unsuitable for settling low-value high-volume retail payments directly. Such retail payments are deferred: banks exchange information bilaterally about pending payments (often through SWIFT~\cite{swift, swift-iso}), they aggregate and net payments, and only settle net balances through an RTGS, often daily. The often quoted volume figure of around 80,000 transactions per second for retail card networks~\cite{mastercard-performance,visa-performance} represents the rate at which `promises' for payments are exchanged between banks, and not settled payments.
Traditional RTGS systems are implemented as monolithic centralized services operated by a single authority, and must employ a number of technical and organizational internal controls to ensure they are reliable (through a primary-replica architecture with manual switch over) and correct---namely ensuring availability and integrity. Traditionally only regulated entities have accounts in those systems. This result in a Balkanized global financial system where financial institutions connect to multiple RTGS, directly or indirectly through corresponding banks, to execute international payments.

As mentioned in \Cref{literature-review}, blockchain-based technologies, starting with Bitcoin~\cite{bitcoin} in 2009, provide more open settlement systems often combined with their own digital tokens to represent value. Permissionless blockchains have been criticized for their low performance~\cite{croman2016scaling} in terms of capacity and finality times. However, a comparison with established settlement systems leads to a more nuanced assessment. Currently, Ethereum~\cite{ethereum} can process 15 transactions per second. The actual average daily load on the EU ECB TARGET2 system is about 10 transactions per second~\cite{target2} (in 2018) which is a comparable figure (and lower than the peak advertised capacity of 500 transaction per second). However, it falls very short of the advertised transaction rate of 80,000 transaction per second peak for retail payment networks---even though this figure does not represents settled transactions. The stated ambitions of permissionless projects is to be able to settle transactions at this rate on an open and permissionless network, which remains an open research challenge~\cite{vukolic2015quest}. 
Permissioned blockchains~\cite{permissioned-blockchains, sok-consensus} provide a degree of decentralization---allowing multiple authorities to jointly operate a ledger---at the cost of some off-chain governance to control who maintains the blockchain. The most prominent of such proposals is the Libra network~\cite{libra}, developed by the Libra association. Other technical efforts include \hyperledger~\cite{hyperledger}, Corda~\cite{corda} and \tendermint~\cite{tendermint}.  Those systems are based on traditional notions of Byzantine Fault Tolerant state machine replication (or sometimes consensus with crash-failures only), which presupposes an atomic commit channel (often referred as `consensus') that sequences all transactions. Such architectures allow for higher capacities than Bitcoin and \ethereum. LibraBFT~\cite{libraBFT}, for example, aims for 1,000 transactions per second at peak capacity~\cite{libraTPS-1, libraTPS-2}; this exceeds many RTGS systems, but is still below the peak volumes for retail payment systems. A latency of multiple seconds when it comes to transaction finality is competitive with RTGS, but is unusable for retail payment at physical points of sale. 
\section{Overview} \label{sec:fastpay:overview}
To illustrate its full capabilities, we describe \fastpay as a side chain of a primary system holding the primary records of accounts. We call such a primary ledger \emph{the \sysmain} for short, and its accounts \emph{the \sysmain accounts}. The \sysmain can be instantiated as a programmable blockchain, through smart contracts, like \chainspace.


\subsection{Participants} \label{sec:fastpay:participants}
\fastpay involves two types of participants: \first authorities, and \second account owners (\emph{users}, for short). All participants generate a key pair consisting of a private signature key and the corresponding public verification key.
As a side-chain, \fastpay requires a smart contract on the main blockchain, or a software component on an RTGS system that can authorize payments based on the signatures of a threshold of authorities from a \emph{committee} with fixed membership.

By definition, an \emph{honest} authority always follows the \fastpay protocol, while a \emph{faulty} (or \emph{Byzantine}) one may deviate arbitrarily.
We present the \fastpay protocol for $3f+1$ equally-trusted authorities, assuming a fixed (but unknown) subset of at most $f$ Byzantine authorities. In this setting, a \emph{quorum} is defined as any subset of $2f+1$ authorities. (As for many BFT protocols, our proofs only use the classical properties of quorums thus apply to all Byzantine quorum systems~\cite{malkhi1998byzantine}.)
When a protocol message is signed by a quorum of authorities, it is said to be \emph{certified}: we call such a jointly signed message a \emph{certificate}.

\subsection{Accounts \& Actions} \label{sec:fastpay:account-actions}

A \fastpay \emph{account} is identified by its \emph{address}, which we instantiate as the cryptographic hash of its public verification key. The state of a \fastpay account is affected by four main high-level actions:
\begin{enumerate}
    \item Receiving funds from a \sysmain account.
    \item Transferring funds to a \sysmain account.
    \item Receiving funds from a \fastpay account.
    \item Transferring funds to a \fastpay account.
\end{enumerate}
Actions (3) and (4) are necessary to execute payments within \fastpay.
Actions (1) and (2) are required only when interfacing with a primary system.
\fastpay also supports two read-only actions that are necessary to ensure liveness despite faults: reading the state of an account at a \fastpay authority, and obtaining a certificate for any action executed by an authority.

\subsection{Protocol Messages} \label{sec:fastpay:protocol-messages}
The \fastpay protocol consists of \emph{transactions} on the \sysmain, denoted with letter $T$, and network requests that users send to \fastpay authorities, which we call \emph{orders} and denote with letter $O$. Users are responsible for broadcasting their orders to authorities and for processing the corresponding responses. The authorities are passive and \emph{do not communicate directly with each other}.

\para{Transfer orders} All transfers initiated by a \fastpay account start with a \emph{transfer order} $O$ including the following fields:
\begin{itemize}
    \item The sender's \fastpay address, written $\sender(O)$.
    \item The recipient, either a \fastpay or a \sysmain address, written $\recipient(O)$.
    \item A non-negative amount to transfer, written $\amount(O)$.
    \item A sequence number $\sequencenumber(O)$.
    \item Optional user-provided data.
    \item A signature by the sender over the above data.
\end{itemize}

Authorities respond to valid transfer orders by counter-signing them (see \Cref{sec:fastpay:design} for validity checks). A quorum of such signatures is meant to be aggregated into a \emph{transfer certificate}, noted $C$.

\para{Notations} We write $O = \val(C)$ for the original transfer order $O$ certified by $C$. For simplicity, we omit the operator $\val$ when the meaning is clear, \eg $\sender(C) = \sender(\val(C))$.
\fastpay addresses are denoted with letters $a$ and $b$. We use $\alpha$ for authorities and by extension for the shards of authorities.

\subsection{Security Properties \& Threat Model} \label{sec:fastpay:properties}
\fastpay guarantees the following security properties:
\begin{itemize}
    \item \textbf{Safety:} No units of value are ever created or destroyed; they are only transferred between accounts.
    \item \textbf{Authenticity:} Only the owner of an account may transfer value out of the account.
    \item \textbf{Availability:} Correct users can always transfer funds from their account.
    \item \textbf{Redeemability:} A transfer to \fastpay or \sysmain is guaranteed to eventually succeed whenever a valid transfer certificate has already been produced.
    \item \textbf{Public Auditability:} There is sufficient public cryptographic evidence for the state of \fastpay to be audited for correctness by any party. 
    \item \textbf{Worst-case Efficiency:} Byzantine authorities (or users) cannot significantly delay operations from correct users.
\end{itemize}

The above properties are maintained under a number of security assumptions: \first~there are at most $f$ Byzantine authorities out of $3f+1$ total authorities.
\second~The network is fully asynchronous, and the adversary may arbitrarily delay and reorder messages~\cite{dwork1988consensus}. However, messages are eventually delivered. \third~Users may behave arbitrarily but availability only holds for \emph{correct users} (defined in \Cref{sec:fastpay:clients}). \fourth~The \sysmain provides safety and liveness (when \fastpay is used in conjunction with it).
We further discuss the security properties of \fastpay in \Cref{sec:fastpay:security}.

\section{The \fastpay Protocol} \label{sec:fastpay:design}
\fastpay authorities hold and persist the following information.

\para{Authorities}
The state of an authority $\alpha$ consists of the following information:
\begin{itemize}
    \item The authority name, signature and verification keys.
    \item The committee, represented as a set of authorities and their verification keys.
    \item A map $\accounts(\alpha)$ tracking the current account state of each \fastpay address $a$ in use (see below).
    \item An integer value, noted $\lasttransactionindex(\alpha)$, referring to the last transaction that paid funds into the \sysmain. This is used by authorities to synchronize \fastpay accounts with funds from the \sysmain (see Section~\ref{sec:fastpay:libra-to-fastpay}).
\end{itemize}

\para{\fastpay accounts}
The state of a \fastpay account $a$ within the authority $\alpha$ consists of the following:
\begin{itemize}
    \item The public verification key of $a$, used to authenticate spending actions.
    \item An integer value representing the balance of payment, written $\balance^a(\alpha)$.
    \item An integer value tracking the expected sequence number for the next spending action to be created, written $\nextsequencenumber^a(\alpha)$. This value starts at $0$ thus can be seen as the number of spending actions ever confirmed for this account.
    \item A field $\pendingconfirmation^a(\alpha)$ tracking the last transfer order $O$ signed by $a$ such that the authority $\alpha$ considers $O$ as \emph{pending confirmation}, if any; and absent otherwise.
    \item A list of certificates, written $\confirmedlog^a(\alpha)$, tracking all the transfer certificates $C$ that have been \emph{confirmed} by $\alpha$ and such that $\sender(C) = a$. One such certificate is available for each sequence number $n$ ($0 \leq n < \nextsequencenumber^a(\alpha)$).
    \item A list of \emph{synchronization} orders, $\synchronizationlog^a(\alpha)$, having transferred funds from the \sysmain to account $a$. (See Section~\ref{sec:fastpay:libra-to-fastpay}.)
\end{itemize}
We also define $\received^a(\alpha)$ as the list of confirmed certificates for transfers received by $a$; formally: 
\[
\received^a(\alpha) = \{C \text{ s.t. } \exists b.\; C \in \confirmedlog^b(\alpha) \text{ and } \recipient(C) = a \}
\]
 
We assume arbitrary size integers. Although \fastpay does not let users overspend, (temporary) negative balances for account states are allowed for technical reasons discussed in \Cref{sec:fastpay:security}. When present, a pending (signed) transfer order $O = \pendingconfirmation^a(\alpha)$ effectively locks the sequence number of the account $a$ and prevents $\alpha$ from accepting new transfer orders from $a$ until \emph{confirmation}, that is, until a valid transfer certificate $C$ such that $\val(C) = O$ is received. This mechanism can be seen as the `Signed Echo Broadcast' implementation of a Byzantine consistent broadcast on the label (account, next sequence number)~\cite{cachinBook}.

\para{Storage considerations} 
The information contained in the lists of certificates $\confirmedlog^a(\alpha)$ and $\received^a(\alpha)$ and in the synchronization orders $\synchronizationlog^a(\alpha)$ is self-authenticated---being signed by a quorum of authorities and by the \sysmain, respectively. Remarkably, this means that authorities may safely outsource these lists to an external high-availability data store. Therefore, \fastpay authorities only require a constant amount of local storage per account, rather than a linear amount in the number of transactions.

\subsection{Transferring Funds within \fastpay} \label{sec:fastpay:fastpay-to-*}

\fastpay operates by implementing a Byzantine consistent broadcast channel per account, specifically using a `Signed Echo Broadcast' variant (Algorithm 3.17 in~\cite{cachinBook}). It operates in two phases and all messages are relayed by the initiating user. Consistent Broadcast ensures \emph{Validity}, \emph{No duplication}, \emph{Integrity}, and \emph{Consistency}. It always terminates when initiated by a correct user. However, if a \fastpay user equivocates, current operations may fail, and the funds present on the users account may become inaccessible.

\begin{figure}[t]
\centering
\includegraphics[width=.7\textwidth]{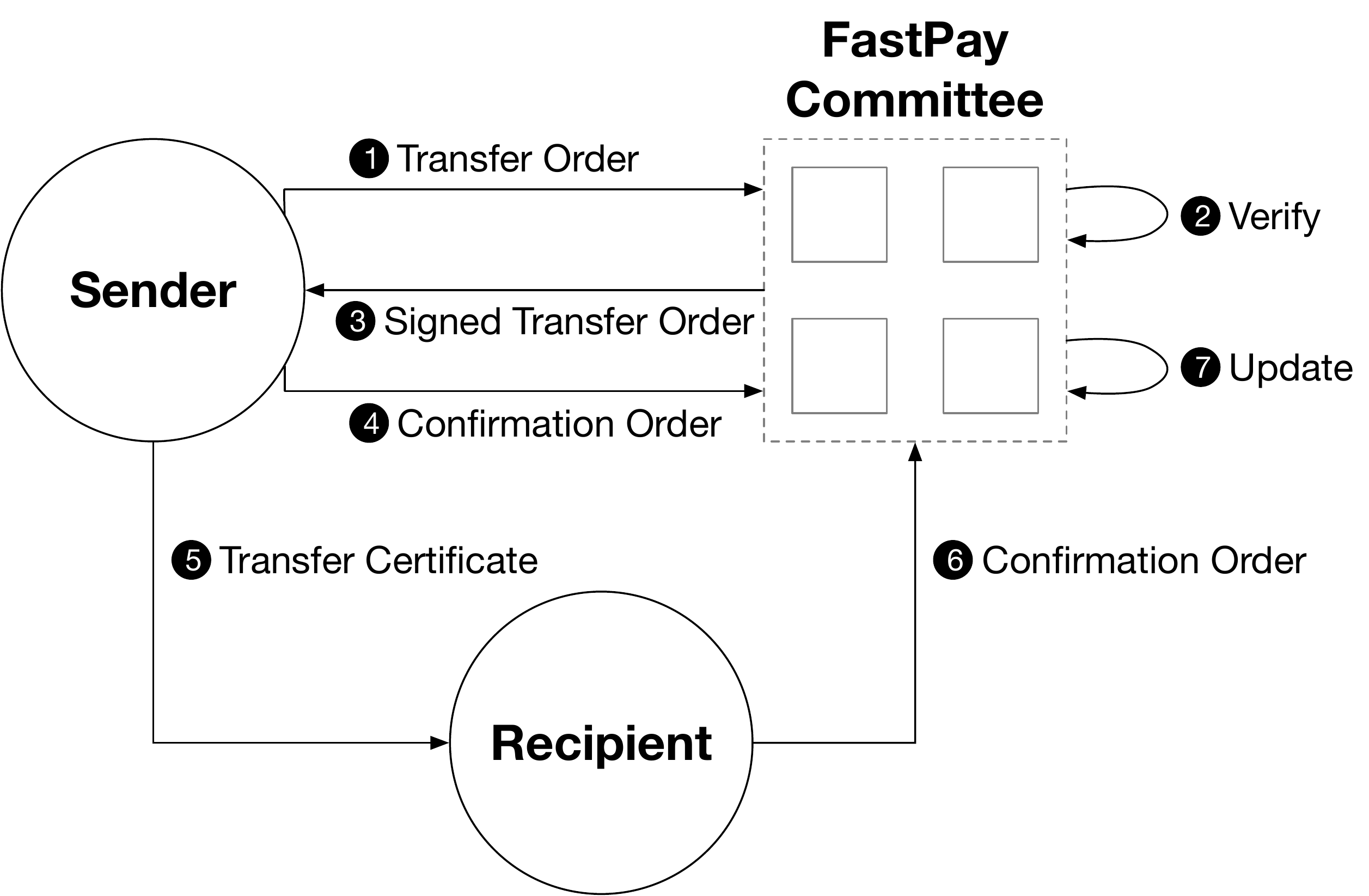}
\caption[Transfer of funds from \fastpay to \fastpay.]{Transfer of funds from \fastpay to \fastpay. The sender first generates a \emph{transfer order} that is counter-signed by each authority; a \emph{confirmation order} then finalizes the transaction.}
\label{fig:fastpay:fastpay-to-fastpay}
\end{figure}

\para{Transferring funds}
\Cref{fig:fastpay:fastpay-to-fastpay} illustrates a transfer of funds within \fastpay.
To transfers funds to another \fastpay account, the sender creates a \fastpay transfer order ($\transfer$) with the next sequence number in their account, and signs it. They then send the \fastpay transfer order to all authorities. Each authority checks (\ding{202}) \first that the signature is valid, \second that no previous transfer is pending (or its for the same transfer), \third that the amount is positive, \fourth that the sequence number matches the expected next one ($\sequencenumber(O) = \nextsequencenumber^a(\authority)$), and \fourth that the balance ($\balance^a(\alpha)$) is sufficient (\ding{203}). Then, it records the new transfer as pending and sends back a signature on the transfer order (\ding{204}) which is also stored. The authority algorithm to handle transfer orders, corresponding to step \ding{203}, is presented in \Cref{fig:fastpay:code}.
The user collects the signatures from a quorum of authorities, and use them along the \fastpay transfer order to form a transfer certificate.
The sender provides this transfer certificate to the recipient as a proof that the payment will proceed (\ding{206}).
To conclude the transaction, the sender (\ding{205}) or the recipient (\ding{207}) must broadcast the transfer certificate ($\cert$) to the authorities (we call this a \emph{confirmation order})\footnote{Aggregating signed transfer orders into a transfer certificate does not requires knowledge of any secret; therefore, anyone (and not only the sender or the recipient) can broadcast the transfer certificate to the authorities to conclude the transaction.}.
Upon reception of a confirmation order for the current sequence number, each authority $\alpha$ (\ding{208}) \first checks that a quorum of signatures was reached, \second decreases the balance of the sender, \third increments the sequence number ($\nextsequencenumber^a(\alpha) + 1$) to ensure `deliver once' semantics, and \fourth sets the pending order to None ($\pendingconfirmation^a(\alpha)=\text{None}$). Each authority $\alpha$ also \fifth adds the certificate to the list $\confirmedlog^a(\alpha)$, and \sixth increases the balance of the recipient account asynchronously (\ie without sequencing this write in relation to any specific payments from this account across authorities). The authority algorithm to handle confirmation orders (as in step \ding{208}) is presented in \Cref{fig:fastpay:code}.

In \Cref{sec:fastpay:security}, we show that the \fastpay protocol is safe thanks to the semantics of payments into an account and their commutative properties. \fastpay is a significant simplification and deviation from an orthodox application of Guerraroui~\etal~\cite{consensus-number}, where accounts are single-writer objects and all write actions are mediated by the account owner. \fastpay allows payments to be executed after a single consistent broadcast, rather than requiring recipients to sequence payments into their accounts separately. This reduces both latency and the state necessary to prevent replays.

\begin{figure}
\begin{lstlisting}[mathescape=true]
fn handle_transfer_order($\authority$, $\transfer$) -> Result {
    /// Check shard and signature.
    ensure!($\authority$.in_shard($\sender(\transfer)$));
    ensure!($\transfer$.has_valid_signature());

    /// Obtain sender account.
    match $\accounts(\authority)$.get($\sender(\transfer)$) {
        None => bail!(),
        Some(account) => {
            /// Check if the same order is already pending.
            if let Some(pending) = account.pending {
                ensure!(pending.transfer == $\transfer$);
                return Ok();
            }
            ensure!(account.next_sequence == $\sequencenumber(\transfer)$);
            ensure!(account.balance >= $\amount(\transfer)$);
            /// Sign and store new transfer.
            account.pending = Some($\authority$.sign($\transfer$));
            return Ok();
}   }   }

fn handle_confirmation_order($\authority$, $\cert$)
    -> Result<Option<CrossShardUpdate>> {
    /// Check shard and certificate.
    ensure!($\authority$.in_shard($\sender(\cert)$));
    ensure!($\cert$.is_valid($\authority$.committee));
    let $\transfer$ = $\val(\cert)$;

    /// Obtain sender account.
    let sender_account =
        $\accounts(\authority)$.get($\sender(\transfer)$)
        .or_insert(AccountState::new());

    /// Ignore old certificates.
    if sender_account.next_sequence > $\sequencenumber(\transfer)$ {
        return Ok(None);
    }

    /// Check sequence number and balance.
    ensure!(sender_account.next_sequence == $\sequencenumber(\transfer)$);
    ensure!(sender_account.balance >= $\amount(\transfer)$);

    /// Update sender account.
    sender_account.balance -= $\amount(\transfer)$;
    sender_account.next_sequence += 1;
    sender_account.pending = None;
    sender_account.confirmed.push($\cert$);

    /// Update recipient locally or cross-shard.
    let recipient = match $\recipient(\transfer)$ {
        Address::FastPay(recipient) => recipient,
        Address::Primary(_) => { return Ok(None) }
    };

    /// Same shard: read and update the recipient.
    if $\authority$.in_shard(recipient) {
        let recipient_account = $\accounts(\authority)$.get(recipient)
            .or_insert(AccountState::new());
        recipient_account.balance += $\amount(\transfer)$;
        return Ok(None);
    }

    /// Other shard: request a cross-shard update.
    let update = CrossShardUpdate {
        shard_id: $\authority$.which_shard(recipient),
        transfer_certificate: $\cert$,
    };
    Ok(Some(update))
}
\end{lstlisting}
\vspace{-1em}
\caption[\fastpay algorithms for handling transfer and confirmation orders.]{Authority algorithms for handling transfer and confirmation orders. (The cross-shard update logic is in presented in \Cref{fig:fastpay:code-core}.)}
\label{fig:fastpay:code}
\vspace{-3em} 
\end{figure}

\begin{figure}
\begin{lstlisting}[mathescape=true]
fn handle_cross_shard_commit($\authority$, $\cert$) -> Result {
    let $\transfer$ = $\val(\cert)$;
    let recipient = match $\recipient(\transfer)$ {
        Address::FastPay(recipient) => recipient,
        Address::Primary(_) => { bail!(); }
    };
    ensure!($\authority$.in_shard(recipient));
    let recipient_account = $\accounts(\authority)$.get(recipient)
        .or_insert(AccountState::new());
    recipient_account.balance += $\amount(\transfer)$;
    Ok()
}

fn handle_primary_synchronization_order($\authority$, $\sync$) -> Result {
    /// Update $\recipient(S)$ assuming that $\sync$ comes from
    /// a trusted source (e.g. Primary client).
    let recipient = $\recipient(\sync)$;
    ensure!($\authority$.in_shard(recipient));

    if $\transactionindex(\sync)$ <= $\lasttransactionindex(\authority)$ {
        /// Ignore old synchronization orders.
        return Ok();
    }
    ensure!($\transactionindex(\sync)$ == $\lasttransactionindex(\authority)$ + 1);

    $\lasttransactionindex(\authority)$ += 1;
    let recipient_account = $\accounts(\authority)$.get(recipient)
        .or_insert(AccountState::new());
    recipient_account.balance += $\amount(\sync)$;
    recipient_account.synchronized.push($\sync$);
    Ok()
}
\end{lstlisting}
\vspace{-1em}
\caption[\fastpay core algorithms.]{Authority algorithms for cross-shard updates and (\sysmain) synchronization orders.}
\label{fig:fastpay:code-core}
\end{figure}

\para{Payment finality} 
Once a transfer certificate \emph{could} be formed, namely $2f+1$ authorities signed a transfer order, no other order can be processed for an account until the corresponding confirmation order is submitted. Technically, the payment is final: it cannot be canceled, and will proceed eventually. As a result, showing a transfer certificate to a recipient convinces them that the payment will proceed. We call showing a transfer certificate to a recipient a \emph{confirmation}, and then subsequently submitting the confirmation order, to move funds, \emph{settlement}. \emph{Confirmation} requires only a single round trip to a quorum of authorities resulting in very low-latency (see \Cref{sec:fastpay:evaluation}), and giving the system its name.

\para{Proxies, gateways and crash recovery} 
The protocols as presented involve the sender being on-line and mediating all communications.
However, the only action that the sender \emph{must} perform personally is forming a transfer order, requiring their signature. All subsequent operations, including sending the transfer order to the authorities, forming a certificate, and submitting a confirmation order can be securely off-loaded to a proxy trusted only for liveness. Alternatively, a transfer order may be given to a merchant (or payment gateway) that drives the protocol to conclusion. In fact, any party in possession of a signed transfer order may attempt to make a payment progress concurrently. And as long as the sender is correct the protocol will conclude (and if not may only lock the account of the faulty sender).

This provides significant deployment and implementation flexibility. A sender client may be implemented in hardware (in a NFC smart card) that only signs transfer orders. These are then provided to a gateway that drive the rest of the protocol. Once the transfer order is signed and handed over to the gateway the sender may go off-line or crash. Authorities may also attempt to complete the protocol upon receiving a valid transfer order.
Finally, the protocol recovers from user crash failures: anyone may request a transfer order that is partially confirmed from any authority, proceed to form a certificate, and submit a confirmation order to complete the protocol.

\subsection{Sharding authorities} \label{sec:fastpay:sharding}
\fastpay requires minimal state sharing between accounts, and allows for a very efficient sharding at each authority by account. The consistent broadcast channel is executed on a per-account basis. Therefore, the protocol does not require any state sharing between accounts (and shards) up to the point where a valid confirmation order has to be settled to transfer funds between \fastpay accounts.
On settlement, the sender account is decremented and the funds are deposited into the account of the recipient, requiring interaction between at most two shards (second algorithm of \Cref{fig:fastpay:code}).
Paying into an account can be performed asynchronously, and is an operation that cannot fail (if the account does not exist it is created on the spot).  Therefore, the shard managing the recipient account only needs to be notified of the confirmed payment through a reliable, deliver once, authenticated, point to point channel (that can be implemented using a message authentication code, inter-shard sequence number, re-transmission, and acknowledgments) from the sender shard. This is a greatly simplified variant of a two-phase commit protocol coordinated by the sender shard (for details see the Presume Nothing and Last Agent Commit optimizations~\cite{DBLP:journals/dpd/SamarasBCM95,DBLP:conf/vldb/LampsonL93}). Modifying the validity condition of the consistent broadcast to ensure the recipient account exists (or any other precondition on the recipient account) would require a full two-phase commit before an authority signs a transfer order, and can be implemented while still allowing for (slightly less) efficient sharding.
The algorithms in \cref{fig:fastpay:code} implement shading. An authority shard checks that the transfer order ($\transfer$) or certificate ($\cert$) is to be handled by a specific shard and otherwise rejects it without mutating its state. Handling confirmation orders depends on whether a recipient account is on the same shard. If so, the recipient account is updated locally. Otherwise, a \emph{cross shard message} is created for the recipient shard to update the account (see code in the Appendix for this operation).

The ability to shard each authority has profound implications: increasing the number of shards at each authority, increases the theoretical throughput linearly, while latency remains constant. Our experimental evaluation confirms this experimentally (see \Cref{sec:fastpay:evaluation}).

\subsection{Interfacing with the \sysmain} \label{sec:fastpay:onchain-state}
We describe the protocols required to couple \fastpay with the \sysmain, namely transferring funds from the \sysmain to a \fastpay account, and conversely from a \fastpay to a \sysmain account.
We refer throughout to the logic on the \sysmain as a \emph{smart contract}, and the primary store of information as the \emph{blockchain}. 
A traditional RTGS would record this state and manage it in conventional ways using databases and stored procedures, rather than a blockchain and smart contracts. We write $\sigma$ for the state of the `blockchain' at a given time, and $\transactions(\sigma)$ for the set of \fastpay transactions $T$ already processed by the blockchain.

\para{Smart contract}
The smart contract mediating interactions with the \sysmain requires the following data to be persisted in the blockchain:
\begin{itemize}
\setlength\itemsep{0em}
    \item The \fastpay committee composition: a set of authority names and their verification keys.
    \item A map of accounts where each \fastpay address is mapped to its current \sysmain state (see below).
    \item The total balance of funds in the smart contract, written $\totalbalance(\sigma)$.
    \item The transaction index of the last transaction that added funds to the smart contract, written $\lasttransactionindex(\sigma)$.
\end{itemize}

\para{Accounts}
The \sysmain state of a \fastpay account $a$ consists of the set of sequence numbers of transfers already executed from this account to the \sysmain. This set is called the \emph{redeem log} of $a$ and written $\redeemlog^a(\sigma)$.

\para{Adding funds from the \sysmain to \fastpay} \label{sec:fastpay:libra-to-fastpay}
\Cref{fig:fastpay:libra-to-fastpay} shows a transfer of funds from the \sysmain to \fastpay. 
The owner of the \fastpay account (or anyone else) starts by sending a payment to the \fastpay smart contract using a \sysmain transaction~(\ding{202}). This transaction is called a \emph{funding transaction}, and includes the recipient \fastpay address for the funds and the amount of value to transfer.
When the \sysmain transaction is executed, the \fastpay smart contract generates a \emph{\sysmain event} that instructs authorities of a change in the state of the \fastpay smart contract. We assume that each authority runs a full \sysmain client to authenticate such events. For simplicity, we  model such an event as a \emph{(\sysmain) synchronization order}~(\ding{203}). The smart contract ensures this event and the synchronization order contain a unique, always increasing, sequential transaction index.
When receiving a synchronization order, each authority \first checks the transaction index follows the previously recorded one, \second increments the last transaction index in their global state, \third creates a new \fastpay account if needed, and \fourth increases the account balance of the target account by the amount of value specified~(\ding{204}). \Cref{fig:fastpay:code-core} presents the authority algorithm for handling funding transactions.

\begin{figure}[t]
\centering
\includegraphics[width=.7\textwidth]{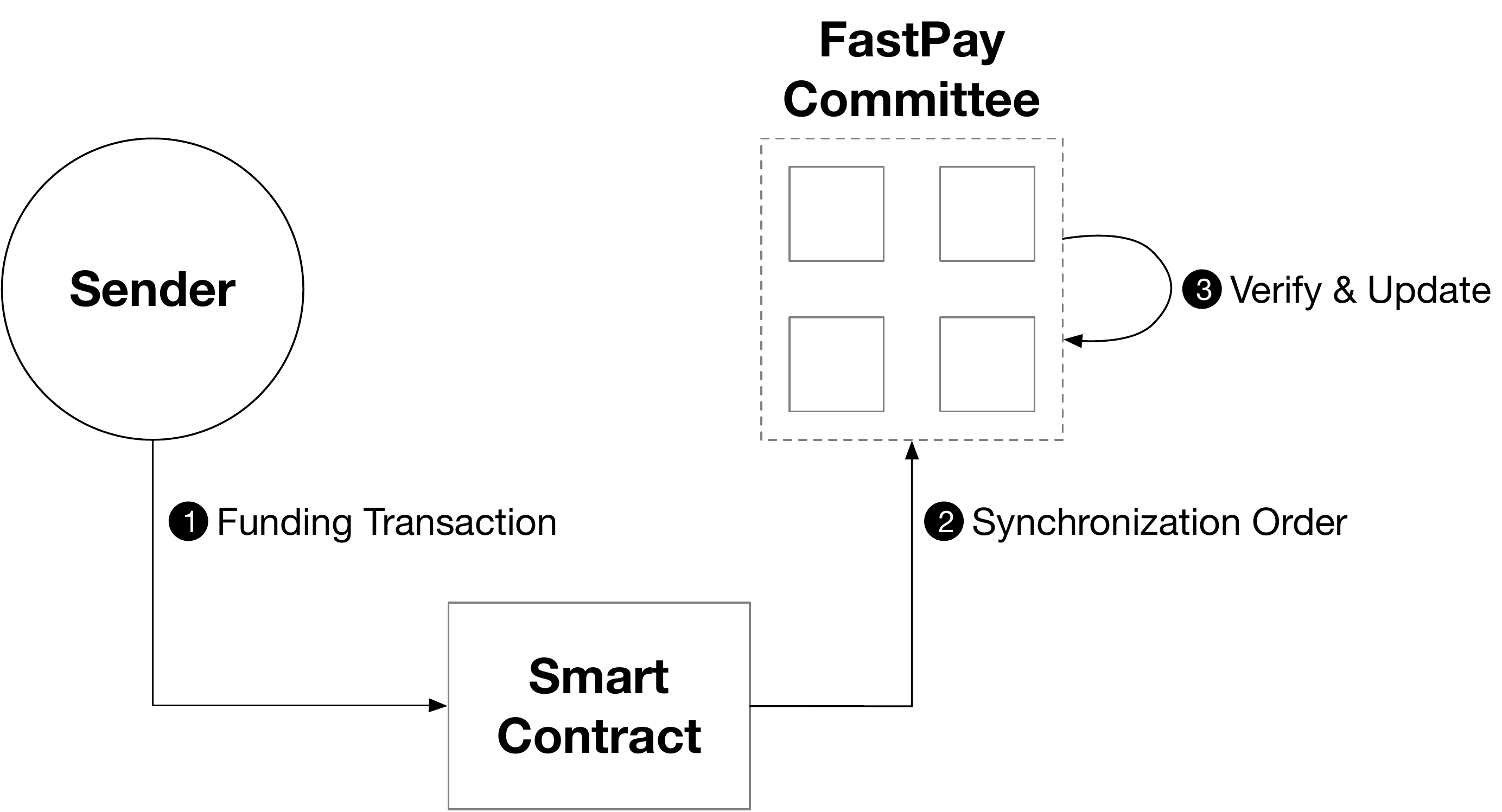}
\caption[Transfer of funds from the \sysmain to \fastpay.]{Transfer of funds from the \sysmain to \fastpay. The sender first deposits funds into a dedicated smart contract; the authorities periodically synchronize their internal state with the smart contract to reflect the new sender's balance.}
\label{fig:fastpay:libra-to-fastpay}
\end{figure}

\para{Transferring funds from \fastpay to the \sysmain}
\Cref{fig:fastpay:fastpay-to-libra} shows a transfer of funds from \fastpay to \sysmain. The \fastpay sender signs a \emph{\sysmain transfer order} using their account key and broadcasts it to the authorities (\ding{202}). This is simply a transfer order with a \sysmain address as the recipient.
Once a quorum of signatures is reached (\ding{203} and \ding{204}), the sender creates a certified (\sysmain) transfer order, also called a \emph{transfer certificate} for short.
The sender broadcasts this certificate to authorities to confirm the transaction (\ding{205}) and unlock future spending from this account. When an authority receives a confirmation order containing a certificate of transfer (\ding{206}), it must \first check that a quorum of signatures was reached, and \second that the account sequence number matches the expected one; \third they then set the pending order to None, \fourth increment the sequence number, and \fifth decrease the account balance.
Finally, the recipient of the transfer should send a redeem transaction to the \fastpay smart contract on the \sysmain blockchain (\ding{207}). When the \fastpay smart contract receives a valid redeem transaction (\ding{208}), it must \first check that the sequence number is not in the \sysmain redeem log of the sender, to prevent reuse; \second update this redeem log; \fourth transfer the amount of value specified from the smart contract into the recipient's \sysmain account.

\begin{figure}[t]
\centering
\includegraphics[width=.7\textwidth]{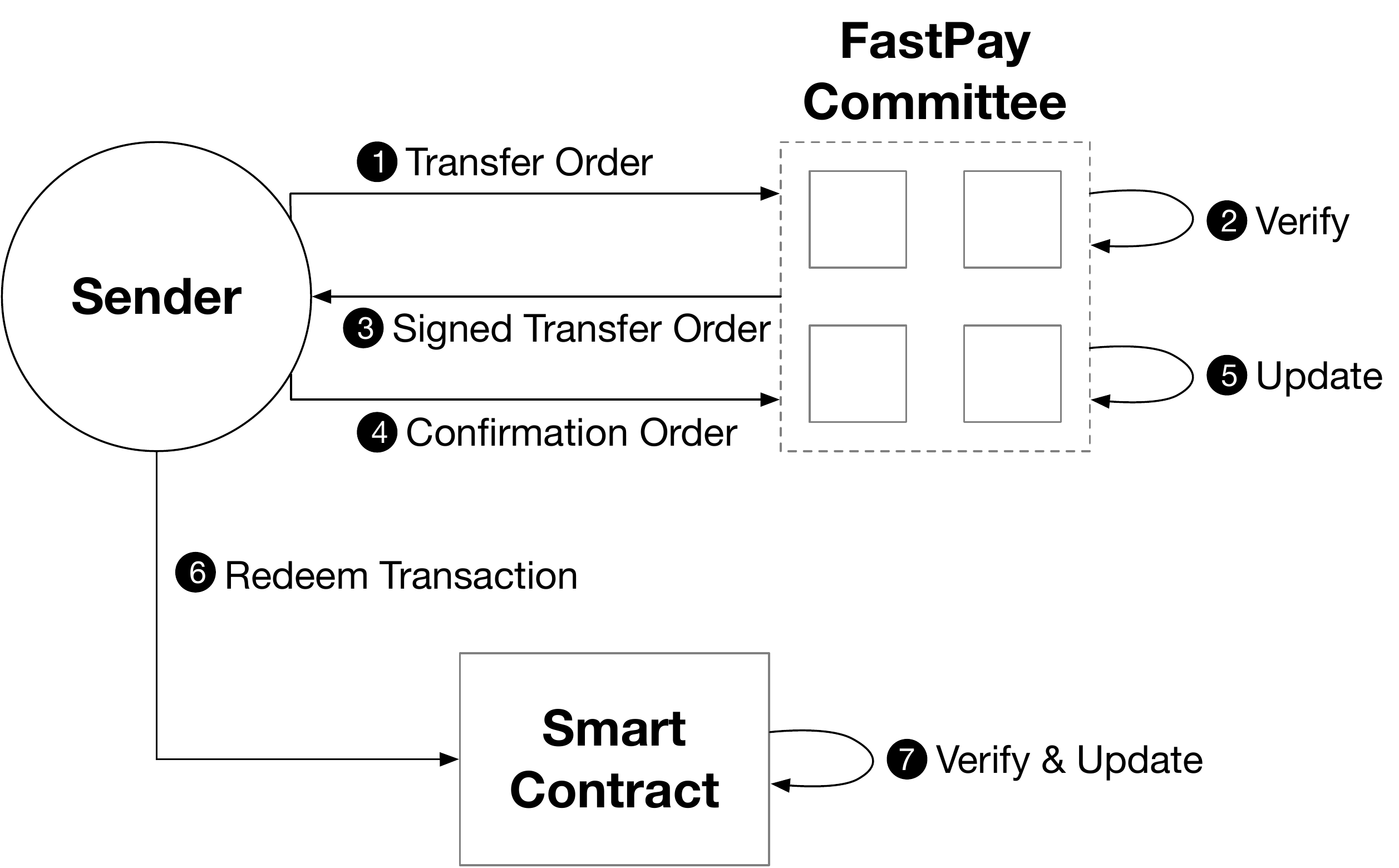}
\caption[Transfer of funds from \fastpay to the \sysmain.]{Transfer of funds from \fastpay to the \sysmain. The sender first obtains a \emph{confirmation order} from the authorities; it then submits this confirmation order to the smart contract through a \emph{redeem transaction} to unlock funds on the primary.}
\label{fig:fastpay:fastpay-to-libra}
\end{figure}
%

\subsection{State Recovery \& Auditing} \label{sec:fastpay:audits}
For every account $a$, each authority $\alpha$ must make available the pending order $\pendingconfirmation^a(\alpha)$, the sequence number $\nextsequencenumber^a(\alpha)$, the synchronization orders $\synchronizationlog^a(\alpha)$, and the certificates confirmed so far, indexed by senders (\ie $\confirmedlog^a(\alpha)$) and receivers ($\received^a(\alpha)$).
Sharing these data fulfills two important roles: \first this lets anyone read the state of any incomplete transfer and drive the protocol all the way to settlement; \second it enables auditing authority states and detect Byzantine faults (\eg incorrect balance checks).

\subsection{Correct Users \& Client Implementation} \label{sec:fastpay:clients}
A \emph{correct user} owning a \fastpay account $a$ follows the correctness rules below:
\begin{enumerate}
\item The user sets the sequence number of a new transfer order $O$ to be the next expected integer after the previous transfer (starting with $0$); \ie they sign exactly one transfer order per sequence number;
\item They broadcast the new transfer order $O$ to enough authorities until they (eventually) obtain a certificate $C$;
\item They successfully broadcast the certificate $C$ to a quorum of authorities.
\end{enumerate}

\para{\fastpay Client} To address the correctness rules above, our reference implementation of a \fastpay client holds and persists the following minimal state:
\begin{itemize}
\item The address $a$ and the secret key of the account;
\item The \fastpay committee;
\item The sequence number to be used in the next transfer;
\item The transfer order that it signed last, in case it is still pending.
\end{itemize}
In this setting, the available balance of a user account is not tracked explicitly but rather evaluated (conservatively) from the \sysmain transactions and the available logs for incoming transfers and outgoing transfers (\Cref{sec:fastpay:audits}). Evaluating the balance before starting a transfer is recommended, as signing a transfer order with an excessive amount will block (correct) client implementations from initiating further transfers until the desired amount is available.
\section{Security Analysis} \label{sec:fastpay:security}
This section discusses safety and liveness of \fastpay.
Let $\sigma$ denote the current state of the \sysmain. We define $\funding^a(\sigma)$ as the sum of all the amounts transferred to a \fastpay address $a$ from the \sysmain:
\[
\funding^a(\sigma) \,=\, \sum_{\left\{\!\!\!\!\scriptsize\begin{array}{l} T \in \transactions(\sigma)\\ \recipient(T)=a \end{array}\right.} \amount(T)
\]
For simplicity, we write $\sum_C \amount(C)$ when we mean to sum over certified transfer orders: $\sum_{O \text { s.t. } \exists C. O = \val(C)} \amount(O)$.
We define $\funding^a(\alpha)$ as the sum of all the amounts received from the \sysmain by a \fastpay address $a$, as seen at a given time by an authority\ $\alpha$:
\[\funding^a(\alpha) = \sum_{\sync \in \synchronizationlog^a(\alpha)} \amount(\sync)\]

\subsection{Safety}
\begin{lemma}[Transfer certificate uniqueness] \label{le:fastpay:lemma-s0}
If
\[
\left\{
\begin{array}{l}
\sender(C) = \sender(C'), \; \text{and} \\
\sequencenumber(C) = \sequencenumber(C')
\end{array}
\right.
\]
then $C$ and $C'$ certify the same transfer order:
\[\val(C) = \val(C') \]
\end{lemma}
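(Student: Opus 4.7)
The argument is a standard Byzantine quorum intersection argument, specialized to the pending-order locking mechanism described in \Cref{sec:fastpay:design}. The plan is to show that any two certificates for the same sender/sequence pair must have been co-signed by at least one honest authority, and that such an authority could not have signed two distinct transfer orders at that sequence number.

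First, I would unfold the definition of a certificate: each of $C$ and $C'$ is an aggregation of signatures from a quorum of $2f+1$ authorities on the respective underlying transfer order $O = \val(C)$ and $O' = \val(C')$. Call these quorums $Q$ and $Q'$. By the usual counting argument on a committee of size $3f+1$, we have $|Q \cap Q'| \ge f+1$, so $Q \cap Q'$ contains at least one honest authority $\alpha$.

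Next, I would invoke the behavior of honest authorities in \texttt{handle\_transfer\_order} (\Cref{fig:fastpay:code}). When $\alpha$ first signs a transfer order $O$ with $\sender(O)=a$ and $\sequencenumber(O)=n$, it sets $\pendingconfirmation^a(\alpha) \leftarrow O$, and from then on the handler only accepts further transfer orders from $a$ that are either identical to the pending one (the \texttt{pending.transfer == O} branch) or that carry $\sequencenumber = \nextsequencenumber^a(\alpha)$. The sequence number $\nextsequencenumber^a(\alpha)$ is only incremented in \texttt{handle\_confirmation\_order} after a valid certificate with sequence number $n$ has been processed, at which point the pending field is reset. Thus, between the signing of $O$ and the processing of \emph{some} certificate at sequence number $n$, $\alpha$ never signs a transfer order with sequence number $n$ that differs from $O$. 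Since $\alpha$ signed both $O$ and $O'$ at the same sequence number $\sequencenumber(C) = \sequencenumber(C') = n$, we must have $O = O'$, \ie $\val(C) = \val(C')$, which concludes.

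The main (minor) obstacle is being careful about the ordering of events at the honest authority $\alpha$: one must rule out that $\alpha$ signed $O$, then processed a confirmation certificate for $O$ (advancing $\nextsequencenumber^a(\alpha)$ past $n$), and then somehow signed a different $O'$ still bearing sequence number $n$. The handler code explicitly rejects this via the check \texttt{account.next\_sequence == $\sequencenumber(\transfer)$}, so once $\alpha$ has advanced past $n$ it will never again sign any transfer order with sequence number $n$. Hence both signatures by $\alpha$ were produced in the window where $O$ was the pending order, and the \texttt{pending == $\transfer$} guard forces $O' = O$.
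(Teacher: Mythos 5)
Your proof is correct and follows essentially the same route as the paper's: quorum intersection yields an honest authority $\alpha$ in both signing quorums, and $\alpha$ signs at most one transfer order per sequence number, forcing $\val(C) = \val(C')$. The only difference is that you unfold in detail why the handler's pending-order lock and sequence-number check enforce the ``at most one order per sequence number'' invariant, which the paper simply asserts.
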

\begin{proof}
\small
Both certificates $C$ and $C'$ are signed by a quorum of authorities. By construction, any two quorums intersect on at least one honest authority. Let $\alpha$ be an honest authority in both quorums. $\alpha$ signs at most one transfer order per sequence number, thus $C$ and $C'$ certify the same transfer order.
\end{proof}

\begin{lemma}[\fastpay invariant] \label{le:fastpay:lemma-s1}
For every honest authority $\alpha$, for every account $a$, it holds that
\[
\begin{array}{l}
\bigg( \balance^a(\alpha) \;+\; \sum_{C \in \confirmedlog^a(\alpha)} \amount(C) \bigg)\\[0.6em]
\quad \leq \bigg( \funding^a(\alpha) \;+\; \sum_{C \in \received^a(\alpha)} \amount(C) \bigg)
\end{array}
\]
Besides, if $n = \nextsequencenumber^a(\alpha)$, we have that $\confirmedlog^a(\alpha) = \{C_0 \ldots C_{n-1}\}$ for some certificates $C_k$ such that $\sequencenumber(C_k) = k$ and $\sender(C_k)=a$.
\end{lemma}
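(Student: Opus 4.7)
My plan is to prove both parts of the lemma simultaneously by induction on the sequence of state-mutating operations executed by the honest authority $\alpha$. The base case is when the account $a$ is not yet represented at $\alpha$: all relevant quantities are zero (empty sums, empty logs, zero balance) and both claims hold trivially. For the inductive step I would do a case analysis on the four kinds of operations performed by the code in \Cref{fig:fastpay:code} and \Cref{fig:fastpay:code-core}, namely \textsf{handle\_transfer\_order}, \textsf{handle\_confirmation\_order}, \textsf{handle\_cross\_shard\_commit}, and \textsf{handle\_primary\_synchronization\_order}, and check that each preserves the invariant.

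First I would handle the easy cases. \textsf{handle\_transfer\_order} only mutates $\pendingconfirmation^a(\alpha)$ and therefore leaves every quantity in the inequality unchanged; similarly the sequencing claim on $\confirmedlog^a(\alpha)$ is untouched. For \textsf{handle\_primary\_synchronization\_order} targeting account $a$, both $\balance^a(\alpha)$ and $\funding^a(\alpha)$ increase by the same amount $\amount(S)$, preserving the inequality (and leaving $\confirmedlog^a(\alpha)$ unchanged). For \textsf{handle\_confirmation\_order} where $a = \sender(C)$, the code checks $\balance^a(\alpha) \geq \amount(C)$ and $\nextsequencenumber^a(\alpha) = \sequencenumber(C)$, then decrements $\balance^a(\alpha)$ by $\amount(C)$ and appends $C$ to $\confirmedlog^a(\alpha)$: the LHS of the inequality is unchanged, while the RHS is unchanged as well. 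The sequence-number check together with Lemma~\ref{le:fastpay:lemma-s0} (certificate uniqueness) guarantees that the appended certificate is the unique $C_n$ with $\sequencenumber(C_n) = n$ and $\sender(C_n) = a$, preserving the structural claim.

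The delicate part, which I expect to be the main obstacle, is the pair of operations that credit the recipient: the in-shard branch of \textsf{handle\_confirmation\_order} and the cross-shard branch followed by \textsf{handle\_cross\_shard\_commit}. In the in-shard case, the sender update adds $C$ to $\confirmedlog^b(\alpha)$ (so $\received^a(\alpha)$ grows by $C$, raising the RHS by $\amount(C)$) and simultaneously increments $\balance^a(\alpha)$ (raising the LHS by $\amount(C)$), so the inequality is preserved with equality of increments. In the cross-shard case the proof must reason about two separate moments: at the sender-shard update the RHS gains $\amount(C)$ via $\received^a(\alpha)$ while the LHS is unchanged, so the slack in the inequality only \emph{grows}; then at the later \textsf{handle\_cross\_shard\_commit} the LHS gains $\amount(C)$ while the RHS is unchanged, consuming exactly the previously introduced slack. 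The hard part is arguing this formally for a sharded authority, namely that the reliable once-only delivery of cross-shard messages between shards of the same authority (as described in \Cref{sec:fastpay:sharding}) guarantees that every addition to $\received^a(\alpha)$ is eventually matched by exactly one matching balance increment, and that no spurious increments occur, so the inequality is preserved at every intermediate state.

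Finally, I would observe that the structural claim $\confirmedlog^a(\alpha) = \{C_0, \ldots, C_{n-1}\}$ with $\sequencenumber(C_k) = k$ is only affected by the sender branch of \textsf{handle\_confirmation\_order}, which appends exactly one certificate per accepted sequence number (checked against $\nextsequencenumber^a(\alpha)$ and incremented atomically). Combining this with Lemma~\ref{le:fastpay:lemma-s0} closes the induction.
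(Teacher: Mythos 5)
Your proof is correct and follows essentially the same route as the paper's: an induction over the authority's state-mutating operations, checking that synchronization orders raise both sides equally, that confirming a sent certificate leaves the left-hand side unchanged (balance down by $\amount(C)$, confirmed log up by $\amount(C)$), and that the possibly delayed cross-shard credit of the recipient is matched exactly once by the earlier growth of $\received^a(\alpha)$. The paper compresses this into a three-line case analysis; your version merely makes the induction, the base case, and the cross-shard slack argument explicit.
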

\begin{proof}
\small
By construction of the \fastpay authorities (\Cref{fig:fastpay:code} and \Cref{fig:fastpay:code-core}):
Whenever a confirmed certificate $C$ is added to $\confirmedlog^a(\alpha)$, $\balance^a(\alpha)$ is decreased by $\amount(C)$, and $\nextsequencenumber^a(\alpha)$ is incremented into $\sequencenumber(C) + 1$.
Any new synchronization order equally increases $\balance^a(\alpha)$ and $\funding^a(\alpha)$.
Whenever a confirmed certificate $C$ is added to $\received^a(\alpha)$, possibly later (due to cross-shard updates), $\balance^a(\alpha)$ is increased once by $\amount(C)$.
\end{proof}

\begin{lemma}[\sysmain invariant] \label{le:fastpay:lemma-s2}
The total balance of all \fastpay accounts on \sysmain is such that
\[
\begin{array}{l}
\totalbalance(\sigma) = \bigg( \sum_a \funding^a(\sigma) - \\
\qquad \sum_{C \in \redeemlog(\sigma)}\amount(C) \bigg)
\end{array}
\]
\end{lemma}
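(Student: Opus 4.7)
The natural strategy is a straightforward induction on the sequence of transactions that the \fastpay smart contract has processed on the \sysmain, i.e.\ on the length of $\transactions(\sigma)$. The base case corresponds to the initial blockchain state $\sigma_0$ at deployment of the contract: here $\totalbalance(\sigma_0) = 0$, no funding transaction has been processed so $\funding^a(\sigma_0) = 0$ for all $a$, and the redeem log is empty, so both sides of the equality are $0$.

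For the inductive step, assume the invariant holds in state $\sigma$ and let $T$ be the next transaction processed by the smart contract that yields $\sigma'$. By the description of the smart contract in \Cref{sec:fastpay:onchain-state}, the only two kinds of transactions that touch the fields appearing in the statement are \first funding transactions adding value into the contract and \second redeem transactions releasing value out of the contract. I would treat them as two cases. In the funding case with recipient $a$ and amount $v$, the contract increments $\totalbalance$ by $v$ and, by definition of $\funding^a$, the quantity $\sum_a \funding^a(\sigma)$ also grows by exactly $v$, while $\redeemlog(\sigma)$ is unchanged; both sides grow by $v$. In the redeem case with certificate $C$ of amount $v$ and sender $a$, the contract \first checks $\sequencenumber(C) \notin \redeemlog^a(\sigma)$, \second appends $\sequencenumber(C)$ to $\redeemlog^a$, and \third decreases $\totalbalance$ by $v$; the left-hand side decreases by $v$ and, since this is the first time $C$ is counted in $\redeemlog(\sigma')$, the subtracted sum on the right-hand side also grows by exactly $v$. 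In both cases the equality is preserved.

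The only subtle point, and the step I would flag as the main obstacle, is arguing that each certificate $C$ contributes at most once to $\sum_{C \in \redeemlog(\sigma)} \amount(C)$. This is needed so that the redeem case adds exactly $v$, not $0$ (for a replay) or $2v$ (for a duplicate). Here I would rely on two facts: the contract's explicit check that $\sequencenumber(C)$ is not already recorded in $\redeemlog^{\sender(C)}$, together with Lemma~\ref{le:fastpay:lemma-s0} (transfer certificate uniqueness), which guarantees that a pair $(\sender(C), \sequencenumber(C))$ determines $\val(C)$ and thus $\amount(C)$ unambiguously. Combining these two observations shows that the redeem log, indexed by sender--sequence pairs, contains each processed certificate exactly once, which is precisely what the inductive step needs to conclude.
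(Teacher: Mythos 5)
Your proof is correct and follows essentially the same route as the paper's: an induction over the transactions processed by the smart contract, observing that a funding transaction increases both sides by $\amount(T)$ and a redeem transaction decreases both sides by $\amount(C)$. The only difference is that you make explicit (via the $\redeemlog$ membership check and Lemma~\ref{le:fastpay:lemma-s0}) the point that each certificate is counted at most once, which the paper's proof leaves implicit in the phrase ``redeemed on-chain and added to $\redeemlog(\sigma)$''.
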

\begin{proof}
\small
By construction of the smart contract handling funding and redeeming transactions (\Cref{sec:fastpay:libra-to-fastpay}): whenever a funding transaction $T$ is executed by the smart contract, both $\funding^a(\sigma)$ and $\totalbalance(\sigma)$ increase by $\amount(T)$. Conversely, $\totalbalance(\sigma)$ decreases by $\amount(C)$ whenever a \sysmain transfer certificate $C$ is redeemed on-chain and added to $\redeemlog(\sigma)$.
\end{proof}

\begin{lemma}[Funding log synchronization] \label{le:fastpay:lemma-s3}
For every honest authority $\alpha$ and every account $a$, it holds that
\[ \funding^a(\alpha) \leq \funding^a(\sigma) \]
\end{lemma}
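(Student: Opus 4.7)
The plan is to establish an injective correspondence between the synchronization orders logged by the honest authority $\alpha$ for account $a$ and a subset of the funding transactions recorded on the \sysmain with recipient $a$, such that amounts match. The inequality then follows by comparing the two sums term-by-term.

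First, I would recall the trust assumption spelled out in \Cref{fig:fastpay:code-core}: a synchronization order $\sync$ delivered to \texttt{handle\_primary\_synchronization\_order} is assumed to come from the \sysmain (namely, it is generated by the \fastpay smart contract in response to a funding transaction $T \in \transactions(\sigma)$ with $\recipient(T) = \recipient(\sync)$ and $\amount(T) = \amount(\sync)$, and tagged with the unique sequential transaction index attached by the contract, as described in \Cref{sec:fastpay:libra-to-fastpay}). Combined with the safety of the \sysmain (assumption 4 of \Cref{sec:fastpay:properties}), this means every $\sync$ that an honest $\alpha$ ever accepts originates from a distinct funding transaction $T \in \transactions(\sigma)$ whose recipient is $a$ (since only sync orders with $\recipient(\sync) = a$ are stored in $\synchronizationlog^a(\alpha)$).

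Next, I would argue injectivity of the map $\sync \mapsto T$. The code of \texttt{handle\_primary\_synchronization\_order} only appends $\sync$ to $\synchronizationlog^a(\alpha)$ after checking $\transactionindex(\sync) = \lasttransactionindex(\alpha) + 1$ and then incrementing $\lasttransactionindex(\alpha)$. Therefore all sync orders stored by $\alpha$ carry pairwise distinct transaction indices, and since the \sysmain assigns transaction indices uniquely to funding transactions, the underlying funding transactions are pairwise distinct.

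From injectivity and amount-preservation, I obtain
\[
\funding^a(\alpha) \;=\; \sum_{\sync \in \synchronizationlog^a(\alpha)} \amount(\sync) \;\leq\; \sum_{\left\{\!\scriptsize\begin{array}{l} T \in \transactions(\sigma)\\ \recipient(T)=a \end{array}\right.} \amount(T) \;=\; \funding^a(\sigma),
\]
which is the claim. The only subtle point, and the main obstacle, is justifying the trust assumption on the provenance of $\sync$: it hinges on $\alpha$ running a full \sysmain client (as noted in \Cref{sec:fastpay:libra-to-fastpay}) and on the \sysmain's safety, which together rule out an honest authority ever accepting a synchronization order not backed by a genuine on-chain funding transaction. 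This is where the lemma genuinely depends on the external assumptions of the model rather than on the internal \fastpay protocol logic.
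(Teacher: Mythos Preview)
Your proof is correct and follows essentially the same approach as the paper: both rely on the design of the synchronization mechanism (\Cref{sec:fastpay:libra-to-fastpay}) together with the assumed safety of the \sysmain and its client to argue that every increase to $\funding^a(\alpha)$ is preceded by a matching increase to $\funding^a(\sigma)$. The paper compresses this into a single sentence, whereas you unpack it explicitly via the injective map induced by transaction indices; your more careful treatment of injectivity is a welcome elaboration but not a different idea.
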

\begin{proof}
\small
By definition of the synchronization with the \sysmain (\Cref{sec:fastpay:libra-to-fastpay}), and by security of the \sysmain and its client,  $\funding^a(\alpha)$ only increases after a funding transaction has already increased $\funding^a(\sigma)$ by the same amount.
\end{proof}

\begin{lemma}[Balance check] \label{le:fastpay:lemma-s4}
For every honest authority $\alpha$, whenever $O = \pendingconfirmation^a(\alpha)$ holds, then we also have \[\amount(O) \leq \balance^a(\alpha)\]
\end{lemma}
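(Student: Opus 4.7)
The plan is to treat the statement as an invariant of the honest authority's state and prove it by induction on the sequence of state transitions at $\alpha$. The base case is immediate: initially $\pendingconfirmation^a(\alpha)$ is absent for every $a$, so the hypothesis is vacuous. For the inductive step, I would enumerate every authority handler that could change either $\pendingconfirmation^a(\alpha)$ or $\balance^a(\alpha)$, namely the four algorithms in \Cref{fig:fastpay:code} and \Cref{fig:fastpay:code-core}, and verify preservation.

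The handlers \texttt{handle\_cross\_shard\_commit} and \texttt{handle\_primary\_synchronization\_order} can only increase $\balance^a(\alpha)$ and never touch $\pendingconfirmation^a(\alpha)$, so any pre-existing inequality is preserved. For \texttt{handle\_transfer\_order}, there are two sub-cases. If the pending slot is already occupied by some $O'$, the code returns without modification unless the incoming order equals $O'$; either way neither the balance nor the pending slot moves. If the pending slot is empty and a new order $O$ is being installed, the code explicitly executes \texttt{ensure!(account.balance >= amount(O))} immediately before the assignment \texttt{account.pending = Some(...)}, which establishes $\amount(O) \leq \balance^a(\alpha)$ in the same atomic step.

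The key handler is \texttt{handle\_confirmation\_order}. Here I would observe that the only place in the whole authority code where the sender's balance decreases is in this function, and that in the same atomic step the code assigns \texttt{sender\_account.pending = None}. Consequently, every balance-decreasing transition simultaneously invalidates the hypothesis of the lemma, so the invariant is trivially re-established after the transition. If the incoming certificate is ignored (old sequence number) or bails on one of the \texttt{ensure} checks, no state change occurs and the invariant carries over unchanged.

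The main subtlety to handle carefully will be the equivocation scenario permitted by the fact that $\pendingconfirmation^a(\alpha) = O$ does not by itself guarantee that the only certificate that can arrive with $\sequencenumber(C) = \sequencenumber(O)$ certifies $O$. A Byzantine sender could have built a certificate $C$ for some $O' \ne O$ with the same sequence number using a quorum not involving $\alpha$ (Lemma~\ref{le:fastpay:lemma-s0} only rules out \emph{two} such certificates once both exist). I would argue that this does not break the invariant because, regardless of whether $\val(C) = O$ or not, \texttt{handle\_confirmation\_order} clears the pending slot whenever it mutates the balance, so after the step the hypothesis no longer holds. Once this case is dispatched, all remaining cases are straightforward.
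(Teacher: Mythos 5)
Your proof is correct and rests on exactly the same two observations as the paper's own argument: the \texttt{ensure} on the balance at the moment the order is installed as pending, and the fact that the only balance-decreasing transition (confirmation) simultaneously clears $\pendingconfirmation^a(\alpha)$ so the hypothesis cannot survive it. The paper phrases this as a backward-looking temporal argument from the moment $O$ became pending rather than as a forward induction over handlers, but that is a presentational difference only, and your explicit treatment of the equivocation case matches the paper's parenthetical remark about confirmations resetting the pending field.
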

\begin{proof}
\small
By construction of the \fastpay authorities (\Cref{fig:fastpay:code}), if $O = \pendingconfirmation^a(\alpha)$, then $O$ was successfully processed by~$\alpha$ as a new transfer order from account $a$. At the time of the request, $\amount(O)$ did not exceed the current balance $B$. Since $O$ is still pending, in the meantime, no other transfer certificates from account $a$ have been confirmed by $\alpha$. (A confirmation would reset the field $\pendingconfirmation$ and prevent $O$ from being pending again due to the increased sequence number.) Therefore, the balance did not decrease, and $\balance^a(\alpha) \geq B \geq \amount(O)$.
\end{proof}

\begin{proposition}[Account safety] \label{pr:fastpay:proposition-s5}
For every account $a$, at any given time, we have that
\[
\begin{array}{l}
\sum_{\sender(C)=a} \amount(C) \;\leq\; \\
\\
\qquad \funding^a(\sigma) \;+\; \sum_{\recipient(C)=a} \amount(C)
\end{array}
\]
\end{proposition}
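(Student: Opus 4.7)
The plan is to reduce the global statement (quantified over all certificates) to a single snapshot inside one honest authority, and then close the argument using the earlier lemmas. First I would use Lemma~\ref{le:fastpay:lemma-s0} (transfer certificate uniqueness) to enumerate all certificates with $\sender(C)=a$: they form a sequence $C_0, C_1, \ldots, C_{n-1}$ indexed by their sequence numbers, for some $n$. Since the sum on the left-hand side is over this finite set, it suffices to bound $\sum_{k=0}^{n-1} \amount(C_k)$.

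Next I would pick, via a quorum-intersection argument, an honest authority $\alpha$ that signed the transfer order $O_{n-1} = \val(C_{n-1})$ certified by the ``last'' certificate. The key step is to look at the state of $\alpha$ at the precise moment it accepts $O_{n-1}$ as a pending transfer. Because $\alpha$ checks $\sequencenumber(O_{n-1}) = \nextsequencenumber^a(\alpha)$ before signing, the second clause of Lemma~\ref{le:fastpay:lemma-s1} forces $\confirmedlog^a(\alpha)$ to contain exactly the certificates $C_0, \ldots, C_{n-2}$ at that instant (with the same values, again by Lemma~\ref{le:fastpay:lemma-s0}). Applied at the same instant, the first clause of Lemma~\ref{le:fastpay:lemma-s1} gives
\[
\balance^a(\alpha) + \sum_{k=0}^{n-2} \amount(C_k) \;\leq\; \funding^a(\alpha) + \sum_{C \in \received^a(\alpha)} \amount(C),
\]
and Lemma~\ref{le:fastpay:lemma-s4}, applied just after $O_{n-1}$ becomes the pending order, yields $\amount(O_{n-1}) \leq \balance^a(\alpha)$.

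To conclude, I would add the last two inequalities to get $\sum_{k=0}^{n-1} \amount(C_k) \leq \funding^a(\alpha) + \sum_{C \in \received^a(\alpha)} \amount(C)$, then bound $\funding^a(\alpha) \leq \funding^a(\sigma)$ by Lemma~\ref{le:fastpay:lemma-s3}, and finally observe that $\received^a(\alpha) \subseteq \{C : \recipient(C)=a\}$ so the right-hand sum is dominated by the full sum over all received certificates. Chaining these bounds gives exactly the desired inequality.

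\textbf{Main obstacle.} The subtle point is picking a consistent ``moment in time'' at the honest authority $\alpha$. The invariant (Lemma~\ref{le:fastpay:lemma-s1}) and the balance check (Lemma~\ref{le:fastpay:lemma-s4}) hold at different instants, and balances fluctuate because incoming transfers may be applied asynchronously by cross-shard updates. The trick is to evaluate both at the exact instant $\alpha$ transitions $\pendingconfirmation^a(\alpha)$ to $O_{n-1}$: at that moment, the balance has already absorbed all previously confirmed outgoing transfers (by the sequence-number check), so the pending amount $\amount(O_{n-1})$ can be legitimately added on top of the $n-1$ previously certified amounts without double-counting. Once this temporal alignment is made explicit, the rest is an accounting chain.
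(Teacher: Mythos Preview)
Your proposal is correct and follows essentially the same route as the paper: pick an honest signer $\alpha$ of the certificate with the highest sequence number, combine Lemma~\ref{le:fastpay:lemma-s1} (the FastPay invariant, both clauses) with Lemma~\ref{le:fastpay:lemma-s4} (balance check) at the instant that order becomes pending, then finish with Lemma~\ref{le:fastpay:lemma-s3} and the inclusion $\received^a(\alpha) \subseteq \{C : \recipient(C)=a\}$. Your explicit discussion of the temporal alignment is a useful clarification that the paper's proof leaves implicit in the phrase ``at the time of signature.''
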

\begin{proof}
\small
Let $n$ be the highest sequence number of a transfer certificate $C_n$ from $a$.
Let $\alpha$ an honest authority whose signature is included in the certificate. At the time of signature,  we had $\val(C_n) = \pendingconfirmation^a(\alpha)$, therefore by \Cref{le:fastpay:lemma-s1} and \Cref{le:fastpay:lemma-s4}:
\[
\begin{array}{l}
\bigg( \amount(C_n) \;+\; \sum_{C\in\confirmedlog^a(\alpha)} \amount(C) \bigg) \\[0.6em]
\quad \leq \bigg( \funding^a(\alpha) \;+\; \sum_{C \in \received^a(\alpha)} \amount(C) \bigg)
\end{array}
\]
Given that $n$ is the highest sequence number, by \Cref{le:fastpay:lemma-s0} and \Cref{le:fastpay:lemma-s1}, the left-hand term exactly covers the certified transfer orders from $a$ and is equal to $\sum_{\sender(C)=a} \amount(C)$.

Given that amounts are non-negative, for every honest node $\alpha$, we have \[\sum_{C \in \received^a(\alpha)} \amount(C) \;\leq\; \sum_{\recipient(C)=a}\amount(C)\]

Finally, $\funding^a(\alpha) \leq \funding^a(\sigma)$ by \Cref{le:fastpay:lemma-s3}.
\end{proof}

\begin{theorem}[Solvency of \fastpay] \label{th:fastpay:theorem-s6} At any time, the sum of the amounts of all existing certified transfers from \fastpay to the \sysmain cannot exceed the funds collected by all transactions on the \sysmain smart contract:
\[
\sum_{\recipient(C) \in \sysmainaddresses} \!\!\amount(C)
\;\leq\; \sum_a \funding^a(\sigma)
\]
\end{theorem}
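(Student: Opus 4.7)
The plan is to derive Security Theorem \ref{th:fastpay:theorem-s6} as an essentially immediate consequence of Proposition \ref{pr:fastpay:proposition-s5} by summing that per-account inequality over all \fastpay accounts and observing a cancellation between the sender and recipient sides.

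First, I would apply Proposition \ref{pr:fastpay:proposition-s5} to each \fastpay address $a$ and sum:
\[
\sum_a \sum_{\sender(C)=a} \amount(C) \;\leq\; \sum_a \funding^a(\sigma) \;+\; \sum_a \sum_{\recipient(C)=a} \amount(C).
\]
The key bookkeeping observation is that every certified transfer $C$ has exactly one sender, and that sender is by construction a \fastpay address (only \fastpay account owners produce transfer orders). Hence the left-hand side equals the total $\sum_C \amount(C)$ taken over \emph{all} existing certificates, which partitions according to whether $\recipient(C)$ is a \fastpay or a \sysmain address:
\[
\sum_a \sum_{\sender(C)=a} \amount(C) \;=\; \!\!\sum_{\recipient(C)\in\fastpayaddresses}\!\! \amount(C) \;+\; \!\!\sum_{\recipient(C)\in\sysmainaddresses}\!\! \amount(C).
\]

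Second, I would observe that the double sum on the right-hand side exactly matches the \fastpay-recipient component of this partition, since summing over \fastpay addresses $a$ enumerates precisely the certificates whose recipient is a \fastpay address:
\[
\sum_a \sum_{\recipient(C)=a} \amount(C) \;=\; \!\!\sum_{\recipient(C)\in\fastpayaddresses}\!\! \amount(C).
\]
Subtracting this common term from both sides of the summed inequality yields
\[
\sum_{\recipient(C)\in\sysmainaddresses}\!\! \amount(C) \;\leq\; \sum_a \funding^a(\sigma),
\]
which is the desired statement. All quantities are non-negative, so the cancellation is valid.

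The only real subtlety, and the step I would be most careful about, is justifying the partition of certificates by sender: I would need to note that Lemma \ref{le:fastpay:lemma-s0} guarantees no two distinct certificates share the same $(\sender,\sequencenumber)$ pair, so that each certified transfer is counted exactly once on the left, and that sender addresses are always \fastpay accounts (the \sysmain side never signs transfer orders, only funding transactions, which are not certificates). Once these two points are in place, the rest of the argument is purely arithmetic rearrangement of finite sums and requires no further invariants beyond Proposition \ref{pr:fastpay:proposition-s5}.
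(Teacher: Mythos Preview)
Your proposal is correct and follows essentially the same approach as the paper: sum Proposition~\ref{pr:fastpay:proposition-s5} over all accounts, identify the sender-side sum with $\sum_C \amount(C)$, and cancel the $\recipient(C)\in\fastpayaddresses$ terms against the right-hand side to isolate the $\sysmainaddresses$ component. The paper's proof is terser and does not explicitly invoke Lemma~\ref{le:fastpay:lemma-s0} or the fact that senders are always \fastpay addresses, but your additional care on these bookkeeping points is harmless and arguably clearer.
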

\begin{proof} 
\small
By applying \Cref{pr:fastpay:proposition-s5} on every account and summing, we obtain:
\[
\begin{array}{ll}
\sum_a \funding^a(\sigma) \geq \\[0.4em]
\qquad \bigg( \sum_{C} \amount(C) - \sum_{\recipient(C)\in \fastpayaddresses} \amount(C) \bigg) \\[0.6em]
\qquad = \; \sum_{\recipient(C) \in \sysmainaddresses} \amount(C)
\end{array}
\]
\end{proof}

\subsection{Liveness}
Next, we describe how receivers of valid transfer certificates can finalize transactions and make funds available on their own accounts (\sysmain and \fastpay).

\begin{proposition}[Redeemability of valid transfer certificates to \sysmain] \label{pr:fastpay:proposition-l1B}
A new valid \sysmain transfer certificate $C$ can always be redeemed by sending a new redeem transaction $T$ to the smart contract.
\end{proposition}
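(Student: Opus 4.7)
The plan is to reduce the claim to the two conditions that the smart contract checks when handling a redeem transaction (as described in Section~\ref{sec:fastpay:onchain-state}): \textbf{(i)} $\sequencenumber(C)$ is not already in $\redeemlog^a(\sigma)$ for $a = \sender(C)$, and \textbf{(ii)} the smart contract holds enough funds, i.e.\ $\totalbalance(\sigma) \geq \amount(C)$. If both hold, the redeem transaction executes and transfers $\amount(C)$ to $\recipient(C)$ on the \sysmain.

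First I would dispose of condition \textbf{(i)}: by hypothesis $C$ is a \emph{new} valid \sysmain transfer certificate, meaning its sequence number has never been redeemed. Since the smart contract only inserts into $\redeemlog^a(\sigma)$ when a redeem transaction referencing $(a, \sequencenumber(C))$ is processed, and by Lemma~\ref{le:fastpay:lemma-s0} (transfer certificate uniqueness) no other certificate from $a$ shares the same sequence number, freshness holds.

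The main step is showing solvency, condition \textbf{(ii)}. Here I would use Lemma~\ref{le:fastpay:lemma-s2} to rewrite
\[
\totalbalance(\sigma) \;=\; \sum_a \funding^a(\sigma) \;-\; \sum_{C' \in \redeemlog(\sigma)} \amount(C'),
\]
and then invoke Security Theorem~\ref{th:fastpay:theorem-s6} (Solvency of \fastpay), which bounds the sum of amounts of \emph{all} existing certified \sysmain transfers (including the new $C$ together with those already in $\redeemlog(\sigma)$) by $\sum_a \funding^a(\sigma)$. Rearranging yields
\[
\amount(C) \;\leq\; \sum_a \funding^a(\sigma) \;-\!\!\!\sum_{C' \in \redeemlog(\sigma)}\!\!\!\amount(C') \;=\; \totalbalance(\sigma),
\]
which is precisely condition \textbf{(ii)}.

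The hard part will be the bookkeeping in the solvency step: one must ensure that the set of certificates covered by Security Theorem~\ref{th:fastpay:theorem-s6} includes both the already-redeemed ones and the new $C$, and that no double counting occurs when separating $\redeemlog(\sigma)$ from $C$ itself. Once this is made explicit, the proof concludes by observing that since both conditions checked by the smart contract are satisfied, the redeem transaction $T$ is accepted and $\amount(C)$ is transferred to $\recipient(C)$ on-chain, completing the redemption.
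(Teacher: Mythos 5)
Your proposal is correct and follows essentially the same route as the paper's own (much terser) proof: the paper likewise derives solvency of the redeem transaction from Security Theorem~\ref{th:fastpay:theorem-s6} and derives the exactly-once/freshness condition from the definition of $\redeemlog(\sigma)$. Your version merely makes explicit the intermediate use of \Cref{le:fastpay:lemma-s2} to convert the solvency bound into a statement about $\totalbalance(\sigma)$ and of \Cref{le:fastpay:lemma-s0} to rule out double counting, which is a faithful elaboration rather than a different argument.
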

\begin{proof}
\small
\Cref{th:fastpay:theorem-s6} shows that the smart contact always has enough funding for all certified \sysmain transfer orders. The definition of $\redeemlog(\sigma)$ (\Cref{sec:fastpay:libra-to-fastpay}) thus ensures that any new certified \sysmain transfer order can be redeemed exactly once.
\end{proof}

\begin{proposition}[Redeemability of valid transfer certificates to \fastpay] \label{pr:fastpay:proposition-l1A}
Any user can eventually have a valid \fastpay transfer certificate $C$ confirmed by any honest authority. \end{proposition}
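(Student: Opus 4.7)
The plan is to show that once a valid certificate $C$ for sender $a$ with sequence number $n = \sequencenumber(C)$ exists, any honest authority $\alpha$ can be driven to process $C$ as a confirmation order, thereby appending it to $\confirmedlog^a(\alpha)$. I would first inspect the authority algorithm \texttt{handle\_confirmation\_order} (\Cref{fig:fastpay:code}) and identify the three conditions that must be met at $\alpha$: \first $C$ carries $2f+1$ valid signatures (already true by hypothesis), \second $\nextsequencenumber^a(\alpha) = n$ up to the ``old certificate'' shortcut, and \third $\balance^a(\alpha) \geq \amount(C)$. All other steps in the handler are local writes that cannot fail.

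Next, I would address the sequence-number prerequisite. Because $C$ is a valid certificate, at least $f+1$ honest authorities signed it and therefore, by Lemma~\ref{le:fastpay:lemma-s1}, each such authority $\beta$ stores in $\confirmedlog^a(\beta)$ a complete chain of preceding certificates $C_0,\ldots,C_{n-1}$ for account $a$. By the state recovery/auditing interface of \Cref{sec:fastpay:audits}, any party can download this chain from $\beta$. The party then replays $C_0,\ldots,C_{n-1}$ in order as confirmation orders to $\alpha$; by induction on $k$, after $\alpha$ processes $C_k$ its next sequence number equals $k+1$, so $C_{k+1}$ passes the equality check (old-certificate fast path trivially covers the case where $\alpha$ is already ahead).

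The remaining obstacle, and the genuinely delicate one, is the balance check at $\alpha$. Because cross-shard updates and \sysmain synchronization orders are delivered asynchronously, $\balance^a(\alpha)$ may temporarily lag other authorities even though $C$ was signable elsewhere. I would argue existence of an eventual time $t^\star$ after which every funding transaction in $\funding^a(\sigma)$ that was committed to the \sysmain before the submission of $C$ has been delivered to $\alpha$ as a synchronization order, and every certificate of transfer into $a$ that the client retrieved (also via the audit interface) has been forwarded to $\alpha$ as a confirmation order (which $\alpha$ dispatches to the recipient shard via the reliable cross-shard channel of \Cref{sec:fastpay:sharding}). Combining these deliveries with Proposition~\ref{pr:fastpay:proposition-s5} (account safety) applied at the point $\sequencenumber = n$, the quantity $\funding^a(\alpha) + \sum_{C' \in \received^a(\alpha)} \amount(C')$ at $\alpha$ is at least $\sum_{k \leq n} \amount(C_k)$, and hence after replaying $C_0,\ldots,C_{n-1}$ in order, $\balance^a(\alpha) \geq \amount(C)$.

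Putting the pieces together, the client retries submitting $C$ to $\alpha$ after time $t^\star$; all three checks succeed, $\alpha$ appends $C$ to $\confirmedlog^a(\alpha)$, updates the sender state, and either credits the recipient locally or dispatches a cross-shard update. The hard part of this argument is cleanly separating what is needed for the sequence-number precondition (purely a replay of auditable history) from what is needed for the balance precondition (asynchronous liveness of inter-shard and \sysmain-to-authority channels), and invoking the global safety invariant to guarantee that the balance at $\alpha$ is eventually large enough to let $C$ through.
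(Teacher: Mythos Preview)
Your handling of the sequence-number precondition matches the paper exactly: since $C$ carries $2f+1$ signatures, at least $f+1$ honest authorities signed it, and by Lemma~\ref{le:fastpay:lemma-s1} each of them holds the full chain $C_0,\ldots,C_{n-1}$ in $\confirmedlog^a$; the audit interface of \Cref{sec:fastpay:audits} makes this chain retrievable, and replaying it at $\alpha$ brings $\nextsequencenumber^a(\alpha)$ to $n$.

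Where you diverge sharply is on the balance check. You flag it as ``the genuinely delicate one'' and propose to discharge it by forwarding all incoming certificates and synchronization orders to $\alpha$, then invoking Proposition~\ref{pr:fastpay:proposition-s5}. The paper takes the opposite stance: immediately after the proof it states that ``no other certificates need to be confirmed (\eg to credit the balance of $\sender(C)$ itself)'' and that this ``is closely related to the possibility of (temporary) negative balances for authorities.'' In the paper's design, confirmation orders are \emph{not} meant to gate on balance---only transfer orders are (cf.\ Lemma~\ref{le:fastpay:lemma-s4})---so replaying the sender's own chain $C_k,\ldots,C_{n-1}$ is the entire argument. Your extra machinery is therefore unnecessary under the stated design, and your framing of the balance step as the hard part inverts the paper's emphasis. (Your route also hides a recursion you do not address: confirming an incoming certificate $C'$ at $\alpha$ first requires bringing $\alpha$ up to date on $\sender(C')$'s chain, and with a balance check on confirmations that sub-problem has the same shape.)

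You were likely misled by the pseudocode in \Cref{fig:fastpay:code}, whose confirmation handler does contain an \texttt{ensure!}\ on balance; there is a visible tension between that listing and the prose that explicitly permits temporary negative balances at authorities. The paper's proof resolves it in favor of the prose. To align with the paper, drop the balance-synchronization step and cite the negative-balance remark; if you prefer to keep your version, you should flag the code/prose discrepancy explicitly rather than silently treat the code as authoritative.
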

\begin{proof}
\small
If a certificate $C$ exists for account $a$ and sequence number $n$, this means at least $f+1$ honest authorities contributed signatures to the transfer order $O = \val(C)$.
By construction of \fastpay, these authorities have received (\Cref{fig:fastpay:code}) and will keep available (\Cref{sec:fastpay:audits}) all the previous confirmation orders $C_0 \ldots C_{n-1}$ with $\sender(C_k) = a$, $\sequencenumber(C_k) = k$. Therefore, any client can retrieve them and eventually bring any other honest authority up to date with $C$.
\end{proof}

Specifically, in \Cref{pr:fastpay:proposition-l1A}, the confirmation order for $C$ is guaranteed to succeed for every honest authority $\alpha$, provided that the user first recovers and transfers to $\alpha$ all the \emph{missing certificates required by $\alpha$}, defined as the sequence $C_k \ldots C_{n-1}$ such that $k = \nextsequencenumber^a(\alpha)$, $a = \sender(C)$, $n = \sequencenumber(C)$, $\sender(C_i) = a$ $(k\leq i \leq n-1)$. The fact that no other certificates need to be confirmed (\eg to credit the balance of $\sender(C)$ itself) is closely related to the possibility of (temporary) negative balances for authorities.

Note that having a \fastpay certificate confirmed by an authority $\alpha$ only affects $\alpha$'s recipient and the sender's balances (\ie \emph{redeems the certificate}) the first time it is confirmed.
Finally, we state that \fastpay funds credited on an account can always be spent.
We write $\received(a)$ for the set of \emph{incoming} transfer certificates $C$ such that $\recipient(C)=a$ and $C$ is known to the owner of the account\ $a$.

\begin{proposition}[Availability of transfer certificates] \label{pr:fastpay:proposition-l2}
Let $a$ be an account owned by a correct user, $n$ be the next available sequence number after the last signed transfer order (if any, otherwise $n=0$), and $O$ be a new transfer order signed by~$a$ with $\sequencenumber(O) = n$ and $\sender(O) = a$.

Assume that the owner of $a$ has secured enough funds for a new order $O$ based on their knowledge of the chain $\sigma$, the history of outgoing transfers, and the set $\received(a)$. That is, formally:
\[
\begin{array}{l}
\bigg( \amount(O) \; + \; \sum_{\left\{\!\!\!\!\scriptsize\begin{array}{l} \sender(C)=a\\ \sequencenumber(C) < n\end{array}\right.}\!\!\!\!\amount(C) \bigg) \\[0.4em]
\quad \leq \; \bigg( \funding^a(\sigma) \; + \; \sum_{C \in \received(a)} \amount(C) \bigg)
\end{array}
\]
Then, for any honest authority $\alpha$, the user will always eventually obtain a valid signature of $O$ from $\alpha$ after sending the following orders to $\alpha$:
\begin{enumerate}
\setlength\itemsep{0em}
\item A synchronization order from the \sysmain based on the known state $\sigma$;
\item A confirmation order for every $C \in \received(a)$, preceded by all the missing certificates required by $\alpha$ (if any) for the sender of $C$;
\item Then, the transfer order $O$.
\end{enumerate}
\end{proposition}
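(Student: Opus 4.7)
I would verify that after the listed steps, each of the four checks performed by $\alpha$ inside \texttt{handle\_transfer\_order} succeeds on $O$: signature validity, absence of a conflicting pending order, sequence number equality, and sufficient balance. The first two are easy consequences of user correctness: $a$ signs $O$ honestly, and the first correctness rule (one transfer order per sequence number) implies that any pending order already stored at $\alpha$ for $a$ must equal $O$ itself, in which case $\alpha$ simply returns its cached signature. For the sequence number, I would drive $\nextsequencenumber^a(\alpha)$ up to $n$ by feeding $\alpha$ the certificates $C_0,\ldots,C_{n-1}$ that the user obtained by correctness; Proposition~\ref{pr:fastpay:proposition-l1A} guarantees that this eventually succeeds. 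This catch-up is not named among the three listed steps but is covered by the ``eventually'' quantifier and can be interleaved with step~(2), since it concerns $a$'s outgoing log rather than the senders of incoming certificates.

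\textbf{Balance.} The core of the argument will be the balance check. Step~(1), iterated over the pending funding transactions between $\lasttransactionindex(\alpha)$ and the transaction index of $\sigma$, makes Lemma~\ref{le:fastpay:lemma-s3} tight, so that $\funding^a(\alpha) = \funding^a(\sigma)$. Step~(2) then appends every $C \in \received(a)$ to $\confirmedlog^{\sender(C)}(\alpha)$---after using Proposition~\ref{pr:fastpay:proposition-l1A} to catch $\alpha$ up on the sender---so $C$ enters $\received^a(\alpha)$, and the induced cross-shard update eventually credits $\balance^a(\alpha)$ by $\amount(C)$ exactly once. Tracking the three kinds of balance mutations at $\alpha$ (synchronization credits, cross-shard recipient credits, and outgoing-confirmation debits), I would obtain $\balance^a(\alpha) \geq \funding^a(\sigma) + \sum_{C \in \received(a)} \amount(C) - \sum_{\sender(C)=a,\, \sequencenumber(C) < n} \amount(C)$, which by the funding hypothesis is at least $\amount(O)$.

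\textbf{Main obstacle.} The delicate point will be the asynchrony of cross-shard updates: after step~(2), a certificate $C$ already lies in $\received^a(\alpha)$ (so the Lemma~\ref{le:fastpay:lemma-s1} inequality is loose), but the recipient-side increment of $\balance^a(\alpha)$ only occurs once the message handled by \texttt{handle\_cross\_shard\_commit} has been delivered. The proof must therefore explicitly wait for these messages before retrying $O$ and invoke the one-credit-per-certificate property to rule out double-counting. Together with the implicit catch-up on $a$'s own outgoing certificates, this is why the statement must be phrased about eventual, rather than immediate, success.
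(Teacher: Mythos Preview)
Your proposal is correct and follows the same line as the paper's proof, which simply defines $B = \funding^a(\sigma) + \sum_{C \in \received(a)} \amount(C) - \sum_{\sender(C)=a,\,\sequencenumber(C)<n}\amount(C)$ and asserts, ``by a case analysis similar to the proof of Lemma~\ref{le:fastpay:lemma-s1},'' that the listed steps eventually yield $\balance^a(\alpha)\geq B$ and $\nextsequencenumber^a(\alpha)=n$; you have unpacked that case analysis explicitly and correctly. Your observation that the catch-up on $a$'s own outgoing certificates $C_0,\ldots,C_{n-1}$ is not literally among the three enumerated steps is apt---the paper's proof leaves this equally implicit, and the discussion in the worst-case efficiency paragraph confirms that such catch-up is indeed expected in practice.
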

\begin{proof}
\small
Let $B \geq \amount(O)$ be the following value evaluated at the time of the creation of the new transfer order $O$:

\[
\begin{array}{l}
B \; = \; \bigg( \funding^a(\sigma) \;-\; \sum_{\left\{\!\!\!\!\scriptsize\begin{array}{l} \sender(C)=a\\ \sequencenumber(C) < n\end{array}\right.}\!\!\!\!\amount(C) \\[0.4em]
\quad \; + \; \sum_{C \in \received(a)} \amount(C) \bigg)
\end{array}
\]
By a case analysis similar to the proof of \Cref{le:fastpay:lemma-s1}, provided that the owner of $a$ is communicating the information described in \Cref{pr:fastpay:proposition-l2} to the authority $\alpha$, it will hold eventually that $\balance^a(\alpha) \geq B \geq \amount(O)$ and $\nextsequencenumber^a(\alpha) = n$. We deduce that eventually $\alpha$ will accept the transfer order~$O$ and make the value of its signed (pending) order available.
\end{proof}

\subsection{Performance under Byzantine Failures}
The \fastpay protocol does not rely on any designated leader (like PBFT~\cite{pbft}) to make progress or create proposals; \fastpay authorities do not directly communicate with each other, and their actions are symmetric. Clients create certificates by gathering the first $2f+1$ responses to a valid transfer order, and no action of a Byzantine authority may delay the creation of a certificate. A Byzantine authority may not even present a signature on a different order as a response to confuse a correct client, since it would have to be signed by the correct payer. Subsequently, a correct client submits the confirmation order to all authorities. Again, Byzantine authorities cannot in any way delay honest authorities from processing the payment locally in their databases, and enabling a subsequent payment for the sending account.

Byzantine clients may attempt denial of service attacks by over-using the system, and for example creating a very large number of receiving accounts (this could be disincentivized by charging some fee for an account creation). However, an attempt to equivocate by sending two transfer orders for a single sequence number could either result in their own account being locked (no single transfer order can achieve $2f+1$ signatures to form a certificate and move to the next sequence number), or one of them succeeding---neither of which degrade performance. Transfer orders with insufficient funds or incorrect sequence numbers are simply rejected, which does not significantly affect performance (if anything they do not result in confirmation orders that are more costly to process than transfer orders, see \Cref{sec:fastpay:evaluation}).

\subsection{Worst-Case Efficiency of \fastpay Clients} 
To initiate a transfer (\Cref{pr:fastpay:proposition-l2}) or receive funds (\Cref{pr:fastpay:proposition-l1A}) from a sender account $a$, a \fastpay client must address a quorum of authorities. During the exchange, each authority $\alpha$ may require missing certificates $C_k \ldots C_{n-1}$, where $k = \nextsequencenumber^a(\alpha)$ is provided by $\alpha$. In an attempt to slow down the client, a Byzantine authority could return $k=0$ and/or fail to respond at some point. To address this, a client should query each authority $\alpha$ in parallel. After retrieving the sequence number~$k$, the required missing certificates should be downloaded sequentially, in reverse order, then forwarded to~$\alpha$. Given that \fastpay client operations succeed as soon as a quorum of authorities complete their exchanges, this strategy ensures client efficiency despite Byzantine authorities.
\section{Implementation} \label{sec:fastpay:implementation}
We implemented both a \fastpay client and a networked multi-core multi-shard \fastpay authority in Rust, using Tokio\footnote{\url{https://tokio.rs}} for networking and ed25519-dalek\footnote{\url{https://github.com/dalek-cryptography/ed25519-dalek}} for signatures. For the verification of the multiple signatures composing a certificate we use ed25519 batch verification. To reduce latency we use UDP for \fastpay requests and replies, and make the core of \fastpay idempotent to tolerate retries in case of packet loss; we also provide an experimental \fastpay implementation using exclusively TCP. Currently, data-structures are held in memory rather than persistent storage. 
We implement an authority shard as a separate operating system process with its own networking and Tokio reactor core, to validate the low overhead of intra-shard coordination (through message passing rather than shared memory). We experimented with manually pinning processes to physical cores without a noticeable increases in performance through the Linux \emph{taskset} feature. It seems the Linux OS does a good job of distributing processes and keeping them on inactive cores. We also experimented with a single process multi-threaded implementation of \fastpay, using a single Tokio reactor for all shards on multi-core machines. However, this led to significantly lower performance, and therefore we opted for using separate processes even on a single machine for each shard. The exact bottleneck justifying this lower performance---whether at the level of Tokio multi-threading or OS resource management---still eludes us.

The implementation for both server and client is less than 4,000 LOC (of which half are for the networking), and a further 1,375 LOC of unit tests. It required about 2.5 months of work for 3 engineers, and a bit over 1,500 git commits. Keeping the core small required constant re-factoring and its simplicity is a significant advantage of the proposed \fastpay design. We are open sourcing the Rust implementation, Amazon web services orchestration scripts, benchmarking scripts, and measurements data to enable reproducible results\footnote{
\url{https://github.com/calibra/fastpay}
}.
\section{Evaluation} \label{sec:fastpay:evaluation}
We evaluate the throughput and latency of our implementation of \fastpay through experiments on AWS. We  particularly aim to demonstrate that \first sharding is effective, in that it increases throughput linearly as expected; \second latency is not overly affected by the number of authorities or shards, and remains near-constant, even when some authorities fail; and \third that the system is robust under extremely high concurrency and transaction loads.

\subsection{Microbenchmarks}
We report on microbenchmarks of the single-CPU core time required to process transfer orders, authority signed partial certificates, and certificates. \Cref{tab:fastpay:microbenchark} displays the cost of each operation in micro seconds ($\mu s$) assuming 10 authorities (recall $1\mu s = 10^{-6}s$); each measurement is the result of 500 runs on an Apple laptop (MacBook Pro) with a 2.9 GHz Intel Core i9 (6 physical and 12 logical cores), and 32 GB 2400 MHz DDR4 RAM. The first 3 rows respectively indicate the time to create and serialize \first a transfer order, \second a partial certificate signed by a single authority, and \third a transfer certificate as part of a confirmation order. The last 3 rows indicate the time to deserialize them and check their validity. 
The dominant CPU cost involves the deserialization and signature check on certificates ($236\mu s$), which includes the batch verification of the 8 signatures (7 from authorities and 1 from sender). However, deserializing orders ($58\mu s$) and votes ($60\mu s$) is also expensive: it involves 1 signature verification (no batching) and creating 1 signature. Those results indicate that a single core shard implementation may only settle just over 4,000 transactions per second---highlighting the importance of sharding to achieve high-throughput. 
In terms of networking costs, a transfer order is 146 bytes, and the signed response is 293 bytes. This could be reduced by only responding with a signature (64 bytes) rather than the full signed order, but we chose to echo back the order to simplify client implementations. A full certificate for an order is 819 bytes, and the response---consisting of an update on the state of the \fastpay account---is 51 bytes. For deployments using many authorities we can compress certificates by using an aggregate signature scheme (such as BLS~\cite{bls}). However, verification CPU costs of BLS only make this competitive for committees larger than 50-100 authorities. We note that all \fastpay message types fit within the common maximum transmission unit of commodity IP networks, allowing requests and replies to be executed using a single UDP packet (assuming no packets loss and 10 authorities).

\begin{table}[t]
\centering
\begin{tabular}{lrr} \toprule
\textbf{Measure} & \textbf{Mean (\bm{$\mu$}s)} & \textbf{Std. (\bm{$\mu$}s)} \\
\midrule
Create \& Serialize Order & 27 & 1 \\
Create \& Serialize Partial Cert. & 27 & 2 \\
Create \& Serialize Certificate & 4 & 0 \\
Deserialize \& Check Order & 58 & 1 \\
Deserialize \& Check Partial Cert. & 60 & 1 \\
Deserialize \& Check Certificate & 236 & 10 \\
\bottomrule
\vspace{0.5mm}
\end{tabular}
\caption[\fastpay microbenchmark.]{Microbenchmark of single core CPU costs of \fastpay operations; average and standard deviation of 500 measurements for 10 authorities.}
\label{tab:fastpay:microbenchark}
\end{table}
%

\subsection{Throughput} \label{sec:fastpay:throughput-eval}
We deploy a \fastpay multi-shard authority on Amazon Web Services (Stockholm, eu-north-1 zone), on a m5d.metal instance. This class of instance guarantees 96 virtual CPUs (48 physical cores), on a 2.5 GHz, Intel Xeon Platinum 8175, and 384 GB memory. The operating system is Linux Ubuntu server 18.04, where we increase the network buffer to about 96MB. 
In all graphs, each measurement is the average of 9 runs, and the error bars represent one standard deviation; all experiments use our UDP implementation.
We measure the variation of throughput with the number of shards. Our baseline experiment parameters are: 4 authorities (for confirmation orders), a load of 1M transactions, and applying back-pressure to allow a maximum of 1000 concurrent transactions at the time into the system (\ie the \emph{in-flight} parameter). We then vary those baseline parameters through our experiments to illustrate their impact on performance.

\para{Robustness and performance under high concurrency}
Figures~\ref{fig:fastpay:x-1000000-z-4-transfer} and~\ref{fig:fastpay:x-1000000-z-4-confirmation} respectively show the variation of the throughput of processing transfer and confirmation orders as we increase the number of shards per authority, from 15 to 85. We measure those by processing 1M transactions, across 4 authorities. \Cref{fig:fastpay:x-1000000-z-4-transfer} shows that the throughout of transfer orders slowly increases with the number of shards. The in-flight parameter---the maximum number of transactions that is allowed into the system at any time---influences the throughput by about 10\%, and setting it to 1,000 seems optimal for performance. The degree of concurrency in a system depends on the number of concurrent client requests, and we observe that \fastpay is stable and performant even under extremely high concurrency peaks of 50,000 concurrent requests. Afterwards, the Operating System UDP network buffers fill up, and the authority network stacks simply drop the requests.

\Cref{fig:fastpay:x-1000000-z-4-confirmation} shows that the throughput of confirmation orders initially increases linearly with the number of shards, and then reaches a plateau at around 48 shards. This happens because our experiments are run on machines with 48 physical cores, running at full speed, and 48 logical cores.
The in-flight parameter of concurrent requests does not influence the throughput much, but setting it too low (\eg at 100) does not saturate our CPUs. These figures show that \fastpay can support up to 160,000 transactions per second on 48 shards (about 7x the peak transaction rate of the Visa payments network~\cite{visa-performance}) while running on commodity computers that cost less than 4,000 USD/month per authority\footnote{AWS reports a price of 5.424 USD/hour for their \texttt{m5d.metal} instances. \url{https://aws.amazon.com/ec2/pricing/on-demand} (January 2020)}.

\begin{figure}[t]
\centering
\includegraphics[width=.7\textwidth]{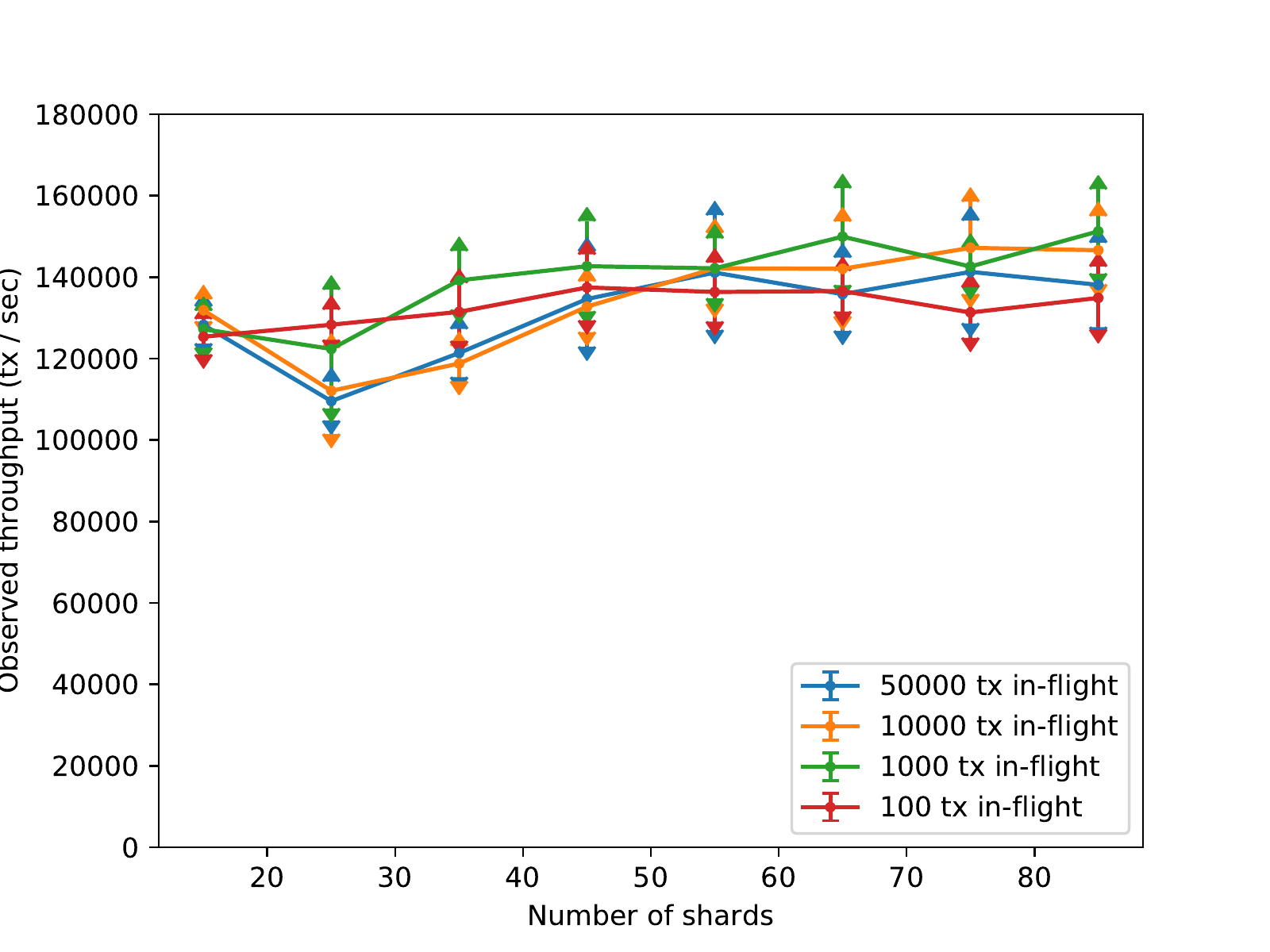}
\caption[\fastpay transfer orders throughput under high concurrency.]{Variation of the throughput of transfer orders with the number of shards, for various levels of concurrency (in-flight parameter). The measurements are run under a total load of 1M transactions.}
\label{fig:fastpay:x-1000000-z-4-transfer}
\end{figure}
\begin{figure}[t]
\centering
\includegraphics[width=.7\textwidth]{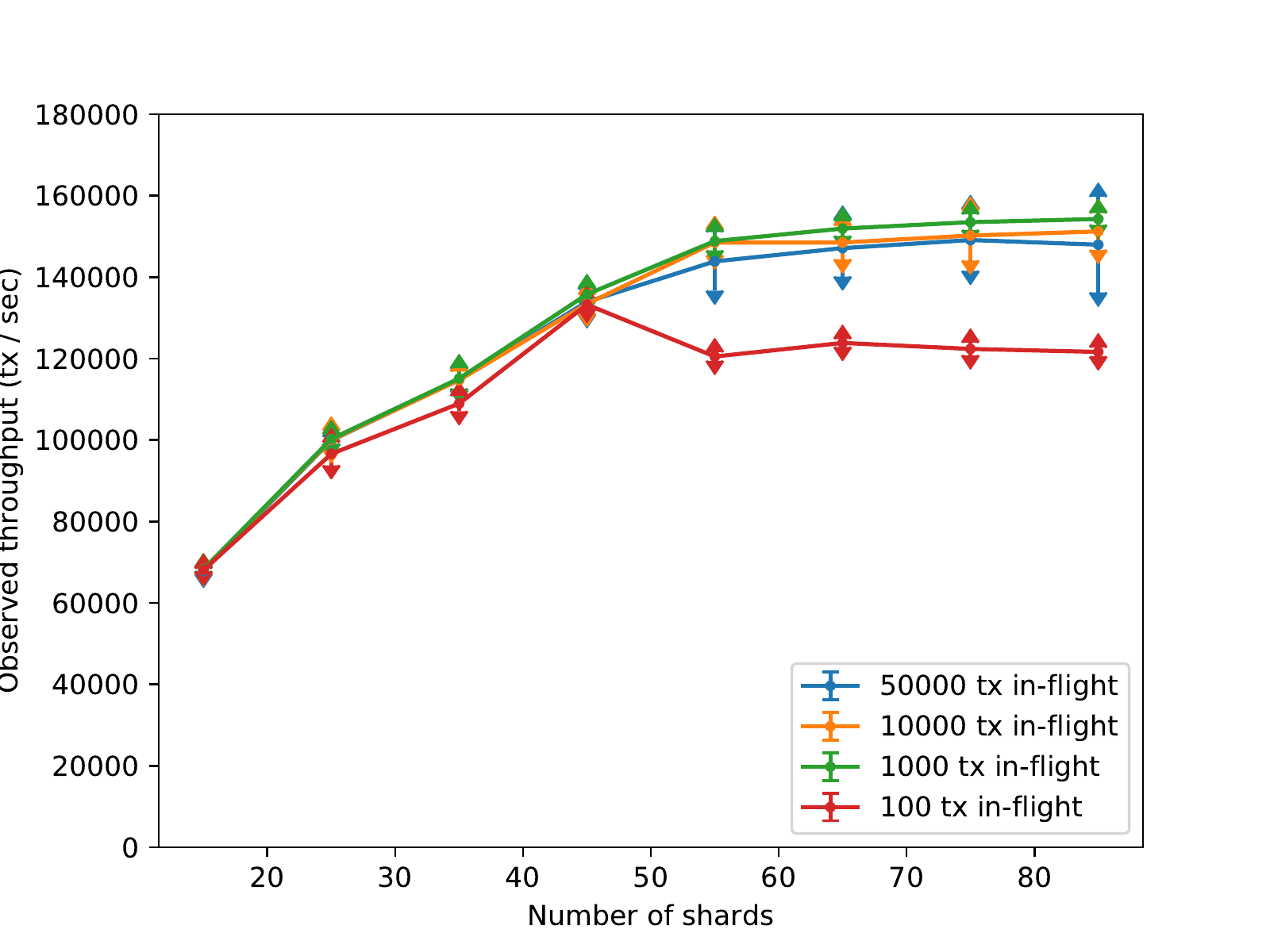}
\caption[\fastpay confirmation orders throughput under high concurrency.]{Variation of the throughput of confirmation orders with the number of shards, for various levels of concurrency (in-flight parameter). The certificates are issued by 4 authorities, and the measurements are run under a total load of 1M transactions.}
\label{fig:fastpay:x-1000000-z-4-confirmation}
\end{figure}

\para{Robustness and performance under total system load}
Figures \ref{fig:fastpay:x-z-1000-4-transfer} and \ref{fig:fastpay:x-z-1000-4-confirmation} show the variation of the throughput of transfer and confirmation orders with the number of shards, for various total system loads---namely the total number of transactions in the test; they show that the throughput is not affected by the system load. The tests were performed with 4 authorities, and the client concurrency in-flight parameter set to 1,000. 
These figures illustrate that \fastpay can process about 160,000 transactions per second even under a total load of 1.5M transactions, and that the total load does not significantly affect performance. These supplement Figures \ref{fig:fastpay:x-1000000-z-4-transfer} and~\ref{fig:fastpay:x-1000000-z-4-confirmation} that illustrate that the concurrent transaction rate also does not influence performance significantly (except when it is too low by under-utilizing the system).

Readers may be surprised those measurements are important. The key measurement work by Han~\etal~\cite{han2019performance} compares a number of permissioned systems under a high load, and shows that for all of Hyperledger Fabric (v0.6 with PBFT)~\cite{hyperledger-v06}, Hyperledger Fabric (v1.0 with BFT-Smart)~\cite{hyperledger-v10}, Ripple~\cite{ripple-documentation} and R3 Corda v3.2~\cite{corda-v32} the successful requests per second \emph{drops to zero} as the transaction rate increases to more than a few thousands transactions per second (notably for Corda only a few hundred). An exception is Tendermint~\cite{tendermint}, that maintains a processed transaction rate of about 4,000 to 6,000 transactions per second at a high concurrency rate. Those findings were confirmed for Hyperledger Fabric that reportedly starts saturating at a rate of 10,000 transactions per second~\cite{nasir2018performance}. Our results demonstrate that \fastpay continues to be very performant even under the influence of extremely high rates of concurrent transactions (in-flight parameter) and overall work load (total number of transactions processed), as expected. This is apparently not the norm.

\begin{figure}[t]
\centering
\includegraphics[width=.7\textwidth]{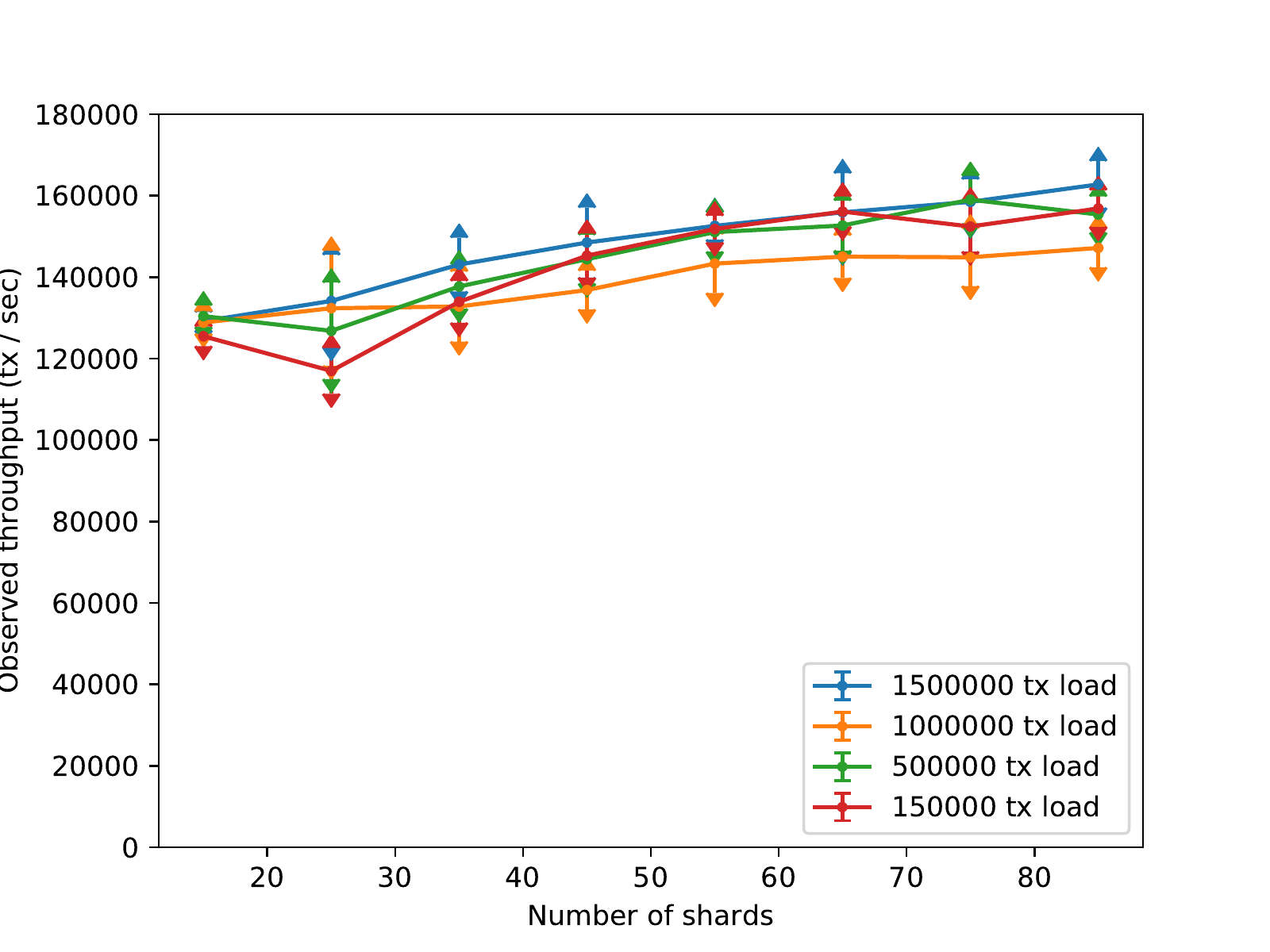}
\caption[\fastpay transfer orders throughput under high load.]{Variation of the throughput of transfer orders with the number of shards, for various loads. The in-flight parameter is set to 1,000.}
\label{fig:fastpay:x-z-1000-4-transfer}
\end{figure}
\begin{figure}[t]
\centering
\includegraphics[width=.7\textwidth]{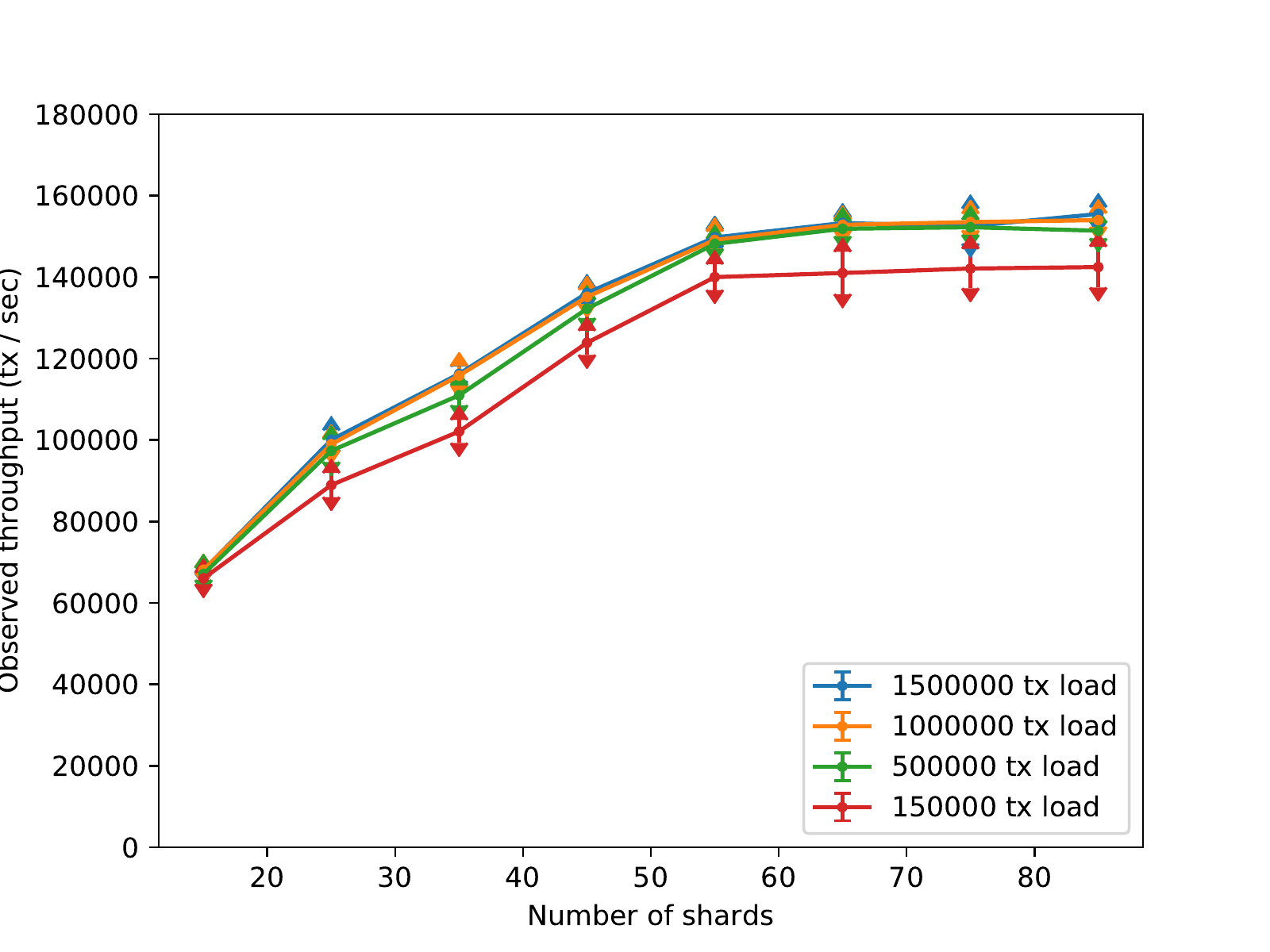}
\caption[\fastpay confirmation orders throughput under high load.]{Variation of the throughput of confirmation orders with the number of shards, for various loads. The certificates are issued by 4 authorities, and the in-flight parameter is set to 1,000.}
\label{fig:fastpay:x-z-1000-4-confirmation}
\end{figure}

\para{Influence of the number of authorities}
As discussed in \Cref{sec:fastpay:design}, we expect that increasing the number of authorities only impacts the throughput of confirmation orders (that need to transfer and check transfer certificates signed by $2f+1$ authorities), and not the throughput of transfer orders. \Cref{fig:fastpay:z-1000000-1000-x-confirmation} confirms that the throughput of confirmation orders decreases as the number of authorities increases. \fastpay can still process about 80,000 transactions per second with 20 authorities (for 75 shards). The measurements are taken with an in-flight concurrency parameter set to 1,000, and under a load of 1M total transactions. We note that for higher number of authorities, using an aggregate signature scheme (\eg BLS~\cite{bls}) would be preferable since it would result in constant time verification and near-constant size certificates. However, due to the use of batch verification of signatures, the break even point may be after 100 authorities in terms of verification time.

\begin{figure}[t]
\centering
\includegraphics[width=.7\textwidth]{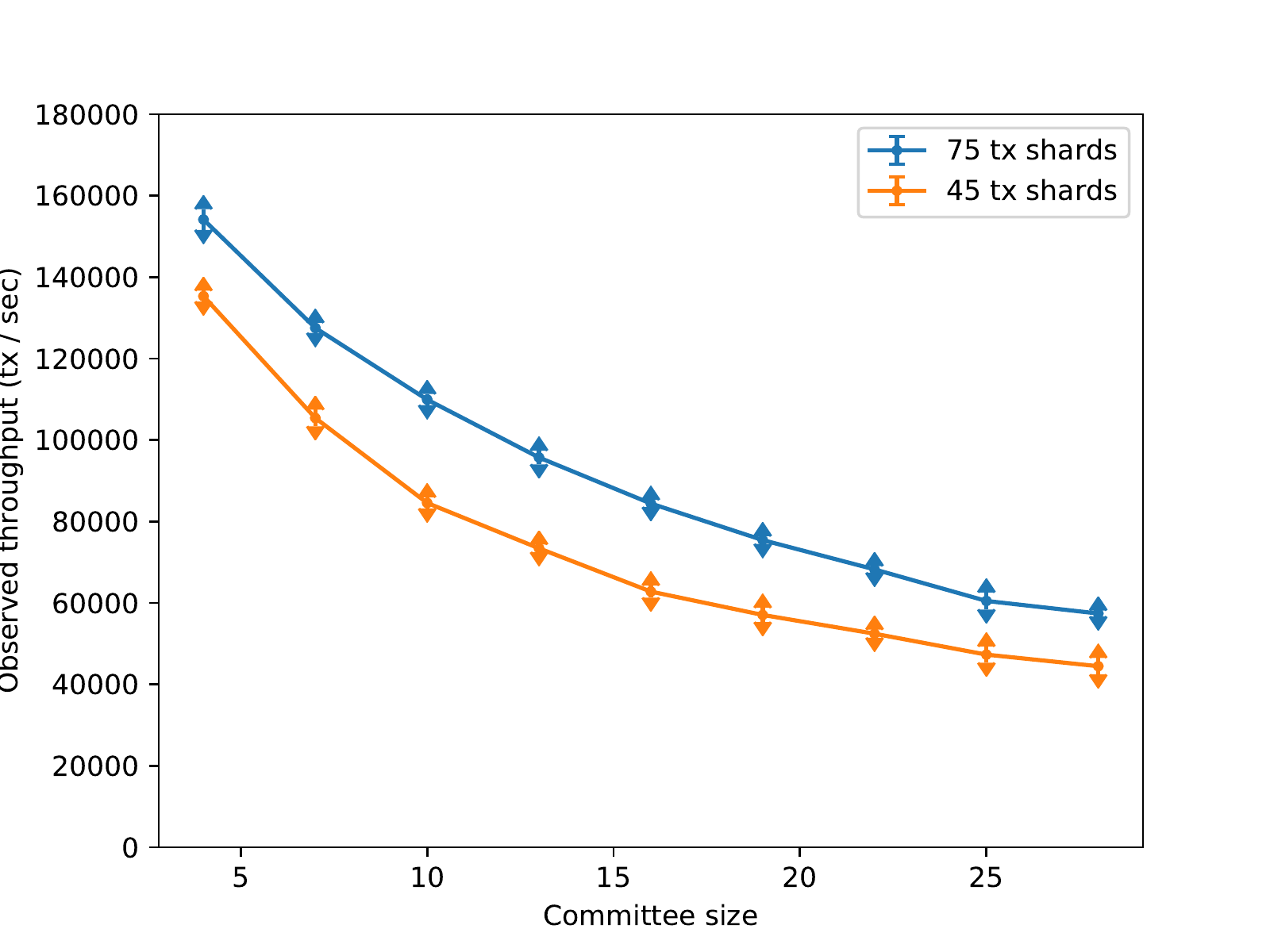}
\caption[\fastpay confirmation orders throughput for multiple authorities.]{Variation of the throughput of confirmation orders with the number of authorities, for various number of shards. The in-flight parameter is set to 1,000 and the system load is of 1M transactions.}
\label{fig:fastpay:z-1000000-1000-x-confirmation}
\end{figure}
%

\subsection{Latency}
We measure the variation of the client-perceived latency with the number of authorities. We deploy several \fastpay multi-shard authorities on Amazon Web Services (all in Stockholm, eu-north-1 zone), each on a m5d.8xlarge instance. This class of instance guarantees 10Gbit network capacity, on a 3.1 GHz, Intel Xeon Platinum 8175 with 32 cores, and 128 GB memory. The operating system is Linux Ubuntu server 16.04. Each instance is configured to run 15 shards.
The client is run on an Apple laptop (MacBook Pro) with a 2.9 GHz Intel Core i9 (6 physical and 12 logical cores), and 32 GB 2400 MHz DDR4 RAM; and connected to a reliable WIFI network. We run experiments with the client in two different locations; \first in the U.K. (geographically close to the authorities, same continent), and \second in the U.S. West Coast (geographically far from the authorities, different continent).
Each measurement is the average of 300 runs, and the error bars represent one standard deviation; all experiments use our UDP implementation.

\begin{figure}[t]
\centering
\includegraphics[width=.7\textwidth]{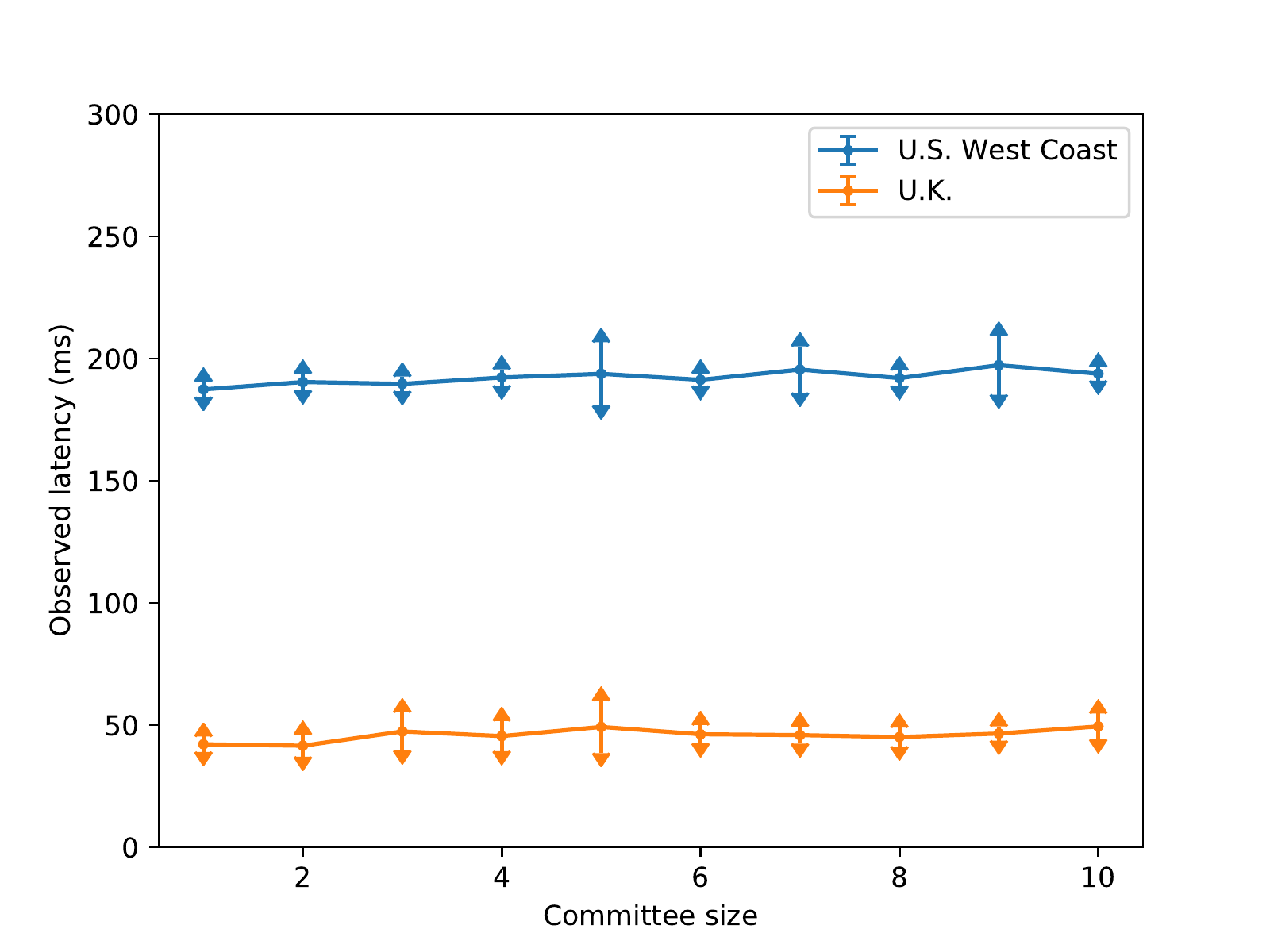}
\caption[\fastpay transfer orders latency.]{Variation of the latency of transfer orders with the number of authorities, for various locations of the client.}
\label{fig:fastpay:latency-transfer}
\end{figure}
\begin{figure}[t]
\centering
\includegraphics[width=.7\textwidth]{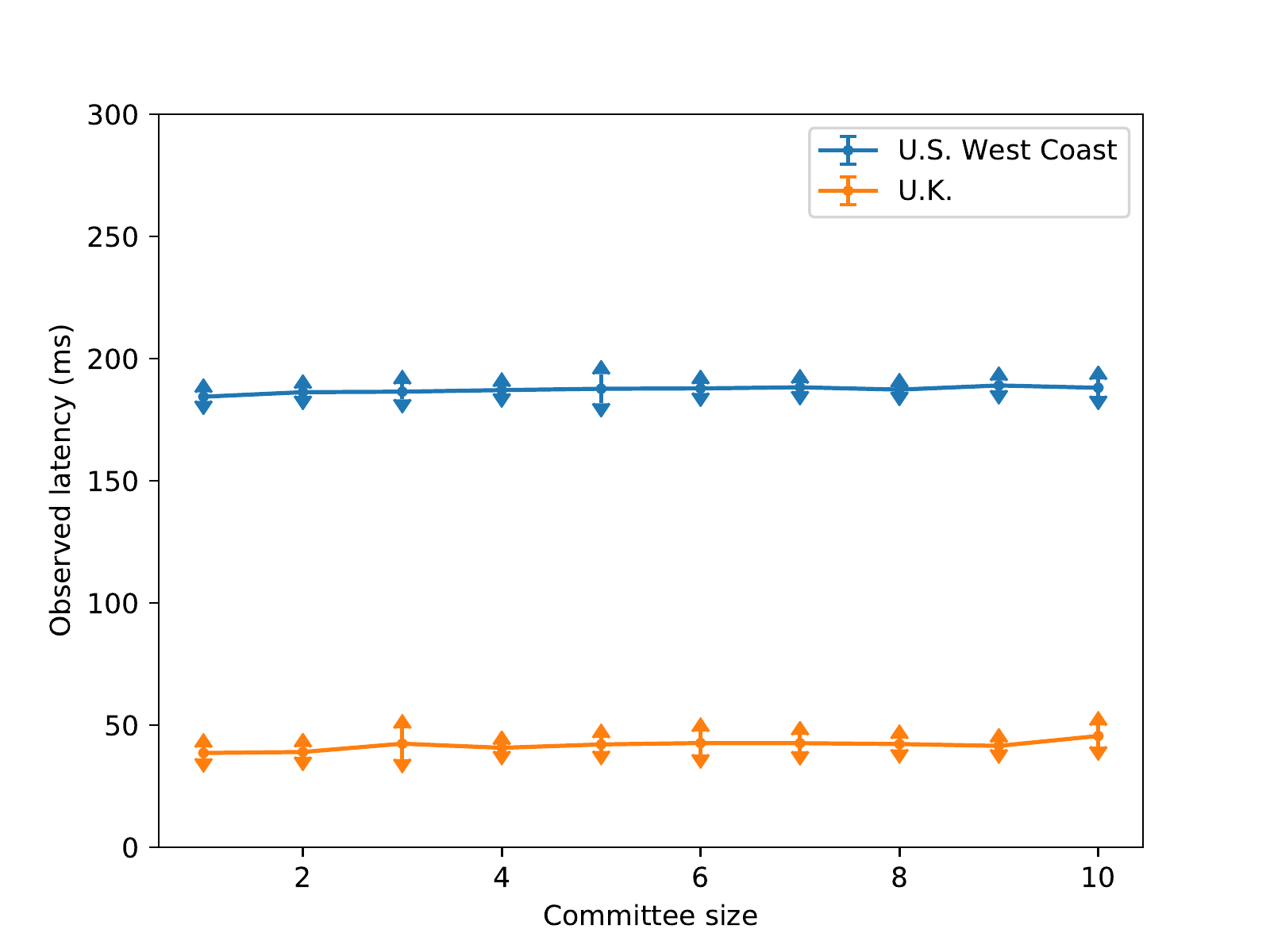}
\caption[\fastpay confirmation orders latency.]{Variation of the latency of confirmation orders with the number of authorities, for various locations of the client.}
\label{fig:fastpay:latency-confirmation}
\end{figure}

We observe that the client-authority WAN latency is low for both transfer and confirmation orders; the latency is under 200ms when the client is in the U.S. West Coast, and about 50ms when the client is in the U.K.
\Cref{fig:fastpay:latency-transfer} illustrates the latency between a client creating and sending a transfer order to all authorities, and receiving sufficient signatures to form a transfer certificate (in our experiment we wait for all authorities to reply to measure the worse case where $f$ authorities are Byzantine). The latency is virtually constant as we increase the number of authorities, due to the client emitting orders asynchronously to all authorities and waiting for responses in parallel. 
\Cref{fig:fastpay:latency-confirmation} illustrates the latency to submit a confirmation order, and wait for all authorities to respond with a success message. It shows latency is virtually constant when increasing the number of authorities. This indicates that the latency is largely dominated by the network (and not by the verification of certificates). However, since even for 10 authorities a \fastpay message fits within a network MTU, the variation is very small. Due to our choice of using UDP as a transport there is no connection initiation delay (as for TCP), but we may observe packet loss under very high congestion conditions. Authority commands are idempotent to allow clients to re-transmit to overcome loss without sacrificing safety.

\para{Performance under failures} 
Research literature suggests permissioned blockchains based on (often leader-based) consensus suffer an enormous performance drop when some authorities fail~\cite{DBLP:conf/icdcs/LeeSHKN14}. We measure the effect of authority failure in \fastpay and show that latency is not affected when $f$ or fewer authorities are unavailable.
We run our baseline experimental setup (10 authorities distributed over 10 different AWS instances), when a different number of authorities are not available for $f = 0\ldots3$. 
\begin{wraptable}{r}{0.40\columnwidth}
\centering
\footnotesize
\begin{tabular}{lcc}  
\toprule
\bm{$f$}  & \textbf{Mean (ms)} & \textbf{Std. (ms)} \\
\midrule
0 & 43 & 2\\
1 & 41 & 3\\
2 & 44 & 4\\
3 & 47 & 2\\
\bottomrule
\end{tabular}
\caption[\fastpay crash-failure Latency.]{Crash-failure Latency.}
\label{tab:fastpay:fail}
\end{wraptable}
We measure the latency experienced by a client  on the same continent (Europe), sending a transfer order until it forms a valid transfer certificate. \Cref{tab:fastpay:fail} summarizes the mean latency and standard deviation for different $f$. There is no statistically significant difference in latency, no matter how many tolerable failures \fastpay experiences (up to $f \leq 3$ for 10 authorities). We also experimented with killing authorities one by one with similar results, up to $f > 3$ when the system did observably lose liveness as expected. The underlying reason for the steady performance under failures is \fastpay's lack of reliance on a leader to drive the protocol.

\section{Limitations \& Future Work} \label{sec:fastpay:discussion}

\para{Threats to validity of experiments} 
Our experiments represent the best case performance, for a set number of authorities and shards, as they are performed in laboratory conditions. In particular, real-wold transactions may have the same sender account, which would prevent them from being executed in parallel. Further, the throughput evaluation places transaction load on an authority through the local network interface, and therefore does not take fully into account the operating system networking costs of a full WAN stack. Further, our WAN latency experiments were performed against authorities with very low-load. Finally, the costs of persisting databases to storage are not taken into account when measuring latency and throughput (we leave the implementation of low-latency persistent storage to future work).


\para{Checkpointing, authority, and key rotation}
The important enabler for the good performance of \fastpay, but also an important limitation, is the fact that authorities do not need to reach consensus on the state of their databases. We demonstrate that payments are secure in this context, but various system maintenance operations are harder to implement. For example, checkpointing the state of all accounts in the systems, to compress the list of stored certificates would be beneficial, but cannot be straightforwardly implemented without consensus. Similarly, it would be beneficial for authorities to be able to rotate in and out of the committee, as well as to update their cryptographic signature keys. Due to the lack of tight synchronization between authorities there is no natural point that guarantees they all update their committees at the same logical time. Further, our proofs of liveness under asynchrony presume that transfer orders and certificates that were once valid, will always be valid. Integrating such governance features into \fastpay will require careful design to safely leverage either some timing (synchrony) assumptions or use a more capable (but maybe lower performance) consensus layer, such as one facilitated by the \sysmain.

\para{Economics and fees} Some cost to insert transactions into a system (like fees in Bitcoin), allows for sound accounting and prevents Denial of Service attacks by clients over-using an open system. The horizontal scalability of \fastpay alleviates somehow the need to integrate such a scheme, since issues of capacity can be resolved by increasing its capacity through more shards (as well as deploying network level defenses). However, if there was a need to implement fees for using \fastpay we would not recommend using micro-payments associated with each payment like in Bitcoin. Rather, we would recommend allowing a client to deposit some payment into a service account with all authorities, and then allow them to deduct locally some of this fee for any services rendered (namely any signed transfer order or confirmation order processed). In practical terms, the variable costs of processing transactions in \fastpay is low. There is no artificial shortage due to lack of scalability, and a flat periodic fee on either senders or recipients might be sufficient to support operations (rather than a charge per transaction).

\section{Comparison with Related Work} \label{sec:fastpay:related}
We compare \fastpay with traditional payment systems and some relevant crypto-currencies based on permissioned blockchains.

\para{Traditional payment systems} In the context of traditional payment systems \fastpay is a real-time gross settlement system (RTGS)~\cite{rtgs-1,rtgs-2}---payments are executed in close to real-time, there is no netting between participants, and  the transfer of funds is final upon the full payment protocol terminating. All payments are pre-funded so there is no need to keep track of credit or liquidity, which makes the design vastly simpler.
A well known issue with RTGS systems is the need for higher liquidity, as compared with settlement systems based on daily settlement after netting---since more money moves around exposing accounts to higher volatility. In that respect \fastpay is state of the art, in that it allows immediate liquidity recycling~\cite{uk-rtgs}, namely as soon as a payment is processed the value payed into an account may be used to pay other parties in the system. 

\fastpay, from an assurance and performance perspective is significantly superior to deployed RTGS systems: it \first implements a fully Byzantine fault tolerant architecture (established systems rely on master-slave configurations to only recover from few crash failures), \second has higher throughput (as compared, for example with the TARGET2~\cite{target2} European Central Bank RTGS systems that has a target throughput of 500 tx/sec), and \third faster finality (as compared to TARGET2 providing finality of a few seconds). Since \fastpay allows for fast gross settlement, participants are not exposed to credit risk, as is the case for retail payment systems such as VISA and Mastercard (that use daily netting, and have complex financial arrangements to mitigate credit risk in case of bank default). Furthermore, it does achieve both throughput and latency, comparable to those systems combined---about 80,000 tx/sec at peak times, when adding up the throughput of Visa and Mastercard together~\cite{visa-performance,mastercard-performance}.
On the downside, \fastpay lacks certain features of mature RTGS systems: in particular it does not support Delivery-on-Payment transactions that atomically swap securities when payment is provided, or Payment-versus-Payment, that atomically swap amounts in different currencies to minimize the risk of foreign exchange transactions. These require atomic operations across accounts controlled by different users, and would therefore require extending \fastpay to support them (namely operations with consensus number of 2 per Herlihy~\cite{herlihy1991wait}).

\para{Crypto-currencies} \fastpay provides high assurance in the context of Byzantine failures within its infrastructure. So in that respect it is comparable with systems encountered in the space of permissioned blockchains and crypto-currencies, as well as their eco-system of payment channels. \fastpay is permissioned in that the set of authorities managing the system is closed---in fact we do not even propose a way to rotate those authorities and leave this to future work. Qualitatively, \fastpay differs from other permissioned (or permissionless) crypto-currencies in a number of important ways: it is secure under full network asynchrony (since it does not require or rely on atomic broadcast channels or consensus, but only consistent broadcast)---leading to higher performance. This direction was explored in the past in relation to central bank cryptocurrency systems~\cite{rscoin} and high performance permissionless systems~\cite{avalanche}. It was recently put on a formal footing by Guerraroui~\etal~\cite{consensus-number}. Our work extends this theory to allow increased concurrency, correctness under sharding, and rigorous interfacing with external settlement mechanisms. \fastpay achieves auditability through a set of certificates signed by authorities rather than a sequential log of actions (blockchain), which would require authorities to reach agreement on a common sequence. 
Quantitatively, compared with other permissioned systems \fastpay is extremely performant. HyperLedger Fabric~\cite{hyperledger} running with 10 nodes achieves about 1,000 transactions per second and a latency of about 10 seconds~\cite{nasir2018performance}; and Libra~\cite{libra} and Corda~\cite{corda, corda-performance} achieve similar performance.
JP Morgan developed a digital coin built from the Ethereum codebase, which can achieve about 1,500 transactions per second with four nodes, and imposing a block time of 1 second~\cite{baliga2018performance}.
Tendermint~\cite{tendermint} reportedly achieves 10,000 transactions per second with 4 nodes, with a few seconds latency~\cite{tendermint-performance}. However, as we discussed in \Cref{sec:fastpay:evaluation}, many of those systems see their performance degrading dramatically under heavy load---whereas \fastpay performs as expected.

\fastpay can be used as a side chain of any cryptocurrency with reasonable finality guarantees, and sufficient programmability. As compared to bilateral payment channels it is superior in that it allows users to pay anyone in the system without locking liquidity into the bilateral channel, and is fully asynchronous. However, \fastpay does rely on an assumption of threshold non-Byzantine authorities for safely and liveness, whereas payment channel designs only rely on network synchrony for safety and liveness (under conditions of asynchrony safety may be lost). As compared to payment channel networks (such as the lighting network~\cite{lightning}) \fastpay is simpler and does not require complex path finding algorithms~\cite{lightning, flare, grunspan2018ant, sivaraman2018routing}. 

\section{Chapter Summary} \label{sec:fastpay:conclusion}
\fastpay is a settlement layer based on consistent broadcast channels, rather than full consensus. The \fastpay design leverages the nature of payments to allow for asynchronous payments into accounts, and optional interactions with an external \sysmain to build a practical system, while providing proofs of both safely and liveness; it also proposes and evaluates a design for sharded implementation of authorities to horizontally scale and match any throughput need. 
The performance and robustness of \fastpay is beyond and above the state of the art, and validates that moving away from both centralized solutions and full consensus to manage pre-funded retail payments has significant advantages. Authorities can jointly process tens of thousands of transactions per second (we observed a peak of 160,000 tx/sec) using merely commodity hardware and lean software. A payment confirmation latency of less than 200ms between continents make \fastpay practical for point of sale payments---where goods and services need to be delivered fast and in person. Pretty much instant settlement enables retail payments to be freed from intermediaries, such as banks payment networks, since they eliminate any credit risk inherent in deferred netted end-of-day payments, that underpin today most national Fast Payment systems~\cite{bolt2014fast}. Further, \fastpay can tolerate up to one-third of authorities crashing or even becoming Byzantine without losing either safety or liveness (or performance). This is in sharp contrast with existing centralized settlement layers operating on specialized mainframes with a primary / backup crash fail strategy (and no documented technical strategy to handle Byzantine operators). Surprisingly, it is also in contrast with permissioned blockchains, which have not achieved similar levels of performance and robustness yet, due to the complexity of engineering and scaling full \bftlong consensus protocols.

\chapter{\coconut: Threshold Issuance Selective Disclosure Credentials} \label{coconut}


%
\begin{figure}[t]
\centering
\includegraphics[width=.7\textwidth]{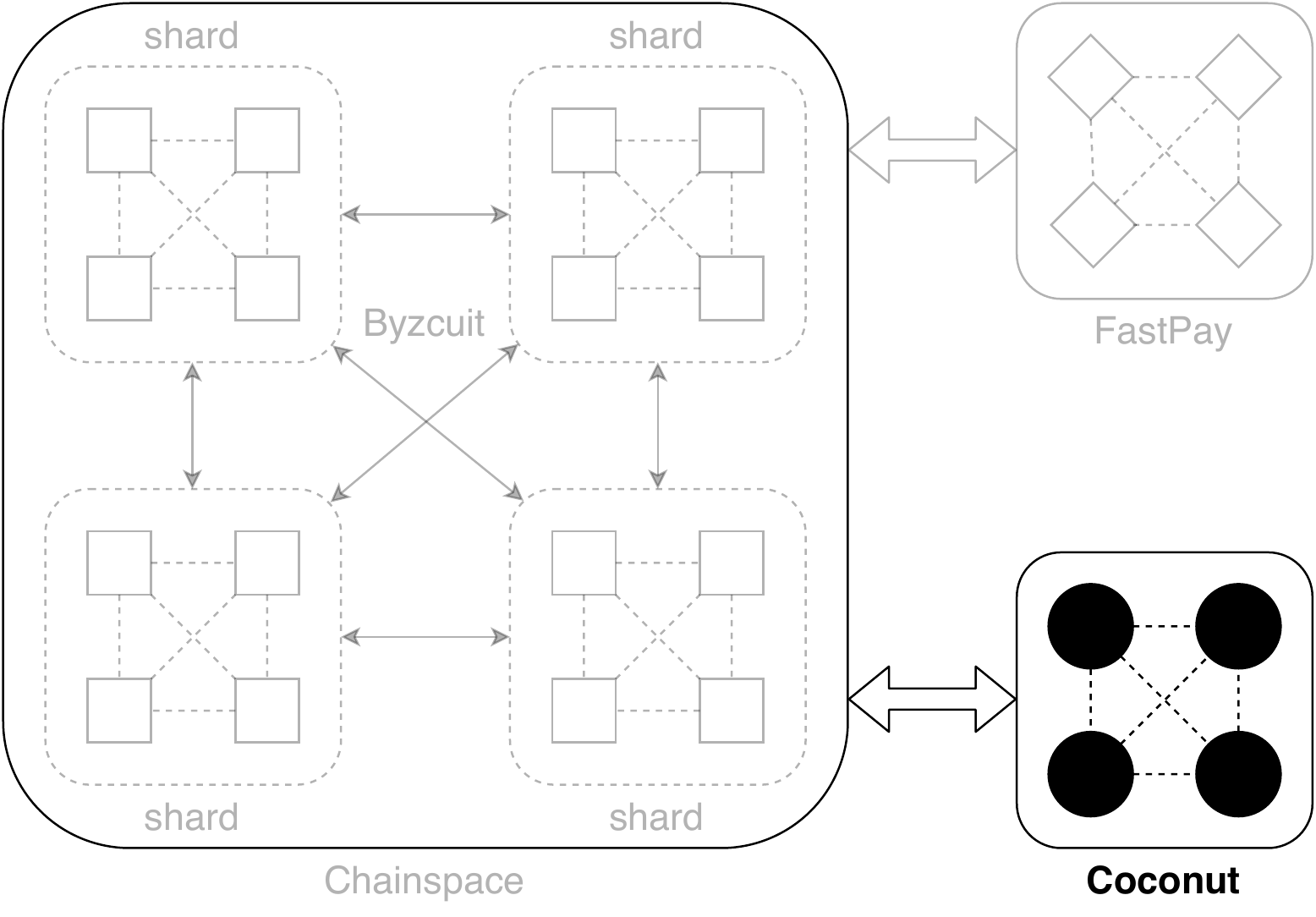}
\caption[Global overview: \coconut.]{Example of instantiation of \coconut with four authorities. Authorities are represented by black circles and the dashed lines connecting them to each other indicate that they are part of the same committee.}
\label{fig:coconut:global-overview}
\end{figure}

\chainspace (\Cref{chainspace}) is the backbone of this thesis. It provides a sharded blockchain that can scale to accommodate high throughput and can process transactions in a few seconds by running \byzcuit at its core (\Cref{byzcuit}). Further, it revisits the execution of smart contracts on blockchains by introducing a model where transactions are executed at the client side to support privacy-preserving applications by design.
%
This last chapter presents \coconut, the final component of this thesis that reveals the full potential of \chainspace's execution model by presenting a number of decentralized and scalable privacy-preserving applications implemented as \chainspace smart contracts. \coconut is selective disclosure credential scheme that natively integrates with blockchains; \Cref{fig:coconut:global-overview} places \coconut in the big picture of this thesis by instantiating it as a side-infrastructure of \chainspace. The black circles in \Cref{fig:coconut:global-overview} represent \coconut nodes and the white bidirectional arrow illustrates the communication between \coconut and \chainspace allowing users to obtain and use \coconut credentials within \chainspace smart contracts.

Selective disclosure credentials~\cite{cl,amac} allow the issuance of a credential to a user, and the subsequent unlinkable revelation (or `showing') of some of the attributes it encodes to a verifier for the purposes of authentication, authorization or to implement electronic cash. However, as explained in \Cref{sec:literature-review:coconut-related}, established schemes have shortcomings. Some entrust a single issuer with the credential signature key, allowing a malicious issuer to forge any credential or electronic coin. Other schemes do not provide the necessary efficiency, re-randomization, or blind issuance properties necessary to implement practical selective disclosure credentials. No existing scheme provides all of efficiency, threshold distributed issuance, private attributes, re-randomization, and unlinkable multi-show selective disclosure. This is especially troublesome in applications related to e-cash or token schemes, since the issuer is effectively given a license to generate coins, and the unlinkable nature of the showing protocols prevents auditors from detecting such behavior.
Moreover, the security models of these systems generally assume that integrity should hold in the presence of a threshold number of dishonest or faulty nodes (Byzantine fault tolerance); it is desirable for similar assumptions to hold for multiple credential issuers (threshold issuance). Integrity should hold when a threshold of infrastructure nodes are honest; however, not all authorities are expected to be online or honest.
The lack of efficient general purpose selective disclosure credentials impacts platforms that support `smart contracts',
such as \ethereum~\cite{ethereum}, \hyperledger~\cite{hyperledger} and \chainspace~\cite{chainspace}. 
They all share the limitation that verifiable smart contracts may only perform operations recorded on a public \blockchain.
Thus, smart contracts themselves cannot execute operations requiring secrets, such as issuing signatures or credentials. \chainspace overcomes this limitation by providing an infrastructure where smart contracts are partially executed client-side, and thus allowing a framework where they can operate on the users' secret inputs. However, the nodes backing the ledger can only operate on public data, which prevents smart contracts from directly accessing any node-side secret input. 

Issuing credentials through smart contracts would be very desirable: a smart contract could conditionally issue user credentials depending on the state of the \blockchain, or attest some claim about a user operating through the contract---such as their identity, attributes, or even the balance of their wallet. 
This is not possible, as current selective credential schemes would either entrust a single party as an issuer, or would not provide appropriate efficiency, re-randomization, blind issuance and selective disclosure capabilities (as in the case of threshold signatures~\cite{back2014enabling}). For example, the \hyperledger system supports CL credentials~\cite{cl} through a trusted third party issuer, illustrating their usefulness, but also their fragility against the issuer becoming malicious. 

\coconut addresses these challenges, and allows a subset of decentralized mutually distrusting authorities to jointly issue credentials, on public or private attributes. Those credentials cannot be forged by users, or any small subset of potentially corrupt authorities. Credentials can be re-randomized before selected attributes are shown to a verifier, protecting privacy even in the case in which all authorities and verifiers collude. The \coconut scheme is based on a threshold issuance signature scheme that allows partial claims to be aggregated into a single credential. Mapped to the context of permissioned and semi-permissioned \blockchains, \coconut allows collections of authorities in charge of maintaining a \blockchain, or a side chain~\cite{back2014enabling} based on a federated peg, to jointly issue selective disclosure credentials.
\coconut uses short and computationally efficient credentials, and efficient revelation of selected attributes and verification protocols.
Each partial credential and the consolidated credential is composed of exactly two group elements. The size of the credential remains constant regardless of the number of attributes or authorities/issuers. Furthermore, after a one-time setup phase where the users collect and aggregate a threshold number of verification keys from the authorities, the attribute showing and verification are $O(1)$ in terms of both cryptographic computations and communication of cryptographic material---irrespective of the number of authorities. Our evaluation of the \coconut primitives shows very promising results. Verification takes about 10ms, while signing a private attribute is about 3 times faster.
The latency is about 600 ms when the client aggregates partial credentials from 10 authorities distributed across the world.

\subsection*{Contributions} 
This chapter makes the following key contributions: 
\begin{itemize}
\item It describes the signature schemes underlying \coconut, including how key generation, distributed issuance, aggregation and verification of signatures operate. The scheme is an extension and hybrid of the Waters signature scheme~\cite{waters}, the BGLS signature~\cite{bgls}, and the signature scheme of Pointcheval and Sanders~\cite{pointcheval}. This is the first general purpose, fully distributed threshold issuance, multi-show credential scheme of which we are aware.
    
\item It uses \coconut to implement a generic smart contract library for \chainspace~\cite{chainspace} and one for \ethereum~\cite{ethereum}, performing public and private attribute issuance, aggregation, randomization and selective disclosure. We evaluate their performance and cost within those platforms.
    
\item It presents the design of three applications using the \coconut contract library: a coin tumbler providing payment anonymity; a privacy preserving electronic petitions; and a proxy distribution system for a censorship resistance system. We implement and evaluate the first two applications on the \chainspace platform, and provide a security and performance evaluation.
\end{itemize}

\subsection*{Outline} 
\Cref{sec:coconut:architecture} presents an overview of the \coconut system; \Cref{sec:coconut:construction} presents the cryptographic primitives underlying \coconut; \Cref{sec:coconut:implementation} and \Cref{sec:coconut:applications} respectively present an implementation and some example of applications backed by the \coconut system. \Cref{sec:coconut:evaluation} provides the evaluation of the core primitives and of the previously discussed applications; \Cref{sec:coconut:related} presents a comparison with related work; and \Cref{sec:coconut:conclusion} concludes the chapter.

\section{Overview} 
\label{sec:coconut:architecture}

\begin{figure}[t]
\centering
\includegraphics[width=.7\textwidth]{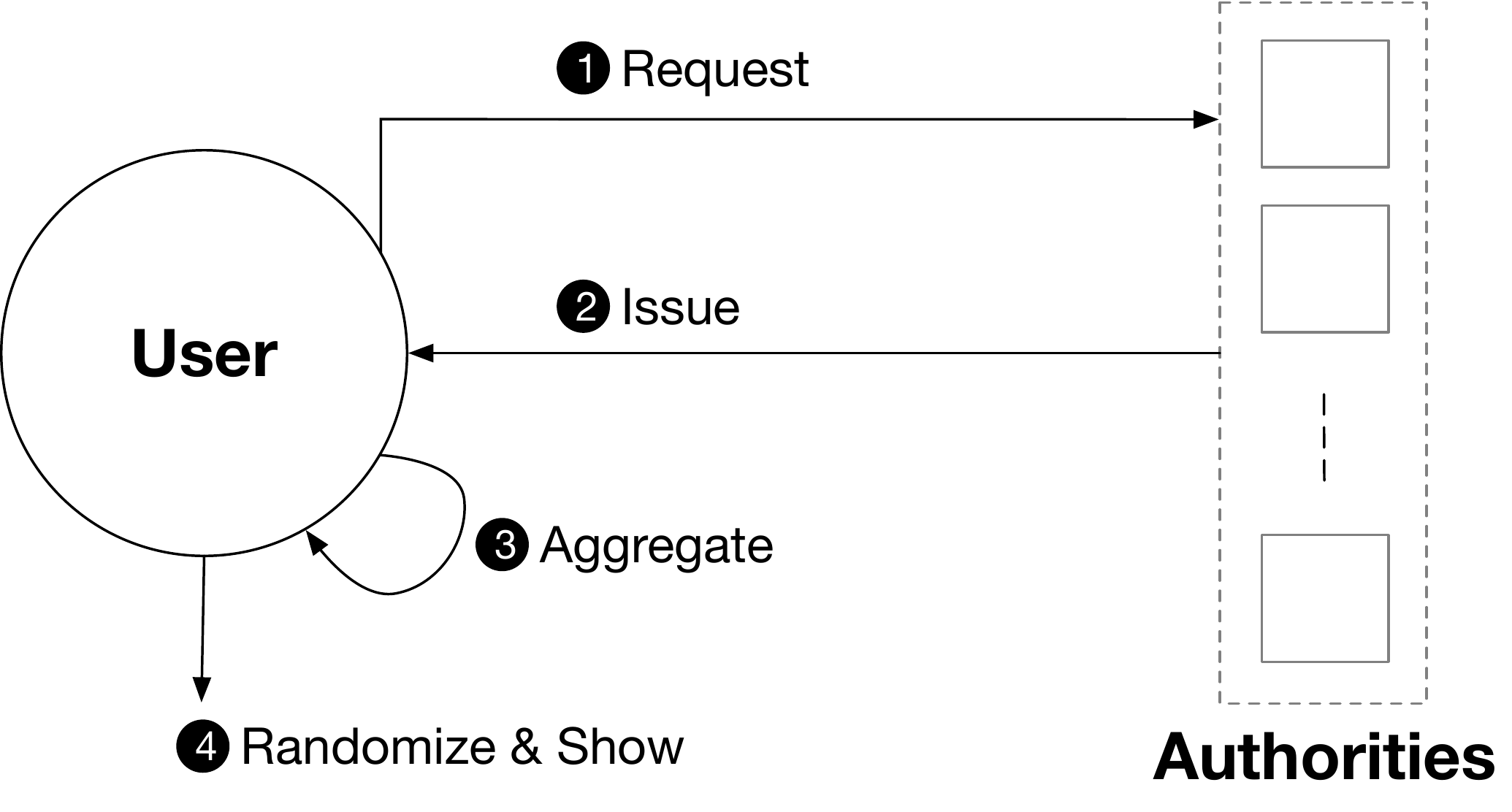}
\caption[Overview of \coconut.]{\footnotesize A high-level overview of \coconut architecture. The user first request a partial credential to each authority and locally aggregates them into a single credential. The user can then randomize and show multiple times its credential to the verifier.}
\label{fig:coconut:gen_arch}
\end{figure}

\coconut is a selective disclosure credential system, supporting threshold credential issuance of public and private attributes, re-randomization of credentials to support multiple unlinkable revelations, and the ability to selectively disclose a subset of attributes. It is embedded into a smart contract library that can be called from other contracts to issue credentials.
The \coconut architecture is illustrated in \Cref{fig:coconut:gen_arch}. Any \coconut user may send a \coconut \emph{request} command to a set of \coconut signing authorities; this command specifies a set of public or encrypted private attributes to be certified into the credential~(\ding{202}). Then, each authority answers with an \emph{issue} command delivering a partial credential~(\ding{203}). Any user can collect a threshold number of shares, aggregate them to form a single consolidated credential, and re-randomize it (\ding{204}). The use of the credential for authentication is however restricted to a user who knows the private attributes embedded in the credential---such as a private key.  The user who owns the credentials can then execute the \emph{show} protocol to selectively disclose attributes or statements about them~(\ding{205}). The showing protocol is publicly verifiable, and may be publicly recorded. \coconut has the following design goals:

\begin{itemize}
\item \textbf{Threshold authorities:} Only a subset of the authorities is required to issue partial credentials in order to allow the users to generate a consolidated credential~\cite{boldyreva2002efficient}. The communication complexity of the \emph{request} and \emph{issue} protocol is thus $O(t)$, where $t$ is the size of the subset of authorities. It is impossible to generate a consolidated credential from fewer than $t$ partial credentials.
\item \textbf{Blind issuance \& unlinkability:} The authorities issue the credential without learning any additional information about the private attributes embedded in the credential. Furthermore, it is impossible to link multiple showings of the credentials with each other, or the issuing transcript, even if all the authorities collude (see \Cref{sec:coconut:definitions}).
\item \textbf{Non-interactivity:} The authorities may operate independently of each other, following a simple key distribution and setup phase to agree on public security and cryptographic parameters---they do not need to synchronize or further coordinate their activities.

\item \textbf{Liveness:} \coconut guarantees liveness as long as a threshold number of authorities remains honest and weak synchrony assumptions holds for the key distribution~\cite{cryptoeprint:2012:377}.

\item \textbf{Efficiency:} The credentials and all zero-knowledge proofs involved in the protocols are short and computationally efficient. After aggregation and re-randomization, the attribute showing and verification involve only a single consolidated credential, and are therefore $O(1)$ in terms of both cryptographic computations and communication of cryptographic material---no matter the number of authorities.
\item \textbf{Short credentials:} Each partial credential---as well as the consolidated credential---is composed of exactly two group elements, no matter the number of authorities or the number of attributes embedded in the credentials.
\end{itemize}

As a result, a large number of authorities may be used to issue credentials, without significantly affecting efficiency.

\section{The \coconut Construction} \label{sec:coconut:construction}
We introduce the cryptographic primitives supporting the \coconut architecture, step by step from the design of Pointcheval and Sanders~\cite{pointcheval} and Boneh~\etal~\cite{bls, bgls} to the full \coconut scheme.

\begin{itemize}
\item \textbf{Step 1:} We first recall (\Cref{sec:coconut:pointcheval_recall}) the scheme of Pointcheval~\etal~\cite{pointcheval} for single-attribute credentials. We present its limitations preventing it from meeting our design goals presented in \Cref{sec:coconut:architecture}, and we show how to incorporate principles from Boneh~\etal~\cite{bls} to overcome them. 

\item \textbf{Step 2:} We introduce (\Cref{sec:coconut:threshold_credentials_scheme}) the \emph{\coconut threshold credentials scheme}, which has all the properties of Pointcheval and Sanders~\cite{pointcheval} and Boneh~\etal~\cite{bls}, and allows us to achieve all our design goals.

\item \textbf{Step 3:} Finally, we extend (\Cref{sec:coconut:multi_message_scheme}) our schemes to support credentials embedding $q$ distinct attributes $(m_1,\dots,m_{q})$ simultaneously.
\end{itemize}

\subsection{Notations \& Assumptions} \label{sec:coconut:background_and_assumptions}
We present the notation used in the rest of the chapter, as well as the security assumptions on which our primitives rely.

\para{Zero-knowledge proofs} 
Our credential scheme uses non-interactive zero-knowledge proofs to assert knowledge and relations over discrete logarithm values. We represent these non-interactive zero-knowledge proofs with the notation introduced by Camenisch~\etal~\cite{camenisch1997proof}:
\begin{equation}\nonumber
\textrm{NIZK}\{(x,y,\dots): \textrm{statements about } x, y, \dots\}
\end{equation}
which denotes proving in zero-knowledge that the secret values $(x,y,\dots)$ (all other values are public) satisfy the statements after the colon. 

\para{Cryptographic assumptions}
\coconut requires groups $(\mathbb{G}_1,\mathbb{G}_2,\mathbb{G}_T)$ of prime order $p$ with a bilinear map $e:\mathbb{G}_1 \times \mathbb{G}_2 \rightarrow \mathbb{G}_T$ and satisfying \first\emph{Bilinearity}, \second\emph{Non-degeneracy}, and \third\emph{Efficiency} as described in \Cref{sec:literature-review:crypto}.
\coconut also relies on a cryptographically secure hash function $\hashtopoint$, hashing an element $\mathbb{G}_1$ into an other element of $\mathbb{G}_1$, namely $\hashtopoint: \mathbb{G}_1\rightarrow\mathbb{G}_1 $. We implement this function by serializing the $(x,y)$ coordinates of the input point and applying a full-domain hash function to hash this string into an element of $\mathbb{G}_1$ (as Boneh~\etal~\cite{bls}).

\para{Threshold and communication assumptions}
\coconut assumes honest majority ($n/2 < t$) to prevent malicious authorities from issuing credentials arbitrarily. \coconut authorities do not need to communicate with each other; users wait for $t$-out-of-$n$ replies (in any order of arrival) and aggregate them into a consolidated credential; thus \coconut implicitly assumes an asynchronous setting. However, our current implementations rely on the distributed key generation protocol of Kate~\etal~\cite{cryptoeprint:2012:377}, which requires \first weak synchrony for liveness (but not for safety), and \second at most one third of dishonest authorities.

\subsection{Scheme Definitions and Security Properties}\label{sec:coconut:definitions}
We present the protocols that comprise a threshold credentials scheme:
\begin{description}
\item[\definition{Setup($1^\lambda$)}{$params$}] defines the system parameters $params$ with respect to the security parameter $\lambda$. These parameters are publicly available.

\item[\definition{KeyGen($params$)}{$sk,vk$}] is run by the authorities to generate their secret key $sk$ and verification key $vk$ from the public $params$.

\item[\definition{AggKey($vk_1, \dots, vk_t$)}{$vk$}] is run by whoever wants to verify a credential to aggregate any subset of $t$ verification keys $vk_i$ into a single consolidated verification key $vk$. \algorithm{AggKey} needs to be run only once.

\item[\definition{IssueCred($m,\phi$)}{$\sigma$}] is an interactive protocol between a user and each authority, by which the user obtains a credential $\sigma$ embedding the private attribute $m$ satisfying the statement $\phi$.

\item[\definition{AggCred($\sigma_1, \dots, \sigma_t$)}{$\sigma$}] is run by the user to aggregate any subset of $t$ partial credentials $\sigma_i$ into a single consolidated credential.

\item[\definition{ProveCred($vk, m, \phi'$)}{$\Theta,\phi'$}] is run by the user to compute a proof $\Theta$ of possession of a credential certifying that the private attribute $m$ satisfies the statement $\phi'$ (under the corresponding verification key $vk$).

\item[\definition{VerifyCred($vk, \Theta, \phi'$)}{$true/false$}] is run by whoever wants to verify a credential embedding a private attribute satisfying the statement $\phi'$, using the verification key $vk$ and cryptographic material $\Theta$ generated by \textsf{ProveCred}. 
\end{description}

A threshold credential scheme must satisfy the following security properties:
\begin{description}
\item [Unforgeability: ] It must be unfeasible for an adversarial user to convince an honest verifier that they are in possession of a credential if they are in fact not (\ie, if they have not received valid partial credentials from at least $t$ authorities).

\item [Blindness: ] It must be unfeasible for an adversarial authority to learn any information about the attribute $m$ during the execution of the \algorithm{IssueCred} protocol, except for the fact that $m$ satisfies $\phi$.

\item [Unlinkability / Zero-knowledge: ] It must be unfeasible for an adversarial verifier (potentially working with an adversarial authority) to learn anything about the attribute $m$, except that it satisfies $\phi'$, or to link the execution of \algorithm{ProveCred} with either another execution of \algorithm{ProveCred} or with the execution of \algorithm{IssueCred} (for a given attribute $m$).
\end{description}

\subsection{Foundations of Coconut}\label{sec:coconut:pointcheval_recall}
Before giving the full \coconut construction, we first recall the credentials scheme proposed by Pointcheval and Sanders~\cite{pointcheval}; their construction has the same properties as CL-signatures~\cite{cl} but is more efficient. The scheme works in a bilinear group $(\mathbb{G}_1,\mathbb{G}_2,\mathbb{G}_T)$ of type 3, with a bilinear map $e:\mathbb{G}_1 \times \mathbb{G}_2 \rightarrow \mathbb{G}_T$ as described in \Cref{sec:coconut:background_and_assumptions}. 

\begin{description}
\item[\definition{P.Setup($1^\lambda$)}{$params$}] Choose a bilinear group $(\mathbb{G}_1,\mathbb{G}_2,\mathbb{G}_T)$ with order $p$, where $p$ is a $\lambda$-bit prime number. Let $g_1$ be a generator of $\mathbb{G}_1$, and $g_2$ a generator of $\mathbb{G}_2$. The system parameters are $params=(\mathbb{G}_1, \mathbb{G}_2, \mathbb{G}_T, p, g_1, g_2)$. 

\item[\definition{P.KeyGen($params$)}{$sk,vk$}] Choose a random secret key $sk = (x,y) \in \mathbb{F}_p^2$. Parse $params=(\mathbb{G}_1, \mathbb{G}_2, \mathbb{G}_T, p, g_1, g_2)$, and publish the verification key $vk = (g_2, \alpha,\beta) = (g_2, g_2^x,g_2^y)$.

\item[\definition{P.Sign($params, sk, m$)}{$\sigma$}] Parse $sk = (x, y)$. Pick a random $r \in \mathbb{F}_p$ and set $h=g_1^{r}$. Output $\sigma = (h, s) = (h, h^{x+y\cdot m})$.

\item[\definition{P.Verify($params, vk, m, \sigma$)}{$true/false$}] Parse $vk = (g_2, \alpha,\beta)$ and $\sigma = (h, s)$. Output $true$ if $h\neq1$ and $e(h,\alpha\beta^m)=e(s,g_2)$; otherwise output $false$.
\end{description}

The signature $\sigma=(h,s)$ is randomizable by choosing a random $r' \in \mathbb{F}_p$ and computing $\sigma'=(h^{r'},s^{r'})$. The above scheme can be modified to obtain credentials on a private attribute: to run \algorithm{IssueCred} the user first picks a random $t \in \mathbb{F}_p$, computes the commitment $c_p=g_1^tY^m$ to the message $m$, where $Y=g_1^y$; and sends it to a single authority along with a zero-knowledge proof of the opening of the commitment. The authority verifies the proof, picks a random $u \in \mathbb{F}_p$, and returns $\widetilde{\sigma}=(h,\widetilde{s})=(g^u,(Xc_p)^u)$ where $X=g_1^x$. The user unblinds the signature by computing $\sigma=(h,\widetilde{s}(h)^{-t})$, and this value acts as the credential.

This scheme provides blindness, unlinkability, efficiency and short credentials; but it does not support threshold issuance and therefore does not achieve our design goals. This limitation comes from the \textsf{P.Sign} algorithm---the issuing authority computes the credentials using a private and self-generated random number $r$ which prevents the scheme from being efficiently distributed to a multi-authority setting\footnote{The original paper of Pointcheval and Sanders~\cite{pointcheval} proposes a sequential aggregate signature protocol that is  unsuitable for threshold credentials issuance (see \Cref{sec:coconut:related}).}. To overcome that limitation, we take advantage of a concept introduced by BLS signatures~\cite{bls}; exploiting a hash function $\hashtopoint: \mathbb{F}_p\rightarrow\mathbb{G}_1$ to compute the group  element $h=\hashtopoint(m)$. The next section describes how \coconut incorporates these concepts to achieve all our design goals.

\subsection{The \coconut Threshold Credential Scheme} \label{sec:coconut:threshold_credentials_scheme}
We introduce the \emph{\coconut} threshold credential scheme, allowing users to obtain a partial credential $\sigma_i$ on a private or public attribute $m$. In a system with $n$ authorities, a $t$-out-of-$n$ threshold credentials scheme offers great flexibility as the users need to collect only $n/2< t \leq n$ of these partial credentials in order to recompute the consolidated credential (both $t$ and $n$ are scheme parameters). 

\para{Cryptographic primitives} For the sake of simplicity, we describe below a key generation algorithm \algorithm{TTPKeyGen} as executed by a trusted third party; this protocol can however be executed in a distributed way as illustrated by Gennaro~\etal~\cite{gennaro1999secure} under a synchrony assumption, and as illustrated by Kate~\etal~\cite{cryptoeprint:2012:377} under a weak synchrony assumption. Adding and removing authorities implies a re-run of the key generation algorithm---this limitation is inherited from the underlying Shamir's secret sharing protocol~\cite{shamir1979share} and can be mitigated using techniques introduced by Herzberg~\etal~\cite{herzberg1995proactive}.

\begin{description}
\item[\definition{Setup($1^\lambda$)}{$params$}] Choose a bilinear group $(\mathbb{G}_1,\mathbb{G}_2,\mathbb{G}_T)$ with order $p$, where $p$ is a $\lambda$-bit prime number. Let $g_1, h_1$ be generators of $\mathbb{G}_1$, and $g_2$ a generator of $\mathbb{G}_2$. The system parameters are $params=(\mathbb{G}_1, \mathbb{G}_2, \mathbb{G}_T, p, g_1, g_2, h_1)$. 

\item[\definition{TTPKeyGen($params, t, n$)}{$sk,vk$}] Pick\footnote{This algorithm can be turned into the \textsf{KeyGen} and \textsf{AggKey} algorithms described in \Cref{sec:coconut:definitions} using techniques illustrated by Gennaro~\etal~\cite{gennaro1999secure} or Kate~\etal~\cite{cryptoeprint:2012:377}.} two polynomials $v,w$ of degree $t-1$ with coefficients in $\mathbb{F}_p$, and set $(x,y) = (v(0), w(0))$. Issue to each authority $i \in [1, \dots, n]$ a secret key $sk_i = (x_i,y_i) = (v(i), w(i))$, and publish their verification key $vk_i$ = $(g_2,\alpha_i,\beta_i) = (g_2,g_2^{x_i},g_2^{y_i})$.

\item[\definition{IssueCred($m, \phi$)}{$\sigma$}] Credentials issuance is composed of three algorithms:
\begin{description}
\item \definition{PrepareBlindSign($m, \phi$)}{$d,\Lambda,\phi$} The users generate an \elgamal key-pair $(d, \gamma=g_1^{d})$; pick a random $o\in\mathbb{F}_p$,  compute the commitment $c_m$ and the group element $h\in\mathbb{G}_1$ as follows:
\begin{equation}\nonumber
c_m = g_1^m h_1^o \qquad\textrm{and}\qquad h = \hashtopoint(c_m)
\end{equation} 
Pick a random $k \in \mathbb{F}_p$ and compute an \elgamal encryption of $m$ as below:
\begin{equation}\nonumber
c = Enc(h^m)=(g_1^k,\gamma^k h^m)
\end{equation}
Output $(d, \Lambda=(\gamma, c_m, c, \pi_{s}), \phi)$, where $\phi$ is an application-specific predicate satisfied by $m$, and $\pi_{s}$ is defined by:
\begin{align}\nonumber
& \pi_{s} = \textrm{NIZK}\{(d, m, o, k): \gamma = g_1^d \;\land\; c_m=g_1^mh_1^o\\ \nonumber
& \qquad\land\; c = (g_1^k,\gamma^k h^m) \;\land\;  \phi(m)=1\}
\end{align}

\item \definition{BlindSign($sk_i, \Lambda, \phi$)}{$\tilde{\sigma}_i$} The authority $i$ parses $\Lambda=(\gamma, c_m, c, \pi_{s})$, $sk_i=(x_i,y_i)$, and $c=(a,b)$. Recompute $h = \hashtopoint(c_m)$. Verify the proof  $\pi_{s}$ using $\gamma$, $c_m$ and $\phi$; if the proof is valid, build $\tilde{c}_i=(a^y,h^{x_i}b^{y_i})$ and output $\tilde{\sigma}_i = (h, \tilde{c}_i)$; otherwise output $\perp$ and stop the protocol.

\item \definition{Unblind($\tilde{\sigma}_i, d$)}{$\sigma_i$} The users parse $\tilde{\sigma}_i=(h, \tilde{c})$ and $\tilde{c}=(\tilde{a},\tilde{b})$; compute $\sigma_i = (h,\tilde{b}(\tilde{a})^{-d})$. Output $\sigma_i$.
 \end{description}
 
\item[\definition{AggCred($\sigma_1, \dots, \sigma_t$)}{$\sigma$}] Parse each $\sigma_i$ as $(h,s_i)$ for $i \in [1, \dots, t]$. Output $(h,\prod^t_{i=1} s_i^{l_i})$, where $l$ is the Lagrange coefficient:
\begin{equation}\nonumber
l_i = \left[\prod^t_{j=1, j\neq i} (0-j)\right] \left[\prod^t_{j=1, j\neq i} (i-j)\right]^{-1} \;\textrm{mod}\; p
\end{equation}

\item[\definition{ProveCred($vk, m, \sigma, \phi'$)}{$\Theta,\phi'$}] Parse $\sigma=(h,s)$ and $vk=(g_2,\alpha,\beta)$. Pick at random $r',r \in \mathbb{F}_p^2$; set $\sigma'=(h',s')=(h^{r'},s^{r'})$; build $\kappa = \alpha\beta^m g_2^r$ and $\nu=\left(h'\right)^r$. Output $(\Theta=(\kappa, \nu, \sigma',\pi_v),\phi')$, where $\phi'$ is an application-specific predicate satisfied by $m$, and $\pi_v$ is:
\begin{equation}\nonumber
    \pi_v=\textrm {NIZK}\{(m,r): \kappa=\alpha\beta^m g_2^r \ \land \ \nu=\left(h'\right)^r \ \land \  \phi'(m)=1\} 
\end{equation}

\item[\definition{VerifyCred($vk, \Theta, \phi'$)}{$true/false$}] Parse $\Theta = (\kappa, \nu, \sigma',\pi_v)$ and $\sigma'=(h',s')$; verify $\pi_v$ using $vk$ and $\phi'$. Output $true$ if the proof verifies, $h'\neq1$ and $e(h',\kappa)=e(s'\nu,g_2)$; otherwise output $false$.
\end{description}

\para{Correctness and explanation} The \algorithm{Setup} algorithm generates the public parameters. Credentials are elements of $\mathbb{G}_1$, while verification keys are elements of $\mathbb{G}_2$. \Cref{fig:coconut:protocol_priv} illustrates the protocol exchanges.
To keep an attribute $m \in \mathbb{F}_p$ hidden from the authorities, the users run \algorithm{PrepareBlindSign} to produce $\Lambda=(\gamma, c_m, c, \pi_{s})$. They create an \elgamal keypair $(d, \gamma=g_1^{d})$, pick a random  $o \in \mathbb{F}_p$, and compute a commitment $c_m=g_1^mh_1^o$. Then, the users compute $h=\hashtopoint(c_m)$ and  the encryption of $h^m$ as below:
\begin{equation}\nonumber
c = Enc(h^m) = (a, b) = (g_1^k,\gamma^kh^m),
\end{equation} 
where $k \in \mathbb{F}_p$. Finally, the users send $(\Lambda, \phi)$ to the signer, where $\pi_{s}$ is a zero-knowledge proof ensuring that $m$ satisfies the application-specific predicate $\phi$, and correctness of $\gamma, c_m,c$~(\ding{202}). All the zero-knowledge proofs required by \coconut are based on standard sigma protocols to show knowledge of representation of discrete logarithms; they are based on the DH assumption~\cite{camenisch1997proof} and do not require any trusted setup.
To blindly sign the attribute, each authority $i$ verifies the proof $\pi_{s}$, and uses the homomorphic properties of \elgamal to generate an encryption $\tilde{c}$ of $h^{x_i+y_i\cdot m}$ as below:
\begin{equation}\nonumber
\tilde{c} = (a^y, h^{x_i} b^{y_i}) = (g_1^{ky_i}, \gamma^{ky_i}h^{x_i+y_i\cdot m})
\end{equation}

Note that every authority must operate on the same element $h$. Intuitively, generating $h$ from $h=\hashtopoint(c_m)$ is equivalent to computing $h=g_1^{\tilde{r}}$  where $\tilde{r} \in \mathbb{F}_p$ is unknown by the users (as in Pointcheval and Sanders~\cite{pointcheval}). However, since $h$ is deterministic, every authority can uniquely derive it in isolation and forgeries are prevented since different $m_0$ and $m_1$ cannot lead to the same value of $h$.\footnote{If an adversary $\mathcal{A}$ can obtain two credentials $\sigma_0$ and $\sigma_1$ on respectively $m_0=0$ and $m_1=1$ with the same value $h$ as follows: $\sigma_0 = h^{x} \quad \textrm{and} \quad \sigma_1=h^{x+y}$; then $\mathcal{A}$ could forge a new credential $\sigma_2$ on $m_2=2$:
$\sigma_2 = (\sigma_0)^{-1} \sigma_1 \sigma_1 = h^{x+2y}$.}
As described in \Cref{sec:coconut:pointcheval_recall}, the blind signature scheme of Pointcheval and Sanders builds the credentials directly from a commitment of the attribute and a blinding factor secretly chosen by the authority; this is unsuitable for issuance of threshold credentials. We circumvent that problem by introducing the \elgamal ciphertext $c$ in our scheme and exploiting its homomorphism, as described above.

\begin{figure}[t]
\centering
\includegraphics[width=.7\textwidth]{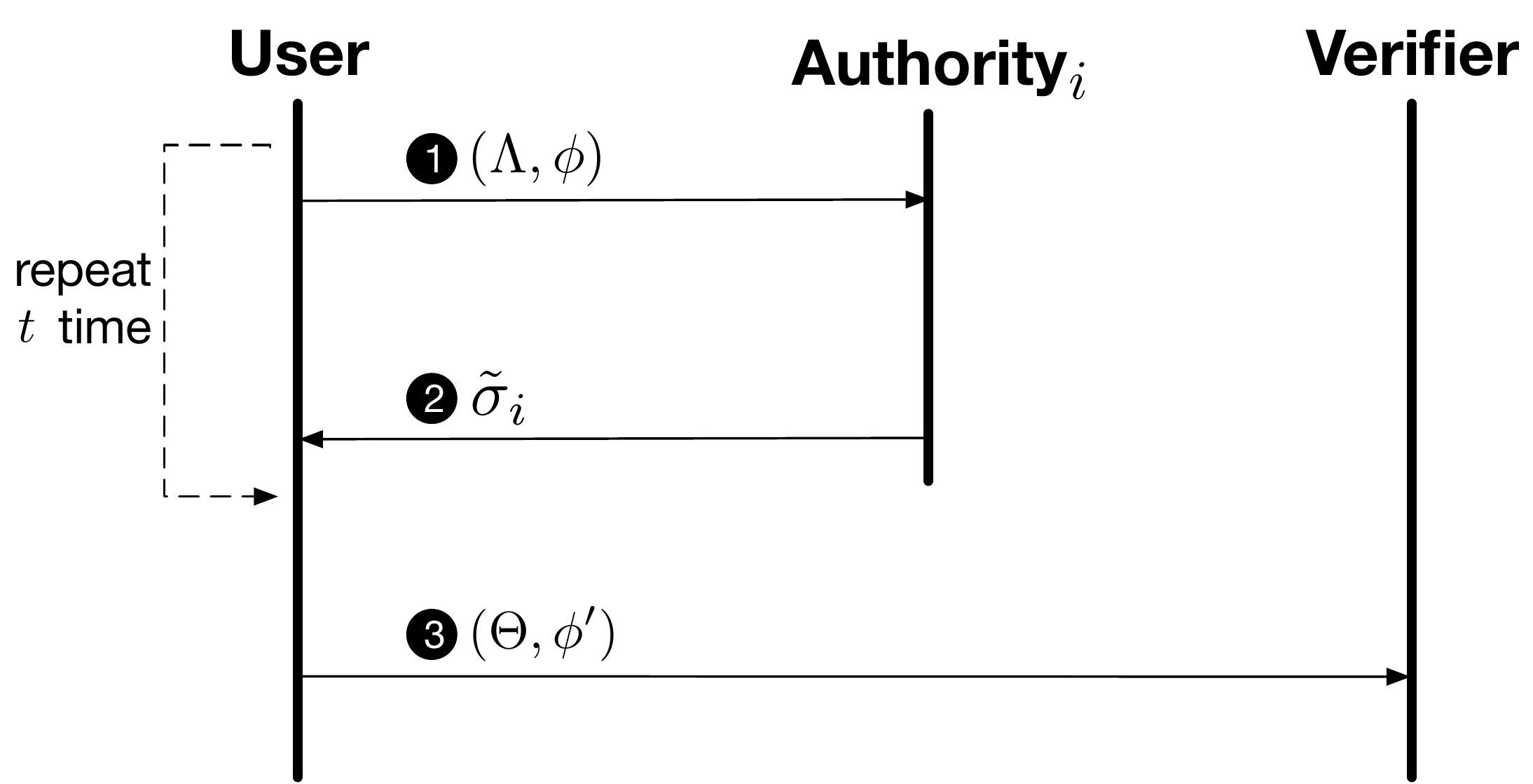}
\caption[\coconut threshold credentials protocol exchanges.]{\coconut threshold credentials protocol exchanges. The user first provides the authorities with the cryptographic material $\Lambda$ embedding its private attributes, and each authority replies with a blinded partial credential $\tilde{\sigma}_i$. The user aggregates all $\tilde{\sigma}_i$ into a single credential and unblinds it. Finally, the user can show its credentials by providing the verifier with the cryptographic material $\Theta$.}
\label{fig:coconut:protocol_priv}
\end{figure}

Upon reception of $\tilde{c}$, the users decrypt it using their \elgamal private key $d$ to recover the partial credentials $\sigma_i = (h, h^{x_i+y_i\cdot m})$; this is performed by the \algorithm{Unblind} algorithm (\ding{203}). Then, the users can call the \algorithm{AggCred} algorithm to aggregate any subset of $t$ partial credentials. This algorithm uses the Lagrange basis polynomial $l$ which allows to reconstruct the original $v(0)$ and $w(0)$ through polynomial interpolation;
\begin{equation}\nonumber
v(0) = \sum^t_{i=1} v(i)l_i \quad \textrm{and} \quad w(0) = \sum^t_{i=1} w(i)l_i
\end{equation}
However, this computation happens in the exponent---neither the authorities nor the users should know the values $v(0)$ and $w(0)$. One can easily verify the correctness of \algorithm{AggCred} of $t$ partial credentials $\sigma_i=(h_i,s_i)$ as below.
\begin{align} \nonumber
	s & = \prod^t_{i=1} \left(s_i\right)^{l_i} = \prod^t_{i=1} \left(h^{x_i+y_i\cdot m}\right)^{l_i} \\ \nonumber
	&= \prod^t_{i=1} \left(h^{x_i}\right)^{l_i} \prod^t_{i=1} \left(h^{y_i\cdot m}\right)^{l_i} = \prod^t_{i=1} h^{(x_i l_i)} \prod^t_{i=1} h^{(y_i l_i)\cdot m} \\ \nonumber
	&= h^{v(0)+w(0)\cdot m} = h^{x+y\cdot m}
\end{align}
Before verification, the verifier collects and aggregates the verifications keys of the authorities---this process  happens only once and ahead of time. The algorithms \algorithm{ProveCred} and \algorithm{VerifyCred} implement verification. First, the users randomize the credentials by picking a random $r' \in \mathbb{F}_p$ and computing $\sigma'=(h',s')=(h^{r'},s^{r'})$; then, they compute $\kappa$ and $\nu$ from the attribute $m$, a blinding factor $r\in\mathbb{F}_p$ and the aggregated verification key:
\begin{equation*}
\kappa=\alpha\beta^m g_2^r \qquad\textrm{and}\qquad \nu=(h')^r
\end{equation*}
Finally, they send $\Theta=(\kappa, \nu, \sigma', \pi_v)$ and $\phi'$ to the verifier where $\pi_v$ is a zero-knowledge proof asserting the correctness of $\kappa$  and $\nu$; and that the private attribute $m$ embedded into $\sigma$ satisfies the application-specific predicate $\phi'$~(\ding{204}). The proof $\pi_v$ also ensures that the users actually know $m$ and that $\kappa$ has been built using the correct verification keys and blinding factors. The pairing verification is similar to Pointcheval and Sanders~\cite{pointcheval} and Boneh~\etal~\cite{bls}; expressing $h'=g_1^{\tilde{r}} \; | \; \tilde{r} \in \mathbb{F}_p$, the left-hand side of the pairing verification can be expanded as:
\begin{equation*}
    e(h',\kappa) = e(h',g_2^{(x+my+r)}) = e(g_1,g_2)^{(x+my+r) \tilde{r}}
\end{equation*}
and the right-hand side:
\begin{equation*}
    e(s' \nu,g_2) = e(h'^{(x+my+r)},g_2) = e(g_1,g_2)^{(x+my+r)  \tilde{r}}
\end{equation*}
From where the correctness of \algorithm{VerifyCred} follows.

\para{Security}
The proof system we require is based on standard sigma protocols to show knowledge of representation of discrete logarithms, and can be rendered non-interactive using the Fiat-Shamir heuristic~\cite{fiat1986prove} in the random oracle model.  As our signature scheme is derived from the ones due to Pointcheval and Sanders~\cite{pointcheval} and BLS~\cite{bls}, we inherit their assumptions; namely, LRSW~\cite{lysyanskaya1999pseudonym} and XDH~\cite{bls}.

\begin{theorem} \label{th:coconut:coconut_theorem}
Assuming LRSW, XDH, and the existence of random oracles, \coconut is a secure threshold credentials scheme, meaning it satisfies unforgeability (as long as fewer than $t$ authorities collude), blindness, and unlinkability.
\end{theorem}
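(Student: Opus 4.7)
The three properties are proved separately via game-based reductions: a simulator embeds a challenge from the underlying hardness assumption (LRSW in the unforgeability case, XDH in the blindness case) and relies on the zero-knowledge simulator/extractor for $\pi_s$ and $\pi_v$, guaranteed by the Fiat--Shamir transform of standard sigma protocols in the random-oracle model. The adversary model for each property is fixed as in \Cref{sec:coconut:definitions}: the unforgeability adversary statically corrupts some subset $\mathcal{C}\subset[n]$ with $|\mathcal{C}|<t$, while the blindness and unlinkability adversaries may corrupt all authorities and verifiers respectively.

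\textbf{Unforgeability.} The plan is to reduce to the LRSW assumption underlying Pointcheval--Sanders. Given an LRSW instance $(g_2, X=g_2^x, Y=g_2^y)$ with a signing oracle that on input $m$ returns $(a, a^{y}, a^{x+xym})\in\mathbb{G}_1^3$, the simulator $\mathcal{B}$ lets $\mathcal{A}$ choose the corrupted indices $\mathcal{C}$ and samples their shares $(x_i,y_i)$ uniformly; the polynomials $v,w$ of degree $t-1$ are then implicitly fixed by $v(0)=x$, $w(0)=y$ and the $|\mathcal{C}|$ values on $\mathcal{C}$. The honest-authority verification keys $\alpha_j,\beta_j$ are computed by Lagrange interpolation in the exponent, lifting $X,Y$ together with the known corrupted shares. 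On an honest issuance query for attribute $m$, $\mathcal{B}$ programs the random oracle so that $\hashtopoint(c_m)=a$ returned by the LRSW oracle, derives the full signature $(a, a^{x+ym})$ from the oracle response, and splits it into the required partial share $h^{x_j+y_j m}$ by Lagrange interpolation using the known corrupted shares. A successful forgery by $\mathcal{A}$ on a fresh $m^*$ yields, after extraction from $\pi_v$, a valid tuple $(m^*, a^*, (a^*)^y, (a^*)^{x+xym^*})$ never produced by the LRSW oracle, contradicting the assumption.

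\textbf{Blindness.} The plan is a short hybrid argument on the message $\Lambda=(\gamma,c_m,c,\pi_s)$ sent to a (possibly malicious) authority. The commitment $c_m=g_1^m h_1^o$ is perfectly hiding in $o$; the ciphertext $c=(g_1^k,\gamma^k h^m)$ is ElGamal under a freshly sampled key $\gamma$, hence computationally hiding under DDH in $\mathbb{G}_1$, which is implied by XDH; and $\pi_s$ is replaced by its zero-knowledge simulator in the random-oracle model. Walking through these three hybrids, the adversary's view becomes independent of $m$ (conditional on $\phi(m)=1$), with distinguishing advantage bounded by the DDH advantage plus the simulation error of $\pi_s$.

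\textbf{Unlinkability.} The plan is to exhibit a simulator $\mathcal{S}$ that, given only $vk$ and the statement ``there exists $m$ with $\phi'(m)=1$ and a valid credential on $m$'', produces a showing transcript $\Theta=(\kappa,\nu,\sigma',\pi_v)$ computationally indistinguishable from the honest one. The fresh randomizer $r'$ makes $\sigma'=(h^{r'},s^{r'})$ uniform in the subgroup of valid randomizations, the blinding factor $r$ makes $\kappa=\alpha\beta^{m}g_2^{r}$ uniform in $\mathbb{G}_2$, $\nu=(h')^r$ is then the unique group element satisfying the pairing equation, and $\pi_v$ is produced by the NIZK simulator. Indistinguishability of two showings, and of a showing from its issuance transcript, follows by hybridizing over these simulated components. \textbf{Main obstacle.} The delicate step is the unforgeability reduction: aligning the single-secret LRSW challenge with the $t$-out-of-$n$ threshold structure while respecting type-3 pairings (no efficient isomorphism between $\mathbb{G}_1$ and $\mathbb{G}_2$), and carefully programming the random oracle for $\hashtopoint$ so that the deterministic $h$ the adversary expects on each query coincides with the $a$-value supplied by the LRSW oracle, without allowing the adversary to force a collision that would break consistency.
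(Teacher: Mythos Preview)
For blindness and unlinkability your plan matches the paper's sketch almost exactly: hiding of $c_m$, IND-CPA of the ElGamal ciphertext under XDH, and NIZK simulation for $\pi_s$; re-randomisation of $\sigma$ plus the NIZK simulator for $\pi_v$. Nothing to add there.

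For unforgeability you take a genuinely different route. The paper argues modularly in two lines: (i) with fewer than $t$ partial credentials the degree-$(t-1)$ polynomial is information-theoretically undetermined, so \algorithm{AggCred} cannot be completed; (ii) producing any missing partial credential directly would break unforgeability of the underlying Pointcheval--Sanders blind signature, which PS already proved under LRSW. The paper therefore never writes down an LRSW reduction for \coconut itself. You instead attempt a direct reduction: plant the LRSW challenge at the sharing point $0$, hand the adversary the corrupted shares, lift the honest $\alpha_j,\beta_j$ and the honest partial signatures by Lagrange interpolation in the exponent, and program $\hashtopoint$ to match the oracle's base element. Your approach is more self-contained and makes the threshold simulation explicit; the paper's is shorter but offloads all the work to~\cite{pointcheval}.

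Two technical points on your reduction. First, the oracle you describe, returning $(a,a^{y},a^{x+xym})$, is the original Camenisch--Lysyanskaya form of LRSW, whereas \coconut signatures have shape $(h,h^{x+ym})$; you cannot extract the latter from the former (the extra factor $x$ in the exponent is fatal), so you need the PS variant of the assumption (their ``Assumption~1'', oracle on $m$ returns fresh $h$ and $h^{x+ym}$). Second, the sentence ``the polynomials $v,w$ are implicitly fixed by $v(0),w(0)$ and the $|\mathcal{C}|$ values'' is only true when $|\mathcal{C}|=t-1$; for smaller $\mathcal{C}$ the simulator must also fix $t-1-|\mathcal{C}|$ honest shares at random before it can interpolate the rest. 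With those fixes, the obstacle you flag---programming $\hashtopoint(c_m)$ to the oracle's fresh $h$ when the adversary may query the hash \emph{before} the simulator can extract $m$ from $\pi_s$---is real and is precisely what the paper sidesteps by reducing to PS rather than to LRSW directly.
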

\noindent A sketch of this proof, based on the security of the underlying components of \coconut, can be found in \Cref{sec:coconut:security_proofs}. \coconut guarantees unforgeability as long as less than $t$ authorities collude ($t>n/2$), and guarantees blindness and unlinkability no matter how many authorities collude (and even if the verifier colludes with the authorities).

\subsection{Multi-Attribute Credentials}\label{sec:coconut:multi_message_scheme}
We expand our scheme to embed multiple attributes into a single credential without increasing its size; this generalization follows directly from the Waters signature scheme~\cite{waters} and Pointcheval and Sanders~\cite{pointcheval}. The authorities' key pairs become:
\begin{equation}\nonumber
sk= (x,y_1,\dots,y_{q}) \quad\textrm{and}\quad vk = (g_2,g_2^x,g_2^{y_1}, \dots, g_2^{y_{q}}) 
\end{equation}
where $q$ is the number of attributes. The multi-attribute credential is derived from the commitment $c_m$ and the group element $h$ as below:
\begin{equation*}
c_m = g_1^o \prod_{j=1}^{q} h_j^{m_j} \qquad\textrm{and}\qquad h = \hashtopoint(c_m)
\end{equation*}
and the credential generalizes as follows:
\begin{equation*}
\sigma = (h,h^{x+\sum_{j=1}^{q} m_j y_j})
\end{equation*}
The credential's size does not increase with the number of attributes or authorities---it is always composed of two group elements. The security proof of the multi-attribute scheme relies on a reduction against the single-attribute scheme and is analogous to Pointcheval and Sanders~\cite{pointcheval}. Moreover, it is also possible to combine public and private attributes to keep only a subset of the attributes hidden from the authorities, while revealing some others; the \algorithm{BlindSign} algorithm only verifies the proof $\pi_{s}$ on the private attributes (similar to Chase~\etal~\cite{amac}).
If the credentials include only non-random attributes, the verifier could guess its value by brute-forcing the verification algorithm\footnote{Let assume for example that some credentials include a single attribute $m$ representing the age of the user; the verifier can run the verification algorithm $e(h,\kappa(\alpha\cdot\beta^m)^{-1})=e(\nu,g_2)$ for every $m \in [1,100]$ and guess the value of $m$.}. This issue is prevented by always embedding a private random attribute into the credentials, that can also act as the authorization key for the credential.

\para{Cryptographic primitives} 
As in \Cref{sec:coconut:threshold_credentials_scheme}, we describe below a key generation algorithm \algorithm{TTPKeyGen} as executed by a trusted third party; this protocol can however be execute in a distributed way as illustrated by Kate~\etal~\cite{cryptoeprint:2012:377}.
\begin{description}
\item[\definition{Setup($1^\lambda,q$)}{$params$}] Choose a bilinear group $(\mathbb{G}_1,\mathbb{G}_2,\mathbb{G}_T)$ with order $p$, where $p$ is an $\lambda$-bit prime number. Let $g_1, h_1, \dots, h_{q}$ be generators of $\mathbb{G}_1$, and $g_2$ a generator of $\mathbb{G}_2$. The system parameters are $params=(\mathbb{G}_1, \mathbb{G}_2, \mathbb{G}_T, p, g_1, g_2, h_1, \dots, h_{q} )$. 

\item[\definition{TTPKeyGen($params, t, n, q$)}{$sk,vk$}] Choose $(q+1)$ polynomials of degree $(t-1)$ with coefficients in $\mathbb{F}_p$, noted $(v, w_1, \dots, w_{q})$,  and set:
\begin{equation*}
(x,y_1, \dots y_{q}) = (v(0),w_1(0), \dots, w_{q}(0))
\end{equation*}
Issue a secret key $sk_i$ to each authority $i \in [1, \dots, n]$ as below:
\begin{equation*}
sk_i = (x_i,y_{i,1}, \dots, y_{i,q}) = (v(i), w_{1}(i),\dots, w_{q}(i))
\end{equation*}
and publish their verification key $vk_i$ computed as follows:
\begin{equation*}
vk_i = (g_2,\alpha_i,\beta_{i,1}, \dots, \beta_{i,q)}) = (g_2,g_2^{x_i},g_2^{y_{i,1}}, \dots, g_2^{y_{i,1q}})
\end{equation*}

\item[\definition{IssueCred($m_1,\dots,m_{q}, \phi$)}{$\sigma$}] Credentials issuance is composed of three algorithms:
\begin{description}
\item \definition{PrepareBlindSign($m_1, \dots, m_q,\phi$)}{$d,\Lambda,\phi$} The users generate an \elgamal key-pair $(d, \gamma=g_1^{d})$; pick a random $o\in\mathbb{F}_p$ compute the commitment $c_m$ and the group element $h\in\mathbb{G}_1$ as follows:
\begin{equation}\nonumber
c_m = g_1^o \prod_{j=1}^{q} h_j^{m_j} \qquad\textrm{and}\qquad h = \hashtopoint(c_m)
\end{equation} 
Pick at random $(k_1,\dots,k_{q}) \in \mathbb{F}_p^{q}$ and compute an \elgamal encryption of each $m_j$ for $\forall j \in [1, \dots, q]$  as below:
\begin{equation}\nonumber
c_j = Enc(h^{m_j})=(g_1^{k_j},\gamma^{k_j} h^{m_j})
\end{equation}
Output ($d,\Lambda=(\gamma, c_m, c_j, \pi_{s}),\phi$) $\forall j \in [1, \dots, q]$, where $\pi_{s}$ is defined by:
\begin{align}\nonumber
\pi_{s} = & \textrm{NIZK}\{(d, m_1,\dots,m_q, o, k_1,\dots,k_q): \gamma = g_1^d \\ \nonumber
 &\land\; c_m = g_1^o \prod_{j=1}^{q} h_j^{m_j}  \;\land\; c_j = (g_1^{k_j,}\gamma^{k_j} h^{m_j}) \\ \nonumber
 & \land\;  \phi(m_1,\dots,m_{q})=1\} \quad \forall j \in [1, \dots, q] 
 \end{align}
 
\item \definition{BlindSign($sk, \Lambda, \phi$)}{$\tilde{\sigma}_i$} The authority $i$ parses $\Lambda=(\gamma, c_m, c_j, \pi_{s})$ and $c_j=(a_j,b_j)$ $\forall j \in [1, \dots, q]$, and $sk_i=(x,y_1, \dots, y_{q})$. Recompute $h = \hashtopoint(c_m)$. Verify the proof  $\pi_{s}$ using $\gamma,c_m$ and $\phi$. If the proof is invalid, output $\perp$ and stop the protocol; otherwise output $\tilde{\sigma}_i=(h,\tilde{c})$, where $\tilde{c}$ is defined as below:
\begin{equation}\nonumber
\tilde{c}=\left(\prod_{j=1}^{q}a_j^{y_j}, h^x \prod_{j=1}^{q}b_j^{y_j}\right)
\end{equation}

\item \definition{Unblind($\tilde{\sigma}_i, d$)}{$\sigma_i$} The users parse $\tilde{\sigma}_i=(h, \tilde{c})$ and $\tilde{c}=(\tilde{a},\tilde{b})$; compute $\sigma_i = (h,\tilde{b}(\tilde{a})^{-d})$. Output $\sigma_i$.
 \end{description}

\item[\definition{AggCred($\sigma_1, \dots, \sigma_t$)}{$\sigma$}] Parse each $\sigma_i$ as $(h,s_i)$ for $i \in [1, \dots, t]$. Output $(h,\prod^t_{i=1} s_i^{l_i})$, where:
\begin{equation}\nonumber
l_i = \left[\prod^t_{i=1, j\neq i} (0-j)\right] \left[\prod^t_{i=1,s j\neq i} (i-j)\right]^{-1} \;\textrm{mod}\; p
\end{equation}

\item[\definition{ProveCred($vk, m_1, \dots, m_{q}, \sigma, \phi'$)}{$\sigma',\Theta,\phi'$}] Parse $\sigma=(h,s)$ and $vk=(g_2,\alpha,\beta_{1}, \dots, \beta_{q})$. Pick at random $r',r\in\mathbb{F}_q^2$; set $\sigma'=(h',s')=(h^{r'},s^{r'})$, and build $\kappa$ and $\nu$ as below:
\begin{equation}\nonumber
\kappa =  \alpha\prod_{j=1}^{q}{\beta_j^{m_j}} g_2^r \qquad\textrm{and}\qquad \nu= \left(h'\right)^r
\end{equation}
Output $(\Theta=(\kappa, \nu, \sigma', \pi_v),\phi')$, where $\pi_v$ is:
\begin{align}\nonumber
& \pi_{v} = \textrm{NIZK}\{(m_1,\dots,m_q,r): \kappa= \alpha\prod_{j=1}^{q}{\beta_j^{m_j}} g_2^r \\ \nonumber
& \qquad \land \ \nu=\left(h'\right)^r \ \land \  \phi(m_1, \dots, m_{q})=1\} 
\end{align}

\item[\definition{VerifyCred($vk, \Theta, \phi'$)}{$true/false$}] Parse $\Theta=(\kappa, \nu, \sigma',  \pi_v)$ and $\sigma'=(h',s')$; verify $\pi_v$ using $vk$ and $\phi'$; Output $true$ if the proof verifies, $h'\neq1$ and $e(h',\kappa)=e(s'\nu,g_2)$; otherwise output $false$.
\end{description}

\section{Sketch of Security Proofs} \label{sec:coconut:security_proofs}
This section sketches the security proofs of the cryptographic construction described in \Cref{sec:coconut:construction}.

\para{Unforgeability} There are two possible ways for an adversary to forge a proof of a credential: \first an adversary without a valid credential nevertheless manages to form a proof such that \algorithm{VerifyCred} passes; and \second, an adversary that has successfully interacted with fewer than $t$ authorities generates a valid consolidated credential (of which they then honestly prove possession using \algorithm{ProveCred}).

Unforgeability in scenario \first is ensured by the soundness property of the zero-knowledge proof.  For scenario \second, running \algorithm{AggCred} involves performing Lagrange interpolation.  If an adversary has fewer than $t$ partial credentials, then they have fewer than $t$ points, which makes the resulting polynomial (of degree $t-1$) undetermined and information-theoretically impossible to compute.  The only option available to the adversary is thus to forge the remaining credentials directly.  This violates the unforgeability of the underlying blind signature scheme, which was proved secure by Pointcheval and Sanders~\cite{pointcheval} under the LRSW assumption~\cite{lysyanskaya1999pseudonym}.

\para{Blindness} Blindness follows directly from the blindness of the signature scheme used during \algorithm{IssueCred}, which was largely proved secure by Pointcheval and Sanders~\cite{pointcheval} under the XDH assumption~\cite{bls}.  There are only two differences between their protocol and ours.

First, the \coconut authorities generate the credentials from a group element $h=\hashtopoint(c_m)$ instead of from $g_1^{\tilde{r}}$ for random $\tilde{r} \in \mathbb{F}_p$.  The hiding property of the commitment $c_m$, however, ensures that $\hashtopoint(c_m)$ does not reveal any information about $m$. Second, Pointcheval and Sanders use a commitment to the attributes as input to \algorithm{BlindSign} (see \Cref{sec:coconut:pointcheval_recall}), whereas \coconut uses an encryption instead.  The IND-CPA property, however, of the encryption scheme ensures that the ciphertext also reveals no information about $m$.
Concretely, \coconut uses Pedersen Commitments~\cite{pedersen} for the commitment scheme, which is secure under the discrete logarithm assumption.  It uses \elgamal for the encryption scheme in $\mathbb{G}_1$, which is secure assuming DDH.  Finally, it relies on the blindness of the Pointcheval and Sanders signature, which is secure assuming XDH~\cite{bls}.  As XDH implies both of the previous two assumptions, our entire blindness argument is implied by XDH.

\para{Unlinkability / Zero-knowledge} Unlinkability and zero-knowledge are guaranteed under the XDH assumption~\cite{bls}. The zero-knowledge property of the underlying proof system ensures that \algorithm{ProveCred} does not on its own reveal anything more than the validity of the statement $\phi'$, which may include public attributes (see \Cref{sec:coconut:multi_message_scheme}. The fact that credentials are re-randomized at the start of \algorithm{ProveCred} in turn ensures unlinkability, both between different executions of \algorithm{ProveCred} and between an execution of \algorithm{ProveCred} and of \algorithm{IssueCred}.

\section{Implementation}\label{sec:coconut:implementation}
We implement a Python library for \coconut as described in \Cref{sec:coconut:construction} and publish the code on GitHub as an open-source project\footnote{\url{https://github.com/asonnino/coconut}}.
We also implement a smart contract library in Chainspace~\cite{chainspace} to enable other application-specific smart contracts (see \Cref{sec:coconut:applications}) to conveniently use our cryptographic primitives.
We present the design and implementation of the \coconut smart contract library in Section~\ref{sec:coconut:smart_contract_library}.
In addition, we implement and evaluate some of the functionality of the \coconut smart contract library in \ethereum~\cite{ethereum} (\Cref{sec:coconut:ethereum_smart_contract_library}).
 Finally, we show how to integrate \coconut into existing semi-permissioned \blockchains (\Cref{sec:coconut:integrations_into_ledgers}). 

\subsection{The \coconut Smart Contract Library}  \label{sec:coconut:smart_contract_library}
We implement the \coconut smart contract in \chainspace\footnote{\url{https://github.com/asonnino/coconut-chainspace}} (which can be used by other application-specific smart contracts) as a library to issue and verify randomizable threshold credentials through cross-contract calls. 
Running this library as an independent smart contract enables other application-specific smart contracts to securely delegate the credentials issuance and verification processes to the library through a cross-contract call. 
The contract has four functions, \algorithm{(Create, Request, Issue, Verify)}, as illustrated in \Cref{fig:coconut:library}. First, a set of authorities call the \algorithm{Create} function to initialize a \coconut instance defining the \emph{contract info}; \ie their verification key, the number of authorities and the threshold parameter~(\ding{202}). The initiator smart contract can specify a callback contract that needs to be executed by the user in order to request credentials; \eg this callback can be used for authentication. The instance is public and can be read by the user~(\ding{203}); any user can request a credential through the \algorithm{Request} function by executing the specified callback contract, and providing the public and private \emph{attributes} to include in the credentials~(\ding{204}). The public attributes are simply a list of clear text strings, while the private attributes are encrypted as described in \Cref{sec:coconut:threshold_credentials_scheme}. Each signing authority monitors the \blockchain at all times, looking for credential requests. If the request appears on the \blockchain (\ie a transaction is executed), it means that the callback has been correctly executed~(\ding{205}); each authority issues a partial \emph{credential} on the specified attributes by calling the \algorithm{Issue} procedure~(\ding{206}). In our implementation, all partial credentials are in the \blockchain; however, these can also be provided to the user off-chain. Users collect a threshold number of partial credentials, and aggregate them to form a full credential~(\ding{207}). Then, the users locally randomize the credential. The last function of the \coconut library contract is \algorithm{Verify} that allows the \blockchain---and anyone else---to check the validity of a given credential~(\ding{208}). 

\begin{figure}[t]
\centering
\includegraphics[width=\textwidth]{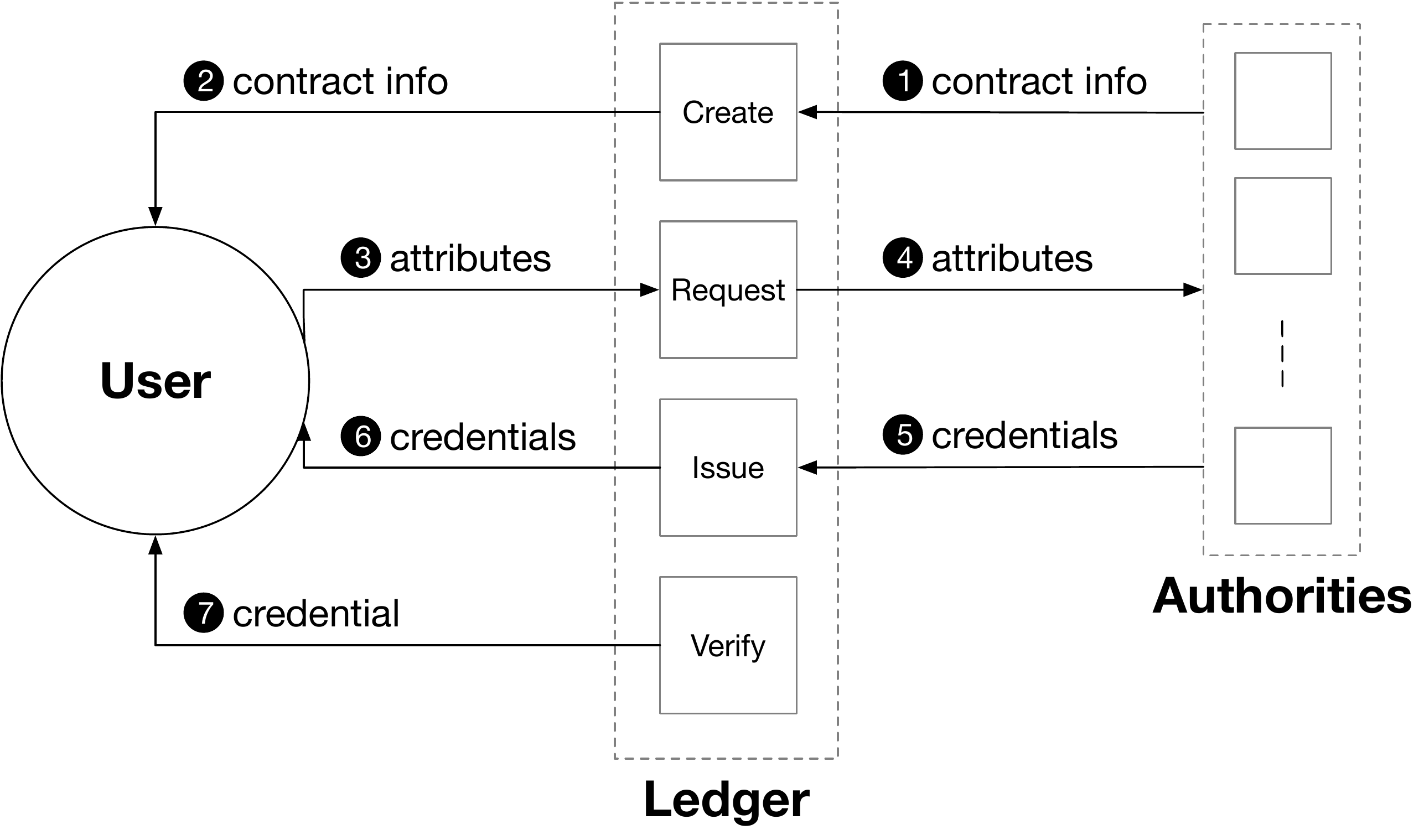}
\caption[The \coconut smart contract library.]{The \coconut smart contract library is composed of four functions: \algorithm{(Create, Request, Issue, Verify)}. \algorithm{Create} initializes a \coconut instance defining a number of public information such that the identities of the authorities. \algorithm{Request} and \algorithm{Issue} respectively allow the users to request credentials and the authorities to issue them on-chain. \algorithm{Verify} allows smart contracts to verify \coconut credentials.}
\label{fig:coconut:library}
\end{figure}

A limitation of this architecture is that it is not efficient for the authorities to continuously monitor the \blockchain. \Cref{sec:coconut:integrations_into_ledgers} explains how to overcome this limitation by embedding the authorities into the nodes running the \blockchain.

\subsection{\ethereum Smart Contract Library}\label{sec:coconut:ethereum_smart_contract_library}
To make \coconut more widely available, we also implement it in \ethereum---a popular permissionless smart contract \blockchain~\cite{ethereum}.
We release the \coconut \ethereum smart contract as an open source library\footnote{\url{https://github.com/musalbas/coconut-ethereum}}.
The library is written in Solidity, a high-level JavaScript-like language that compiles down to \ethereum Virtual Machine (EVM) assembly code.
\ethereum recently hardcoded a pre-compiled smart contract in the EVM for performing pairing checks and elliptic curve operations on the alt\_bn128 curve~\cite{eip197,eip196}, for efficient verification of zkSNARKs. The execution of an \ethereum smart contract has an associated `gas cost', a fee that is paid to miners for executing a transaction. Gas cost is calculated based on the operations executed by the contract; \ie the more operations, the higher the gas cost. The pre-compiled contracts have lower gas costs than equivalent native \ethereum smart contracts.
We use the pre-compiled contract for performing a pairing check, in order to implement \coconut verification within a smart contract. The \ethereum code only implements elliptic curve addition and scalar multiplication on $\mathbb{G}_1$, whereas \coconut requires operations on $\mathbb{G}_2$ to verify credentials. 
Therefore, we implement elliptic curve addition and scalar multiplication on $\mathbb{G}_2$ as an \ethereum smart contract library written in Solidity that we also release open source\footnote{\url{https://github.com/musalbas/solidity-BN256G2}}. This is a practical solution for many \coconut applications, as verifying credentials with one revealed attribute only requires one addition and one scalar multiplication. It would not be practical however to verify credentials with attributes that will not be revealed---this requires three $\mathbb{G}_2$ multiplications using our elliptic curve implementation, which would exceed the current \ethereum block gas limit (10M as of September 2019).

We can however use the \ethereum contract to design a federated peg for side chains, or a coin tumbler as an \ethereum smart contract, based on credentials that reveal one attribute. We go on to describe and implement this tumbler using the \coconut \chainspace library in \Cref{sec:coconut:tumbler}, however the design for the \ethereum version differs slightly to avoid the use of attributes that will not be revealed. The library shares the same functions as the \chainspace library described in \Cref{sec:coconut:smart_contract_library}, except for \algorithm{Request} and \algorithm{Issue} which are computed off the blockchain to save gas costs.
As \algorithm{Request} and \algorithm{Issue} functions simply act as a communication channel between users and authorities, users can directly communicate with authorities off the \blockchain to request tokens.
 This saves significant gas costs that would be incurred by storing these functions on the \blockchain. The \algorithm{Verify} function simply verifies tokens against \coconut instances created by the \algorithm{Create} function. 

\subsection{Deeper Blockchain Integration} \label{sec:coconut:integrations_into_ledgers}
The designs described in \Cref{sec:coconut:smart_contract_library} and \Cref{sec:coconut:ethereum_smart_contract_library} rely on authorities on-the-side for issuing credentials. In this section, we present designs that incorporate \coconut authorities within the infrastructure of a number of semi-permissioned \blockchains. This enables the issuance of credentials as a side effect of the normal system operations, taking no additional dependency on extra authorities. It remains an open problem how to embed \coconut into permissionless systems. These systems have a highly dynamic set of nodes maintaining the state of their blockchains, which cannot readily be mapped into \coconut issuing authorities.

Integration of \coconut into \hyperledger Fabric~\cite{hyperledger}---a permissioned \blockchain platform---is straightforward. Fabric contracts run on private sets of computation nodes---and use the Fabric protocols for cross-contract calls. In this setting, \coconut issuing authorities can coincide with the Fabric smart contract authorities. Upon a contract setup, they perform the setup and key distribution, and then issue partial credentials when authorized by the contract. For issuing \coconut credentials, the only secrets maintained are the private issuing keys; all other operations of the contract can be logged and publicly verified. \coconut has obvious advantages over using traditional CL credentials relying on a single authority---as currently present in the \hyperledger roadmap\footnote{\url{http://nick-fabric.readthedocs.io/en/latest/idemix.html}}. The threshold trust assumption---namely that integrity and availability is guaranteed under the corruption of a subset of authorities is preserved, and prevents forgeries by a single corrupted node.
We can also naturally embed \coconut into sharded scalable \blockchains, as exemplified by \chainspace~\cite{chainspace} (which supports general smart contracts), and \omniledger~\cite{omniledger} (which supports digital tokens). 
In these systems, transactions are distributed and executed on `shards' of authorities, whose membership and public keys are known. 
\coconut authorities can naturally coincide with the nodes within a shard---a special transaction type in \omniledger, or a special object in \chainspace, can signal to them that issuing a credential is authorized. 
The authorities,  then issue the partial signature necessary to reconstruct the \coconut credential, and attach it to the transaction they are processing anyway. 
Users can aggregate, re-randomize and show the credential. 

\section{Applications} \label{sec:coconut:applications}
In this section, we present three applications that leverage \coconut to offer improved security and privacy properties---a coin tumbler (\Cref{sec:coconut:tumbler}), a privacy-preserving petition system (\Cref{sec:coconut:petition}), and a system for censorship-resistant distribution of proxies (\Cref{sec:coconut:proxy}).  
For generality, the applications assume authorities external to the \blockchain, but these can also be embedded into the \blockchain as described in \Cref{sec:coconut:integrations_into_ledgers}.

\subsection{Coin Tumbler} \label{sec:coconut:tumbler}
We implement a coin tumbler (or mixer) on \chainspace as depicted in \Cref{fig:coconut:tumbler}. Coin tumbling is a method to mix cryptocurrency associated with an address visible in a public ledger with other addresses, to ``clean'' the coins and obscure the trail back to the coins' original source address. A limitation of previous similar schemes~\cite{mixcoin,blindcoin,tumblebit,coinjoin,coinshuffle,xim,mobius} is that they are either centralized (\ie there is a central authority that operates the tumbler, which may go offline), or require users to coordinate with each other.
The \coconut tumbler addresses these issues \via a distributed design (\ie security relies on a set of multiple authorities that are collectively trusted to contain at least $t$ honest ones), and does not require users to coordinate with each other.
Zcash~\cite{zcash} achieves a similar goal: it theoretically hides the totality of the transaction but at a large computational cost, and offers the option to cheaply send transactions in clear. In practice, the computational overhead of sending hidden transactions makes it impractical, and only a few users take advantage of the optional privacy provided by Zcash; as a result, transactions are easy to de-anonymize~\cite{Kappos}, and recent works aim to reduce the computational overhead of Zcash hidden transactions~\cite{fast-zcash}. \coconut provides efficient proofs taking only a few milliseconds (see \Cref{sec:coconut:evaluation}), and makes hidden transactions practical. Trust assumptions in Zcash are different from \coconut. However, instead of assuming a threshold number of honest authorities, Zcash relies on zk-SNARKs which assumes a setup algorithm executed by a trusted authority\footnote{Recent proposals aim to distribute this trusted setup~\cite{mpc-zcash-1}.}.
M{\"o}bius~\cite{mobius}---which was developed concurrently---is a coin tumbler based on \ethereum smart contracts that achieves strong notions of anonymity and low off-chain communication complexity. M{\"o}bius relies on ring signatures to allow parties to prove group membership without revealing exactly which public key belongs to them.

\begin{figure}[t]
    \centering
    \includegraphics[width=.7\textwidth]{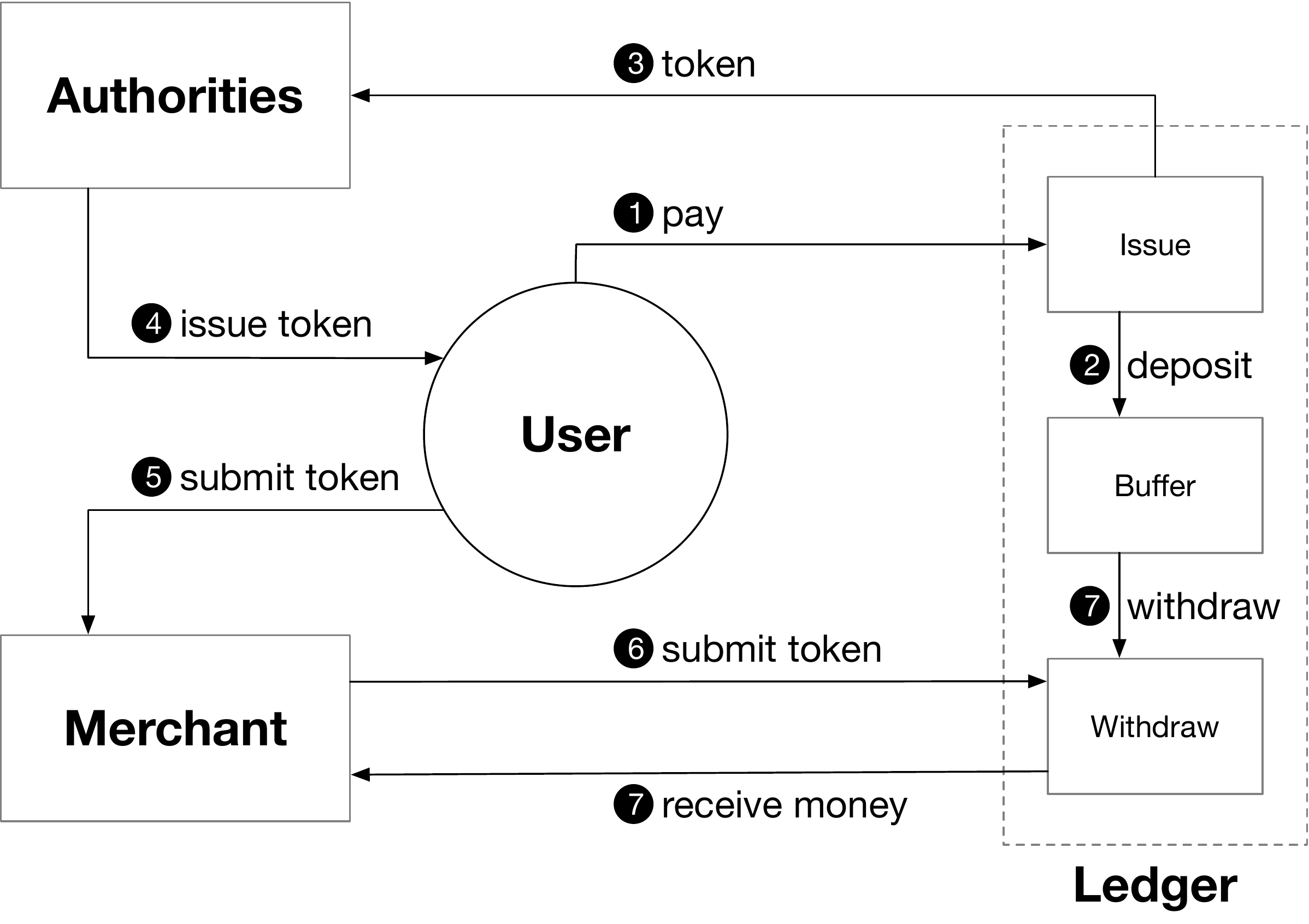}
    \caption[\coconut coin tumbler application.]{\coconut coin tumbler application. The user first deposits funds into a dedicated smart contract, and the authorities issue a credential to the user. The user can later use this credential as a coin token at a merchant, who can redeem the token and withdraw funds from the smart contract.}
    \label{fig:coconut:tumbler}
\end{figure}

\para{\chainspace coin tumbler}
Our tumbler uses \coconut to instantiate a pegged side-chain~\cite{back2014enabling}, providing stronger value transfer anonymity than the original cryptocurrency platform, through unlinkability between issuing a credential representing an e-coin~\cite{chaum1988untraceable}, and spending it. The tumbler application is based on the \coconut contract library and an application specific smart contract called `tumbler'. 
A set of authorities jointly create an instance of the \coconut smart contract as described in \Cref{sec:coconut:smart_contract_library} and specify the smart contract handling the coins of the underlying \blockchain as callback. Specifically, the callback requires a coin transfer to a buffer account. Then users execute the callback and \emph{pay} $v$ coins to the buffer to ask a credential on the public attribute $v$, and on two private attributes: the user's private key $k$ and a randomly generated sequence number $s$~(\ding{202}). Note that to prevent tracing traffic analysis, $v$ should be limited to a specific set of possible values (similar to cash denominations). The request is accepted by the \blockchain only if the user \emph{deposited} $v$ coins to the buffer account~(\ding{203}). 
Each authority monitors the \blockchain and detects the \emph{request}~(\ding{204}); and issues a partial \emph{credential} to the user (either on chain or off-chain)~(\ding{205}). The user aggregates all partial credentials into a consolidated credential, re-randomizes it, and \emph{submits} it as coin token to a merchant. First, the user produces a zk-proof of knowledge of its private key by binding the proof to the merchant's address $addr$; then, the user provides the merchant with the proof along with the sequence number $s$ and the consolidated credential~(\ding{206}). The coins can only be spent with knowledge of the associated sequence number and by the owner of $addr$.
To accept the above as payment, the merchant \emph{submits} the token by showing the credential and a group element $\zeta=g_1^s\in\mathbb{G}_1$ to the tumbler contract along with a zero-knowledge proof ensuring that $\zeta$ is well-formed~(\ding{207}). To prevent double spending, the tumbler contract keeps a record of all elements $\zeta$ that have already been shown. Upon showing a $\zeta$ embedding a fresh (unspent) sequence number $s$, the contract verifies that the credential and zero-knowledge proofs check, and that $\zeta$ doesn't already appear in the spent list. Then it \emph{withdraws} $v$ coins from the buffer~(\ding{208}), sends them to be \emph{received} by the merchant account determined by $addr$, and adds $\zeta$ to the spent list~(\ding{209}). For the sake of simplicity, we keep the transfer value $v$ in clear-text (treated as a public attribute), but this could be easily hidden by integrating a range proof; this can be efficiently implemented using the technique developed by B{\"u}nz~\etal~\cite{bunz2018bulletproofs}.

\para{Security consideration} \coconut provides blind issuance which allows the user to obtain a credential on the sequence number $s$ without the authorities learning its value. Without blindness, any authority seeing the user key $k$ could potentially race the user and the merchant, and spend it---blindness prevents authorities from stealing the token. Furthermore, \coconut provides unlinkability between the \emph{pay} phase~(\ding{202}) and the \emph{submit} phase~(\ding{206}) (see \Cref{fig:coconut:tumbler}), and prevents any authority or third parties from keeping track of the user's transactions. As a result, a merchant can receive payments for good or services offered, yet not identify the purchasers. Keeping a spent list of all elements $\zeta$ prevents double-spending attacks~\cite{karame2012double} without revealing the sequence number $s$; this prevents an attacker from exploiting a race condition in the \emph{submit token} phase~(\ding{207}) and lock user's funds\footnote{An attacker observing a sequence number $s$ during a \emph{submit token} phase~(\ding{207}) could exploit a race condition to lock users fund by quickly buying a token using the same $s$, and spending it before the original \emph{submit token} phase is over.}.
Finally, this application prevents a single authority from creating coins to steal all the money in the buffer. The threshold property of \coconut implies that the adversary needs to corrupt at least $t$ authorities for this attack to be possible. A small subset of authorities cannot block the issuance of a token---the service is guaranteed to be available as long as at least $t$ authorities are running. 

\para{Adapting the coin tumbler to \ethereum}
We extend the example of the tumbler application described above to the \ethereum version of the \coconut library, with a few modifications to reduce the gas costs.
Instead of having $v$ (the number of coins) as an attribute, which would increase the number of elliptic curve multiplications required to verify the credentials, we allow for a fixed number of instances of \coconut to be setup for different denominations for $v$. The Tumbler has a \algorithm{Deposit} method, where users deposit Ether into the contract, and then send an issuance request to authorities on one private attribute: $addr || s$, where $addr$ is the destination address of the merchant, and $s$ is a randomly generated sequence number~(1). It is necessary for $addr$ to be a part of the attribute because once the attribute is revealed, the credential can be spent by anyone with knowledge of the attribute (including any peers monitoring the blockchain for transactions), therefore the credential must be bounded to a specific recipient address before it is revealed. This issuance request is signed by the \ethereum address that deposited the Ether into the smart contract, as proof that the request is associated with a valid deposit, and sent to the authorities~(2). As $addr$ and $s$ will be both revealed at the same time when withdrawing the token, we concatenate these in one attribute to save on elliptic curve operations. Users aggregate the credentials issued by the authorities~(3). The resulting token can then be passed to the \algorithm{Withdraw} function, where the withdrawer reveals $addr$ and $s$~(4). As usual, the contract maintains a map of $s$ values associated with tokens that have already been withdrawn to prevent double-spending. After checking that the token's credentials verifies and that it has not already been spent, the contract sends $v$ to the \ethereum destination address $addr$~(5).

\subsection{Privacy-Preserving Petition} \label{sec:coconut:petition}
We consider the scenario where several authorities managing the country \emph{C} wish to issue some long-term credentials to its citizens to enable any third party to organize a privacy-preserving petition.  All citizens of \emph{C} are allowed to participate, but should remain anonymous and unlinkable across petitions. This application extends the work of Diaz~\etal~\cite{diaz2008privacy} which does not consider threshold issuance of credentials.

\begin{figure}[t]
\centering
\includegraphics[width=.7\textwidth]{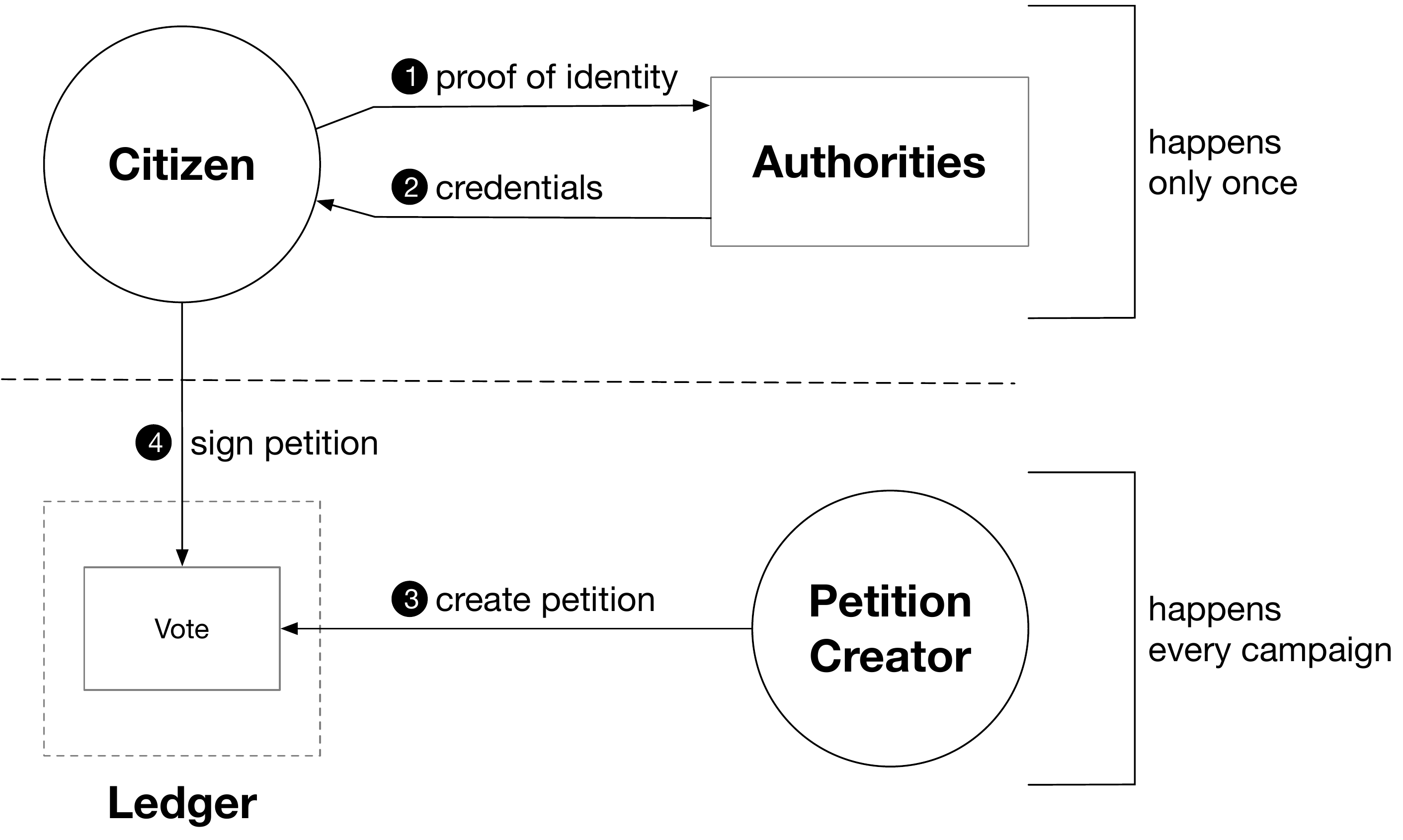}
\caption[\coconut petition application.]{\coconut petition application. The citizen first receives a credential from the authorities, it can then use it to vote on multiple petitions unlinkably.}
\label{fig:coconut:petition}
\end{figure}

\para{\chainspace petition system}
Our petition system is based on the \coconut library contract for \chainspace and a simple smart contract called `petition'.
There are three types of parties: a set of signing authorities representing \emph{C}, a petition initiator, and the citizens of \emph{C}.  
The signing authorities create an instance of the \coconut smart contract as described in \Cref{sec:coconut:smart_contract_library}. As shown in \Cref{fig:coconut:petition}, the citizen provides a \emph{proof of identity} to the authorities~(\ding{202}). 
The authorities check the citizen's identity, and issue a blind and long-term signature on her private key $k$. This signature, which the citizen needs to obtain only once, acts as her long term \emph{credential} to sign any petition~(\ding{203}).
Any third party can \emph{create a petition} by creating a new instance of the petition contract and become the `owner' of the petition. The petition instance specifies an identifier $g_s \in \mathbb{G}_1$ unique to the petition where its representation is unlinkable to the other points of the scheme\footnote{This identifier can be generated through a hash function $\mathbb{F}_p \rightarrow \mathbb{G}_1:\widetilde{H}(s)=g_s \;|\; s\in\mathbb{F}_p$.}, as well as the verification key of the authorities issuing the credentials and any application specific parameters (\eg the options and current votes)~(\ding{204}). In order to \emph{sign} a petition, the citizens compute a value $\zeta=g_s^{k}$. They then adapt the zero-knowledge proof of the \algorithm{ProveCred} algorithm of \Cref{sec:coconut:threshold_credentials_scheme} to show that $\zeta$ is built from the same attribute $k$ in the credential; the petition contract checks the proofs  and the credentials, and checks that the signature is fresh by verifying that $\zeta$ is not part of a spent list. If all the checks pass, it adds the citizens' signatures to a list of records and adds $\zeta$ to the spent list to prevents a citizen from signing the same petition multiple times (prevent double spending)~(\ding{205}). The zero-knowledge proof ensures that $\zeta$ is built from a signed private key $k$, meaning that the users correctly executed the callback to prove that they are citizens of \emph{C}. 

\para{Security consideration} \coconut's blindness property prevents the authorities from learning the citizen's secret key, and misusing it to sign petitions on behalf of the citizen. 
Another benefit is that it lets citizens sign petitions anonymously; citizens only have to go through the issuance phase once, and can then re-use credentials multiple times while staying anonymous and unlinkable across petitions. \coconut allows for distributed credentials issuance, removing a central authority and preventing a single entity from creating arbitrary credentials to sign petitions multiple times. 

\subsection{Censorship-Resistant Distribution of Proxies} \label{sec:coconut:proxy}
Proxies can be used to bypass censorship, but often become the target of censorship themselves. We present a system based on \coconut for censorship-resistant distribution of proxies (CRS). In our CRS, the volunteer \emph{V} runs proxies, and is known to the \coconut authorities through its long-term public key. The authorities establish reputability of volunteers (identified by their public keys) through an out of band mechanism. The user \emph{U} wants to find proxy IP addresses belonging to reputable volunteers, but volunteers want to hide their identity. As shown in~\Cref{fig:coconut:proxy}, \emph{V} gets an ephemeral public key $pk'$ from the proxy~(\ding{202}), provides \emph{proof of identity} to the authorities~(\ding{203}), and gets a \emph{credential} on two private attributes: the proxy IP address, $pk'$, and one public attribute: the time period $\delta$ for which the credential is valid~(\ding{204}).
\emph{V} shares the credential with the concerned proxy~(\ding{205}), which creates the \emph{proxy info} including $pk'$, $\delta$, and the credential; the proxy `registers' itself by appending this information to the \blockchain along with a zero-knowledge proof and the material necessary to verify the validity of the credential~(\ding{206}).
The users \emph{U} monitor the \blockchain for proxy registrations. When a registration is found, \emph{U} indicates the intent to use a proxy by publishing to the \blockchain a \emph{request info} message which looks as follows: user IP address encrypted under $pk'$ which is embedded in the registration \blockchain entry~(\ding{207}).  
The proxy continuously monitors the \blockchain, and upon finding a user request
addressed to itself, \emph{connects} to \emph{U} and presents proof of knowledge of the private key associated with $pk'$~(\ding{208}). 
\emph{U} verifies the proof, the proxy IP address and its validity period, and then starts relaying its traffic through the proxy.  
The proposed CRS assumes multiple volunteers disposing of numerous IP addresses and that it is unfeasible for a censor to act as user to learn and block all addresses of the volunteers.

\begin{figure}[t]
\centering
\includegraphics[width=.7\textwidth]{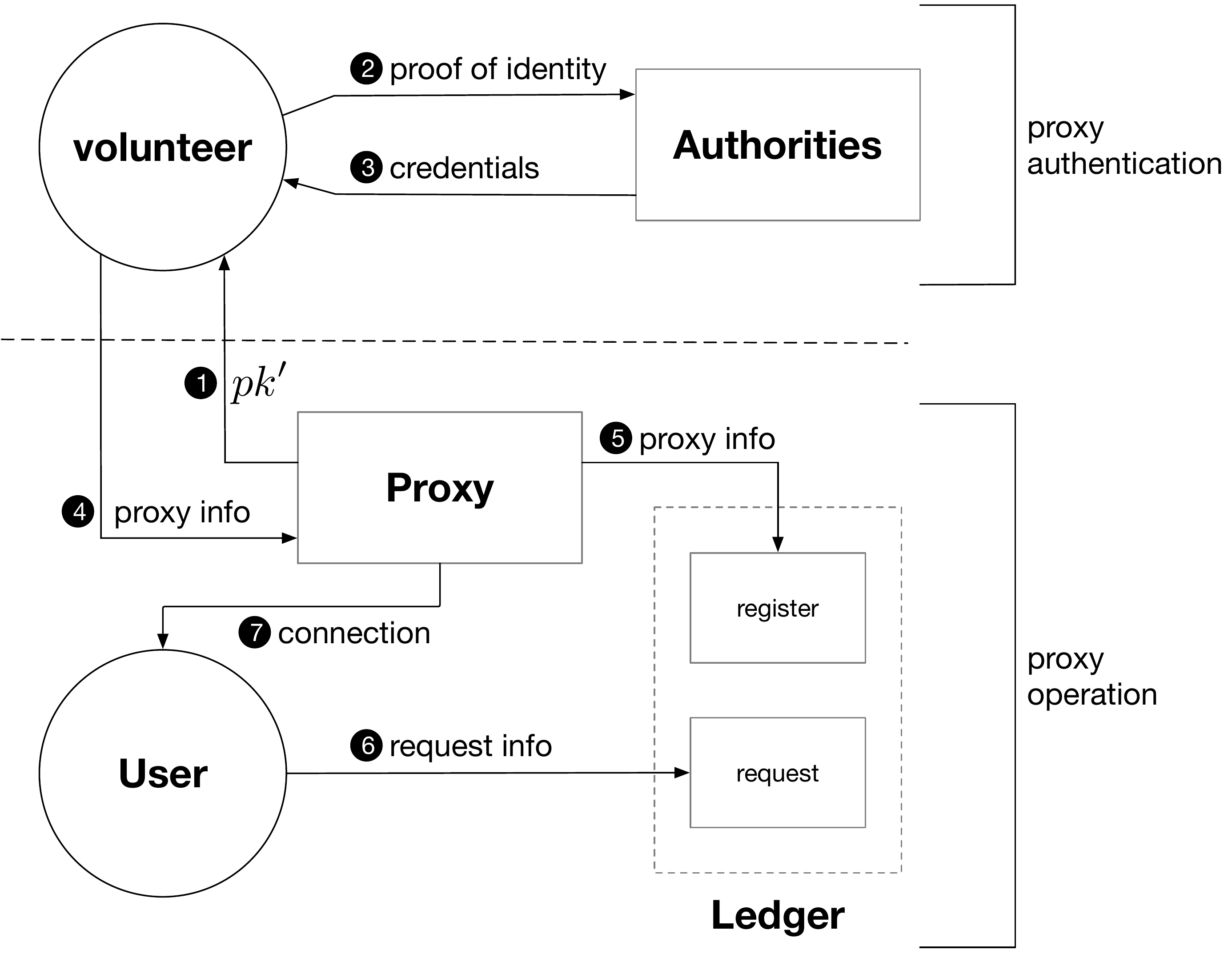}
\caption[\coconut censorship-resistant proxy distribution system.]{\coconut censorship-resistant proxy distribution system. Volunteers running proxies first receive credentials from the \coconut authorities. They then use these credentials to advertise their services on chain and connect to users.}
\label{fig:coconut:proxy}
\end{figure}

\para{Security consideration}
A number of CRSes have been previously proposed that employ techniques such as tunnelling traffic through `unblockable' protocols, and covert proxies and channels (\eg user-generated content on social media platforms)~\cite{censorship:sok}. %
A common limitation of censorship resistance schemes is relying on volunteers that are \emph{assumed} to be resistant to coercion: either \first the volunteer is a large, commercial organization (\eg Amazon or Google) over which the censor cannot exert its influence; and/or \second the volunteer is located outside the country of censorship. 
However, both these assumptions were proven wrong~\cite{meek-suspended,five-eyes}. 
The proposed CRS overcomes this limitation by offering coercion-resistance to volunteers from censor-controlled users and authorities. Due to \coconut's blindness property, a volunteer can get a credential on its IP address and ephemeral public key without revealing those to the authorities. The users get proxy IP addresses run by the volunteer, while being unable to link it to the volunteer's long-term public key.
It might be hard for a censor to take down large, commercial parties---but these can be potentially forced to stop supporting the CRS~\cite{meek-suspended}. Similarly, the emergence of global surveillance coalitions invalidates prevailing CRS assumptions based on the censor's geographic reach~\cite{five-eyes}.
The proposed CRS relies on multiple organizations validated by a distributed set of authorities, that can run proxies with complete deniability. 
Moreover, the authorities operate independently and can be controlled by different entities, and are resilient against a threshold number of authorities being dishonest or taken down. 

\section{Evaluation} \label{sec:coconut:evaluation}
We present the evaluation of the \coconut threshold credentials scheme; first we present a benchmark of the cryptographic primitives described in \Cref{sec:coconut:construction} and then we evaluate the smart contracts described in \Cref{sec:coconut:applications}.

\subsection{Cryptographic Primitives}
We implement the primitives described in Section~\ref{sec:coconut:construction} in Python using petlib~\cite{petlib} and bplib~\cite{bplib}. The bilinear pairing is defined over the Barreto-Naehrig~\cite{Barreto-Naehrig} curve, using OpenSSL as arithmetic backend.

\begin{table}[t]
\centering
\begin{tabular}{lcc}
\toprule
\small\textbf{Operation}\hspace{2cm} & \small\textbf{Mean (ms)} & \small\textbf{Std. (ms)}\\
\midrule
\textsf{PrepareBlindSign} & 2.633 & $\pm$ 0.003 \\ 
\textsf{BlindSign} & 3.356 & $\pm$ 0.002 \\
\textsf{Unblind} & 0.445 & $\pm$ 0.002 \\ 
\textsf{AggCred} & 0.454 & $\pm$ 0.000 \\
\textsf{ProveCred} & 1.544 & $\pm$ 0.001 \\
\textsf{VerifyCred} & 10.497 & $\pm$ 0.002 \\
\bottomrule
\end{tabular}
\caption[Execution time of \coconut primitives.]{\footnotesize Execution times for the cryptographic primitives described in \Cref{sec:coconut:construction}, measured for one private attribute over 10,000 runs. \textsf{AggCred} is computed assuming two authorities; the other primitives are independent of the number of authorities.}
\label{tab:coconut:timing}
\end{table}
\begin{table}[t]
\centering
\begin{tabular}{lcc}
\toprule
\multicolumn{3}{l}{Number of authorities: $n$, Signature size: 132 bytes}\\
\small\textbf{Transaction} & \small\textbf{Complexity} & \small\textbf{Size [B]}\\
\midrule
\multicolumn{1}{l}{Signature on one public attribute:}\\
\ding{202} request credential & $O(n)$ & 32 \\ 
\ding{203} issue credential &  $O(n)$ & 132 \\
\ding{204} verify credential &  $O(1)$ &  162 \\ 
&&\\
\multicolumn{1}{l}{Signature on one private attribute:}\\
\ding{202} request credential &  $O(n)$ & 516 \\ 
\ding{203} issue credential &  $O(n)$ & 132 \\
\ding{204} verify credential &  $O(1)$ &  355 \\
\bottomrule
\end{tabular}
\caption[Communication complexity of \coconut credentials.]{\footnotesize Communication complexity and transaction size for the \coconut credentials scheme when signing one public and one private attribute (see \Cref{fig:coconut:protocol_priv} of \Cref{sec:coconut:construction}).}
\label{tab:coconut:transactions}
\end{table}

\para{Timing benchmark} \Cref{tab:coconut:timing} shows the mean and standard deviation of the execution of each procedure described in section \Cref{sec:coconut:construction}. Each entry is the result of 10,000 runs measured on an desktop computer, 3.6GHz Intel Xeon. Signing is much faster than verifying credentials---due to the pairing operation in the latter; verification takes about 10ms; signing a private attribute is about 3 times faster.

\para{Communication complexity and packets size} \Cref{tab:coconut:transactions} shows the communication complexity and the size of each exchange involved in the \coconut credentials scheme, as presented in \Cref{fig:coconut:protocol_priv}. The communication complexity is expressed as a function of the number of signing authorities ($n$), and the size of each attribute is limited to 32 bytes as the output of the SHA-2 hash function. The size of a credential is 132 bytes. The highest transaction sizes are to request and verify credentials embedding a private attribute; this is due to the proofs $\pi_s$ and $\pi_v$ (see \Cref{sec:coconut:construction}). The proof $\pi_s$ is approximately 318 bytes and $\pi_v$ is 157 bytes.

\para{Client-perceived latency} We evaluate the client-perceived latency for the \coconut threshold credentials scheme for authorities deployed on Amazon AWS~\cite{aws} when issuing partial credentials on one public and one private attribute. The client requests a partial credential from 10 authorities, and latency is defined as the time it waits to receive $t$-out-of-10 partial signatures. \Cref{fig:coconut:latency} presents measured latency for a threshold parameter t ranging from 1--10. 
The dots correspond to the average latency and the error-bars represent the normalized standard deviation, computed over 100 runs. The client is located in London while the 10 authorities are geographically distributed across the world; US East (Ohio), US West (N. California), Asia Pacific (Mumbai), Asia Pacific (Singapore), Asia Pacific (Sydney), Asia Pacific (Tokyo), Canada (Central), EU (Frankf{\"u}rt), EU (London), and South America (S{\~a}o Paulo). All machines are running a fresh 64-bit Ubuntu distribution, the client runs on a \emph{large} AWS instance and the authorities run on \emph{nano} instances.
As expected, we observe that the further the authorities are from the client, the higher the latency due to higher response times; the first authorities to respond are always those situated in Europe, while Sidney and Tokyo are the latest. 
Latency grows linearly, with the exception of a large jump (of about 150 ms) when $t$ increases from 2 to 3---this is due to the 7 remaining authorities being located outside Europe. 
The latency overhead between credential requests on public and private attributes remains constant.

\begin{figure}[t]
\centering
\includegraphics[width=.7\textwidth]{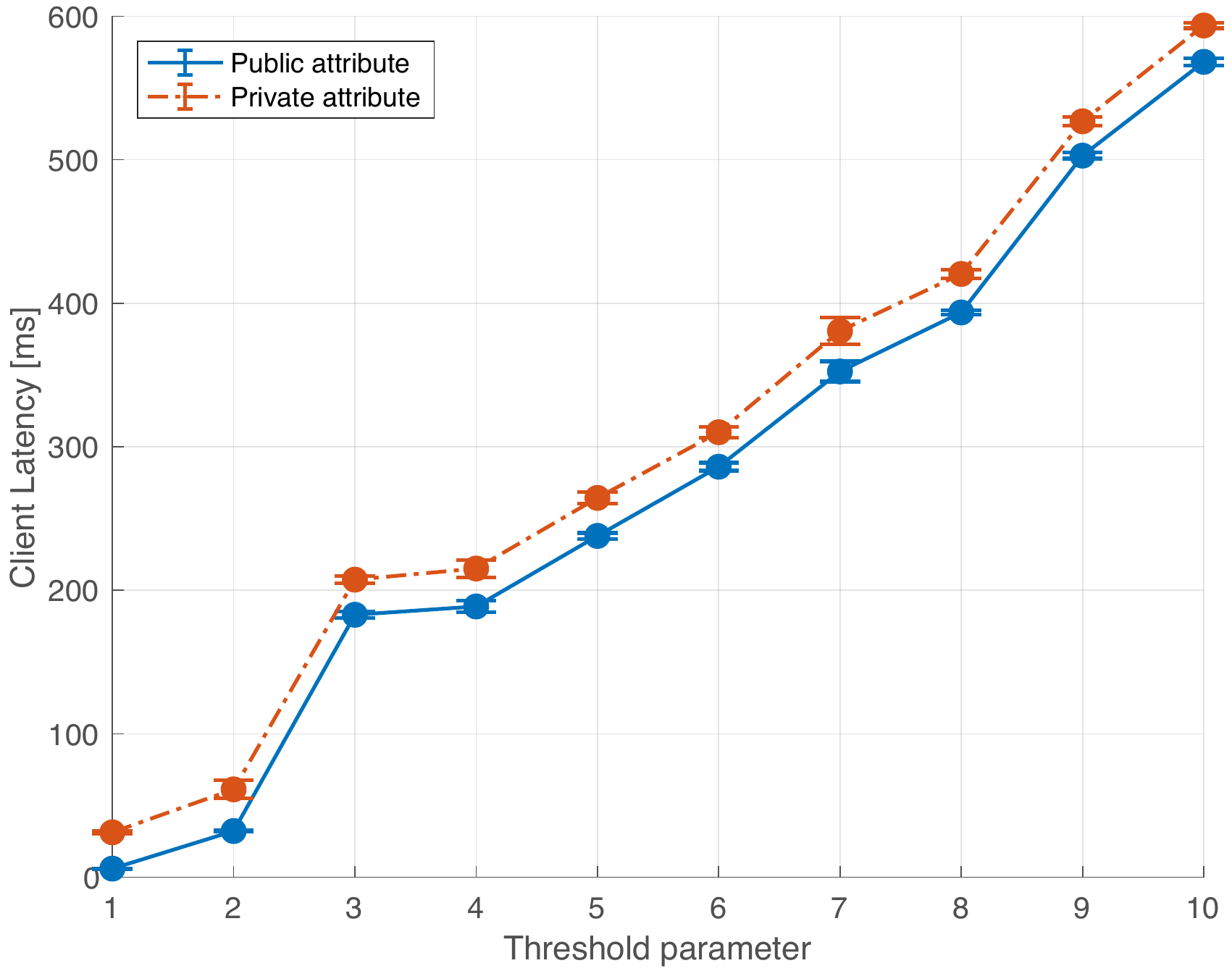}
\caption[\coconut client-perceived latency.]{\footnotesize Client-perceived latency for \coconut threshold credentials scheme with geographically distributed authorities, measured for one attribute over 100 runs. }
\label{fig:coconut:latency}
\end{figure}
\begin{table}[t]
\centering
\begin{tabular}{lccc}
\toprule
\multicolumn{4}{l}{\coconut smart contract library}\\
\small\textbf{Operation} & \small\textbf{Mean (ms)} & \small\textbf{Std. (ms)} & \textbf{Size (kB)}\\
\midrule
\textsf{Create} [g] & 0.195 & $\pm$ 0.065 & $\sim1.38$\\ 
\textsf{Create} [c] & 12.099 & $\pm$ 0.471 & -\\ 
\textsf{Request} [g] & 7.094 & $\pm$ 0.641 & $\sim3.77$\\ 
\textsf{Request} [c] & 6.605 & $\pm$ 0.559 & -\\ 
\textsf{Issue} [g] & 4.382 & $\pm$ 0.654 & $\sim3.08$\\ 
\textsf{Issue} [c] & 0.024 & $\pm$ 0.001 & -\\  
\textsf{Verify} [g] & 5.545 & $\pm$ 0.859 & $\sim1.76$\\ 
\textsf{Verify} [c] & 10.814 & $\pm$ 1.160 & - \\
\bottomrule
\end{tabular}
\caption[Performance of the \coconut smart contract library for \chainspace]{\footnotesize Timing and transaction size of the \chainspace implementation of the \coconut smart contract library described in \Cref{sec:coconut:smart_contract_library}, measured for two authorities and one private attribute over 10,000 runs. The notation [g] denotes the execution the procedure and [c] denotes the execution of the checker.}
\label{tab:coconut:library}
\end{table}
\begin{table}[t]
\centering
\begin{tabular}{lccc}
\toprule
\multicolumn{4}{l}{Coin tumbler}\\
\small\textbf{Operation} & \small\textbf{Mean (ms)} & \small\textbf{Std. (ms)} & \textbf{Size (kB)}\\
\midrule
\textsf{InitTumbler} [g] & 0.235 & $\pm$ 0.065 & $\sim1.38$\\ 
\textsf{InitTumbler} [c] & 19.359 & $\pm$ 0.773 & -\\ 
\textsf{Pay} [g] & 11.939 & $\pm$ 0.792 & $\sim4.28$\\ 
\textsf{Pay} [c] & 6.625 & $\pm$ 0.559 & -\\ 
\textsf{Redeem} [g] & 0.132 & $\pm$ 0.012 & $\sim3.08$\\ 
\textsf{Redeem} [c] & 11.742 & $\pm$ 0.757 & -\\  
\bottomrule
\end{tabular}
\caption[Performance of the \coconut coin tumbler application.]{\footnotesize Timing and transaction size of the \chainspace implementation of the coin tumbler smart contract (described in \Cref{sec:coconut:applications}), measured over 10,000 runs. The transactions are independent of the number of authorities.  The notation [g] denotes the execution the procedure and [c] denotes the execution of the checker.}
\label{tab:coconut:tumbler}
\end{table}
\begin{table}[t]
\centering
\begin{tabular}{lccc}
\toprule
\multicolumn{4}{l}{Privacy-preserving e-petition}\\
\small\textbf{Operation} & \small\textbf{Mean (ms)} & \small\textbf{Std. (ms)} & \textbf{Size (kB)}\\
\midrule
\textsf{InitPetition} [g] & 3.260 & $\pm$ 0.209 & $\sim1.50$\\ 
\textsf{InitPetition} [c] & 3.677 & $\pm$ 0.126 & -\\ 
\textsf{SignPetition} [g] & 7.999 & $\pm$ 0.467 & $\sim3.16$\\ 
\textsf{SignPetition} [c] & 15.801 & $\pm$ 0.537 & -\\ 
\bottomrule
\end{tabular}
\caption[Performance of the \coconut petition application.]{\footnotesize Timing and transaction size of the \chainspace implementation of the privacy-preserving e-petition smart contract (described in \Cref{sec:coconut:applications}), measured over 10,000 runs. The transactions are independent of the number of authorities. The notation [g] denotes the execution the procedure and [c] denotes the execution of the checker.}
\label{tab:coconut:petition}
\end{table}
%

\subsection{\chainspace Implementation}
We evaluate the \coconut smart contract library implemented in \chainspace, as well as the coin tumbler (\Cref{sec:coconut:tumbler}) and the privacy-preserving e-petition (\Cref{sec:coconut:petition}) applications that use this library.
As expected, \Cref{tab:coconut:library} shows that the most time consuming procedures are the checker of \textsf{Create} and the checker of \textsf{Verify}; i.e., they call the \textsf{VerifyCred} primitives which takes about 10 ms (see \Cref{tab:coconut:timing}). \Cref{tab:coconut:library} is computed assuming two authorities; the transaction size of \textsf{Issue} increases by about 132 bytes (\ie the size of the credentials) for each extra authority\footnote{The \textsf{Request} and \textsf{Issue} procedures are only needed in the case of on-chain issuance (see \Cref{sec:coconut:smart_contract_library}).} while the other transactions are independent of the number of authorities.
Similarly, the most time consuming procedure of the coin tumbler (\Cref{tab:coconut:tumbler}) application and of the privacy-preserving e-petition (\Cref{tab:coconut:petition}) are the checker of \textsf{InitTumbler}  and the checker of \textsf{SignPetition}, respectively; these two checkers call the \textsf{BlindVerify} primitive involving pairing checks. The \textsf{Pay} procedure of the coin tumbler presents the highest transaction size as it is composed of two distinct transactions: a coin transfer transaction and a \textsf{Request} transaction from the \coconut contract library. However, they are all practical, and they all run in a few milliseconds. These transactions are independent of the number of authorities as issuance is either handled off-chain or by the \coconut smart contract library.

\subsection{\ethereum Implementation}
We evaluate the \coconut Ethereum smart contract library described in \Cref{sec:coconut:ethereum_smart_contract_library} using the Go implementation of Ethereum on an Intel Core i5 laptop with 12GB of RAM running Ubuntu 17.10. \Cref{tab:coconut:ethlibrary} shows the execution times and gas costs for different procedures in the smart contract. The execution times for \algorithm{Create} and \algorithm{Verify} are higher than the execution times for the \chainspace version (\Cref{tab:coconut:library}) of the library, due to the different implementations. The arithmetic underlying \coconut in \chainspace is performed through Python naively binding to C libraries, while in \ethereum arithmetic is defined in solidity and executed by the EVM.

\begin{table}[t]
\centering
\begin{tabular}{lccc}
\toprule
\multicolumn{4}{l}{\coconut \ethereum smart contract library}\\
\small\textbf{Operation} & \small\textbf{Mean (ms)} & \small\textbf{Std. (ms)} & \textbf{Gas}\\
\midrule
\textsf{Create} & 27.45 & $\pm$ 3.054 & $\sim23,000$\\ 
\textsf{Verify} & 120.17 & $\pm$ 25.133 & $\sim2,150,000$\\ 
\bottomrule
\end{tabular}
\caption[Performance of the \coconut smart contract library for \ethereum.]{\footnotesize Timing and gas cost of the \ethereum implementation of the \coconut smart contract library described in \Cref{sec:coconut:ethereum_smart_contract_library}. Measured over 100 runs, for one public attribute. The transactions are independent of the number of authorities.}
\label{tab:coconut:ethlibrary}
\end{table}

We also observe that the \algorithm{Verify} function has a significantly higher gas cost than \algorithm{Create}. This is mostly due to the implementation of elliptic curve multiplication as a native \ethereum smart contract---the elliptic curve multiplication alone costs around $1,700,000$ gas, accounting for the vast majority of the gas cost, whereas the pairing operation using the pre-compiled contract costs only 260,000 gas. The actual fiat USD costs corresponding to those gas costs, fluctuate wildly depending on the price of Ether---Ethereum's internal value token---the load on the network, and how long the user wants to wait for the transaction to be mined into a block. As of February 7th 2018, for a transaction to be confirmed within 6 minutes, the transaction fee for \algorithm{Verify} is \$1.74, whereas within 45 seconds, the transaction fee is \$43.5.\footnote{\url{https://ethgasstation.info/}}
The bottleneck of our \ethereum implementation is the high-level arithmetic in $\mathbb{G}_2$. However, \ethereum provides a pre-compiled contract for arithmetic operations in $\mathbb{G}_1$. We could re-write our cryptographic primitives by swapping all the operations in $\mathbb{G}_1$ and $\mathbb{G}_2$, at the cost of relying on the SXDH assumption~\cite{ramanna2016efficient} (which is stronger than the standard XDH assumption that we are currently using).

\section{Comparison with Related Works} \label{sec:coconut:related}

\definecolor{verylightgray}{gray}{0.9}
\begin{table*}[t]
\centering
\resizebox{\textwidth}{!}{\begin{tabular}{lccccc}
\toprule
\small\textbf{Scheme} & \small\textbf{Blindness} & \small\textbf{Unlinkable} & \small\textbf{Aggregable}  & \small\textbf{Threshold}  & \small\textbf{Size} \\
\midrule

\textbf{\cite{waters}} Waters Signature & \no & \no & \L & \no & 2 Elements\\
\textbf{\cite{lossw}} LOSSW Signature & \no & \no & \M & \no & 2 Elements\\
\textbf{\cite{bgls}} BGLS Signature & \yes & \no & \H & \yes & 1 Element\\
\textbf{\cite{cl}} CL Signature & \yes & \yes & \M & \no & $O(q)$ Elements\\ 
\textbf{\cite{idemix}} Idemix & \yes & \yes & \L & \no & $O(q)$ Elements \\
\textbf{\cite{uprove}} U-Prove & \yes & \yes & \L & \no & $O(v)$ Elements\\ 
\textbf{\cite{acl}} ACL & \yes & \yes & \L & \no & $O(v)$ Elements\\ 
\textbf{\cite{pointcheval}} Pointcheval and Sanders & \yes & \yes & \M & \no & 2 Elements\\ 
\textbf{\cite{dac}} Garman~\etal & \yes & \yes & - & \no & 2 Elements\\ 
\rowcolor{verylightgray} \textbf{[\Cref{sec:coconut:construction}]} \coconut & \yes & \yes & \H & \yes & 2 Elements\\ 

\bottomrule
\end{tabular}}
\caption[Comparison of \coconut with related work.]{\footnotesize Comparison of \coconut with other relevant cryptographic constructions. The aggregability of the signature scheme reads as follow;  \L: not aggregable,  \M: sequentially aggregable,  \H: aggregable. The signature size is measured asymptotically or in terms of the number of group elements it is made of (for constant-size credentials); $q$ indicates the number of attributes embedded in the credentials and $v$ the number of times the credential may be shown unlinkably.}
\label{tab:coconut:related_works}
\end{table*}

We compare the \coconut cryptographic constructions and system with related work in \Cref{tab:coconut:related_works}, along the dimensions of key properties offered by \coconut---blindness, unlinkability, aggregability (i.e., whether multiple authorities are involved in issuing the credential), threshold aggregation (i.e., whether a credential can be aggregated using signatures issued by a subset of authorities), and signature size (see Sections~\ref{sec:coconut:architecture}~and~\ref{sec:coconut:construction}). \Cref{sec:literature-review:coconut-related} provides a detailed description of all the schemes mentioned in \Cref{tab:coconut:related_works}.

\section{Limitations} \label{sec:coconut:limitations}
\coconut has a number of limitations that are beyond the scope of this work, and deferred to future work. 
Adding and removing authorities implies to re-run the key generation algorithm---this limitation is inherited from the underlying Shamir's secret sharing protocol~\cite{shamir1979share} and can be mitigated using techniques coming from proactive secret sharing introduced by Herzberg~\etal~\cite{herzberg1995proactive}. However, credentials issued by authorities with different key sets are distinguishable, and therefore frequent key rotation reduces the privacy provided. 
As any threshold system, \coconut is vulnerable if more than the threshold number of authorities are malicious; colluding authorities could create coins to steal all the coins in the buffer of the coin tumbler application (\Cref{sec:coconut:tumbler}), create fake identities or censor legitimate users of the electronic petition application (\Cref{sec:coconut:petition}), and defeat the censorship resistance of our proxy distribution application (\Cref{sec:coconut:proxy}). Note that users' privacy is still guaranteed under colluding authorities, or an eventual compromise of their keys.
Implementing the \coconut smart contract library on \ethereum is expensive (\Cref{tab:coconut:ethlibrary}) as \ethereum does not provide pre-compiled contracts for elliptic curve arithmetic in $\mathbb{G}_2$; re-writing our cryptographic primitives by swapping all the operations in $\mathbb{G}_1$ and $\mathbb{G}_2$ would dramatically reduce the gas cost (and be secure under SXDH~\cite{ramanna2016efficient}).

\section{Chapter Summary} \label{sec:coconut:conclusion}
Existing selective credential disclosure schemes do not provide the full set of desired properties, particularly when it comes to efficiency and issuing general purpose selective disclosure credentials without sacrificing desirable distributed trust assumptions. This limits their applicability in distributed settings such as distributed ledgers.
In this chapter, we present \coconut---a novel scheme that supports distributed threshold issuance, public and private attributes, re-randomization, and multiple unlinkable selective attribute revelations. 
We provide an overview of the \coconut system, and the cryptographic primitives underlying \coconut;
an implementation and evaluation of \coconut as a smart contract library in \chainspace and \ethereum, a sharded and a permissionless blockchain respectively; and three diverse and important application to anonymous payments, petitions and censorship resistance. \coconut fills an important gap in the literature and enables general purpose selective disclosure credentials---an important privacy enhancing technology---to be efficiently used in settings with no natural single trusted third party to issue them, and to interoperate with modern transparent computation platforms.

\chapter{Conclusion}
This thesis proposed technologies to overcome the following limitations of blockchain technologies and smart contract platforms: \first poor scalability, \second high latency, and \third difficulty to operate on secret values (privacy).

\Cref{chainspace} presented \chainspace---an open, distributed ledger platform for high-integrity and transparent processing of transactions. \chainspace offers extensibility though privacy-friendly smart contracts. We presented an instantiation of \chainspace by parameterizing it with a number of `system' and `application' contracts, along with their evaluation. However, unlike existing smart-contract based systems such as \ethereum, it offers high scalability through sharding across nodes, while offering high auditability. As such it offers a competitive alternative to both centralized and permissioned systems, as well as fully peer-to-peer, but unscalable systems.

\Cref{byzcuit} presented the first replay attacks against cross-shard consensus protocols in sharded distributed ledgers. These attacks affect both shard-driven and client-driven consensus protocols, and allow attackers to double-spend with minimal efforts. The attacker can act independently without colluding with any nodes, and succeed even if all nodes are honest; most of the attacks work without making any assumptions on the underlying network. While addressing these attacks seems like an implementation detail, their many variants illustrate that a fundamental re-think of cross-shard commit protocols is required to protect against them.
We developed \byzcuit, a new cross-shard consensus protocol merging features from \shardled and \clientled consensus protocols, and withstanding replay attacks. \byzcuit can be seen as unifying \atomix and \sbac, into a protocol that is efficient and secure. We implemented and evaluated it on a real cloud-based testbed, showing that it can process over 1,550 tps for 10 shards while keeping latency under 1 second (using \bftsmart as intra-shard consensus implementation), and that its capacity scales linearly with the number of shards.

\Cref{fastpay} presented \fastpay---a settlement layer based on consistent broadcast channels, rather than full consensus. The \fastpay design leverages the nature of payments to allow for asynchronous payments into accounts, and optional interactions with an external \fastpay to build a practical system, while providing proofs of both safely and liveness; it also proposes and evaluates a design for sharded implementation of authorities to horizontally scale and match any throughput need. 
The performance and robustness of \fastpay is beyond and above the state of the art, and validates that moving away from both centralized solutions and full consensus to manage pre-funded retail payments has significant advantages. Authorities can jointly process tens of thousands of transactions per second (we observed a peak of 160,000 tx/sec) using merely commodity hardware and lean software. A payment confirmation latency of less than 200ms across continents make \fastpay practical for point of sale payments---where goods and services need to be delivered fast and in person. Pretty much instant settlement enables retail payments to be freed from intermediaries, such as banks payment networks, since they eliminate any credit risk inherent in deferred netted end-of-day payments, that underpin today most national Fast Payment systems~\cite{bolt2014fast}. Further, \fastpay can tolerate up to one-third of authorities crashing or even becoming Byzantine without losing either safety or liveness (or performance). This is in sharp contrast with existing centralized settlement layers operating on specialized mainframes with a primary / backup crash fail strategy (and no documented technical strategy to handle Byzantine operators). Surprisingly, it is also in contrast with permissioned blockchains, which have not achieved similar levels of performance and robustness yet, due to the complexity of engineering and scaling full \bftlong consensus protocols.

\Cref{coconut} presented \coconut---a novel scheme that supports distributed threshold issuance, public and private attributes, re-randomization, and multiple unlinkable selective attribute revelations. Previous selective credential disclosure schemes do not provide the full set of desired properties, particularly when it comes to efficiency and issuing general purpose selective disclosure credentials without sacrificing desirable distributed trust assumptions. This limits their applicability in distributed settings such as distributed ledgers, and prevents security engineers from implementing separation of duty policies that are effective in preserving integrity. 
We provide an overview of the \coconut system, and the cryptographic primitives underlying \coconut; an implementation and evaluation of \coconut as a smart contract library in \chainspace and \ethereum, a sharded and a permissionless blockchain respectively; and three diverse and important application to anonymous payments, petitions and censorship resistance. \coconut fills an important gap in the literature and enables general purpose selective disclosure credentials---an important privacy enhancing technology---to be efficiently used in settings with no natural single trusted third party to issue them, and to interoperate with modern transparent computation platforms.

\section{Future Directions}
A number of future directions are left open to explore. 

\para{Permissionless sharded systems}
Sharded distributed systems such as \chainspace raise the concern of how to map nodes to shards. In permissioned systems, this process is usually done according to the policy of a designated party, but it remains an open challenge for open systems.
Similarly, the reconfiguration of sharded ledgers is also an open question; if nodes are dynamically reconfigured across shards, there needs to be a mechanism  to transfer the blockchain’s state from one node to another without losing liveness during this process.
The design of a mechanism to avoid the creation of malicious shards is also an open question. \chainspace allows contract creators to designate which shards are responsible for handling the state associated with their smart contract. It is however unclear how the system could recover if smart contract creators select malicious shards.

\para{Scaling intra-shard consensus}
\chainspace scales by adding new shards to the system and effectively running multiple instances of intra-shard consensus in a coordinated fashion. However, it uses intra-shard consensus as a black box; the prototype of \byzcuit presented in \Cref{sec:byzcuit:implementation} is implemented using \bftsmart but any BFT consensus protocol would work as a drop-in replacement, and it works in a setting where each node is run by a distinct authority. Another research direction is to explore how a single authority could scale out by implementing its node across multiple machines (possibly a whole data center). Such a system is not trivial to design as it requires an intra-shard consensus protocol that can efficiently share its load and take advantage of multiple machines.

\para{Scaling execution}
Scaling smart contract execution is another open research direction. Smart contracts may contain heavy operations and execution could quickly become the bottleneck of the system, eventually slowing down consensus and harming scalability.  Despite \chainspace does not execute smart contracts, it still needs to verify the correctness of transactions which could sometimes be even more expensive than re-executing the transaction (depending on the contract). A possible direction to investigate to solve this problem is the design of a general purpose zk-SNARK verifier that could be used to verify any transaction; zk-SNARKs are particularly suited for blockchains as they are typically cheap to verify (and heavy to compute). An another direction could be to scale execution across multiple machines; this could be simpler for smart contract with special semantics (\eg payments, as shown by \fastpay in \Cref{fastpay}) but is a challenge for general purpose smart contracts. The traditional database literature explores this path but blockchains have the option to rely on totally ordered sequence of transactions before attempting execution.

\para{Ensuring resilience under attack}
The key advantage of blockchain technologies is their resilience to Byzantine behavior. Blockchains operate under the assumption that a subset of the nodes can behave arbitrarily maliciously, but how to test blockchains under attack is an open question. The core of the challenge comes from allowing Byzantine nodes to perform any kind of actions (in contrast with crash-tolerant protocols where the adversary is constrained to a specific set of actions). Further, it is not enough to ensure that safety and liveness are preserved under attack, we must also ensure that the adversary cannot significantly slow down the system and make it unpractical. This is particularly true for leader-based protocols that need to handle the possibility of malicious leaders.

\para{Incentive mechanism}
Another direction for future works is the integration of incentive mechanisms into open systems; sharded distributed systems raise the additional concern that any transaction fee may be diluted as it is split amongst the nodes of multiple shards, and thus may not be sufficient to incentivize honest behavior.
Closely related to that issue, preventing denial of service attacks against sharded ledgers is a challenge; sharded consensus usually involve a high number of communications (coming from both the intra-shard and inter-shard consensus protocols) which are potential vectors for denial-of-service attacks.

\para{Reconfiguration without consensus}
Efficiently replacing a set of authorities in the absence of consensus remains an open challenge; this applies to both \fastpay and \coconut, as well as to any distributed system that does not implement consensus.

\section{Closing Thoughts}
We showed that it is now possible to build secure and scalable distributed ledgers that can accommodate high throughput. We demonstrated that blockchain technologies can be brought to retail payment system through the use of extremely low-latency distributed side-infrastructures.
Further, we showed how to design distributed ledgers with native support for privacy-preserving applications, and how to use them to issue credentials in a blockchain setting; this enables a number of novel distributed applications.
The technologies described in this thesis can be implemented to increase the fairness, robustness, efficiency, and privacy of current payment systems, while decreasing their costs.

\addcontentsline{toc}{chapter}{Bibliography}

\bibliography{references}

\begin{thebibliography}{100}

\bibitem{rojo2018}
Denis Roio, Francesca Bria, James Barritt, Jaap-Henk Hoepman, Mark de~Villiers,
  Priya Samuel, George Danezis, Tom Demeyer, Shehar Bano, and Oleguer Sagarra.
\newblock {DECODE Whitepaper v1.0}.
\newblock \url{https://decodeproject.github.io/whitepaper}, 2018.

\bibitem{coconut-cosmos}
Interchain Foundation.
\newblock {ICF Q2–2020 Funding Recipients}.
\newblock
  \url{https://interchain-io.medium.com/icf-q2-2020-funding-recipients-e5cbb326c23c},
  2020.

\bibitem{nym}
Nym.
\newblock {Nym}.
\newblock \url{https://nymtech.net}, 2020.

\bibitem{omniledger}
Eleftherios Kokoris~Kogias, Philipp~Svetolik Jovanovic, Linus Gasser, Nicolas
  Gailly, Ewa Syta, and Bryan~Alexander Ford.
\newblock {OmniLedger: A Secure, Scale-Out, Decentralized Ledger via Sharding}.
\newblock In {\em IEEE Symposium on Security and Privacy}, 2018.

\bibitem{rapidchain}
Mahdi Zamani, Mahnush Movahedi, and Mariana Raykova.
\newblock {RapidChain: Scaling Blockchain via Full Sharding}.
\newblock In {\em ACM SIGSAC Conference on Computer and Communications
  Security}, 2018.

\bibitem{harmony}
Harmony Team.
\newblock {Harmony Technical Whitepaper Version 2.0}.
\newblock \url{https://harmony.one/whitepaper.pdf}, 2019.

\bibitem{marketcap}
CoinMarketCap.
\newblock {Global Charts: Total Market Capitalization}.
\newblock \url{https://coinmarketcap.com/charts}, 2020.

\bibitem{BoE}
Bank of~England.
\newblock {Digital Currencies}.
\newblock
  \url{http://www.bankofengland.co.uk/research/Pages/onebank/cbdc.aspx}, 2018.

\bibitem{BoA}
CoinDesk.
\newblock {Bank of America, Microsoft Partner on Blockchain Trade Finance}.
\newblock
  \url{https://www.coindesk.com/bank-america-microsoft-partner-blockchain-trade-finance},
  2016.

\bibitem{IMF}
International~Monetary Fund.
\newblock {Fintech and Financial Services : Initial Considerations}.
\newblock
  \url{http://www.imf.org/en/Publications/Staff-Discussion-Notes/Issues/2017/06/16/Fintech-and-Financial-Services-Initial-Considerations-44985},
  2017.

\bibitem{follow-my-vote}
followmyvote.com.
\newblock {FollowMyVote}.
\newblock \url{https://followmyvote.com}, 2020.

\bibitem{filecoin}
ProtocolLab.
\newblock {A Robust Foundation for Humanity’s Information}.
\newblock \url{https://filecoin.io}, 2020.

\bibitem{pastrami}
Micha\l{} Kr\'{o}l, Alberto Sonnino, Argyrios~G. Tasiopoulos, Ioannis Psaras,
  and Etienne Rivi\`ere.
\newblock {PASTRAMI: Privacy-preserving, Auditable, Scalable \& Trustworthy
  Auctions for Multiple Items}.
\newblock In {\em ACM/IFIP Middleware}, 2020.

\bibitem{asterisk}
Alberto Sonnino, Micha{\l} Kr{\'o}l, Argyrios~G Tasiopoulos, and Ioannis
  Psaras.
\newblock {AStERISK: Auction-based Shared Economy ResolutIon System for
  blocKchain}.
\newblock In {\em Workshop on Decentralized IoT Systems and Security}, 2019.

\bibitem{pbft}
Miguel Castro and Barbara Liskov.
\newblock {Practical Byzantine Fault Tolerance}.
\newblock In {\em USENIX Symposium on Operating Systems Design and
  Implementation}, 1999.

\bibitem{lamport1998part}
Leslie Lamport.
\newblock {The Part-Time Parliament}.
\newblock {\em ACM Transactions on Computer Systems}, 1998.

\bibitem{bitcoin}
Satoshi Nakamoto.
\newblock {Bitcoin: A Peer-to-Peer Electronic Cash System}.
\newblock \url{https://bitcoin.org/bitcoin.pdf}, 2008.

\bibitem{iceland}
Digital Trends.
\newblock {The World's Cryptocurrency Mining Uses more Electricity than
  Iceland}.
\newblock
  \url{https://www.digitaltrends.com/computing/bitcoin-ethereum-mining-use-significant-electrical-power},
  2017.

\bibitem{visa-performance}
Visa.
\newblock {Visa Acceptance for Retailers}.
\newblock
  \url{https://usa.visa.com/run-your-business/small-business-tools/retail.html},
  2020.

\bibitem{bitcoinng}
Ittay Eyal, Adem~Efe Gencer, Emin~G\"{u}n Sirer, and Robbert Van~Renesse.
\newblock {Bitcoin-NG: A Scalable Blockchain Protocol}.
\newblock In {\em USENIX Conference on Networked Systems Design and
  Implementation}, 2016.

\bibitem{croman2016scaling}
Kyle Croman, Christian Decker, Ittay Eyal, Adem~Efe Gencer, Ari Juels, Ahmed
  Kosba, Andrew Miller, Prateek Saxena, Elaine Shi, and Emin G{\"u}n.
\newblock {On Scaling Decentralized Blockchains}.
\newblock In {\em Financial Cryptography Workshop on Bitcoin and Blockchain
  Research}, 2016.

\bibitem{sok-consensus}
Shehar Bano, Alberto Sonnino, Mustafa Al-Bassam, Sarah Azouvi, Patrick McCorry,
  Sarah Meiklejohn, and George Danezis.
\newblock {SoK: Consensus in the Age of Blockchains}.
\newblock In {\em ACM Conference on Advances in Financial Technologies}, 2019.

\bibitem{laurie2014certificate}
Ben Laurie.
\newblock {Certificate Transparency}.
\newblock {\em Communications of the ACM}, 57(10):40--46, 2014.

\bibitem{ethereum}
Gavin Wood.
\newblock {Ethereum: A Secure Decentralised Generalised Transaction Ledger
  EIP-150 Revision}.
\newblock \url{http://gavwood.com/paper.pdf}, 2016.

\bibitem{chainspace}
Mustafa Al-Bassam, Alberto Sonnino, Shehar Bano, Dave Hrycyszyn, and George
  Danezis.
\newblock {Chainspace: A Sharded Smart Contracts Platform}.
\newblock In {\em Network and Distributed System Security Symposium}, 2018.

\bibitem{coconut}
Alberto Sonnino, Mustafa Al-Bassam, Shehar Bano, Sarah Meiklejohn, and George
  Danezis.
\newblock {Coconut: Threshold Issuance Selective Disclosure Credentials with
  Applications to Distributed Ledgers}.
\newblock In {\em Network and Distributed Systems Security Symposium}, 2019.

\bibitem{byzcuit}
Alberto Sonnino, Shehar Bano, Mustafa Al-Bassam, and George Danezis.
\newblock {Replay Attacks and Defenses Against Cross-shard Consensus in Sharded
  Distributed Ledgers}.
\newblock In {\em IEEE European Symposium on Security and Privacy}, 2020.

\bibitem{fastpay}
Mathieu Baudet, George Danezis, and Alberto Sonnino.
\newblock {FastPay: High-Performance Byzantine Fault Tolerant Settlement}.
\newblock In {\em ACM Conference on Advances in Financial Technologies}, 2020.

\bibitem{airtnt}
Mustafa Al-Bassam, Alberto Sonnino, Micha{\l} Kr{\'o}l, and Ioannis Psaras.
\newblock {Airtnt: Fair Exchange Payment for Outsourced Secure Enclave
  Computations}.
\newblock arXiv preprint arXiv:1805.06411, 2018.

\bibitem{proof-of-prestige}
Micha{\l} Kr{\'o}l, Alberto Sonnino, Mustafa Al-Bassam, Argyrios Tasiopoulos,
  and Ioannis Psaras.
\newblock {Proof-of-Prestige: A Useful Work Reward System for Unverifiable
  Tasks}.
\newblock In {\em IEEE International Conference on Blockchain and
  Cryptocurrency}, 2019.

\bibitem{sybilquorum}
George Danezis and Alberto Sonnino.
\newblock {SybilQuorum: Open Distributed Ledgers Through Trust Networks}.
\newblock arXiv preprint arXiv:1906.12237, 2019.

\bibitem{fmpc}
Alberto Sonnino.
\newblock {FMPC: Secure Multiparty Computation from Fourier Series and
  Parseval's Identity}.
\newblock arXiv preprint arXiv:1912.02583, 2019.

\bibitem{andrikos2019location}
Christos Andrikos, Lejla Batina, Lukasz Chmielewski, Liran Lerman, Vasilios
  Mavroudis, Kostas Papagiannopoulos, Guilherme Perin, Giorgos Rassias, and
  Alberto Sonnino.
\newblock {Location, location, location: Revisiting modeling and exploitation
  for location-based side channel leakages}.
\newblock In {\em Asiacrypt}, 2019.

\bibitem{elpasso}
Zhiyi Zhang, Micha{\l} Kr{\'o}l, Alberto Sonnino, Lixia Zhang, and Etienne
  Rivi\`ere.
\newblock {EL PASSO: Privacy-preserving, Asynchronous Single Sign-On}.
\newblock International Symposium on Privacy Enhancing Technologies, 2021.

\bibitem{twins}
Shehar Bano, Alberto Sonnino, Andrey Chursin, Dmitri Perelman, Zekun Li, Avery
  Ching, and Dahlia Malkhi.
\newblock {Twins: White-Glove Approach for BFT Testing}.
\newblock arXiv preprint arXiv:2004.10617, 2020.

\bibitem{fraudproofs}
Mustafa Al-Bassam, Alberto Sonnino, Vitalik Buterin, and Ismail Khoffi.
\newblock {Fraud Proofs: Maximising Light Client Security and Scaling
  Blockchains with Dishonest Majorities}.
\newblock In {\em Financial Cryptography and Data Security}, 2021.

\bibitem{Garay:2018}
Juan~A. Garay and Aggelos Kiayias.
\newblock {SoK: A Consensus Taxonomy in the Blockchain Era}.
\newblock IACR Cryptology ePrint Archive, 2018.

\bibitem{cachinBook}
Christian Cachin, Rachid Guerraoui, and Lu{\'\i}s Rodrigues.
\newblock {\em {Introduction to Reliable and Secure Distributed Programming}}.
\newblock Springer Science \& Business Media, 2011.

\bibitem{hyperledger}
Christian Cachin.
\newblock {Architecture of the Hyperledger Blockchain Fabric}.
\newblock In {\em Workshop on distributed cryptocurrencies and consensus
  ledgers}, 2016.

\bibitem{libra}
Calibra.
\newblock {The Libra Blockchain}.
\newblock
  \url{https://developers.libra.org/docs/assets/papers/the-libra-blockchain.pdf},
  2019.

\bibitem{libraBFT}
Mathieu Baudet, Avery Ching, Andrey Chursin, George Danezis, Fran{\c{c}}ois
  Garillot, Zekun Li, Dahlia Malkhi, Oded Naor, Dmitri Perelman, and Alberto
  Sonnino.
\newblock {State Machine Replication in the Libra Blockchain}, 2019.

\bibitem{zerocoin}
Ian Miers, Christina Garman, Matthew Green, and Aviel~D Rubin.
\newblock {Zerocoin: Anonymous Distributed E-Cash from Bitcoin}.
\newblock In {\em IEEE Symposium on Security and Privacy}, 2013.

\bibitem{zcash}
Eli~Ben Sasson, Alessandro Chiesa, Christina Garman, Matthew Green, Ian Miers,
  Eran Tromer, and Madars Virza.
\newblock {Zerocash: Decentralized Anonymous Payments from Bitcoin}.
\newblock In {\em IEEE Security and Privacy)}, 2014.

\bibitem{Cachin:2017}
Christian Cachin and Marko Vukoli{\'c}.
\newblock {Blockchain Consensus Protocols in the Wild}.
\newblock arXiv preprint arXiv:1707.0187, 2017.

\bibitem{dwork1988consensus}
Cynthia Dwork, Nancy Lynch, and Larry Stockmeyer.
\newblock {Consensus in the Presence of Partial Synchrony}.
\newblock {\em Journal of the ACM}, 35(2):288--323, 1988.

\bibitem{avalanche}
Team Rocket.
\newblock {Snowflake to Avalanche: A Novel Metastable Consensus Protocol Family
  for Cryptocurrencies}.
\newblock
  \url{https://ipfs.io/ipfs/QmUy4jh5mGNZvLkjies1RWM4YuvJh5o2FYopNPVYwrRVGV},
  2018.

\bibitem{douceur2002sybil}
John~R. Douceur.
\newblock {The Sybil Attack}.
\newblock In {\em International Workshop on Peer-to-Peer Systems}, 2002.

\bibitem{Buterin:2017}
Vitalik Buterin.
\newblock {The Meaning of Decentralization}.
\newblock
  \url{https://medium.com/@VitalikButerin/the-meaning-of-decentralization-a0c92b76a274},
  2017.

\bibitem{troncoso2017systematizing}
Carmela Troncoso, Marios Isaakidis, George Danezis, and Harry Halpin.
\newblock {Systematizing Decentralization and Privacy: Lessons from 15 Years of
  Research and Deployments}.
\newblock In {\em International Symposium on Privacy Enhancing Technologies},
  2017.

\bibitem{bft}
Miguel Castro and Barbara Liskov.
\newblock {Byzantine Fault Tolerance}.
\newblock \url{https://patents.google.com/patent/US6671821B1}, 2003.

\bibitem{lamport1978time}
Leslie Lamport.
\newblock {Time, Clocks, and the Ordering of Events in a Distributed System}.
\newblock {\em Communicationsof the ACM}, 21(7):558--565, 1978.

\bibitem{gray1978notes}
Jim Gray.
\newblock {Notes on Data Base Operating Systems}.
\newblock In {\em Operating Systems, An Advanced Course}, 1978.

\bibitem{skeen1981nonblocking}
Dale Skeen.
\newblock {Nonblocking Commit Protocols}.
\newblock In {\em ACM SIGMOD International Conference on Management of Data},
  1981.

\bibitem{atomic}
Flaviu Cristian, Houtan Aghili, H.~Raymond Strong, and Danny Dolev.
\newblock {Atomic Broadcast: From Simple Message Diffusion to Byzantine
  Agreement}.
\newblock In {\em Information and Computation}, 1995.

\bibitem{schneider1990implementing}
Fred~B Schneider.
\newblock {Implementing Fault-Tolerant Services using the State Machine
  Approach: A Tutorial}.
\newblock In {\em ACM Computing Surveys}, 1990.

\bibitem{fischer1985impossibility}
Michael~J Fischer, Nancy~A Lynch, and Michael~S Paterson.
\newblock {Impossibility of Distributed Consensus with one Faulty Process}.
\newblock {\em Journal of the ACM}, 32(2):374--382, 1985.

\bibitem{Lamport:1982}
Leslie Lamport, Robert Shostak, and Marshall Pease.
\newblock {The Byzantine Generals Problem}.
\newblock {\em ACM Transactions on Programming Languages and Systems},
  4(3):382--401, 1982.

\bibitem{oki1988viewstamped}
Brian~M Oki and Barbara~H Liskov.
\newblock {Viewstamped Replication: A New Primary Copy Method to Support
  Highly-available Distributed Systems}.
\newblock In {\em ACM Symposium on Principles of Distributed Computing}, 1988.

\bibitem{ongaro2014search}
Diego Ongaro and John~K Ousterhout.
\newblock {In search of an Understandable Consensus Algorithm}.
\newblock In {\em USENIX Annual Technical Conference}, 2014.

\bibitem{junqueira2011zab}
Flavio~Paiva Junqueira, Benjamin~C. Reed, and Marco Serafini.
\newblock {Zab: High-performance Broadcast for Primary-backup Systems}.
\newblock In {\em IEEE/IFIP International Conference on Dependable Systems and
  Networks}, 2011.

\bibitem{hotstuff-2019}
Maofan Yin, Dahlia Malkhi, Michael~K Reiter, Guy~Golan Gueta, and Ittai
  Abraham.
\newblock {Hotstuff: BFT Consensus with Linearity and Responsiveness}.
\newblock In {\em ACM Symposium on Principles of Distributed Computing}, 2019.

\bibitem{tendermint}
Ethan Buchman.
\newblock {Tendermint: Byzantine Fault Tolerance in the Age of Blockchains}.
\newblock
  \url{https://cdn.relayto.com/media/files/LPgoWO18TCeMIggJVakt_tendermint.pdf},
  2016.

\bibitem{buterin2017casper}
Vitalik Buterin and Virgil Griffith.
\newblock {Casper the Friendly Finality Gadget}.
\newblock arXiv preprint arXiv:1710.09437, 2017.

\bibitem{gray2006consensus}
Jim Gray and Leslie Lamport.
\newblock {Consensus on Transaction Commit}.
\newblock {\em ACM Transactions on Database Systems}, 31(1):133--160, 2006.

\bibitem{cardano}
Cardano.
\newblock {Cardano}.
\newblock \url{https://www.cardano.org}, 2020.

\bibitem{ouroboros}
Aggelos Kiayias, Alexander Russell, Bernardo David, and Roman Oliynykov.
\newblock {Ouroboros: A provably Secure Proof-of-Stake Blockchain Protocol}.
\newblock In {\em International Cryptology Conference}, 2017.

\bibitem{ouroborospraos}
Bernardo David, Peter Gazi, Aggelos Kiayias, and Alexander Russell.
\newblock {Ouroboros Praos: An Adaptively-Secure, Semi-synchronous
  Proof-of-Stake Blockchain}.
\newblock In {\em Eurocrypt}, 2018.

\bibitem{dwork1992pricing}
Cynthia Dwork and Moni Naor.
\newblock {Pricing via Processing or Combatting Junk Mail}.
\newblock In {\em International Cryptology Conference}, 1992.

\bibitem{dwork1993crypto}
Cynthia Dwork and Moni Naor.
\newblock {Pricing via Processing or Combatting Junk Mail}.
\newblock In {\em Crypto}, 1992.

\bibitem{hashcash}
Adam Back.
\newblock {A Partial Hash Collision Based Postage Scheme}.
\newblock \url{http://www.hashcash.org/papers/announce.txt}, 1997.

\bibitem{garay2015bitcoin}
Juan Garay, Aggelos Kiayias, and Nikos Leonardos.
\newblock {The Bitcoin Backbone Protocol: Analysis and Applications}.
\newblock In {\em Eurocrypt}, 2015.

\bibitem{Narayanan:2017}
Arvind Narayanan and Jeremy Clark.
\newblock {Bitcoin's Academic Pedigree}.
\newblock {\em ACM Queue}, 15(4):20--49, 2017.

\bibitem{snow-white}
Phil Daian, Rafael Pass, and Elaine Shi.
\newblock {Snow White: Provably Secure Proofs of Stake}.
\newblock IACR Cryptology ePrint Archive, 2016.

\bibitem{casper}
Ethereum Blog.
\newblock {Introducing Casper ``the Friendly Ghost''}.
\newblock
  \url{https://blog.ethereum.org/2015/08/01/introducing-casper-friendly-ghost/},
  2015.

\bibitem{eos}
Ian Grigg.
\newblock {EOS Whitepaper}.
\newblock \url{https://whitepaperdatabase. com/eos-whitepaper}, 2017.

\bibitem{deirmentzoglou2019survey}
Evangelos Deirmentzoglou, Georgios Papakyriakopoulos, and Constantinos
  Patsakis.
\newblock {A Survey on Long-Range Attacks for Proof of Stake Protocols}.
\newblock {\em IEEE Access}, 7(13):28712--28725, 2019.

\bibitem{gavzi2018stake}
Peter Ga{\v{z}}i, Aggelos Kiayias, and Alexander Russell.
\newblock {Stake-Bleeding Attacks on Proof-of-Stake Blockchains}.
\newblock In {\em Crypto Valley Conference on Blockchain Technology}, 2018.

\bibitem{elastico}
Loi Luu, Viswesh Narayanan, Chaodong Zheng, Kunal Baweja, Seth Gilbert, and
  Prateek Saxena.
\newblock {A Secure Sharding Protocol for Open Blockchains}.
\newblock In {\em ACM SIGSAC Conference on Computer and Communications
  Security}, 2016.

\bibitem{Amir:2011}
Yair Amir, Brian~A. Coan, Jonathan Kirsch, and John Lane.
\newblock {Prime: Byzantine Replication under Attack}.
\newblock {\em {IEEE} Transactions on Dependable and Secure Computing},
  8(4):564--577, 2011.

\bibitem{hashgraph}
Leemon Baird.
\newblock {Hashgraph Consensus: Fair, Fast, Byzantine Fault Tolerance}.
\newblock {\em Swirlds Tech Report, Tech. Rep.}, 2016.

\bibitem{blockmania}
George Danezis and David Hrycyszyn.
\newblock {Blockmania: from Block DAGs to Consensus}.
\newblock arXiv preprint arXiv:1809.01620, 2018.

\bibitem{Kotla:2004}
Ramakrishna Kotla and Mike Dahlin.
\newblock {High Throughput Byzantine Fault Tolerance}.
\newblock In {\em IEEE/IFIP International Conference on Dependable Systems and
  Networks}, 2004.

\bibitem{WesterCNCFL09}
Benjamin Wester, James~A. Cowling, Edmund~B. Nightingale, Peter~M. Chen, Jason
  Flinn, and Barbara Liskov.
\newblock {Tolerating Latency in Replicated State Machines Through Client
  Speculation}.
\newblock In {\em USENIX Symposium on Networked Systems Design and
  Implementation}, 2009.

\bibitem{KotlaADCW09}
Ramakrishna Kotla, Lorenzo Alvisi, Michael Dahlin, Allen Clement, and Edmund~L.
  Wong.
\newblock {Zyzzyva: Speculative Byzantine Fault Tolerance}.
\newblock {\em {ACM} Transactions on Computing Systems}, 27(4):7:1--7:39, 2009.

\bibitem{Guerraoui:2010}
Rachid Guerraoui, Nikola Kne\v{z}evi\'{c}, Vivien Qu{\'e}ma, and Marko
  Vukoli\'{c}.
\newblock {The Next 700 BFT Protocols}.
\newblock In {\em European Conference on Computer Systems}, 2010.

\bibitem{cosi}
Ewa Syta, Iulia Tamas, Dylan Visher, David~Isaac Wolinsky, Philipp Jovanovic,
  Linus Gasser, Nicolas Gailly, Ismail Khoffi, and Bryan Ford.
\newblock {Keeping Authorities ``Honest or Bust'' with Decentralized Witness
  Cosigning}.
\newblock In {\em IEEE Symposium on Security and Privacy}, 2016.

\bibitem{liu2016xft}
Shengyun Liu, Paolo Viotti, Christian Cachin, Vivien Qu{\'e}ma, and Marko
  Vukoli{\'c}.
\newblock {XFT: Practical Fault Tolerance beyond Crashes}.
\newblock In {\em USENIX Symposium on Operating Systems Design and
  Implementation}, 2016.

\bibitem{prisco2016intel}
Giulio Prisco.
\newblock {Intel Develops `Sawtooth Lake' Distributed Ledger Technology for the
  Hyperledger Project}.
\newblock {\em Bitcoin Magazine}, 2016.

\bibitem{Yin:2003}
Jian Yin, Jean-Philippe Martin, Arun Venkataramani, Lorenzo Alvisi, and Mike
  Dahlin.
\newblock {Separating Agreement from Execution for Byzantine Fault Tolerant
  Services}.
\newblock In {\em ACM Symposium on Operating Systems Principles}, 2003.

\bibitem{Vukolic:2017:RPB}
Marko Vukoli\'{c}.
\newblock {Rethinking Permissioned Blockchains}.
\newblock In {\em ACM Workshop on Blockchain, Cryptocurrencies and Contracts},
  2017.

\bibitem{rscoin}
George Danezis and Sarah Meiklejohn.
\newblock {Centrally Banked Cryptocurrencies}.
\newblock In {\em Network and Distributed System Security Symposium}, 2016.

\bibitem{yanking}
Vitalik Buterin.
\newblock {Cross-shard Contract Yanking}.
\newblock \url{https://ethresear.ch/t/cross-shard-contract-yanking/1450}, 2018.

\bibitem{Alvisi:2013}
Lorenzo Alvisi, Allen Clement, Alessandro Epasto, Silvio Lattanzi, and
  Alessandro Panconesi.
\newblock {SoK: The Evolution of Sybil Defense via Social Networks}.
\newblock In {\em IEEE Symposium on Security and Privacy}, 2013.

\bibitem{mazieres2015stellar}
David Mazieres.
\newblock {The Stellar Consensus Protocol: A federated Model for Internet-Level
  Consensus}.
\newblock
  \url{http://citeseerx.ist.psu.edu/viewdoc/download?doi=10.1.1.696.93&amp=&rep=rep1&amp=&type=pdf},
  2015.

\bibitem{byzcoin}
Eleftherios~Kokoris Kogias, Philipp Jovanovic, Nicolas Gailly, Ismail Khoffi,
  Linus Gasser, and Bryan Ford.
\newblock {Enhancing Bitcoin Security and Performance With Strong Consistency
  via Collective Signing}.
\newblock In {\em USENIX Security Symposium}, 2016.

\bibitem{algorand}
Yossi Gilad, Rotem Hemo, Silvio Micali, Georgios Vlachos, and Nickolai
  Zeldovich.
\newblock {Algorand: Scaling Byzantine Agreements for Cryptocurrencies}.
\newblock In {\em ACM Symposium on Operating Systems Principles}, pages 51--68,
  2017.

\bibitem{cohen2003incentives}
Bram Cohen.
\newblock {Incentives Build Robustness in BitTorrent}.
\newblock In {\em Workshop on Economics of Peer-to-Peer systems}, volume~6,
  pages 68--72, 2003.

\bibitem{nakamoto-post}
Satoshi Nakamoto.
\newblock {Bitcoin Open Source Implementation of P2P Currency}.
\newblock \url{http://p2pfoundation.ning.com/forum/topics/bitcoin-open-source},
  2009.

\bibitem{egold}
E-Gold.
\newblock {E-Gold}.
\newblock
  \url{https://cs.stanford.edu/people/eroberts/cs201/projects/2010-11/Bitcoins/e-gold.html},
  2020.

\bibitem{tor}
Tor Project.
\newblock {Browse Privately. Explore Freely.}
\newblock \url{https://www.torproject.org}, 2020.

\bibitem{Corbett:2013}
James~C. Corbett, Jeffrey Dean, Michael Epstein, Andrew Fikes, Christopher
  Frost, J.~J. Furman, Sanjay Ghemawat, Andrey Gubarev, Christopher Heiser,
  Peter Hochschild, Wilson Hsieh, Sebastian Kanthak, Eugene Kogan, Hongyi Li,
  Alexander Lloyd, Sergey Melnik, David Mwaura, David Nagle, Sean Quinlan,
  Rajesh Rao, Lindsay Rolig, Yasushi Saito, Michal Szymaniak, Christopher
  Taylor, Ruth Wang, and Dale Woodford.
\newblock {Spanner: Google's Globally Distributed Database}.
\newblock {\em ACM Transactions on Computing Systems}, 31(3):8:1--8:22, 2013.

\bibitem{Glendenning:2011}
Lisa Glendenning, Ivan Beschastnikh, Arvind Krishnamurthy, and Thomas Anderson.
\newblock {Scalable Consistency in Scatter}.
\newblock In {\em ACM Symposium on Operating Systems Principles}, pages 15--28,
  2011.

\bibitem{Awerbuch:2006}
Baruch Awerbuch and Christian Scheideler.
\newblock {Robust Random Number Generation for Peer-to-peer Systems}.
\newblock In {\em International Conference on Principles of Distributed
  Systems}, 2006.

\bibitem{randherd}
Ewa Syta, Philipp Jovanovic, Eleftherios~Kokoris Kogias, Nicolas Gailly, Linus
  Gasser, Ismail Khoffi, Michael~J Fischer, and Bryan Ford.
\newblock {Scalable Bias-Resistant Distributed Randomness}.
\newblock In {\em IEEE Symposium on Security and Privacy}, 2017.

\bibitem{Le:2016:SSMR}
Long~Hoang Le, Carlos~Eduardo Bezerra, and Fernando Pedone.
\newblock {Dynamic Scalable State Machine Replication}.
\newblock In {\em ACM Symposium on Operating Systems Principles}, 2016.

\bibitem{solidus}
Ittai Abraham, Dahlia Malkhi, Kartik Nayak, Ling Ren, and Alexander Spiegelman.
\newblock {Solidus: An Incentive-Compatible Cryptocurrency Based on
  Permissionless Byzantine Consensus}.
\newblock \url{https://download.massnet.org/research/paper10}, 2016.

\bibitem{cl}
Jan Camenisch and Anna Lysyanskaya.
\newblock {Signature Schemes and Anonymous Credentials from Bilinear Maps}.
\newblock In {\em International Cryptology Conference}, 2004.

\bibitem{amac}
Melissa Chase, Sarah Meiklejohn, and Greg Zaverucha.
\newblock {MAcs and Keyed-Verification Anonymous Credentials}.
\newblock In {\em ACM SIGSAC Conference on Computer and Communications
  Security}, 2014.

\bibitem{pointcheval}
David Pointcheval and Olivier Sanders.
\newblock {Short Randomizable Signatures}.
\newblock In {\em Cryptographers’ Track at the RSA Conference}, 2016.

\bibitem{schnorr}
Stefan~A Brands.
\newblock {\em {Rethinking Public Key Infrastructures and Digital Certificates:
  Building in Privacy}}.
\newblock Mit Press, 2000.

\bibitem{fiat1986prove}
Amos Fiat and Adi Shamir.
\newblock {How to Prove Yourself: Practical Solutions to Identification and
  Signature Problems}.
\newblock In {\em International Conference on the Theory and Application of
  Cryptographic Techniques}, 1986.

\bibitem{galbraith2008pairings}
Steven~D Galbraith, Kenneth~G Paterson, and Nigel~P Smart.
\newblock {Pairings for Cryptographers}.
\newblock {\em Discrete Applied Mathematics}, 156(16):3113--3121, 2008.

\bibitem{bls}
Dan Boneh, Ben Lynn, and Hovav Shacham.
\newblock {Short Signatures from the Weil Pairing}.
\newblock In {\em Asiacrypt}, 2001.

\bibitem{waters}
Brent Waters.
\newblock {Efficient Identity-Based Encryption Without Random Oracles}.
\newblock In {\em Eurocrypt}, 2005.

\bibitem{lossw}
Steve Lu, Rafail Ostrovsky, Amit Sahai, Hovav Shacham, and Brent Waters.
\newblock {Sequential Aggregate Signatures, Multisignatures, and verifiably
  Encrypted Signatures without Random Oracles}.
\newblock {\em Journal of cryptology}, 26(2):340--373, 2013.

\bibitem{bgls}
Dan Boneh, Craig Gentry, Ben Lynn, and Hovav Shacham.
\newblock {Aggregate and verifiably encrypted signatures from bilinear maps}.
\newblock In {\em Eurocrypt}, 2003.

\bibitem{boldyreva2002efficient}
Alexandra Boldyreva.
\newblock {Efficient Threshold Signature, Multisignature and Blind Signature
  Schemes Based on the Gap-Diffie-Hellman-group Signature Scheme.}
\newblock IACR Cryptology ePrint Archive, 2002.

\bibitem{lee2013aggregating}
Kwangsu Lee, Dong~Hoon Lee, and Moti Yung.
\newblock {Aggregating CL-Signatures Revisited: Extended Functionality and
  Better Efficiency}.
\newblock In {\em Financial Cryptography and Data Security}, 2013.

\bibitem{idemix}
Patrik Bichsel, Carl Binding, Jan Camenisch, Thomas Gro{\ss}, Tom
  Heydt-Benjamin, Dieter Sommer, and Greg Zaverucha.
\newblock {Cryptographic Protocols of the Identity Mixer Library}.
\newblock Technical Report, IBM, 2009.

\bibitem{chen2010design}
Liqun Chen, Dan Page, and Nigel~P Smart.
\newblock {On the Design and Implementation of an Efficient DAA Scheme}.
\newblock In {\em International Conference on Smart Card Research and Advanced
  Applications}, 2010.

\bibitem{bernhard2013anonymous}
David Bernhard, Georg Fuchsbauer, Essam Ghadafi, Nigel~P Smart, and Bogdan
  Warinschi.
\newblock {Anonymous Attestation with User-Controlled Linkability}.
\newblock {\em International Journal of Information Security}, 12(3):219--249,
  2013.

\bibitem{canard2015divisible}
S{\'e}bastien Canard, David Pointcheval, Olivier Sanders, and Jacques
  Traor{\'e}.
\newblock {Divisible E-cash Made Practical}.
\newblock In {\em Workshop on Public Key Cryptography}, 2015.

\bibitem{uprove}
Christian Paquin and Greg Zaverucha.
\newblock {U-Prove Cryptographic Specification v1.1}.
\newblock {\em Technical Report, Microsoft Corporation}, 2011.

\bibitem{acl}
Foteini Baldimtsi and Anna Lysyanskaya.
\newblock {Anonymous Credentials Light}.
\newblock In {\em ACM SIGSAC Conference on Computer and Communications
  Security}, 2013.

\bibitem{dac}
Christina Garman, Matthew Green, and Ian Miers.
\newblock {Decentralized Anonymous Credentials.}
\newblock In {\em Network and Distributed System Security Symposium}, 2014.

\bibitem{szydlo2004merkle}
Michael Szydlo.
\newblock {Merkle tree traversal in log space and time}.
\newblock In {\em International Conference on the Theory and Applications of
  Cryptographic Techniques}, 2004.

\bibitem{bootle2016efficient}
Jonathan Bootle, Andrea Cerulli, Pyrros Chaidos, and Jens Groth.
\newblock {Efficient Zero-Knowledge Proof Systems}.
\newblock In {\em Foundations of Security Analysis and Design}, 2016.

\bibitem{danezis2014square}
George Danezis, C{\'e}dric Fournet, Jens Groth, and Markulf Kohlweiss.
\newblock {Square Span Programs with Applications to Succinct NIZK Arguments}.
\newblock In {\em International Conference on the Theory and Application of
  Cryptology and Information Security}, 2014.

\bibitem{mcconaghy2016bigchaindb}
Trent McConaghy, Rodolphe Marques, Andreas M{\"u}ller, Dimitri De~Jonghe, Troy
  McConaghy, Greg McMullen, Ryan Henderson, Sylvain Bellemare, and Alberto
  Granzotto.
\newblock {BigchainDB: a Scalable Blockchain Database}.
\newblock
  \url{https://mycourses.aalto.fi/pluginfile.php/378362/mod_resource/content/1/bigchaindb-whitepaper.pdf},
  2016.

\bibitem{saltzer1984end}
Jerome~H Saltzer, David~P Reed, and David~D Clark.
\newblock {End-to-End Arguments in System Design}.
\newblock {\em ACM Transactions on Computer Systems}, 2(4):277--288, 1984.

\bibitem{pedersen}
Torben~Pryds Pedersen.
\newblock {Non-Interactive and Information-Theoretic Secure Verifiable Secret
  Sharing}.
\newblock In {\em International Cryptology Conference}, 1991.

\bibitem{lamport2001paxos}
Leslie Lamport.
\newblock {Paxos Made Simple}.
\newblock {\em ACM Sigact News}, 32(4):18--25, 2001.

\bibitem{lauwers1994five}
Luc Lauwers and Marleen Willekens.
\newblock {Five Hundred Years of Bookkeeping: a Portrait of Luca Pacioli}.
\newblock {\em Tijdschrift voor Economie en Management}, 39(3):289--304, 1994.

\bibitem{DBLP:conf/isse/RialD12}
Alfredo Rial and George Danezis.
\newblock {Privacy-Preserving Smart Metering}.
\newblock In {\em ACM workshop on Privacy in the electronic society}, 2012.

\bibitem{DBLP:conf/pet/JawurekJK11}
Marek Jawurek, Martin Johns, and Florian Kerschbaum.
\newblock {Plug-In Privacy for Smart Metering Billing}.
\newblock In {\em International Symposium on Privacy Enhancing Technologies},
  2011.

\bibitem{petlib}
George Danezis.
\newblock {petlib}.
\newblock \url{https://github.com/gdanezis/petlib}, 2017.

\bibitem{buterin2015notes}
Vitalik Buterin, Jeff Coleman, and Matthew Wampler-Doty.
\newblock {Notes on Scalable Blockchain Protocols (verson 0.3.2)}.
\newblock
  \url{https://pdfs.semanticscholar.org/ae5b/c3aaf0e02a42f4cd41916072c87db0e04ac6.pdf},
  2015.

\bibitem{corda}
Richard~Gendal Brown.
\newblock {Corda: an Introduction}.
\newblock \url{https://www.corda.net/content/corda-platform-whitepaper.pdf},
  2018.

\bibitem{rootstock}
Rsk.
\newblock {RSK: Smart contracts for Bitcoin}.
\newblock \url{http://www.rsk.co}, 2017.

\bibitem{clark1987comparison}
David~D Clark and David~R Wilson.
\newblock {A Comparison of Commercial and Military Computer Security Policies}.
\newblock In {\em IEEE Symposium on Security and Privacy}, 1987.

\bibitem{tezos}
LM~Goodman.
\newblock {Tezos A self-amending crypto-ledger White paper}.
\newblock
  \url{https://tezos.com/static/white_paper-2dc8c02267a8fb86bd67a108199441bf.pdf},
  2014.

\bibitem{malkhi1998byzantine}
Dahlia Malkhi and Michael Reiter.
\newblock {Byzantine Quorum Systems}.
\newblock {\em Distributed computing}, 11(4):203--213, 1998.

\bibitem{bftsmart}
Alysson Bessani, Jo\~{a}o Sousa, and Eduardo E.~P. Alchieri.
\newblock {State Machine Replication for the Masses with BFT-SMART}.
\newblock In {\em IEEE/IFIP International Conference on Dependable Systems and
  Networks}, 2014.

\bibitem{rtgs-2}
Bank for International~Settlements.
\newblock {New Developments in Large-Value Payment Systems}.
\newblock \url{https://www.bis.org/cpmi/publ/d67.pdf}, 2005.

\bibitem{target2}
European~Central Bank.
\newblock {Single Shared Platform User Detailed Functional Specifications Core
  Services 1st Book (Version 12.01)}.
\newblock
  \url{https://www.ecb.europa.eu/paym/target/target2/profuse/nov\_2018/shared/pdf/T2\_UDFS\_book\_1\_v12.01.pdf},
  2018.

\bibitem{swift}
Swift.
\newblock {SWIFT The Global Provider of Secure Financial Messaging Services}.
\newblock \url{https://www.swift.com}, 2020.

\bibitem{swift-iso}
Stephen Lindsay.
\newblock {ISO 20022 and Real-Time Domestic Payments}.
\newblock {\em Journal of Payments Strategy \& Systems}, 9(1):22--29, 2015.

\bibitem{mastercard-performance}
Susan Herbst-Murphy.
\newblock {Clearing and Settlement of Interbank Card Transactions: A MasterCard
  Tutorial for Federal Reserve Payments Analysts}.
\newblock
  \url{https://www.philadelphiafed.org/-/media/consumer-finance-institute/payment-cards-center/publications/discussion-papers/2013/D-2013-October-Clearing-Settlement.pdf},
  2013.

\bibitem{vukolic2015quest}
Marko Vukoli{\'c}.
\newblock {The Quest for Scalable Blockchain Fabric: Proof-of-Work vs. BFT
  Replication}.
\newblock In {\em International Workshop on Open Problems in Network Security},
  2015.

\bibitem{permissioned-blockchains}
Blockchain Council.
\newblock {Permissioned and Permissionless Blockchains: a Comprehensive Guide}.
\newblock
  \url{https://www.blockchain-council.org/blockchain/permissioned-and-permissionless-blockchains-a-comprehensive-guide},
  2019.

\bibitem{libraTPS-1}
Elizabeth Lopatto.
\newblock {Libra, Explained. Move fast and bank things}.
\newblock
  \url{https://www.theverge.com/2019/6/26/18716326/facebook-libra-cryptocurrency-blockchain-irs-starbucks},
  2019.

\bibitem{libraTPS-2}
Alex Prut.
\newblock {Libra Quick Introduction}.
\newblock
  \url{https://medium.com/coinmonks/libra-quick-introduction-6ce2c51d703c},
  2020.

\bibitem{consensus-number}
Rachid Guerraoui, Petr Kuznetsov, Matteo Monti, Matej Pavlovi\v{c}, and
  Dragos-Adrian Seredinschi.
\newblock {The Consensus Number of a Cryptocurrency}.
\newblock In {\em Symposium on Principles of Distributed Computing}, 2019.

\bibitem{DBLP:journals/dpd/SamarasBCM95}
George Samaras, Kathryn Britton, Andrew Citron, and C.~Mohan.
\newblock {Two-Phase Commit Optimizations in a Commercial Distributed
  Environment}.
\newblock {\em Distributed and Parallel Databases}, 3(4):325--360, 1995.

\bibitem{DBLP:conf/vldb/LampsonL93}
Butler~W. Lampson and David~B. Lomet.
\newblock {A New Presumed Commit Optimization for Two Phase Commit}.
\newblock In {\em International Conference on Very Large Data Bases}, 1993.

\bibitem{han2019performance}
Runchao Han, Gary Shapiro, Vincent Gramoli, and Xiwei Xu.
\newblock {On the Performance of Distributed Ledgers for Internet of Things}.
\newblock {\em Internet of Things}, page 100087, 2019.

\bibitem{hyperledger-v06}
Hyperledger.
\newblock {Hyperledger Fabric V0.6}.
\newblock
  \url{https://readthedocs.org/projects/fabricdocs/downloads/pdf/origin-v0.6},
  2017.

\bibitem{hyperledger-v10}
Hyperledger.
\newblock {Hyperledger Fabric V1.0}.
\newblock
  \url{https://readthedocs.org/projects/hyperledger-fabric/downloads/pdf/master},
  2020.

\bibitem{ripple-documentation}
Ryan Chard.
\newblock {Ripple Documentation}.
\newblock
  \url{https://buildmedia.readthedocs.org/media/pdf/ripple/latest/ripple.pdf},
  2018.

\bibitem{corda-v32}
R3.
\newblock {R3 Corda (Release Notes)}.
\newblock
  \url{https://docs.corda.net/releases/release-V3.2/release-notes.html}, 2018.

\bibitem{nasir2018performance}
Qassim Nasir, Ilham~A Qasse, Manar Abu~Talib, and Ali~Bou Nassif.
\newblock {Performance Analysis of Hyperledger Fabric Platforms}.
\newblock {\em Security and Communication Networks}, 2018, 2018.

\bibitem{DBLP:conf/icdcs/LeeSHKN14}
Hyojeong Lee, Jeff Seibert, Md.~Endadul Hoque, Charles~Edwin Killian, and
  Cristina Nita{-}Rotaru.
\newblock {Turret: A Platform for Automated Attack Finding in Unmodified
  Distributed System Implementations}.
\newblock In {\em IEEE International Conference on Distributed Computing
  Systems}, 2014.

\bibitem{rtgs-1}
Morten~L Bech and Bart Hobijn.
\newblock {Technology Diffusion within Central Banking: the Case of Real-Time
  Gross Settlement}.
\newblock \url{https://www.newyorkfed.org/research/staff_reports/sr260.html},
  2006.

\bibitem{uk-rtgs}
Will~Dison Andrew~Dent.
\newblock {The Bank of England's Real-Time Gross Settlement infrastructure}.
\newblock \url{https://ideas.repec.org/a/boe/qbullt/0084.html}, 2012.

\bibitem{herlihy1991wait}
Maurice Herlihy.
\newblock {Wait-Free Synchronization}.
\newblock {\em ACM Transactions on Programming Languages and Systems},
  13(1):124--149, 1991.

\bibitem{corda-performance}
R3.
\newblock {Sizing and Performance}, 2018.

\bibitem{baliga2018performance}
Arati Baliga, I~Subhod, Pandurang Kamat, and Siddhartha Chatterjee.
\newblock {Performance Evaluation of the Quorum Blockchain Platform}.
\newblock arXiv preprint arXiv:1809.03421, 2018.

\bibitem{tendermint-performance}
Oumnia~El Khazzani.
\newblock {Creating the Future of Blockchain -- Thorchain Update 002}.
\newblock
  \url{https://www.swishlabs.com/blog/creating-the-future-of-blockchain-thorchain-update-002},
  2019.

\bibitem{lightning}
Joseph Poon and Thaddeus Dryja.
\newblock {The Bitcoin Lightning Network}.
\newblock \url{https://lightning.network/lightning-network-paper.pdf}, 2016.

\bibitem{flare}
Pavel Prihodko, Slava Zhigulin, Mykola Sahno, Aleksei Ostrovskiy, and Olaoluwa
  Osuntokun.
\newblock {Flare: An Approach to Routing in Lightning Network}.
\newblock
  \url{https://pdfs.semanticscholar.org/4392/166a1194010c844ec915694fd5c56da94301.pdf},
  2016.

\bibitem{grunspan2018ant}
Cyril Grunspan and Ricardo P{\'e}rez-Marco.
\newblock {Ant routing algorithm for the Lightning Network}.
\newblock arXiv preprint arXiv:1807.00151, 2018.

\bibitem{sivaraman2018routing}
Vibhaalakshmi Sivaraman, Shaileshh~Bojja Venkatakrishnan, Mohammad Alizadeh,
  Giulia Fanti, and Pramod Viswanath.
\newblock {Routing Cryptocurrency with the Spider Network}.
\newblock arXiv preprint arXiv:1809.05088, 2018.

\bibitem{bolt2014fast}
Stephanie Bolt, David Emery, and Paul Harrigan.
\newblock {Fast Retail Payment Systems}.
\newblock
  \url{https://rba.gov.au/publications/bulletin/2014/dec/pdf/bu-1214-6.pdf},
  2014.

\bibitem{back2014enabling}
Adam Back, M~Corallo, L~Dashjr, M~Friedenbach, G~Maxwell, A~Miller, A~Poelstra,
  J~Tim{\'o}n, and P~Wuille.
\newblock {Enabling Blockchain Innovations with Pegged Sidechains}.
\newblock \url{https://blockstream.com/sidechains.pdf}, 2014.

\bibitem{cryptoeprint:2012:377}
Aniket Kate, Yizhou Huang, and Ian Goldberg.
\newblock {Distributed Key Generation in the Wild}.
\newblock IACR Cryptology ePrint Archive, 2012.
\newblock \url{https://eprint.iacr.org/2012/377}.

\bibitem{camenisch1997proof}
Jan Camenisch and Markus Stadler.
\newblock {Proof Systems for General Statements about Discrete Logarithms}.
\newblock Technical Report, ETH Z{\"u}rich, 1997.

\bibitem{gennaro1999secure}
Rosario Gennaro, Stanislaw Jarecki, Hugo Krawczyk, and Tal Rabin.
\newblock {Secure Distributed Key Generation for Discrete-Log Based
  Cryptosystems}.
\newblock In {\em Eurocrypt}, 1999.

\bibitem{shamir1979share}
Adi Shamir.
\newblock {How to Share a Secret}.
\newblock {\em Communications of the ACM}, 22(11):612--613, 1979.

\bibitem{herzberg1995proactive}
Amir Herzberg, Stanis{\l}aw Jarecki, Hugo Krawczyk, and Moti Yung.
\newblock {Proactive Secret Sharing or: How to Cope with Perpetual Leakage}.
\newblock In {\em International Cryptology Conference}, 1995.

\bibitem{lysyanskaya1999pseudonym}
Anna Lysyanskaya, Ronald~L Rivest, Amit Sahai, and Stefan Wolf.
\newblock {Pseudonym Systems}.
\newblock In {\em International Workshop on Selected Areas in Cryptography},
  1999.

\bibitem{eip197}
Vitalik Buterin and Christian Reitwiessner.
\newblock {Ethereum Improvement Proposal 197 - Precompiled Contracts for
  Optimal Ate Pairing Check on the Elliptic Curve alt\_bn128}.
\newblock \url{https://github.com/ethereum/EIPs/blob/master/EIPS/eip-197.md},
  2017.

\bibitem{eip196}
Christian Reitwiessner.
\newblock {Ethereum Improvement Proposal 196 - Precompiled Contracts for
  Addition and Scalar Multiplication on the Elliptic Curve alt\_bn128}.
\newblock \url{https://github.com/ethereum/EIPs/blob/master/EIPS/eip-196.md},
  2017.

\bibitem{mixcoin}
Joseph Bonneau, Arvind Narayanan, Andrew Miller, Jeremy Clark, Joshua~A. Kroll,
  and Edward~W. Felten.
\newblock {Mixcoin: Anonymity for Bitcoin with accountable mixes}.
\newblock In {\em Financial Cryptography and Data Security}, 2014.

\bibitem{blindcoin}
Luke Valenta and Brendan Rowan.
\newblock {Blindcoin: Blinded, Accountable Mixes for Bitcoin}.
\newblock In {\em Financial Cryptography and Data Security}, 2015.

\bibitem{tumblebit}
Ethan Heilman, Leen Alshenibr, Foteini Baldimtsi, Alessandra Scafuro, and
  Sharon Goldberg.
\newblock {TumbleBit: An Untrusted Bitcoin-Compatible Anonymous Payment Hub}.
\newblock In {\em Network and Distributed System Security Symposium}, 2016.

\bibitem{coinjoin}
Gregory Maxwell.
\newblock {CoinJoin: Bitcoin Privacy for the Real World}.
\newblock \url{https://bitcointalk.org/index.php?topic=279249}, 2013.

\bibitem{coinshuffle}
{Tim Ruffing and Pedro Moreno{-}Sanchez and Aniket Kate}.
\newblock {CoinShuffle: Practical Decentralized Coin Mixing for Bitcoin}.
\newblock In {\em European Symposium on Research in Computer Security}, 2014.

\bibitem{xim}
George Bissias, A.~Pinar Ozisik, Brian~N. Levine, and Marc Liberatore.
\newblock {Sybil-Resistant Mixing for Bitcoin}.
\newblock In {\em Workshop on Privacy in the Electronic Society}, 2014.

\bibitem{mobius}
Sarah Meiklejohn and Rebekah Mercer.
\newblock {M{\"o}bius: Trustless Tumbling for Transaction Privacy}.
\newblock In {\em International Symposium on Privacy Enhancing Technologies},
  2018.

\bibitem{Kappos}
George Kappos, Haaroon Yousaf, Mary Maller, and Sarah Meiklejohn.
\newblock {An Empirical Analysis of Anonymity in Zcash}.
\newblock {\em USENIX Security Symposium}, 2018.

\bibitem{fast-zcash}
Sean Bowe.
\newblock {Cultivating Sapling: Faster zk-SNARKs}.
\newblock \url{https://z.cash/blog/cultivating-sapling-faster-zksnarks}, 2017.

\bibitem{mpc-zcash-1}
The~Zcash Foundation.
\newblock Announcing the world's largest multi-party computation ceremony.
\newblock \url{https://z.cash/technology/paramgen/}, 2017.

\bibitem{chaum1988untraceable}
David Chaum, Amos Fiat, and Moni Naor.
\newblock {Untraceable Electronic Cash}.
\newblock In {\em Conference on the Theory and Application of Cryptography},
  1988.

\bibitem{bunz2018bulletproofs}
Benedikt B{\"u}nz, Jonathan Bootle, Dan Boneh, Andrew Poelstra, Pieter Wuille,
  and Greg Maxwell.
\newblock {Bulletproofs: Short proofs for Confidential Transactions and More}.
\newblock In {\em IEEE Symposium on Security and Privacy}, 2018.

\bibitem{karame2012double}
Ghassan~O Karame, Elli Androulaki, and Srdjan Capkun.
\newblock {Double-Spending Fast Payments in Bitcoin}.
\newblock In {\em ACM conference on Computer and communications security},
  2012.

\bibitem{diaz2008privacy}
Claudia Diaz, Eleni Kosta, Hannelore Dekeyser, Markulf Kohlweiss, and Girma
  Nigusse.
\newblock {Privacy Preserving Electronic Petitions}.
\newblock {\em Identity in the Information Society}, 1(1):203--219, 2008.

\bibitem{censorship:sok}
Sheharbano Khattak, Tariq Elahi, Laurent Simon, Colleen Swanson, Steven~J.
  Murdoch, and Ian Goldberg.
\newblock {SoK: Making Sense of Censorship Resistance Systems}.
\newblock In {\em International Symposium on Privacy Enhancing Technologies},
  2016.

\bibitem{meek-suspended}
The~Tor Project.
\newblock {Meek-Google Suspended for Terms of Service Violations (How to Set Up
  Your Own)}, 2016.
\newblock
  \url{https://lists.torproject.org/pipermail/tor-talk/2016-June/041699.html}.

\bibitem{five-eyes}
The Guardian.
\newblock {History of 5-Eyes – Explainer}.
\newblock
  \url{http://www.theguardian.com/world/2013/dec/02/history-of-5-eyes-explainer},
  2013.

\bibitem{bplib}
George Danezis.
\newblock {bplib}.
\newblock \url{https://github.com/gdanezis/bplib}, 2017.

\bibitem{Barreto-Naehrig}
K.~Kasamatsu.
\newblock {Barreto-Naehrig Curves}.
\newblock \url{https://tools.ietf.org/id/draft-kasamatsu-bncurves-01.html},
  2014.

\bibitem{aws}
Inc. Amazon Web~Services.
\newblock {AWS Whitepapers}.
\newblock \url{https://aws.amazon.com/whitepapers}, 2017.

\bibitem{ramanna2016efficient}
Somindu~C Ramanna and Palash Sarkar.
\newblock {Efficient Adaptively Secure IBBE from the SXDH Assumption}.
\newblock {\em IEEE Transactions on Information Theory}, 62(10):5709--5726,
  2016.

\end{thebibliography}

\end{document}